\newlength{\dinwidth}
\newlength{\dinmargin}
\definecolor{gray}{rgb}{.17,.21,.21}
\newcommand{\unit}{\mathbf{1}}
\newcommand{\pp}{{\boldsymbol p}}
\newcommand{\qq}{{\boldsymbol q}}
\newcommand{\hhs}{\mathfrak{h}}
\newcommand{\xx}{{\boldsymbol x}}
\newcommand{\ee}{{\boldsymbol e}}
\newcommand{\vveps}{{\boldsymbol \veps}}
\newcommand{\Aa}{\mathfrak{A}}
\newcommand{\Euc}{\mathrm{E}_4(3)}
\renewcommand{\to}{\mapsto}
\newcommand{\M}{\frak{M}}
\newcommand{\gST}{\mathrm{\bf g}}
\newcommand{\Ff}{\mathscr{F}}
\newcommand{\Ibb}[1]{ {\rm I\ifmmode\mkern -3.6mu\else\kern -.2em\fi#1}}
\newcommand{\ibb}[1]{\leavevmode\hbox{\kern.3em\vrule
     height 1.2ex depth -.3ex width .2pt\kern-.3em\rm#1}}
\newcommand{\Cl}{{\ibb C}}
\newcommand{\Rl}{{\Ibb R}}
\newcommand{\Nl}{{\Ibb N}}
\newcommand{\Zl}{\mathbb{Z}}
\newcommand{\Quat}{\mathbb{H}}
\newcommand{\ra}{\rightarrow}
\newcommand{\lra}{\longrightarrow}
\newcommand{\uhr}{\upharpoonright}
\newcommand{\ua}{\uparrow}
\newcommand{\Bl}{\biggl}
\newcommand{\Br}{\biggr}
\newcommand{\bl}{\bigl}
\newcommand{\br}{\bigr}
\newcommand{\Aut}{\text{\normalfont\textrm{Aut}}}
\newcommand{\GL}{\textrm{GL}}
\newcommand{\id}{\mathrm{id}}
\newcommand{\bs}{\boldsymbol}
\renewcommand{\cal}{\mathcal}
\newcommand{\cc}{\frak{C}}
\newcommand{\diag}{\mathrm{diag}}
\newcommand{\Tr}{\mathrm{Tr}}
\newcommand{\im}{\mathrm{Im}}
\newcommand{\supp}{\text{\normalfont\textrm{supp}}}
\newcommand{\BH}{\cal{B(H)}}
\newcommand{\bj}{{\boldsymbol j}}
\newcommand{\End}{\mathrm{End}}
\newcommand{\gdS}{\mathrm{\bf g}}
 \newcommand{\CAR}{\mathrm{CAR}}
\newcommand{\del}{\partial}
\newcommand{\Mat}{\mathrm{Mat}}
\newcommand{\G}{\cal{G}}
\newcommand{\F}{\cal{F}}
\newcommand{\SO}{\mathrm{SO}}
\newcommand{\Uone}{\mathrm{U(1)}}
\newcommand{\Spin}{\mathrm{Spin}}
\newcommand{\Sp}{\mathrm{Sp}}
\newcommand{\SL}{\mathrm{SL}}
\renewcommand{\L}{\cal{L}}
\newcommand{\tLo}{\widetilde{\L}_0}
\newcommand{\Lpo}{\cal{L}^\ua_+}
\newcommand{\vtheta}{\vartheta}
\newcommand{\veps}{\varepsilon}
\newcommand{\A}{\frak{A}}
\newcommand{\B}{\cal{B}}
\newcommand{\K}{\cal{K}}
\renewcommand{\P}{\cal{P}}
\renewcommand{\B}{\cal{B}}
\newcommand{\spec}{\mathrm{spec}}
\newcommand{\HS}{\cal{H}}
\newcommand{\hs}{\frak{H}}
\newcommand{\OO}{\mathcal{O}}
\newcommand{\W}{\cal{W}}
\newcommand{\U}{\mathrm{{\bf U}}}
\newcommand{\C}{\cal{K}}
\renewcommand{\S}{\cal{S}}
\newcommand{\flow}{\varphi}
\newcommand{\iso}{{\rm Iso}}
\newcommand{\Om}{\Omega}
\newcommand{\om}{\omega}
\newcommand{\la}{\lambda}
\newcommand{\eps}{\varepsilon}
\newcommand{\Hil}{\mathcal{H}}
\newcommand{\DD}{\mathcal{D}}
\newcommand{\Ss}{\mathscr{S}}   
\newcommand{\hti}{\tilde{h}}
\newcommand{\Wti}{\tilde{W}}
\newcommand{\pti}{\tilde{p}}
\newcommand{\xiti}{\tilde{\xi}}
\newcommand{\fF}{\frak{F}}
\newcommand{\frA}{\frak A}
\def\bs{{\mbox{\boldmath{$s$}}}}
\newcommand{\CCR}{\mathrm{CCR}\,}
\theoremstyle{plain}
\newtheorem{theorem}{Theorem}[section]
\newtheorem{proposition}[theorem]{Proposition}
\newtheorem{lemma}[theorem]{Lemma}
\newtheorem{corollary}[theorem]{Corollary}
\theoremstyle{definition}
\newtheorem{definition}[theorem]{Definition}
\newtheorem{remark}[theorem]{Remark}
\theoremstyle{remark}
\numberwithin{equation}{section}
\newcommand{\vphi}{\varphi}
\newcommand{\Iso}{\mathrm{Iso}}
\DeclareMathAlphabet{\mathpzc}{OT1}{pzc}{m}{it}
\definecolor{lightgray}{rgb}{0.8,0.8,0.8}
\begin{document} 
\thispagestyle{empty}
${}$
\vspace{3cm}
\begin{center}

\textbf{\LARGE Deformations of Quantum Field Theories}

\vspace{0.2cm}
\textbf{\LARGE on Curved Spacetimes}

\vspace{1.5cm}

by

\vspace{0.05cm}
Eric Morfa-Morales

\vspace{10cm}
A thesis presented to obtain the degree

\vspace{0.1cm}
{\it Doktor der Naturwissenschaften}

\vspace{0.1cm}
from the

\vspace{0.1cm}
University of Vienna, Faculty of Physics

\vspace{2cm}
Vienna 2012
\end{center}

\thispagestyle{empty}

\cleardoublepage

\thispagestyle{empty}
${}$
\vspace{0.0cm}

\begin{center}
 {\bf Abstract}
\end{center}
The construction and analysis of deformations of quantum field theories by warped convolutions is extended to a class of globally hyperbolic spacetimes.
First, we show that any four-dimensional spacetime which admits two commuting and spacelike Killing vector fields carries a family of wedge regions with causal properties analogous to the Minkowski space wedges. Deformations of quantum field theories on these spacetimes are carried out within the operator-algebraic framework $-$ the emerging models share many structural properties with deformations of field theories on flat spacetime. In particular, deformed quantum fields are localized in the wedges of the considered spacetime. As a concrete example, the deformation of the free Dirac field is studied. 
Second, quantum field theories on de Sitter spacetime with global $\Uone$ gauge symmetry are deformed using the joint action of the internal symmetry group and a one-parameter group of boosts. The resulting theories turn out to be wedge-local and non-isomorphic to the initial one for a class of theories, including the free charged Dirac field. The properties of deformed models coming from inclusions of $\CAR$-algebras are studied in detail.
Third, the deformation of the scalar free field in the Araki-Wood representation on Minkowski spacetime is discussed as a motivating example.
\vspace{1cm}

\begin{center}
 {\bf Zusammenfassung}
\end{center}
Die Konstruktion und Analyse von deformierten Quantenfeldtheorien durch warped convolutions wird auf eine Klasse von global hyperbolischen Raumzeiten verallgemeinert. Erstens, es wird gezeigt, dass jede vierdimensionale Raumzeit, welche zwei kommutierende und raumartige Killing-Vektorfelder erlaubt, eine Familie von Keilregionen tr\"agt, deren kausale Eigenschaften analog zu Minkowski-Keilen sind. Deformationen von Quantenfeldtheorien auf solchen Raumzeiten werden im operator-algebraischen Rahmen durchgef\"uhrt $-$ die strukturellen Eigenschaften der entstehenden Modelle sind \"ahnlich wie in Falle von flachen Raumzeiten. Insbesondere sind die deformierten Quantenfelder in den Keilen der jeweiligen Raumzeit lokalisiert. Als konkretes Beispiel wird die Deformation des freien Dirac-Feldes diskutiert.
Zweitens, Quantenfeldtheorien auf der de-Sitter-Raumzeit mit globaler $\Uone$-Eichsymmetrie werden mithilfe der gemeinsamen Wirkung der inneren Symmetrien und einer Ein-Parameter-Gruppe von Boosts deformiert. Die resultierenden Theorien sind Keil-lokal und nicht-isomorph zu der urspr\"unglichen Theorie im Falle einer Klasse von Theorien, welche das freie geladene Dirac-Feld enth\"alt. Die Eigenschaften deformierter Modelle, welche aus Inklusionen von $\CAR$-Algebren entstehen, werden im Detail untersucht. Drittens, die Deformation des freien Skalarfeldes in der Araki-Woods-Darstellung auf der Minkowski-Raumzeit wird als motivierendes Beispiel behandelt.
\thispagestyle{empty}
\cleardoublepage
\tableofcontents
\newpage
\chapter{Introduction}
Deformations of quantum field theories arise in different contexts and have been studied from different points of view in recent years. One motivation for considering such models is a possible noncommutative structure of spacetime at small scales, as suggested by combining classical gravity and the uncertainty principle of quantum physics \cite{DoplicherFredenhagenRoberts:1995}. Quantum field theories on such noncommutative spaces can then be seen as deformations of usual quantum field theories, and it is hoped that they might capture some aspects of a still elusive theory of quantum gravity ({\em cf.} \cite{Szabo:2003} for a review). By now there exist several different types of deformed quantum field theories (see \cite{GrosseWulkenhaar:2005,BalachandranPinzulQureshiVaidya:2007,Soloviev:2008,GrosseLechner:2008,BlaschkeGieresKronbergerSchwedaWohlgenannt:2008,BahnsDoplicherFredenhagenPiacitelli:2010} and references cited therein for some recent papers).

Certain deformation techniques arising from such considerations can also be used as a device for the construction of new models in the framework of usual quantum field theory on commutative spaces  \cite{GrosseLechner:2007,BuchholzSummers:2008,GrosseLechner:2008,BuchholzLechnerSummers:2010,LongoWitten:2010}, independent of their connection to the idea of noncommutative spaces. From this point of view, the deformation parameter plays the role of a coupling constant which changes the interaction of the model under consideration, but leaves the classical structure of spacetime unchanged.

Deformations designed for either describing noncommutative spacetimes or for constructing new models on ordinary spacetimes have been studied mostly in the case of a flat manifold, either with a Euclidean or Lorentzian signature. In fact, many approaches rely on a preferred choice of Cartesian coordinates in their very formulation, and do not generalize directly to curved spacetimes. The analysis of the interplay between spacetime curvature and deformations involving noncommutative structures thus presents a challenging problem. As a first step in this direction, we study in this thesis how certain deformed quantum field theories can be formulated in the presence of external gravitational fields, {\em i.e.} on curved spacetime manifolds (see also \cite{AschieriBlohmannDimitrijevicMeyerSchuppWess:2005,OhlSchenkel:2009} for other approaches to this question). We will not address here the fundamental question of dynamically coupling the matter fields with a possible noncommutative geometry of spacetime  \cite{PaschkeVerch:2004, Steinacker:2007}, but rather consider, as an intermediate step, deformed quantum field theories on a fixed Lorentzian background manifold.

\newpage
\section{The algebraic approach to quantum field theory}
\label{sec:AQFT}
In this section we describe the framework for our discussion of quantum field theories on Minkowski spacetime \cite{Haag:1996, Araki:1999} and its extension to Lorentzian manifolds \cite{Dimock:1980}. 
\\
\\
Algebraic quantum field theory (local quantum physics) is a model-independent approach to relativistic quantum physics which uses techniques from operator algebras.
The fundamental objects in its formulation are the basic building blocks for the experimental description of quantum systems, namely, measuring instruments (observables) and systems being measured (states). In physical terms, states are (equivalence classes of) preparation devices producing ensembles of quantum systems and observables are (equivalence classes of) measuring devices which are applied to the quantum system and yield a ``result'' \cite{Araki:1999}.

Within the algebraic framework, observables are represented by selfadjoint elements in a (unital) C$^*$-algebra. Since measurements are performed over a finite time and over a finite spatial extent, each observable carries an intrinsic localization. Hence, one assumes that for each open and bounded subset $\OO\subset \Rl^4$ of Minkowski spacetime\footnote{See appendix \ref{ch:NotationConventions} for our conventions and notation concerning Minkowski spacetime.} $(\Rl^4,\eta)$, there exists a C$^*$-algebra $\Aa(\OO)$ containing all the (bounded) observables which are measurable in $\OO$. We refer to the local algebras of observables $\Aa(\OO)$ as \emph{local algebras}. 
The map 
\begin{equation*}
 \OO\longmapsto \Aa(\OO)
\end{equation*}
is required to fulfill a number of conditions, which every physically meaningful theory should satisfy: Since larger regions should contain more observables, one requires
\begin{enumerate}
 \item[$1)$] (Isotony). If $\OO_1\subset\OO_2$, then $\Aa(\OO_1)\subset\Aa(\OO_2)$,
\end{enumerate}
where $\Aa(\OO_1)\hookrightarrow\Aa(\OO_2)$ is a unital embedding. This gives $\OO\mapsto\Aa(\OO)$ the structure of a net, {\it i.e.} an inclusion-preserving mapping. Consequently $\{\Aa(\OO):\OO\subset\Rl^4\}$ is a directed system of C$^*$-algebras, 
so there exists a (up to isomorphism) unique C$^*$-algebra $\Aa$ (the inductive limit of the directed system), such that $\bigcup_\OO\Aa(\OO)$ is dense in $\Aa$ \cite{KadisonRingrose:1986}.
We refer to $\Aa$ as \emph{quasilocal algebra}.
Since measurements in spacelike separated regions should not interfere with each other by Einstein causality, one demands that the corresponding observables commute:
\begin{enumerate}
 \item[$2)$] (Locality). If $\OO_1\subset \OO_2\, '$, then $\Aa(\OO_1)\subset\Aa(\OO_2)'$,
\end{enumerate}
where $\OO'$ denotes the causal complement of $\OO$ and the commutant is understood as relative commutant in $\Aa$. Relativistic covariance is expressed by demanding that there exists a continuous representation $\alpha:\P_0\ra\Aut(\Aa)$ of the identity component $\P_0$ of the Poincar\'e group by automorphisms on $\Aa$, which acts in a geometrically compatible way on the net:
\begin{enumerate}
 \item[$3)$] (Covariance). $\alpha_h(\Aa(\OO))=\Aa(h\OO),\quad h\in\P_0$,
\end{enumerate}
where $h\OO:=\{hx:x\in\OO\}$ denotes the transformed spacetime region. We refer to a map $\Aa:\OO\mapsto\Aa(\OO)$ satisfying conditions 1), 2) and 3) as a \emph{local net}\label{localnet}.

\label{state}
States are represented in the algebraic framework by continuous linear maps $\omega:\Aa\ra\Cl$ satisfying $\omega(\boldsymbol{1})=1$ and $\omega(A^*A)\ge 0$ for all $A\in\Aa$, {\em i.e.} states are particular elements in the dual space of $\Aa$. Depending on the physical context, one demands further properties ({\it e.g.} Poincar\'e-invariance, spectrum condition, KMS condition) to select the states of actual interest. 

The familiar Hilbert space formulation of quantum field theory is recovered by means of the the Gelfand-Naimark-Segal (GNS) construction: The pair $(\Aa,\omega)$ yields a (up to unitary equivalence) unique triple $(\HS_\omega,\pi_\omega,\Omega_\omega)$, where $\HS_\omega$ is a Hilbert space, $\Omega_\omega\in \HS_\omega$ is a unit vector and $\pi_\omega:\Aa\ra \cal{B}(\HS_\omega)$ is a C$^*$-homomorphism (representation), such that
\begin{equation*}
 \omega(A)=\left<\Omega_\omega,\pi_\omega(A)\Omega_\omega\right>_{\HS_\omega},\quad A\in\Aa
\end{equation*}
and $\pi_\omega(\Aa)\Omega_\omega$ is dense in $\HS_\omega$. Each $\omega(A)$, $A\in\Aa$ is physically interpreted as the expectation value of the observable $A$ in the state $\omega$. If the representation $\alpha$ is implementable, as it is the case for Poincar\'e-invariant states, there exists a unitary representation $U_\omega:\P_0\ra \End(\HS_\omega)$, such that
\begin{equation*}
	\pi_\omega(\alpha_h(A))=U_\omega(h)\pi_\omega(A)U_\omega(h)^{-1},\quad h\in\P_0,\;A\in\Aa.
\end{equation*}
The representation $\OO\mapsto \pi_\omega(\Aa(\OO))''$ of the local net by von Neumann algebras also satisfies conditions $1)$, $2)$ and $3)$, since algebraic relations are preserved under the homomorphism $\pi_\omega$. The quasilocal algebra $\Aa$ possesses, in general, an abundance of inequivalent representations $-$ this fact is used to explain superselection rules \cite{DoplicherHaagRoberts:1969}.

To summarize, a quantum field theory in the algebraic sense is a Hilbert space representation of a local net, {\em i.e.} a collection of objects $\{\{\Aa(\OO)\}_{\OO\subset\Rl^4},\{\alpha_h\}_{h\in\P_0},\omega\}$ satisfying conditions $1)$, $2)$ and $3)$, where the state $\omega$ is required to meet certain physically-motivated regularity conditions.
\\
\\
The algebraic formulation of quantum field theory was generalized to Lorentzian spacetime\footnote{See Appendix \ref{ch:NotationConventions} for our conventions and notation concerning curved spacetimes.} manifolds $(M,\gST)$ in \cite{Dimock:1980}. Again, one demands that there exists a C$^*$-algebra $\Aa$ containing C$^*$-subalgebras $\Aa(\OO)$ which are associated with open and bounded spacetime region $\OO\subset M$, such that $\OO\mapsto \Aa(\OO)$ satisfies certain requirements. Conditions $1)$ and $2)$ carry over directly to the curved setting, whereas the notion of spacelike separation is determined by the causal structure of the underlying spacetime (see appendix \ref{ch:NotationConventions}). For a spacetime $(M,\gdS)$ which has a non-trivial global symmetry group, the Poincar\'e group $\P_0$ in condition $3)$ is replaced by the isometry group $\Iso(M,\gST)$ of $(M,\gST)$. Covariance is then formulated in the following way \cite{Dimock:1980}:
\begin{enumerate}
	\item[$3')$] (Covariance). There exists a continuous representation $\alpha:\Iso(M,\gST)\ra\Aut(\Aa)$, such that
	\begin{equation*}
		\alpha_h(\Aa(\OO))=\Aa(h\OO)
	\end{equation*} 
	holds for all $h\in\Iso(M,\gST)$ and every $\OO\subset M$.
\end{enumerate}

\begin{remark}
\label{rem:BFV}
 (1) A generic curved spacetime has a trivial isometry group, so this condition is empty in this case. However, the spacetimes which we consider in this thesis ({\em e.g.} Bianchi  type models I-VII) do have a non-trivial global isometry group and these isometries will play an important role in the following, in particular for our definition of deformations of quantum field theories. Hence we adopt condition $3')$ as the notion of covariance.

 (2) Note that condition $3')$ is of a \emph{global} type, since it relies on the presence of a global isometry group. A concept of covariance which uses \emph{local} conditions is provided by the principle of general local covariance \cite{BrunettiFredenVerch:2003}. A framework for quantum field theories on {\em generic} globally hyperbolic spacetimes which makes use of this concept can be found in \cite{HollandsWald:2008}.
\end{remark}

For quantum field theories on curved spacetimes, the notion of state is the same as on flat spacetime, {\em i.e.} a state is a continuous linear map $\omega:\Aa\ra\Cl$ satisfying $\omega(\boldsymbol{1})=1$ and $\omega(A^*A)\ge 0$ for all $A\in\Aa$. But restrictive  selection criteria for states of physical interest, such as Poincar\'e-invariance or positivity of energy, are not available in this setting and constitute one of the main difficulties in its formulation. We note that the microlocal spectrum condition \cite{Radzikowski:1996, BrunettiFredenhagenKohler:1996} provides, to a certain extent, a replacement for the relativistic spectrum condition on flat spacetime. However, in this thesis we will not make use of the microlocal spectrum condition. Since we do not work with generic curved spacetimes, but rather with spacetimes $(M,\gST)$ which possess a sufficiently large isometry group $\iso(M,\gST)$, we consider states which are $\iso(M,\gST)$-invariant, such as the Bunch-Davies state in the case of full de Sitter spacetime.

\section{Constructive algebraic quantum field theory}
\label{sec:CAQFT}
In this section we review a constructive technique for quantum field theories on Minkowski spacetime called warped convolution \cite{BuchholzSummers:2008, BuchholzLechnerSummers:2010}, which uses a certain deformation procedure for C$^*$-algebras. For a much more comprehensive overview of the current status of constructive algebraic quantum field theory we refer to \cite{Summers:2010}.
\\
\\
Once a local net (in a vacuum representation\footnote{The corresponding states appear in the description of systems in elementary particle physics.}) is given, an enormous amount of physical information can be extracted, such as the superselection structure, the particle content and collision cross sections (see \cite{BuchholzHaag:2000, BuchholzSummers:2005} and references cited therein). However, the task of explicitly {\em constructing} non-trivial examples has turned out to be particularly difficult, especially in four dimensions. Constructive quantum field theory has celebrated great successes in two and three dimensions by using functional integral methods in Euclidean space \cite{GlimmJaffe:1981}, but the physical case of four spacetime dimensions remains an open challenge to mathematical physics up to this day.

A novel approach towards the construction of models within the algebraic framework has emerged in the last few years \cite{Schroer:1997, Schroer:1999, SchroerWiesbrock:2000, Lechner:2003, Lechner:2005, Lechner:2008, BuchholzSummers:2007, BuchholzSummers:2008, BuchholzLechnerSummers:2010}. A common theme in these works is that one tries to avoid the direct construction of strictly localized quantities in a first step, and rather considers auxiliary non-local fields, which are associated with certain unbounded wedge-shaped spacetime regions called {\em wedges}. A wedge in four-dimensional Minkowski space is a region which is bounded by two non-parallel null planes. More specifically, one starts with a reference wedge 
\begin{equation*}
W_0:=\{x\in\Rl^4:x^1>|x^0|\} 
\end{equation*}
and defines the family of wedges $\W:=\{hW_0:h\in\P_0\}$ as the Poincar\'e-orbit of $W_0$. In a general relativity context these regions also go under the name of Rindler spacetime.
In the following, we refer to a map $\W\ni W\mapsto \Aa(W)$ satisfying condition $1)$, $2)$ and $3)$ from Section \ref{sec:AQFT} as {\em wedge-local net}\label{wedgelocalnet} and to the algebras $\Aa(W)$ as {\em wedge algebras}. The main advantage of considering nets over wedges is that a wedge $W$ is large enough to make the construction of observables which are localized in $W$ doable, but also small enough to make a complete description of the causal structure of Minkowski spacetime feasible \cite{ThomasWichmann:1997} (see below).
Once a wedge-local net is given, one can proceed to a local net over double cones $\OO_\diamond$ by taking suitable intersections:
\begin{equation}
\label{eq:LNfromWLN}
 \OO_\diamond\longmapsto\Aa(\OO_\diamond):=\bigcap_{W\supset \OO_\diamond}\Aa(W).
\end{equation}
That this indeed defines a local net relies on the peculiar properties of wedges \cite{ThomasWichmann:1997}: (1) every double cone $\OO_\diamond$ can be written as $\OO_\diamond=\bigcap_{W\supset\OO_\diamond}W$ and (2) the family $\W$ is causally separating for double cones, {\em i.e.} for every pair of spacelike separated double cones $\OO_\diamond,\widetilde{\OO}_\diamond$ there exists a wedge $W\in\W$, such that
\begin{equation*}
 \OO_\diamond\subset W\subset (\widetilde{\OO}_\diamond)'.
\end{equation*}
Since double cones form a base for the topology of $\Rl^4$, one can define a local net over arbitrary open and bounded spacetime regions $\OO$ via
\begin{equation*}
 \OO\longmapsto \Aa(\OO):=\overline{\bigcup_{\OO_\diamond\subset\OO}\Aa(\OO_\diamond)}^{\|\cdot\|},
\end{equation*}
where the superscript denotes the norm closure.

Wedge-local nets can by constructed by means of so-called {\em wedge triples} \cite{Lechner10} (see also \cite{BuchholzLechnerSummers:2010} for the closely related notion of a {\em causal Borchers triple}). A wedge triple $(\Aa_0,\Aa,\alpha)$, relative to the wedge $W_0$, consists of an inclusion of two C$^*$-algebra $\Aa_0\subset\Aa$ and a strongly continuous action $\alpha:\P_0\ra\Aut(\Aa)$, such that: $h\in \P_0$
\begin{enumerate}
 \item[$i)$] If $hW_0\subset W_0$, then $\alpha_h(\Aa_0)\subset \Aa_0$ 
 \item[$ii)$] If $hW_0\subset (W_0)'$, then $\alpha_h(\Aa_0)\subset (\Aa_0)'$,
\end{enumerate}
where the commutant is understood as the relative commutant in $\Aa$. Given a wedge-triple $(\Aa_0,\Aa,\alpha)$, then
\begin{equation}
\label{eq:WLNfromWT}
 W:=hW_0\longmapsto\alpha_h(\Aa_0)=:\Aa(W)
\end{equation}
defines a wedge-local net. Isotony holds by condition $i)$, locality by condition $ii)$ and covariance by the very definition of the net (\ref{eq:WLNfromWT}). Conversely, every wedge-local net $W\mapsto \Aa(W)$ gives rise to a wedge-triple: Define $\Aa_0:=\Aa(W_0)$ and let $\Aa$ be the quasilocal algebra of the theory. Then conditions $i)$ and $ii)$ hold by the isotony and locality of the net. Hence, a wedge-triple can be viewed as the basic building block of a quantum field theory and the task is to construct examples such that the resulting local net $\OO\mapsto\Aa(\OO)$ describes non-trivial interaction. 
We summarize the constructive problem in the following three steps:
\begin{enumerate}
 \item[I)] Specify a wedge triple $(\Aa_0,\Aa,\alpha)$.
 \item[II)] Construct a wedge-local net $W\mapsto\Aa(W)$ from $(\Aa_0,\Aa,\alpha)$.
 \item[III)] Prove that the double cone algebras (\ref{eq:LNfromWLN}) are non-trivial.
\end{enumerate}
Step II) is canonical in the case of Minkowski spacetime, since the Poincar\'e group acts, by definition, transitively on the set of wedges. However, in Chapters \ref{ch:Thermal} and \ref{ch:Cosmological} we will encounter situations where this is not the case anymore.
Step III) is the hardest task in this program. In two spacetime dimensions the split property for wedges and the modular nuclearity condition provide sufficient conditions for the non-triviality of intersections of wedge algebras \cite{BuchholzLechner:2004}. In four dimensions it is known that the scalar free field does not satisfy the split property for wedges \cite{Buchholz:1974}, so a condition which implies the non-triviality of intersections of wedge algebras is missing so far in this case. 
\\
\\
In this thesis we will be concerned exclusively with steps I) and II) of this program. One possibility to construct new examples of wedge triples from known ones ({\em e.g.} the one which is provided by a free field theory) is by means of a deformation procedure called {\em warped convolution} (see \cite{BuchholzSummers:2008, BuchholzLechnerSummers:2010} and \cite{GrosseLechner:2007, GrosseLechner:2008} for precursors of this work). The authors consider a C$^*$-dynamical system $(\A,\alpha,\Rl^n)$ which is covariantly represented on a Hilbert space $\HS$, {\em i.e.} the C$^*$-algebra $\A$ is a norm-closed $*$-subalgebras of some $\BH$ and $\alpha:\Rl^n\ra\Aut(\A)$ is a strongly continuous automorphic action which is implemented by a family of unitary operators $U:\Rl^n\ra\End(\HS)$.\footnote{Note that this is only a slight loss of generality since we can either work in the GNS representation of a translation-invariant state, or we work in the universal covariant representation which exists for every C$^*$-dynamical system \cite[Prop.7.4.7, Lem.7.4.9]{Pedersen:1979}. In the latter case, we assume that $\HS$ is separable, as it is the case in a variety of concrete models in quantum field theory.} The warped convolution $A_\theta$ of an operator $A\in\A$ is defined as
\begin{equation}
\label{eq:IntroWarpedConvolution}
 A_\theta:=(2\pi)^{-n}\int_{\Rl^n\times \Rl^n} dx\, dy\, e^{-ixy}U(\theta x)AU(y-\theta x),
\end{equation}
where $\theta$ is an antisymmetric $n\times n$ matrix, which plays the role of a deformation parameter. This integral can be defined in an oscillatory sense, if $A$ is an element of a certain ``smooth'' subalgebra $\A^\infty\subset\A$ (see Section \ref{sec:deformation} for details). For deformations of a single algebra, the mapping $A\to A_\theta$ has many features in common with deformation quantization and the Weyl-Moyal product, and in fact was recently shown \cite{BuchholzLechnerSummers:2010} to be equivalent to specific representations of Rieffel's deformed C$^*$-algebras with $\Rl^n$-action \cite{Rieffel:1992}. In applications to field theory models, however, one has to deform a whole family of algebras, corresponding to subsystems localized in spacetime, and the parameter $\theta$ has to be replaced by a family of matrices adapted to the geometry of the underlying spacetime. At this point the formulation of wedge-local nets in terms of wedge triples is convenient, since the deformation of the net amount to the deformation of the algebra $\Aa_0$. The form of the matrix $\theta$ has to be adjusted such that the deformed triple is again a wedge triple. In four spacetime dimensions the class of ``admissible'' matrices is of the form
\begin{equation*}
 \theta=
\begin{pmatrix}
0 & \kappa & 0 & 0\\
-\kappa & 0 & 0 & 0\\
0 & 0 & 0 & \kappa'\\
0 & 0 & -\kappa' & 0
\end{pmatrix},\quad
\kappa\ge 0,\, \kappa'\in\Rl.
\end{equation*}
Buchholz, Lechner and Summers proved the following key statement.
\begin{theorem}{\bf (\cite[Thm.4.2]{BuchholzLechnerSummers:2010}).}\\
 Let $(\A_0,\A,\alpha)$ be a covariantly represented wedge triple such that $U\uhr\Rl^4$ satisfies the spectrum condition and $\theta$ an admissible matrix. Then the warped triple $((\Aa_0)_\theta,\BH,\alpha)$ with $(\Aa_0)_\theta:=\{A_\theta:A\in\Aa_0\cap \Aa^\infty\}''$ is also a wedge triple.
\end{theorem}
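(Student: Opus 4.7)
The plan is to verify the two wedge triple conditions for $((\Aa_0)_\theta, \BH, \alpha)$, where the action is extended to $\BH$ via $\alpha_h(B) := U(h) B U(h)^{-1}$. The workhorse is a covariance identity for warped convolutions, obtained by changing variables in (\ref{eq:IntroWarpedConvolution}) and applying $U(h)U(y)U(h)^{-1} = U(\Lambda y)$:
\begin{equation*}
\alpha_h(A_\theta) = (\alpha_h A)_{\Lambda \theta \Lambda^T}, \qquad h = (x, \Lambda) \in \P_0, \quad A \in \Aa^\infty.
\end{equation*}
Note that the translational part of $h$ leaves the deformation matrix unchanged. This identity reduces both verifications to geometric statements about how $\theta$ transforms under the Lorentz part of the relevant element of $\P_0$, and thereby pushes all the work onto the admissibility of $\theta$ and onto the original wedge triple properties of $(\Aa_0, \Aa, \alpha)$.

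For condition $i)$, assume $hW_0 \subset W_0$. A standard analysis of wedge inclusions forces the Lorentz part $\Lambda$ to lie in the stabilizer of $W_0$ in $\Lpo$, which is generated by boosts in the $(x^0, x^1)$-plane and rotations in the $(x^2, x^3)$-plane. The admissible block form of $\theta$ is precisely designed so that $\Lambda \theta \Lambda^T = \theta$ for every such $\Lambda$ (a direct computation on the two $2 \times 2$ blocks). Combining this with the original isotony $\alpha_h(\Aa_0) \subset \Aa_0$ yields $\alpha_h(A_\theta) = (\alpha_h A)_\theta \in (\Aa_0)_\theta$ for $A \in \Aa_0 \cap \Aa^\infty$, and passing to double commutants gives $\alpha_h((\Aa_0)_\theta) \subset (\Aa_0)_\theta$.

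For condition $ii)$, assume $hW_0 \subset W_0'$. Then $\Lambda$ maps $W_0$ onto the opposite wedge $W_0' = -W_0$, and for any such $\Lambda$ a block computation produces $\Lambda \theta \Lambda^T = -\theta$ (the $\pi$-rotation about the $x^3$-axis being a concrete representative, which flips the signs of both the $(0,1)$- and $(2,3)$-blocks of an admissible $\theta$). Hence $\alpha_h(A_\theta) = (\alpha_h A)_{-\theta}$. Since $\alpha_h(A) \in \alpha_h(\Aa_0) \subset \Aa_0'$ by the original locality, $\alpha_h(A)$ commutes with every $B \in \Aa_0 \cap \Aa^\infty$. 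I would then invoke the central commutation lemma for warped convolutions: if $C, B \in \Aa^\infty$ commute and $U \uhr \Rl^4$ satisfies the spectrum condition, then $C_{-\theta}$ and $B_\theta$ commute. Applied to $C = \alpha_h(A)$, this yields $[\alpha_h(A_\theta), B_\theta] = 0$ and, after taking double commutants, $\alpha_h((\Aa_0)_\theta) \subset ((\Aa_0)_\theta)'$.

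The principal obstacle is the commutation lemma deployed in the locality step. Its proof rewrites $[C_{-\theta}, B_\theta]$ as an iterated oscillatory integral whose integrand involves $[\alpha_x(C), \alpha_y(B)]$ weighted by oscillating factors of the form $e^{-ixy}$. The antisymmetry of $\theta$ together with the spectrum condition for $U \uhr \Rl^4$ allows one to deform contours in the complexified momentum variables into a region of exponential decay where the undeformed commutator vanishes; this cancellation mechanism fails for generic $\theta$, which is exactly what singles out the admissible class. All remaining ingredients—the oscillatory-integral definition of $A_\theta$ on $\Aa^\infty$, the continuity of the extended $\alpha$ on $\BH$, and the closure under double commutants—are standard and require no further input beyond the hypotheses of the theorem.
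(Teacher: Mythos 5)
Your overall architecture is the standard one for this theorem: the covariance identity $\alpha_h(A_\theta)=(\alpha_h A)_{\Lambda\theta\Lambda^T}$, the invariance $\Lambda\theta\Lambda^T=\theta$ for Lorentz transformations stabilizing $W_0$, and the sign flip $\Lambda\theta\Lambda^T=-\theta$ when $\Lambda W_0=W_0'$ (this is \cite[Lem.3.1]{GrosseLechner:2007}), and your isotony argument is fine. The locality step, however, contains a genuine gap: the commutation lemma you invoke is misstated. It is \emph{not} true that $[C,B]=0$ together with the spectrum condition implies $[C_{-\theta},B_\theta]=0$; commutation at zero relative translation is not stable under warping. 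The actual criterion of \cite{BuchholzLechnerSummers:2010} (recorded in the remark following Theorem \ref{thm:TFF_WedgeLocality} of this thesis) is that $[\alpha_{\theta p}(C),\alpha_{-\theta q}(B)]=0$ for \emph{all} $p,q$ in the joint spectrum of the generators of $U\uhr\Rl^4$. With only $[\alpha_h(A),B]=0$, which is all that condition $ii)$ of the undeformed triple gives you directly, the final step of your locality argument does not go through.

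Verifying the stronger hypothesis is precisely where the three assumptions of the theorem interact, and it is the part your proposal omits. By the spectrum condition, $p,q\in\overline{V^+}$; the admissible form of $\theta$ \emph{with the sign restriction} $\kappa\ge 0$ is exactly what guarantees that $\theta$ maps $\overline{V^+}$ into $\overline{W_0}$, so that $\theta(p+q)\in\overline{W_0}$ and hence $W_0+\theta(p+q)\subseteq W_0$. Writing $[\alpha_{-\theta p}(C),\alpha_{\theta q}(B)]=\alpha_{-\theta p}\bigl([C,\alpha_{\theta(p+q)}(B)]\bigr)$ with $C=\alpha_h(A)$, the inner commutator vanishes because $\alpha_{\theta(p+q)}(B)\in\Aa_0$ by condition $i)$ of the undeformed triple applied to translations along $\overline{W_0}$, while $C\in\Aa_0'$ by condition $ii)$. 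Note that your argument nowhere uses $\kappa\ge 0$ — the block computations you perform hold equally for $\kappa<0$, for which the conclusion fails in concrete models — which is a symptom of the missing mechanism. (A minor further point: the proof of the BLS lemma proceeds via the joint spectral measure of the translations, representing $A_\theta$ essentially as $\int dE(p)\,\alpha_{\theta p}(A)$, not via a contour deformation into a region of exponential decay; but since you cite the lemma rather than prove it, the decisive issue is its misstated hypothesis.)
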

\noindent
The proof of the locality condition $ii)$ relies on a subtle interplay between the special form of the matrix $\theta$, the relativistic spectrum condition and the geometric form of wedges.

For theories which describe massive particles, the two-particle scattering for the deformed theory has been computed in \cite{GrosseLechner:2007, GrosseLechner:2008} within the framework of \cite{BorchersBuchholzSchroer:2001} (see \cite{DybalskiTanimoto10} for a similar analysis in the massless case). It turns out that  the S-matrix changes under the deformation and that it is non-trivial even if it was trivial for the initial theory. The (improper) scattering states in the deformed theory ($\theta\neq 0$) and the (improper) scattering states in the undeformed theory ($\theta=0$) are related by
\begin{equation*}
 \ket{p,q}_\theta^{\text{in}}=e^{i|p\cdot\theta q|}\ket{p,q}_0^{\text{in}},\qquad
 \ket{p,q}_\theta^{\text{out}}=e^{-i|p\cdot\theta q|}\ket{p,q}_0^{\text{out}}
\end{equation*}
Since $\theta$ depends on $W_0$, the scattering states break Lorentz invariance. For the kernels of the elastic scattering matrices in the deformed and undeformed theory holds
\begin{equation*}
 {}^{\text{out}}_{\;\;\;\theta}\!\braket{p,q|p',q'}^{\text{in}}_\theta
=e^{i|p\cdot\theta q|+i|p'\cdot\theta q'|}\cdot {}^{\text{out}}_{\;\;\;0}\!\braket{p,q|p',q'}^{\text{in}}_0.
\end{equation*}
This shows that the deformed and undeformed theories are not unitarily equivalent and that also the deformed theories among each other are not equivalent. A similar result about unitary inequivalence in the generic case of deformations of causal Borchers triples was obtained in \cite[Prop.4.7]{BuchholzLechnerSummers:2010}.

The models which were constructed by these methods provide the first examples of fully Poincar\'e-covariant and wedge-local quantum field theories in four-dimensional Minkowski spacetime which describe non-trivial elastic scattering processes. However, from a physical point of view these models are too simple so far and should be considered as toy models, since (1) the S-matrix breaks Lorentz invariance, (2) the collision cross sections do not change under the deformation and (3) the associated double cone algebras are trivial \cite{BuchholzLechnerSummers:2010}. Nevertheless, the idea of deforming wedge triples sheds  new light on the constructive problem in quantum field theory and it is the hope that similar deformation methods lead to physically relevant theories in four dimensions and eventually to a better understanding of interacting quantum field theories. 
\\
\\
As we see, the warped convolutions deformation procedure heavily relies on the peculiar structure of Minkowski spacetime ({\em e.g.} Poincar\'e symmetry, spectrum condition, existence of wedges). These structures are not present on a generic curved spacetime: Since the isometry group is trivial, one cannot impose invariance of the state or covariance of the fields by referring to symmetries. There is no preferred choice of vacuum state and, in particular, there is no analogue of the spectrum condition. The notion of wedges relies on the presence of a Cartesian coordinate system and it is not obvious how to describe these regions in a covariant way. It is therefore certainly interesting and a priori not clear if a similar deformation procedure is also feasible on curved spacetimes.

\section{Overview of the thesis}
In this thesis we extend the construction of quantum field theories by warped convolution to a class of four-dimensional globally hyperbolic spacetimes with a sufficiently large isometry group, which gives rise to a representation of $\Rl^2$ as required by (\ref{eq:IntroWarpedConvolution}). The spacetimes that we consider are the following:
\begin{itemize}
 \item Chapter \ref{ch:Thermal}: Minkowski spacetime
 \item Chapter \ref{ch:Cosmological}: Cosmological spacetimes ({\em e.g.} Bianchi type models I-VII)
 \item Chapter \ref{ch:deSitter}: full de Sitter spacetime 
\end{itemize}

As a motivating example we study in Chapter \ref{ch:Thermal} a deformation of the scalar free field on Minkowski spacetime in the GNS representation of a KMS state (Araki-Woods representation). The spectrum condition is not fulfilled in this representation. Using a warped convolution along the edge of a wedge, we define deformed creation and annihilation operators in the thermal representation along the same lines as in \cite{GrosseLechner:2007}. The deformed fields turns out to be wedge-local and covariant with respect to the (extended) Euclidean group $\Euc:=\Rl^4\rtimes\SO(3)_0$. The associated nets of bounded operators for the deformed and undeformed theory are unitarily inequivalent.
This chapter contains several techniques and methods which will reappear in Chapter \ref{ch:Cosmological} about cosmological spacetimes: (1) we see that wedge-locality can be established without the spectrum condition by making an appropriate choice for the subgroup by which the deformation is defined (translations along the edge of a wedge); (2) the group $\Euc$ does not act transitively on the set of wedges $\W$, so a priori it is not clear how to generate a wedge-local net from a wedge triple (step II in Section \ref{sec:CAQFT}) $-$ we solve this problem by dividing $\W$ in certain ``coherent'' subfamilies and prescribing one initial algebra for each subfamily. 

In Chapter \ref{ch:Cosmological} we study quantum field theories on globally hyperbolic spacetimes which admit two spacelike and commuting Killing vector fields. This class contains various spacetimes which are of interest in cosmology. Using the flow of these Killing vector fields we generate two-dimensional submanifolds which play the role of edges. A wedge is then defined as a connected component of the causal complement of an edge. This construction is fully covariant and reduces to the usual notion in the case of flat spacetime. We show that these regions have inclusion and covariance properties which are analogous to the Minkowski space wedges. Next, we consider quantum field theories on these spacetimes and deform them by warped convolution using the flow of the Killing vector fields. The emerging models share many structural properties with deformed field theories on flat spacetime. In particular, they are localized in the wedges of the considered spacetime. Further aspects of deformed theories are discussed in the concrete example of the free Dirac field, where we prove that the deformed and undeformed theories are not unitarily equivalent.

In Chapter \ref{ch:deSitter} we consider a spacetime which does not fit into the framework of Chapter \ref{ch:Cosmological}, namely, full de Sitter spacetime. A family of wedges is already available here \cite{BorchersBuchholz:1999}. We apply the warped convolution deformation to quantum field theories with global $\Uone$ gauge symmetry within the algebraic setting (field nets). For the deformation we use a combination of internal and external symmetries (boosts associated with a wedge). The resulting theories are wedge-local and de Sitter-covariant. Then we study the deformation of a particular class of field nets in more detail, namely, nets which arise from inclusions of $\CAR$-algebras $-$ among this class is the free charged Dirac field. For these theories we determine the fixed-points of the deformation map and prove inequivalence of the deformed and undeformed models. Finally, we comment on warped convolutions in terms of purely internal or external symmetries using other Abelian subgroups of the de Sitter and gauge group.

In Chapter \ref{ch:outlook} we summarize our results and indicate open problems and perspectives for future work.

In Appendix \ref{ch:NotationConventions} we collect our conventions and a table of frequently used symbols.
\\
\\
The results in Chapter \ref{ch:Cosmological} are based on a joint paper \cite{DappiaggiLechnerMorfa-Morales10} with C. Dappiaggi and G. Lechner. The content of Chapter \ref{ch:deSitter} was published in \cite{Morfa-Morales:2011}.

\chapter{Thermal Quantum Field Theories on Flat Spacetime}
\label{ch:Thermal}
This chapter contains a simple but instructive example for a deformation of a quantum field theory on flat spacetime $-$ many of the structures and techniques which we encounter here will reappear in later chapters about deformations of quantum field theories on curved spacetimes. More specifically, we study a deformation of the scalar free field in the representation which is associated with a KMS state (Araki Woods representation). The spectrum condition is not fulfilled in this representation. Physically this can be traced back to the fact that an arbitrary amount of energy can be extracted from the ambient heat bath, so that the thermal Hamiltonian is not bounded from below. As positivity of energy was crucial in the wedge-locality proof for the deformed fields in \cite{GrosseLechner:2007}, it is not obvious if a similar deformation is also feasible in the thermal setting.

Despite this difference, we show that a deformation of the thermal scalar free field can be defined, so that the resulting theory is wedge-local and inequivalent to the initial one. 
In contrast to \cite{GrosseLechner:2007}, the deformation is not defined with the entire subgroup of translations of the Poincar\'e group, but rather by translations along the edge of a wedge.

This chapter is structured as follows. In Section \ref{sec:VacRepSFF} we collect the basic properties of the (abstract) Weyl algebra over a real symplectic vector space and outline its construction for the scalar free field together with its vacuum representation on Fock space. In Section \ref{sec:ThermalRepSFF} we consider KMS states on the Weyl algebra for the scalar free field, describe their GNS representations (Araki-Woods representation) and discuss the properties of the associated unbounded thermal field operators. In Section \ref{sec:Deformationsofthermalscalarfreefields} we introduce deformed thermal field operators by means of a deformation of the thermal creation and annihilation operators. We study the Wightman properties of these fields and it turns out that they are wedge-local and covariant with respect to the (extended) Euclidean group $\Euc=\Rl^4\rtimes\SO(3)_0$. Then, the wedge-local net of bounded operators which is generated by the deformed thermal fields is defined and it is proven that this net is unitarily inequivalent to the initial thermal net. 
\\
\\
In the presentation of the scalar free field and its thermal representation we follow \cite{Jakel:2004}.

\newpage
\section{Minkowski spacetime and wedges}
\label{sec:MinkowskiSpacetime}
Minkowski spacetime is the real manifold $\Rl^4=\Rl\times\Rl^3$ equipped with the metric tensor
\begin{equation*}
 \gST=dx^0\otimes dx^0-\sum_{k=1}^3dx^k\otimes dx^k,
\end{equation*}
where $\{x^\mu:\mu=0,\dots,3\}$ is the global chart, which is given by the Cartesian coordinates on $\Rl^4$. It has the structure of a Lorentzian manifold and we fix a time orientation, such that $e_0=(1,0,0,0)$ is future-directed. As $T_x(\Rl^4)\cong \Rl^4$ for all $x\in\Rl^4$, we identify these spaces. For $x,y\in\Rl^4$ there holds
\begin{equation*}
 \gST(x^\mu\del_\mu,y^\mu\del_\nu)=x^0y^0-\sum_{k=1}^3 x^ky^k=x^T\eta y=:x\cdot y,\qquad
\eta=\diag(1,-1,-1,-1),
\end{equation*}
where we summed over repeated indices and $xy$ denotes the Euclidean inner product of $x,y\in\Rl^4$. We use the shorthand notation $(x)^2:=x\cdot x$. In the following, we refer to the metric $\gST$ and its matrix representation $\eta$ simply as Minkowski metric and we write $(\Rl^4,\eta)$ to denote Minkowski spacetime. The Minkowski metric induces a causal structure on $\Rl^4$: we call points $x,y\in\Rl^4$ timelike, spacelike or null related, if $(x-y)^2>0$, $(x-y)^2<0$ or $(x-y)^2=0$, respectively. The causal complement $\OO'$ of a set $\OO\subset\Rl^4$ is defined as the interior of $\{x\in\Rl^4:(x-y)^2<0,y\in\overline{\OO}\}$, so that we always work with open sets. Two spacetime regions $\OO_1,\OO_2\subset\Rl^4$ are called spacelike separated if $\OO_1\subset\OO_2\,'$.

The isometry group of Minkowski space is the Poincar\'e group $\P=\Rl^4\rtimes \L$, where
\begin{equation*}
 \L=\mathrm{O}(1,3)=\{\Lambda\in\Mat(4,\Rl):\Lambda^T\eta\Lambda=\eta\}
\end{equation*}
is the Lorentz group. The semidirect product is defined with the natural action of $\L$ on $\Rl^4$, so that the group operation is the following:
\begin{equation*}
 (y,\Lambda)(y',\Lambda')=(\Lambda y'+y,\Lambda\Lambda'),\quad 
(y,\Lambda),(y',\Lambda')\in\P.
\end{equation*}
The Poincar\'e group is a ten-dimensional, non-compact, non-connected and real Lie group which has four connected components. The component which contains the identity (proper orthochronous Poincar\'e group) is denoted $\P_0:=\Rl^4\rtimes\L_0$, where $\L_0:=\SO(1,3)_0$ consists of those Lorentz transformations which preserve the orientation and time-orientation of $(\Rl^4,\eta)$.
\\
\\
Inside Minkowski spacetime there are special regions called wedges, which will play an important role in the following $-$ they turn out to be the typical localization regions of the deformed quantum fields from Section \ref{sec:Deformationsofthermalscalarfreefields}. They can be visualized as regions which are bounded by two non-parallel characteristic planes.\label{wedgeMinkowski} More precisely, we specify a reference wedge
\begin{equation*}
 W_0:=\{x\in\Rl^4:x^1>|x^0|\}
\end{equation*}
and define the family of all wedges $\W$ as the set of all Poincar\'e transforms of $W_0$:\label{familywedgesMinkowski}
\begin{equation*}
 \W:=\{(y,\Lambda)W_0:(y,\Lambda)\in\P_0\},\quad (y,\Lambda)W_0:=\Lambda W_0+y.
\end{equation*}
The action of the Lorentz group on a region is understood as the pointwise action.
By definition, $\P_0$ acts transitively on $\W$. Each wedge $W\in\W$ has an attached edge\label{edgeMinkoski} $E_W$, which is a two-dimensional plane. We have $E_{W_0}=\{x\in\Rl^4:x^0=x^1=0\}$ and $E_W=(y,\Lambda)E_{W_0}$ for $W=(y,\Lambda)W_0$. The wedge $W$ coincides with a connected component of the causal complement of $E_{W}$. 

The properties of these regions and their potential use in quantum field theory were investigated in \cite{ThomasWichmann:1997}. The authors show that each wedge $W\in\W$ is causally complete, {\it i.e.} $W''=(W')'=W$, and that the family $\W$ is causally separating for double cones in the following sense: for every pair of spacelike separated double cones $\OO_1,\OO_2$ there exists a wedge $W\in\W$, such that
\begin{equation}
\label{eq:Wedges_CausallySeparating}
 \OO_1\subset W\subset \OO_2\,'.
\end{equation}
Moreover, every double cone $\OO$ can be written as a suitable intersection of wedges:
\begin{equation}
\label{eq:Wedges_DoubleCones}
\OO=\bigcap_{W\supset \OO}W. 
\end{equation}

\begin{remark}
Equations (\ref{eq:Wedges_CausallySeparating}) and (\ref{eq:Wedges_DoubleCones}) are the  two key properties of wedges which enable us to define a local net in terms of a wedge-local net (see Section \ref{sec:CAQFT}).
\end{remark}

In the subsequent discussion of the scalar free field we will see that Lorentz transformations are broken in the thermal representation. The thermal fields transform covariantly only with respect to the extended Euclidean group $\Euc$. Clearly, this group does not act transitively on the set $\W$. For the definition of the deformed thermal net in Section \ref{sec:Deformationsofthermalscalarfreefields} it is convenient to decompose $\W$ into certain ``coherent'' subfamilies, namely, subsets where $\Euc$ does act transitively. This decomposition is defined in the following way. First, we split $\W$ into a Lorentz part and a translational part:
\begin{equation*}
 \W=\{W+y:W\in\W_\L,\,y\in\Rl^4\},\quad \W_\L:=\{\Lambda W_0:\Lambda\in\L_0\}.
\end{equation*}
Then we define an equivalence relation on $\L_0$: Two Lorentz transformations $\Lambda,\Lambda'\in\L_0$ are called equivalent, if there exists an $R\in\SO(3)_0$ such that $\Lambda=R\Lambda'$. One easily verifies that this defines indeed an equivalence relation. The set $\W_\L$ decomposes into subsets
\begin{equation*}
 \W_\L=\bigcup_{[\Lambda]}\W_{[\Lambda]},\quad
\W_{[\Lambda]}=\{\Lambda'W_0:\Lambda'\sim \Lambda\}
=\{R\Lambda W_0:R\in\SO(3)_0\}.
\end{equation*}
By definition, $\Euc$ acts transitively on each $\{W+y:W\in \W_{[\Lambda]},\,y\in\Rl^4\}$.
Next we make use of the fact that any Lorentz transformation $\Lambda\in\L_0$ can be written as
\begin{equation*}
 \Lambda=\Lambda_{\ee}\Lambda_1(s)\Lambda_{\ee'},
\end{equation*}
where $\Lambda_1(s)$, $s\in\Rl$ is a boost in the $x^1$-direction and $\Lambda_{\ee},\Lambda_{\ee'}$ are spatial rotations around axes $\ee,\ee'\in\Rl^3$ and angles $|\ee|$, $|\ee'|\in\Rl$. Hence, every $\W_{[\Lambda]}$ can be parametrized by a number $s\in\Rl$ and a single vector $\ee\in\Rl^3$, {\it i.e.}
\begin{equation*}
 \W_{[s,\ee]}:=\W_{[\Lambda]}=\{R\Lambda_\ee(s) W_0:R\in\SO(3)_0\},\quad \Lambda_\ee(s):=\Lambda_1(s)\Lambda_\ee.
\end{equation*}
From this follows that every wedge $W\in\W$ can be written as $W=R\Lambda_\ee(s) W_0+y$, for suitable $R\in\SO(3)_0$, $\ee\in\Rl^3$, $s\in\Rl$ and $y\in\Rl^4$.

\begin{remark}
 In Chapter \ref{ch:Cosmological} about cosmological spacetimes a similar decomposition of wedges into coherent subfamilies will be used.
\end{remark}

\section{Vacuum representation of the scalar free field }
\label{sec:VacRepSFF}
\subsection{The Weyl algebra for the scalar free field}
Before we come to the construction of the Weyl algebra for the scalar free field, we collect the basic properties of the (abstract) Weyl algebra over a real symplectic space.
\label{realsymplecticvectorspaceWeylAlgebra}
\begin{definition}
\label{abstractweylalgebra}
 Let $\hhs$ be a real vector space and $\sigma$ a non-degenerate symplectic form on $\hhs$.  The \emph{Weyl algebra} $\CCR(\hhs,\sigma)$ is the C$^*$-algebra which is generated by symbols $V(f),f\in \hhs$ satisfying
\begin{equation}
\label{eq:WeylCCR}
 V(f)^*=V(-f),\quad 
 V(f)V(g)=e^{-\frac{i}{2}\sigma(f,g)}V(f+g)
\end{equation}
for all $f,g\in \hhs$.
\end{definition}
\begin{proposition}[\cite{BratteliRobinson:1997}]
	$\CCR(\hhs,\sigma)$ is unique up to isomorphism and simple, so all its representations are faithful or trivial. Moreover, there holds
	\begin{itemize}
	\item[i)] $V(0)=\unit$.
	\item[ii)] Each $V(f)$, $f\in \hhs$ is unitary.
	\item[iii)] $\|V(f)-\unit\|=2$, if and only if $f\neq 0$.
	\end{itemize}
\end{proposition}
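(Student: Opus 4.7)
The plan is to derive (i)--(iii) directly from the Weyl relations (\ref{eq:WeylCCR}), invoking the non-degeneracy of $\sigma$ only once, namely for the lower bound in (iii); uniqueness and simplicity will then follow from the same spectral fact by a standard argument that I will only sketch. For (i), I would observe that $V(0)^{*}=V(-0)=V(0)$ and, from (\ref{eq:WeylCCR}) with $f=g=0$, that $V(0)^{2}=V(0)$, so $V(0)$ is a self-adjoint projection. Moreover $V(0)V(f)=e^{-\frac{i}{2}\sigma(0,f)}V(f)=V(f)$ shows that $V(0)$ acts as a left identity on every generator, hence on all of $\CCR(\hhs,\sigma)$; evaluating at the unit yields $V(0)=\unit$. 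Statement (ii) follows at once: by antisymmetry, $\sigma(f,-f)=0$, so $V(f)V(-f)=V(-f)V(f)=V(0)=\unit$, which together with $V(f)^{*}=V(-f)$ makes $V(f)$ unitary.

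For (iii), the upper bound $\|V(f)-\unit\|\le 2$ is immediate from the unitarity of $V(f)$. The substantive step is the lower bound for $f\neq 0$. Exploiting non-degeneracy of $\sigma$, I would pick $g\in\hhs$ with $\sigma(f,g)\neq 0$ and compute, applying (\ref{eq:WeylCCR}) twice and using $\sigma(g,g)=0$ together with the antisymmetry of $\sigma$,
\begin{equation*}
V(g)\,V(f)\,V(g)^{*}=V(g)\,V(f)\,V(-g)=e^{\,i\,\sigma(f,g)}\,V(f).
\end{equation*}
Since unitary conjugation preserves the spectrum, $\spec(V(f))$ is invariant under multiplication by $e^{i\sigma(f,g)}$. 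Substituting $tg$ for $g$ with $t\in\Rl$, the phases $e^{it\sigma(f,g)}$ sweep out the entire unit circle, so by closedness and non-emptiness of the spectrum of a unitary one concludes $\spec(V(f))=\{z\in\Cl:|z|=1\}$. In particular $-1\in\spec(V(f))$, giving $\|V(f)-\unit\|\ge|-1-1|=2$. The reverse direction of the equivalence is trivial: if $f=0$, then $V(f)-\unit=0$.

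For uniqueness up to isomorphism and simplicity, the strategy is to show that every non-trivial unital $*$-representation $\pi$ of $\CCR(\hhs,\sigma)$ is faithful. The operators $\pi(V(f))$ inherit the Weyl relations, so the spectral argument in (iii) applies verbatim to give $\|\pi(V(f))-\unit\|=2$ for every $f\neq 0$, in particular $\pi(V(f))\neq 0$. Combined with the existence of the canonical tracial state $\tau_{0}(V(f))=\delta_{f,0}$ on the $*$-algebra spanned by the Weyl symbols, whose positivity one verifies directly from (\ref{eq:WeylCCR}), standard arguments (see \cite{BratteliRobinson:1997}) then force $\pi$ to be isometric and hence faithful; simplicity and uniqueness of the enveloping C$^{*}$-norm follow.

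The only genuine obstacle of the whole proof is the spectral step in (iii), where non-degeneracy of $\sigma$ enters essentially via the rotational symmetry of $\spec(V(f))$; every other step reduces to short manipulations with (\ref{eq:WeylCCR}).
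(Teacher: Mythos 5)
The paper itself offers no proof of this proposition: it is quoted verbatim from Bratteli--Robinson, so there is nothing internal to compare against, and your argument should be judged on its own terms. Your treatment of i)--iii) is correct and is in fact the standard argument from the cited reference: $V(0)$ is a self-adjoint idempotent acting as an identity on the generators, hence the unit; $V(f)V(-f)=V(-f)V(f)=V(0)$ gives unitarity; and the conjugation identity $V(g)V(f)V(g)^*=e^{i\sigma(f,g)}V(f)$, combined with non-degeneracy of $\sigma$ (replace $g$ by $tg$, $t\in\Rl$), forces $\spec(V(f))$ to be rotation-invariant, hence the whole unit circle, so normality of $V(f)-\unit$ yields $\|V(f)-\unit\|=2$ for $f\neq0$, the converse being trivial.

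The only place where your proposal is thinner than what it claims is the uniqueness/simplicity paragraph. Knowing that $\pi(V(f))\neq0$ for every $f\neq0$ and that $\tau_0(V(f))=\delta_{f,0}$ is positive on the algebraic span of the Weyl symbols does not by itself make an arbitrary non-trivial representation isometric: the substantive step in Bratteli--Robinson is the lower bound $\|\sum_i c_iV(f_i)\|\ge|c_j|$ for finite sums with distinct $f_i$ (equivalently, that $\tau_0$ is bounded, indeed a state, with respect to \emph{every} C$^*$-norm compatible with (\ref{eq:WeylCCR})), which is obtained by an averaging argument over the conjugations you already used in iii). You correctly flag this part as a sketch and defer it to the reference, which is consistent with what the thesis itself does, but be aware that this deferred lemma, not the positivity computation, is the actual content of simplicity and uniqueness.
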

From part $iii)$ follows that the one-parameter groups $\Rl\ni\lambda\mapsto V(\lambda f)$, $f\in\hhs$ are never norm-continuous. Hence, there do not exist non-trivial strongly continuous actions of $\Rl^n$ on $\CCR(\hhs,\sigma)$. This is essentially the reason why we do not consider deformations of the Weyl algebra itself, but rather of generators of Weyl unitaries (field operators) later on.
For the purpose of defining field operators from Weyl operators it is useful to consider the following class of representations.
\begin{definition}
	A representation $\pi:\CCR(\hhs,\sigma)\ra\cal{B}(\HS_\pi)$ is called \emph{regular}, if the unitary groups $\Rl\ni\lambda\mapsto \pi(V(\lambda f))$ are strongly continuous for all $f\in\hhs$.
\end{definition}
From Stone's theorem about strongly continuous unitary groups follows that for these representations there exist unique selfadjoint operators $\phi_\pi(f)$ on $\HS_\pi$, such that $\pi(V(\lambda f))=\exp(i\lambda \phi_\pi(f))$ holds for all $\lambda \in\Rl$, $f\in \hhs$. As $\lambda \mapsto \pi(V(\lambda f))$ is not norm-continuous, this generator is unbounded. The action of the field operators is defined by differentiation of Weyl operators: 
\begin{equation}
\label{eq:SFFregularrep}
	\phi_\pi(f)\Psi
:=\frac{1}{i}\frac{d}{d\lambda }\pi(V(\lambda f))\Psi\br|_{\lambda =0}
\end{equation}
with $\Psi\in D(\phi_\pi(f)):=\{\Phi\in\HS_\pi:\lim_{\lambda \ra 0}[\pi(V(\lambda f))\Phi-\Phi]/\lambda \text{ exists}\}$. These operators satisfy the canonical commutation relation 
\begin{equation*}
 [\phi_\pi(f),\phi_\pi(g)]\Psi=i\sigma(f,g)\Psi,\quad \Psi\in D(\phi_\pi(f))\cap D(\phi_\pi(g))
\end{equation*}
by equation (\ref{eq:WeylCCR}) and since $\pi(V(f))D(\phi_\pi(g))=D(\phi_\pi(g))$ holds for all $f,g\in\hhs$ as the representation $\pi$ is regular \cite[Prop.5.2.4]{BratteliRobinson:1997}.
\\
\\
We proceed with the construction of $(\hhs,\sigma)$ for the (neutral massive) scalar free field.  
This construction is based on Wigner's classification of irreducible unitary representations of the Poincar\'e group \cite{Wigner:1939}. Associated with massive ($m>0$) and spinless ($s=0$) particles is the Hilbert space\label{oneparticlespaceSFF} $\HS_1:=L^2(H_m^+,d\mu_m)$, where $d\mu_m=d^3\pp/2\veps_\pp$\label{lorentzinvariantmeasure} is the (up to a constant) unique Lorentz-invariant measure on the upper mass hyperboloid\label{masshyperboloid} $H_m^+=\{p\in\Rl^4:p^2=m^2,\,p_0>0\}$ and $\veps_\pp=\sqrt{\pp^2+m^2}$ is the energy of a particle with momentum $\pp\in\Rl^3$. In the following, we use the letters $p,q$ for on-shell momenta and $\pp,\qq$ for their spatial components. 

The Hilbert space $\HS_1$ carries a unitary strongly continuous positive energy representation $U_1$ of $\P_0$, which is given by
\begin{equation}
\label{eq:SFFOneparticlerep}
 (U_1(y,\Lambda)\vphi)(p)=e^{ip\cdot y}\vphi(\Lambda^{-1}p),
\quad (y,\Lambda)\in\P_0,
\;\vphi\in \HS_1.
\end{equation}
An element\footnote{or more precisely a ray, {\it i.e.} an equivalence class of vectors in the Hilbert space, where elements are identified if they differ by a phase.} in the representation space is physically interpreted as the state vector (wave function) of a single massive spinless and non-interacting particle which satisfies the dispersion relation $E(\pp)=\sqrt{\pp^2+m^2}$.

Consider now the space $\Ss(\Rl^4,\Rl)$ of real-valued Schwartz functions on Minkowski space. Let $f\in\Ss(\Rl^4,\Rl)$ and define $\tilde{f}_+:=\tilde{f}\uhr H_m^+$. For $m>0$ the range of $f\mapsto\tilde{f}_+$ is contained in $\HS_1$, so we have a well-defined map from test functions to one-particle vectors. By direct computation one verifies that
\begin{equation*}
(\widetilde{f_{(y,\Lambda)}})_+=U_1(y,\Lambda)\tilde{f}_+,\quad   f_{(y,\Lambda)}(x):=f(\Lambda^{-1}(x-y)).
\end{equation*} 
For the construction of the Weyl algebra for the scalar free field we equip $\Ss(\Rl^4,\Rl)$ with the following symmetric positive semidefinite bilinear form
\begin{equation}
\label{eq:SFFscalarproduct}
 \left<f,g\right>_m
:=\int d\mu_m(p)\overline{\tilde{f}(p)}\tilde{g}(p)
=\int d\mu_m(p)\tilde{f}(-p)\tilde{g}(p),\quad f,g\in \Ss(\Rl^4,\Rl).
\end{equation}
Its kernel $\ker_m:=\{f\in\Ss(\Rl^4,\Rl):\|f\|_m^2:=\left<f,f\right>_m=0\}$\label{massshellnorm} consists of those Schwartz functions whose Fourier transform vanishes on $H_m^+$. We write $\hhs_m:=\Ss(\Rl^4,\Rl)/\ker_m$ for the corresponding quotient. The set $\hhs_m$ has the structure of a real vector space and 
\begin{equation*}
    \sigma_m([f],[g]):=\im\left<f,g\right>_m
\end{equation*}
defines a non-degenerate symplectic form on $\hhs_m$, where $f,g$ are representatives of the equivalence classes $[f],[g]\in\hhs_m$, respectively. Note that the inner product (\ref{eq:SFFscalarproduct}) does not depend on the chosen representative.
\begin{remark}
	For notational simplicity we will write $f\in\hhs_m$ in the following, where $f$ is a representative of the equivalence class $[f]=f+\ker_m$. 
\end{remark}
The Weyl algebra of the scalar free field is defined as $\Aa_m:=\CCR(\hhs_m,\sigma_m)$.
\label{weylalgebraSFF} Local algebras are obtained in the obvious manner: Let $\OO\subset \Rl^4$ be an open and bounded region in Minkowski space and define
\begin{equation*}
 \Aa_m(\OO):=\CCR(\hhs_m(\OO),\sigma_m),\quad
 \hhs_m(\OO):=\{f\in\hhs_m:\supp(f)\subset\OO\}.
\end{equation*}
From standard results \cite[Prop.5.2.10]{BratteliRobinson:1997} it follows that the map $\OO\mapsto\Aa_m(\OO)$ is a Poincar\'e-covariant net of C$^*$-algebras with respect to the natural action
\begin{equation}
\label{eq:SFFPoincareaction}
 \alpha:\P_0\ra\Aut(\Aa_m),\quad 
\alpha_{(y,\Lambda)}(V(f)):=V(f_{(y,\Lambda)}).
\end{equation}
Moreover, it is also local, {\it i.e.} $\Aa_m(\OO_1)\subset\Aa_m(\OO_2)'$ if $\OO_1\subset\OO_2\,'$ (see \cite[Lem.I.2.2]{Borchers:1996}). 
\\
\\
For the definition of creation and annihilation operators (from Weyl operators) it is convenient to introduce a complex structure $J$ on $\Ss(\Rl^4,\Rl)$. It is essentially given by multiplication with $i$ but also distinguishes between the positive and negative frequency parts of the Fourier transform. Let $h\in C^\infty(\Rl,\Rl)$ be an odd function, satisfying
\begin{equation}
\label{eq:SFF_Complex_Structure_h}
 h(\lambda)=
\begin{cases}
 1 &\text{if }\lambda\ge m\\
-1 &\text{if }\lambda\le -m
\end{cases}
\end{equation}
and define 
\begin{equation}
\label{eq:SFF_Complex_Structure}
 (\widetilde{Jf})(p_0,\pp):=ih(p_0)\tilde{f}(p_0,\pp),\quad f\in\Ss(\Rl^4,\Rl).
\end{equation}
Clearly, $J\Ss(\Rl^4,\Rl)\subset \Ss(\Rl^4,\Rl)$ and the operator $J$ descends to a well-defined map on the quotient $\hhs_m$ with $(\widetilde{Jf})_+(\veps_\pp,\pp)=i\tilde{f}_+(\veps_\pp,\pp)$, so that $J^2=-1$ on $\hhs_m$. By using the properties (\ref{eq:SFF_Complex_Structure_h}) of the function $h$ one can show that the map $J$ is 
compatible with $\sigma_m$ in the sense that
\begin{equation*}
 \sigma_m(f,Jg)=-\sigma_m(Jf,g).
\end{equation*}
holds for all $f,g\in\hhs_m$. Hence, the pair $(\hhs_m,\sigma_m)$ together with $J$ has the structure of a complex pre-Hilbert space.\label{realsymplecticvectorspaceWeylAlgebraSFF}

\subsection{Fock representation of the scalar free field}
Consider a state $\omega:\A\ra \Cl$ on a C$^*$-algebra $\A$ together with an action $\alpha:\P_0\ra\Aut(\A)$ of $\P_0$ by automorphisms on $\A$, such that $(y,\Lambda)\mapsto \omega(A\alpha_{(y,\Lambda)}(B))$ is continuous for all $A,B\in\A$. We write $(\HS_\omega,\pi_\omega,\Omega_\omega)$ for the GNS-triple which is associated with $(\A,\omega)$.
\begin{definition}
The state $\omega$ is called \emph{Poincar\'e-invariant}, if $\omega\circ\alpha_{(y,\Lambda)}=\omega$ holds for all $(y,\Lambda)\in\P_0$.
\end{definition}
From the invariance of the state it follows that the action is implementable $-$ the unitaries are given by $U(y,\Lambda)\pi_\omega(A)\Omega_\omega:=\pi(\alpha_{y,\Lambda}(A))\Omega$, $A\in\A$. From the continuity of $\alpha$ follows that $\{U(y,\Lambda):(y,\Lambda)\in\P_0\}$ is a strongly continuous unitary group. For the unitaries which implement spacetime translations we write $U(y):=U(y,1)$. 
\begin{definition}
\label{vacuumstate}
 The state $\omega$ is called \emph{vacuum state}, if it is Poincar\'e-invariant and if the joint spectrum of the generators $\{P_\mu:\mu=0,\dots,3\}$ of the spacetime translations $U(y)=\exp(iy\cdot P)$ is contained in the closed forward lightcone $\overline{V^+}$ (spectrum condition).
\end{definition}

\begin{proposition}[\cite{Borchers:1996}]
\label{prop:SFFvacuumstate}
Consider the Weyl algebra for the scalar free field $\Aa_m$. The map $\omega_0:\Aa_m\ra\Cl$, defined as
\begin{equation*}
 \omega_0(V(f)):=\exp\Bl(-\frac{1}{4}\|f\|^2_m\Br),\quad f\in\hhs_m
\end{equation*}
and extended to all of $\Aa_m$ by linearity and continuity, is a vacuum state. 
\end{proposition}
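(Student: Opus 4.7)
The plan is to verify in turn the three defining properties of a vacuum state (Definition \ref{vacuumstate}): that $\omega_0$ is a state on $\Aa_m$, that it is Poincaré-invariant, and that its GNS representation satisfies the spectrum condition. Normalization is immediate since $\|0\|_m = 0$ gives $\omega_0(\unit) = \omega_0(V(0)) = 1$, and the Poincaré-invariance of $\omega_0$ follows at once from the identity $(\widetilde{f_{(y,\Lambda)}})_+ = U_1(y,\Lambda)\tilde{f}_+$ and the unitarity of $U_1(y,\Lambda)$ on $\HS_1$: $\|f_{(y,\Lambda)}\|_m = \|f\|_m$ forces $\omega_0 \circ \alpha_{(y,\Lambda)} = \omega_0$ on the Weyl generators, hence on all of $\Aa_m$.

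Positivity requires more work. For a generic finite combination $A = \sum_{j=1}^N c_j V(f_j)$ of Weyl operators, the relations (\ref{eq:WeylCCR}) yield
\begin{equation*}
\omega_0(A^*A) = \sum_{j,k} \overline{c_j}\, c_k \, e^{\frac{i}{2}\sigma_m(f_j,f_k)} \exp\Bl(-\tfrac{1}{4}\|f_k - f_j\|_m^2\Br).
\end{equation*}
Decomposing $\left<f,g\right>_m = s_m(f,g) + i\sigma_m(f,g)$ into real and imaginary parts and applying the polarization identity $\|f_k - f_j\|_m^2 = \|f_j\|_m^2 + \|f_k\|_m^2 - 2 s_m(f_j,f_k)$, the kernel factorizes as $M_{jk} = a_j a_k\, e^{\frac{1}{2}\left<f_j, f_k\right>_m}$ with $a_j := e^{-\frac{1}{4}\|f_j\|_m^2} \ge 0$. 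Absorbing $a_j$ into rescaled coefficients, positivity reduces to the statement that the kernel $(f,g) \mapsto e^{\frac{1}{2}\left<f,g\right>_m}$ is positive definite, which follows from the Taylor expansion of the exponential together with the Schur product theorem, using that $\left<\cdot,\cdot\right>_m$ is itself positive semidefinite when $\hhs_m$ is viewed as a complex pre-Hilbert space via the complex structure $J$. The state then extends to all of $\Aa_m$ by continuity.

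The main obstacle is the spectrum condition, which I plan to address by explicitly identifying the GNS triple $(\HS_{\omega_0}, \pi_{\omega_0}, \Omega_{\omega_0})$ with the standard Fock representation. Let $\Ff_+(\HS_1)$ be the symmetric Fock space over $\HS_1$ with Fock vacuum $\Omega_F$, and set $\pi_F(V(f)) := \exp\bl( i\phi_F(f)\br)$, where $\phi_F(f) := \frac{1}{\sqrt{2}}\bl(a^*(\tilde{f}_+) + a(\tilde{f}_+)\br)$ is the Segal field operator. A standard computation on coherent vectors gives $\langle \Omega_F, \pi_F(V(f))\Omega_F\rangle = e^{-\frac{1}{4}\|f\|_m^2} = \omega_0(V(f))$, while cyclicity of $\Omega_F$ for $\pi_F(\Aa_m)$ follows from the density of coherent vectors in $\Ff_+(\HS_1)$. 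Uniqueness of the GNS data then identifies $\HS_{\omega_0} \cong \Ff_+(\HS_1)$, $\Omega_{\omega_0} \cong \Omega_F$, and $\pi_{\omega_0} \cong \pi_F$. Poincaré-invariance of $\omega_0$ forces the implementers to coincide with the second quantization $U(y,\Lambda) = \Gamma(U_1(y,\Lambda))$, so the translation generators are $P_\mu = d\Gamma((P_1)_\mu)$. Since the joint spectrum of $(P_1)_\mu$ on $\HS_1$ is the mass shell $H_m^+ \subset \overline{V^+}$, and $\overline{V^+}$ is a closed convex cone invariant under addition, the joint spectrum of $\{P_\mu\}$ on $\Ff_+(\HS_1)$ lies in $\{0\} \cup \bigcup_{n\ge 1}(H_m^+ + \dots + H_m^+) \subset \overline{V^+}$, completing the verification of the spectrum condition.
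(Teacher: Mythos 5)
Your proposal is correct, and it follows essentially the same route the paper takes: the paper delegates the statement to \cite{Borchers:1996} and then, in the following subsection, performs exactly your identification of the GNS triple with the Fock representation (culminating in Proposition \ref{prop:Fockrep}), with invariance read off from the unitarity of $U_1$ and the spectrum condition from second quantization of the one-particle generators whose joint spectrum is $H_m^+\subset\overline{V^+}$. The only piece you supply that the paper leaves entirely to the citation is the positivity of the generating functional, which you handle by the standard factorization $M_{jk}=a_ja_k e^{\frac{1}{2}\left<f_j,f_k\right>_m}$ and the Schur product theorem; this argument is sound (the Gram positivity is immediate from $\sum_{j,k}\overline{c_j}c_k\left<f_j,f_k\right>_m=\int d\mu_m\,|\sum_k c_k\tilde{f}_{k,+}|^2\ge 0$, without needing the complex structure $J$), so your write-up is a complete and compatible justification of the cited result.
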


In the following we refer to the GNS-triple $(\HS_{\omega_0},\pi_{\omega_0},\Omega_{\omega_0})$ as \emph{vacuum representation} of the scalar free field.  
A concrete realization is given by the Fock representation which we describe in the following (see Proposition \ref{prop:Fockrep}). Let
\begin{equation}
\label{eq:SFFvacuumFockspace}
 \F(\HS_1):=\bigoplus_{n\ge 0}\HS_n,\quad \HS_n:=(\HS_1)^{\otimes_sn},\quad \HS_0:=\Cl
\end{equation}
be the Bosonic Fock space\label{bosefockspace} over the one-particle space $\HS_1$. We write $\Omega:=(1,0,0,\dots)$ for the Fock vacuum\label{fockvacuumvacuum}. The representation (\ref{eq:SFFOneparticlerep}) naturally extends to $\F(\HS_1)$ via second quantization: $(y,\Lambda)\in\P_0$, $\Psi\in\F(\HS_1)$
\begin{equation}
\label{eq:SFFPoincareRep}
 (U(y,\Lambda)\Psi)_n(p_1,\dots,p_n)
=e^{i\sum_{k=1}^np_k\cdot y}\Psi_n(\Lambda^{-1}p_1,\dots,\Lambda^{-1}p_n),\quad n\in\Nl
\end{equation}
and $U(y,\Lambda)\Omega=\Omega$. The map $y\mapsto U(y)$ is a strongly continuous unitary group and by Stone's theorem there exist selfadjoint operators $\{P_\mu:\mu=0,\dots,3\}$, such that $U(y)=\exp(iy\cdot P)$. The joint spectrum of this family of commuting and selfadjoint operators is contained in the closed forward light cone.

Elements in $\F(\HS_1)$ are sequences $\Psi=\{\Psi_n\}_{n=0}^\infty$, where $\Psi_n\in\HS_n$ is an $n$-particle vector. For the dense subspace of vectors with only finitely many components unequal to zero we write $\F_0$. On this subspace we define  annihilation operators $a(\vphi)$, $\vphi\in\HS_1$ and creation operators $a^\dagger(\vphi)$, $\vphi\in\HS_1$, in the following way
\begin{equation*}
\label{vacuumcreationannihilationoperators}
 (a(\vphi)\Psi)_n(p_1,\dots,p_n):=
\sqrt{n+1}\int d\mu_m(p)\overline{\vphi(p)}\Psi_{n+1}(p,p_1,\dots,p_n),\qquad
a^\dagger(\vphi):=a(\vphi)^*.
\end{equation*}
There holds
\begin{equation*}
 (a^\dagger(\vphi)\Psi)_n(p_1,\dots,p_n)=
\begin{cases}
0 & \text{if } n=0\\
\frac{1}{\sqrt{n}}\sum_{k=1}^n\vphi(p_k)\Psi_{n-1}(p_1,\dots,\widehat{p_k},\dots,p_n) & \text{if } n>0,
\end{cases}
\end{equation*}
where $\widehat{p_k}$ denotes omission of $p_k$. Note that $\vphi\mapsto a^\dagger(\vphi)$ is complex linear, while $\vphi\mapsto a(\vphi)$ is anti-linear. The creation and annihilation operators $a^\#(\vphi)$ satisfy the usual canonical commutation relations on $\F_0$: $\vphi,\psi\in\HS_1$
\begin{equation*}
 [a(\vphi),a(\psi)]=[a^\dagger(\vphi),a^\dagger(\psi)]=0,\quad
[a(\vphi),a^\dagger(\psi)]
=\left<\vphi,\psi\right>_m\cdot 1.
\end{equation*}
Under the representation (\ref{eq:SFFPoincareRep}) of the Poincar\'e group they transform according to
\begin{equation*}
 U(y,\Lambda)a^\#(\vphi)U(y,\Lambda)^{-1}=a^\#(\vphi_{(y,\Lambda)}),\quad
\vphi_{(y,\Lambda)}(p)=e^{ip\cdot y}\vphi(\Lambda^{-1}p).
\end{equation*}
Using the usual symbolic notation, we introduce the operator-valued distributions $a^\#(p)$, $p\in H_m^+$ which are associated with $a^\#(\vphi)$, $\vphi\in\HS_1$ via
\begin{equation*}
 a(\vphi)=\int d\mu_m(p)\overline{\vphi(p)}a(p),\qquad 
 a^\dagger(\vphi)=\int d\mu_m(p)\vphi(p)a^\dagger(p).
\end{equation*}
In this notation the CCR relations are
\begin{equation*}
 [a(p),a(q)]=[a^\dagger(p),a^\dagger(q)]=0,\qquad
 [a(p),a^\dagger(q)]=2\veps_\pp\delta(\pp-\qq)\cdot 1
\end{equation*}
and the $a^\#(p)$, $p\in H_m^+$ transform according to
\begin{equation*}
 U(y,\Lambda)a(p)U(y,\Lambda)^{-1}=e^{-i\Lambda p\cdot y}a(\Lambda p),\qquad
 U(y,\Lambda)a^\dagger(p)U(y,\Lambda)^{-1}=e^{i\Lambda p\cdot y}a^\dagger(\Lambda p).
\end{equation*}

\begin{definition}
The scalar free field in the vacuum representation is defined as
\begin{equation*}
 \phi(f):=\frac{1}{\sqrt{2}}[a(\tilde{f}_+)+a^\dagger(\tilde{f}_+)]
,\quad f\in\Ss(\Rl^4,\Rl).
\end{equation*}
\end{definition}

The following proposition collects the basic properties of these operators and shows that $f\to \phi(f)$ satisfies the Wightman axioms for a scalar field.

\begin{proposition}[\cite{ReedSimon:1975}]
\label{prop:SFF_operatorproperties}
	Let $f\in\Ss(\Rl^4,\Rl)$. Then:
	\begin{itemize}
		\item[i)] The subspace $\F_0$ of vectors of finite particle number is contained in the domain of each $\phi(f)$ and stable under the action of these 		
		operators.
		\item[ii)] The map $f\to\phi(f)\Psi$, $\Psi\in\F_0$ is a vector-valued tempered distribution.
		\item[iii)] Each $\phi(f)$ is essentially self-adjoint.
	\end{itemize}
	Moreover:	
	\begin{itemize}	
		\item[iv)] Each $\phi(f)$ satisfies the Klein-Gordon equation: $\phi((\Box+m^2)f)=0$.
		\item[v)] Each $\phi(f)$ transforms covariantly under the representation (\ref{eq:SFFPoincareRep}) of $\P_0$:
		\begin{equation*}
		 U(y,\Lambda)\phi(f)U(y,\Lambda)^{-1}\Psi=\phi(f_{(y,\Lambda)})\Psi,\quad
		(y,\Lambda)\in\P_0,\;\Psi\in \F_0.
		\end{equation*}
		\item[vi)] For $\Psi\in\F_0$ and $f,g\in\Ss(\Rl^4,\Rl)$ there holds
		\begin{equation*}
			[\phi(f),\phi(g)]\Psi=i\im\left<f,g\right>_m \Psi.
		\end{equation*} 
	\end{itemize}
\end{proposition}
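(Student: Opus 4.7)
The plan is to reduce all six properties to standard facts about creation and annihilation operators on Fock space, together with the specific form of the map $f\mapsto\tilde{f}_+$ into the one-particle space $\HS_1$.

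First I would handle (i) and (ii) simultaneously. The standard estimates $\|a(\vphi)\Psi_n\|\le\sqrt{n}\|\vphi\|_{\HS_1}\|\Psi_n\|$ and $\|a^\dagger(\vphi)\Psi_n\|\le\sqrt{n+1}\|\vphi\|_{\HS_1}\|\Psi_n\|$ show that $a^\#(\tilde{f}_+)$ is defined on each $\HS_n$ and shifts the particle number by $\pm 1$. Hence $\F_0$ lies in the domain of $\phi(f)$ and is invariant. For (ii), note that the restriction map $\Ss(\Rl^4,\Rl)\ni f\mapsto \tilde{f}_+\in\HS_1$ is continuous (by continuity of the Fourier transform on $\Ss$ and the fact that $p\mapsto\tilde{f}(\veps_\pp,\pp)/\sqrt{2\veps_\pp}$ is controlled by Schwartz seminorms of $f$), so the composition $f\mapsto\phi(f)\Psi$ inherits continuity for fixed $\Psi\in\F_0$, and temperedness follows.

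Next I would treat essential self-adjointness (iii) via Nelson's analytic vector theorem. Each $\Psi\in\F_0$ has finite particle number, say $\le N$. A short induction on the above norm bounds shows $\|\phi(f)^k\Psi\|\le (2\|\tilde{f}_+\|_{\HS_1})^k\sqrt{(N+k)!/N!}$, which grows slowly enough that $\sum_k t^k\|\phi(f)^k\Psi\|/k!$ has positive radius of convergence in $t$. Thus $\F_0$ consists of analytic vectors for the symmetric (and clearly densely defined) operator $\phi(f)$, and Nelson's theorem yields essential self-adjointness. This is the main technical obstacle, though entirely routine.

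Finally, (iv), (v), (vi) are direct. For (iv), on the mass shell $p^2=m^2$ one has $((\Box+m^2)f)^{\sim}_+(p)=(m^2-p^2)\tilde{f}(p)|_{H_m^+}=0$, hence $a^\#(((\Box+m^2)f)^{\sim}_+)=0$, giving $\phi((\Box+m^2)f)=0$. For (v), the relation $(f_{(y,\Lambda)})^{\sim}_+=U_1(y,\Lambda)\tilde{f}_+$ established before (\ref{eq:SFFscalarproduct}) combines with the transformation law $U(y,\Lambda)a^\#(\vphi)U(y,\Lambda)^{-1}=a^\#(U_1(y,\Lambda)\vphi)$ (which follows from the formulas for $U(y,\Lambda)a^\#(p)U(y,\Lambda)^{-1}$ and a change of variables against the Lorentz-invariant measure $d\mu_m$) to give the covariance statement. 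For (vi), expand
\begin{equation*}
[\phi(f),\phi(g)]=\tfrac{1}{2}\bl([a(\tilde{f}_+),a^\dagger(\tilde{g}_+)]+[a^\dagger(\tilde{f}_+),a(\tilde{g}_+)]\br)=\tfrac{1}{2}\bl(\langle\tilde{f}_+,\tilde{g}_+\rangle-\langle\tilde{g}_+,\tilde{f}_+\rangle\br)\cdot 1,
\end{equation*}
which equals $i\,\im\langle f,g\rangle_m$ by definition of the inner product (\ref{eq:SFFscalarproduct}); everything makes sense as an identity on the invariant domain $\F_0$ by part (i).
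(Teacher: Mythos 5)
Your proof is correct and follows exactly the standard route that the paper relies on: it cites this proposition to Reed--Simon and only sketches the argument for the thermal analogue (Proposition \ref{prop:TFF_operatorproperties}), where the same ingredients appear — particle-number bounds for $a^\#$ giving (i), continuity of $f\mapsto\tilde f_+$ in the Schwartz topology giving (ii), analytic vectors plus Nelson's theorem for (iii), vanishing of $(m^2-p^2)\tilde f$ on $H_m^+$ for (iv), the transformation law of $a^\#$ for (v), and the CCR for (vi). No substantive differences to report.
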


We denote the selfadjoint closure of $\phi(f)$ by the same symbol and by Borel functional calculus $V_F(f):=\exp(i\phi(f))$ is a unitary on $\F(\HS_1)$. There holds $V_F(f)^*=V_F(-f)$ since $f$ is real and 
\begin{equation*}
 V_F(f)V_F(g)
=e^{-\frac{i}{2}\im\left<f,g\right>_m}V_F(f+g)
=e^{-i\im\left<f,g\right>_m}V_F(g)V_F(f),\quad f,g\in\hhs_m
\end{equation*}
by part $vi)$ of Proposition \ref{prop:SFF_operatorproperties}. Hence
\begin{equation}
\label{eq:SFFFockrep}
 \Aa_m\ni V(f)\longmapsto \pi_F(V(f)):= V_F(f)\in\cal{B}(\cal{F}(\HS_1))
\end{equation}
is a representation of the Weyl algebra $\Aa_m$. The Weyl operators $V_F(f)$ act on the Fock vacuum $\Omega$ according to (see \cite{Camassa:2003})
\begin{equation*}
 V_F(f)\Omega=e^{\frac{1}{4}\|f\|_m^2}e^{\tilde{f}_+},\quad e^{\tilde{f}_+}:=\bigoplus_{n=0}^\infty \frac{1}{\sqrt{n!}}(\tilde{f}_+)^{\otimes n},
\quad f\in\hhs_m,
\end{equation*}
so there holds $\left<\Omega,\pi_F(V(f))\Omega\right>_m=\omega_0(V(f))$. Furthermore, since $f\mapsto \tilde{f}_+$ has dense range (see \cite[Ch. X, Ex. 44]{ReedSimon:1975}), it follows that $\{e^{\tilde{f}_+}: f\in\hhs_m\}$ forms a total set in $\F(\HS_1)$, so $\pi_F(\Aa_m)\Omega$ is dense in $\F(\HS_1)$. The following proposition collects these findings.

\begin{proposition}
\label{prop:Fockrep}
    The GNS representations $(\HS_{\omega_0},\pi_{\omega_0},\Omega_{\omega_0})$ and the Fock representation $(\F(\HS_1),\pi_F,\Omega)$ of the Weyl algebra for the scalar free field $\Aa_m$ are unitarily equivalent. Moreover, (\ref{eq:SFFPoincareRep}) implements the action (\ref{eq:SFFPoincareaction}) in this representation.
\end{proposition}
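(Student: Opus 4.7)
My plan is to invoke the uniqueness of the GNS construction. Recall that two cyclic representations $(\HS_1,\pi_1,\Omega_1)$ and $(\HS_2,\pi_2,\Omega_2)$ of a $*$-algebra $\A$ with a common vector state $\langle\Omega_i,\pi_i(\cdot)\Omega_i\rangle = \omega(\cdot)$ are unitarily equivalent via the map determined by $\Omega_1 \mapsto \Omega_2$. Hence it suffices to show that $(\F(\HS_1),\pi_F,\Omega)$ is a well-defined $*$-representation of $\Aa_m$ which is cyclic with cyclic vector $\Omega$ and reproduces $\omega_0$. All three facts have essentially been assembled in the text preceding the proposition, so the proof amounts to collecting them.

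The computations immediately before the statement establish that $\pi_F(V(f)):=V_F(f)=\exp(i\phi(f))$ defines a representation of $\Aa_m$: the Weyl relations (\ref{eq:WeylCCR}) follow from part $vi)$ of Proposition \ref{prop:SFF_operatorproperties} via the Baker--Campbell--Hausdorff formula, exploiting that the commutator $[\phi(f),\phi(g)]\Psi = i\,\mathrm{Im}\langle f,g\rangle_m\Psi$ is a $c$-number on the invariant dense domain $\F_0$. The vacuum expectation value is the Gaussian
\begin{equation*}
\langle \Omega, V_F(f)\Omega\rangle = e^{-\frac14\|f\|_m^2} = \omega_0(V(f)),
\end{equation*}
which I would verify either by expanding $V_F(f)\Omega = e^{\frac14\|f\|_m^2}e^{\tilde f_+}$ already recorded in the excerpt, or directly by using that $\phi(f)$ is a sum of creation and annihilation operators and the state is quasi-free. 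Cyclicity of $\Omega$ for $\pi_F(\Aa_m)$ follows from the totality of the exponential vectors $\{e^{\tilde f_+}: f\in\hhs_m\}$ in $\F(\HS_1)$, which in turn rests on the density of the range of $f\mapsto \tilde f_+$ in $\HS_1$. GNS uniqueness now yields a unique unitary $W:\HS_{\omega_0}\to \F(\HS_1)$ with $W\Omega_{\omega_0}=\Omega$ and $W\pi_{\omega_0}(V(f))W^{-1}=\pi_F(V(f))$.

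For the second claim I would use the covariance of the field operator recorded in part $v)$ of Proposition \ref{prop:SFF_operatorproperties}: $U(y,\Lambda)\phi(f)U(y,\Lambda)^{-1} = \phi(f_{(y,\Lambda)})$ on $\F_0$. Since $\phi(f)$ is essentially selfadjoint on $\F_0$ (part $iii)$), taking closures and applying Borel functional calculus to $e^{i(\cdot)}$ yields
\begin{equation*}
U(y,\Lambda)\,\pi_F(V(f))\,U(y,\Lambda)^{-1} = e^{i\phi(f_{(y,\Lambda)})} = \pi_F(V(f_{(y,\Lambda)})) = \pi_F(\alpha_{(y,\Lambda)}(V(f))),
\end{equation*}
which is precisely the implementation property. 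Strong continuity of $(y,\Lambda)\mapsto U(y,\Lambda)$ was already noted from second quantization of (\ref{eq:SFFOneparticlerep}).

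The only mildly delicate point, and the one I would treat most carefully, is the passage from the canonical commutation relations on the dense domain $\F_0$ to the Weyl relations at the level of unitaries. This requires a BCH-type argument valid for essentially selfadjoint operators whose commutator is a bounded scalar on a common invariant dense core, a standard result for Fock space creation/annihilation operators; everything else is a bookkeeping exercise using the properties collected in Proposition \ref{prop:SFF_operatorproperties}.
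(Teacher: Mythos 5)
Your proposal is correct and follows essentially the same route as the paper, whose "proof" is just the discussion preceding the proposition: the Weyl relations via part $vi)$ of Proposition \ref{prop:SFF_operatorproperties}, the identity $\left<\Omega,\pi_F(V(f))\Omega\right>=\omega_0(V(f))$, cyclicity from the totality of the exponential vectors, and then (implicitly) GNS uniqueness, with implementation of the Poincar\'e action coming from part $v)$. You are merely more explicit than the paper about invoking GNS uniqueness and about the analytic-vector/BCH step needed to pass from the commutator on $\F_0$ to the Weyl relations, which is a fair point to flag but is the same standard argument the paper relies on.
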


\section{Thermal representation of the scalar free field}
\label{sec:ThermalRepSFF}
\subsection{KMS states}
Thermal equilibrium states are characterized by means of the KMS condition \cite{HaagHugenholtzWinnink:1967}. 
\begin{definition}
\label{KMSstate}
	Let $\A$ be a C$^*$-algebra and $\tau:\Rl\ra\Aut(\A)$ a one-parameter group of automorphisms. A state $\omega_\beta:\A\ra\Cl$ is called $(\tau,\beta)$-KMS state at inverse temperature $\beta> 0$, if for all $A,B\in\A$ there exists a complex-valued function $F_{A,B}$ on $S(0,\beta):=\{z\in\Cl:0\le\im(z)\le\beta\}$, such that
	\begin{itemize}
	\item[$1)$] $F_{A,B}$ is continuous and bounded in $S(0,\beta)$
	\item[$2)$] $F_{A,B}$ is analytic in the interior of $S(0,\beta)$
	\item[$3)$] its boundary values are given by
	\begin{equation}
	\label{eq:KMScondition}
		F_{A,B}(t)=\omega_\beta(A\tau_t(B)),\quad
		F_{A,B}(t+i\beta)=\omega_\beta(\tau_t(B)A).
	\end{equation}
	\end{itemize}
\end{definition}
The physical interpretation of KMS states as thermal equilibrium states is justified by their passivity and stability properties \cite{BratteliRobinson:1997}.
As the definition of temperature requires the introduction of a heat bath, the KMS condition distinguishes a rest frame, which is determined by the dynamics $\tau$ in the definition of the state. Hence it is expected that Lorentz transformations are broken, {\it i.e.} not implementable in the GNS representation of a KMS state. The following proposition lists the basic consequences of the KMS condition (\ref{eq:KMScondition}).
\begin{proposition}[\cite{BratteliRobinson:1997}]
\label{prop:KMS_basicproperties}
	Let $\omega_\beta$ be a $(\tau,\beta)$-KMS state on $\A$. Then
	\begin{itemize}
	\item[$i)$] $\omega_\beta\circ\tau_t=\omega_\beta$ for all $t\in\Rl$.
	\item[$ii)$] $\Omega_{\omega_\beta}$ is (cyclic and) separating for $\pi_{\omega_\beta}(\A)''$.
	\end{itemize}
\end{proposition}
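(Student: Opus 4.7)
For part $i)$, I would apply the KMS condition with $A=\unit$. For fixed $B\in\A$, the function $F_{\unit,B}(z)=\omega_\beta(\tau_z(B))$ is analytic on the open strip $S(0,\beta)$, continuous and bounded on its closure, and by (\ref{eq:KMScondition}) its two boundary values coincide:
\begin{equation*}
F_{\unit,B}(t+i\beta)=\omega_\beta(\tau_t(B))=F_{\unit,B}(t),\qquad t\in\Rl.
\end{equation*}
Iterated Schwarz reflection across the horizontal lines $\im z=k\beta$ thus extends $F_{\unit,B}$ to a bounded entire function, which Liouville's theorem forces to be constant. Setting $t=0$ yields $\omega_\beta(\tau_t(B))=\omega_\beta(B)$.

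For part $ii)$, cyclicity of $\Omega_{\omega_\beta}$ for $\pi_{\omega_\beta}(\A)''$ is immediate from the GNS construction, since $\pi_{\omega_\beta}(\A)\Omega_{\omega_\beta}$ is already dense in $\HS_{\omega_\beta}$ and is contained in $\pi_{\omega_\beta}(\A)''\Omega_{\omega_\beta}$. For the separating property my plan is to first establish the pointwise implication
\begin{equation*}
\pi_{\omega_\beta}(A)\Omega_{\omega_\beta}=0 \;\Longrightarrow\; \pi_{\omega_\beta}(A^*)\Omega_{\omega_\beta}=0,\qquad A\in\A,
\end{equation*}
via the KMS condition, and then to upgrade it to the full von Neumann algebra by modular theory. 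Assuming $\pi_{\omega_\beta}(A)\Omega_{\omega_\beta}=0$, invariance from part $i)$ gives, for every $B\in\A$ and $t\in\Rl$,
\begin{equation*}
\omega_\beta(B^*\tau_t(A))=\omega_\beta(\tau_{-t}(B^*)A)=\left<\pi_{\omega_\beta}(\tau_{-t}(B))\Omega_{\omega_\beta},\pi_{\omega_\beta}(A)\Omega_{\omega_\beta}\right>=0.
\end{equation*}
Hence the KMS function $F_{B^*,A}$ vanishes on the real axis; Schwarz reflection combined with the identity theorem then forces $F_{B^*,A}\equiv 0$ on $S(0,\beta)$, and evaluation at $t+i\beta$ produces
\begin{equation*}
0=\omega_\beta(AB^*)=\left<\pi_{\omega_\beta}(A^*)\Omega_{\omega_\beta},\pi_{\omega_\beta}(B^*)\Omega_{\omega_\beta}\right>
\end{equation*}
for all $B\in\A$, so cyclicity yields $\pi_{\omega_\beta}(A^*)\Omega_{\omega_\beta}=0$.

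Granted this implication, the antilinear map $S_0:\pi_{\omega_\beta}(A)\Omega_{\omega_\beta}\mapsto\pi_{\omega_\beta}(A^*)\Omega_{\omega_\beta}$ is well-defined on the dense domain $\pi_{\omega_\beta}(\A)\Omega_{\omega_\beta}$ and has dense range. I would then invoke Tomita--Takesaki theory: $S_0$ is closable and the polar decomposition of its closure produces a modular conjugation $J$ satisfying $J\pi_{\omega_\beta}(\A)''J=\pi_{\omega_\beta}(\A)'$ and $J\Omega_{\omega_\beta}=\Omega_{\omega_\beta}$, so cyclicity of $\Omega_{\omega_\beta}$ for $\pi_{\omega_\beta}(\A)''$ transfers to cyclicity for $\pi_{\omega_\beta}(\A)'$, which is equivalent to the separating property for $\pi_{\omega_\beta}(\A)''$. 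The main obstacle is precisely this final upgrade from elements of $\A$ to elements of $\pi_{\omega_\beta}(\A)''$; without modular theory one would need to approximate generic elements of the bicommutant strongly (e.g.\ by Kaplansky density) and then control the behaviour of the associated KMS functions in the limit, which is considerably less clean than quoting the modular-theoretic result directly.
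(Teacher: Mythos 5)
Your part $i)$ and the first half of part $ii)$ are essentially sound: the Liouville argument with $A=\unit$ (strictly speaking an extension by $i\beta$-periodicity, glued across the lines $\im z=k\beta$ by continuity and Morera, rather than Schwarz reflection proper) is the standard proof of invariance, cyclicity is indeed immediate from the GNS construction, and your strip argument correctly yields $\pi(A)\Omega=0\Rightarrow\pi(A^*)\Omega=0$ for $A\in\A$, writing $\pi:=\pi_{\omega_\beta}$, $\Omega:=\Omega_{\omega_\beta}$. (The paper itself offers no proof — it imports the proposition from Bratteli--Robinson and only remarks that the separating property "uses the KMS condition" — so the question is whether your argument closes on its own.) It does not, at the final step: you cannot simply "invoke Tomita--Takesaki" to obtain $J$ with $J\pi(\A)''J=\pi(\A)'$, because Tomita's theorem takes as hypothesis a cyclic \emph{and separating} vector for the von Neumann algebra, which is precisely what is to be proved; and the vector-free (left Hilbert algebra) formulation requires instead that the involution $S_0:\pi(A)\Omega\mapsto\pi(A^*)\Omega$ be preclosed. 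You assert closability, but it does not follow from the well-definedness implication you derived: since $\overline{S_0}=J\Delta^{1/2}$ with $\Delta$ unbounded, there is no estimate of $\|\pi(A^*)\Omega\|$ in terms of $\|\pi(A)\Omega\|$ to produce a densely defined adjoint, and establishing closability here is of essentially the same depth as the separating property itself. So the argument is circular, or at least has an unproven crux, exactly at the "upgrade" you flagged.

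The standard repair is the route you set aside as "less clean", and it is quite manageable: extend the KMS condition from $\A$ to $M:=\pi(\A)''$. For $X\in M$ choose, by Kaplansky density, a net $A_i\in\A$ with $\|\pi(A_i)\|\le\|X\|$ and $\pi(A_i)\to X$ in the strong* topology. The functions $F_{B,A_i}$ are uniformly bounded on the closed strip (their boundary values are bounded by $\|B\|\,\|X\|$, and the three-lines/Phragm\'en--Lindel\"of theorem propagates this bound inside) and converge pointwise on the two boundary lines, since $\pi(A_i)\Omega\to X\Omega$ and $\pi(A_i^*)\Omega\to X^*\Omega$; a normal-families (Vitali) argument then produces an analytic limit function exhibiting the KMS property for $X$ with respect to the weakly continuous extension of $\tau$ — this is \cite[Prop.5.3.7]{BratteliRobinson:1997}. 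Your strip argument, now run inside $M$, gives $X\Omega=0\Rightarrow X^*\Omega=0$ for all $X\in M$, so $N:=\{X\in M:X\Omega=0\}$ is a $\sigma$-weakly closed left ideal with $N=N^*$, hence two-sided, hence $N=Mz$ for a central projection $z$; from $z\Omega=0$ and $z\pi(A)\Omega=\pi(A)z\Omega=0$ together with cyclicity one gets $z=0$, i.e.\ $\Omega$ is separating — no modular theory needed. Alternatively, one simply cites \cite[Cor.5.3.9]{BratteliRobinson:1997}, as the paper does.
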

The cyclicity of $\Omega_{\omega_\beta}$ holds simply by the GNS construction. That the vector $\Omega_{\omega_\beta}$ is also separating uses the KMS condition. From part $i)$ it follows that 
\begin{equation*}
 U_{\omega_\beta}(t)\pi_{\omega_\beta}(A)\Omega_{\omega_\beta}:=\pi_{\omega_\beta}(\tau_t(A))\Omega_{\omega_\beta},\quad A\in\A
\end{equation*}
defines a strongly continuous one-parameter unitary group $\{U_{\omega_\beta}(t):t\in\Rl\}$ which implements the automorphisms $\{\tau_t:t\in\Rl\}$ in the representation $(\HS_{\omega_\beta},\pi_{\omega_\beta},\Omega_{\omega_\beta})$. By Stone's theorem there exists a unique selfadjoint generator $L_\beta$, such that $U_{\omega_\beta}(t)=e^{itL_\beta}$, $t\in\Rl$. The operator $L_\beta$ is called \emph{Liouvillian} and it is the analogue of the Hamiltonian in the thermal case. Its spectrum is not bounded from below, symmetric \cite{Araki:1972} and typically\footnote{More precisely, $\spec(L_\beta)=\Rl$ if $\tau$ acts asymptotically Abelian and $\omega_\beta$ admits a decomposition into extremal $(\tau,\beta)$-KMS states.} the whole real line \cite{tenBrinkeWinnink:1976}.
From this we see that in thermal representations the spectrum condition is generically violated. Physically this can be traced back to the fact that an arbitrary amount of energy can be extracted from the ambient heat-bath.

\subsection{The Araki-Wood representation}
Consider the Weyl algebra of the scalar free field $\Aa_m$ together with a one-parameter group of automorphisms which represent time translations:
\begin{equation*}
\tau_t(V(f)):=V(f_t),\quad f_t(x^0,\xx)=f(x^0-t,\xx).
\end{equation*}
Time translations amounts to multiplication with the one-particle energy in momentum space, {\it i.e.} $(\tilde{f_t})_+(\veps_\pp,\pp)
=e^{it\veps_\pp}\tilde{f}_+(\veps_\pp,\pp)$. For the following definition it is convenient to define the corresponding multiplication operator: $(\vveps\tilde{f}_+)(\veps_\pp,\pp):=\veps_\pp\tilde{f}_+(\veps_\pp,\pp)$, $f\in\hhs_m$. 

\begin{proposition}[\cite{ArakiWoods:1963}]
\label{prop:TFFKMSstate}
 Let $\rho:=(e^{\beta\vveps}-1)^{-1}$ and $\beta>0$. Then
\begin{equation}
\label{eq:Weyl_KMS_state}
 \omega_\beta(V(f)):=\exp\Bl(-\frac{1}{4}\left<f,[1+2\rho]f\right>_m\Br),\quad f\in\hhs_m,
\end{equation}
extended to all of $\Aa_m$ by linearity and continuity is a $(\tau,\beta)$-KMS state.
\end{proposition}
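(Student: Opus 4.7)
The plan is to verify in two stages that (\ref{eq:Weyl_KMS_state}) extends to a $(\tau,\beta)$-KMS state on $\Aa_m$: first that $\omega_\beta$ defines a state, then that it satisfies Definition \ref{KMSstate}. Normalization $\omega_\beta(\unit) = \omega_\beta(V(0)) = 1$ is immediate. For positivity I would invoke the standard criterion for quasi-free functionals on $\CCR(\hhs_m,\sigma_m)$: a map $V(f) \mapsto e^{-Q(f,f)/2}$ with $Q$ a real symmetric bilinear form is a state iff $Q \geq 0$ and $|\sigma_m(f,g)|^2 \leq 4\,Q(f,f)\,Q(g,g)$. With $Q(f,f) = \tfrac{1}{2}\langle f,(1+2\rho)f\rangle_m$ the first condition follows from $1+2\rho \geq 1 > 0$, and the second from the same bound together with Cauchy--Schwarz on $\HS_1$: $|\sigma_m(f,g)|^2 = |\im\langle f,g\rangle_m|^2 \leq \langle f,f\rangle_m \langle g,g\rangle_m \leq 4\,Q(f,f)\,Q(g,g)$.

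For the KMS condition, by multilinearity of $(A,B) \mapsto F_{A,B}$ and the norm density of the $*$-algebra spanned by Weyl unitaries in $\Aa_m$, it suffices to exhibit $F_{V(f),V(g)}$ for $f,g \in \hhs_m$. A direct computation using (\ref{eq:WeylCCR}), (\ref{eq:Weyl_KMS_state}), and the split $1+2\rho = (1+\rho)+\rho$ gives
\begin{equation*}
\omega_\beta(V(f)\tau_t(V(g))) = \omega_\beta(V(f))\,\omega_\beta(V(g))\,\exp\!\Bl(-\tfrac{1}{2} K_+(f,g_t) - \tfrac{1}{2} \overline{K_-(f,g_t)}\Br),
\end{equation*}
where $K_\pm(f,g) := \int d\mu_m(p)\,\overline{\tilde f_+(p)}\,r_\pm(\veps_\pp)\,\tilde g_+(p)$ with $r_+ := 1+\rho$ and $r_- := \rho$; the swapped product $\omega_\beta(\tau_t(V(g))V(f))$ is obtained by conjugating $K_+$ and $K_-$ inside the exponential, using $K_\pm(g_t,f) = \overline{K_\pm(f,g_t)}$.

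The candidate analytic function is
\begin{equation*}
F(z) := \omega_\beta(V(f))\,\omega_\beta(V(g))\,\exp\!\Bl(-\tfrac{1}{2}F_+(z) - \tfrac{1}{2}F_-(z)\Br),\qquad z \in \overline{S(0,\beta)},
\end{equation*}
\begin{equation*}
F_+(z) := \!\int\! d\mu_m(p)\,\overline{\tilde f_+(p)}\,(1+\rho(\veps_\pp))\,e^{iz\veps_\pp}\,\tilde g_+(p),\quad F_-(z) := \!\int\! d\mu_m(p)\,\tilde f_+(p)\,\rho(\veps_\pp)\,e^{-iz\veps_\pp}\,\overline{\tilde g_+(p)}.
\end{equation*}
For $0 \leq \im z \leq \beta$ the two integrands are uniformly bounded on $H_m^+$ --- the factor $(1+\rho(\veps_\pp))\,e^{-(\im z)\veps_\pp}$ by $(1+\rho)$ itself, and $\rho(\veps_\pp)\,e^{(\im z)\veps_\pp}$ by $\rho\,e^{\beta\vveps} = 1+\rho$, i.e.\ the Planck identity --- so both $F_\pm$ are bounded and continuous on $\overline{S(0,\beta)}$ and analytic in its interior by Morera's theorem. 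The same Planck identity in the form $(1+\rho(\veps_\pp))\,e^{i(t+i\beta)\veps_\pp} = \rho(\veps_\pp)\,e^{it\veps_\pp}$, together with complex conjugation, yields $F_+(t)=K_+(f,g_t)$, $F_+(t{+}i\beta)=K_-(f,g_t)$, $F_-(t) = \overline{K_-(f,g_t)}$, and $F_-(t{+}i\beta) = \overline{K_+(f,g_t)}$, which exactly reproduce the two boundary expressions demanded by (\ref{eq:KMScondition}).

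The hard part is the construction of $F_-$: its boundary value $\overline{K_-(f,g_t)}$ is anti-analytic when viewed as a function of $t \in \Rl$, and only the specific Bose--Einstein form $\rho = (e^{\beta\vveps}-1)^{-1}$ allows one to recast it as the restriction of a genuine holomorphic function on $S(0,\beta)$ whose growth on the upper edge of the strip is controlled. This interplay is precisely what couples the inverse temperature $\beta$ to the width of the KMS strip and is the ``reason'' (\ref{eq:Weyl_KMS_state}) is $(\tau,\beta)$-KMS. The final extension from Weyl generators to all of $\Aa_m$ is then routine, combining multilinearity on the dense $*$-subalgebra with the uniform bound $|F_{A,B}(z)| \leq \|A\|\|B\|$ on the closed strip (Phragmén--Lindelöf).
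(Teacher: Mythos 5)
The paper offers no proof of this proposition at all: it is quoted verbatim from the literature, with the citation to Araki--Woods playing the role of the argument, so there is nothing internal to compare your proof against step by step. Your blind verification is correct and is essentially the standard one. The positivity criterion you invoke for quasi-free generating functionals, $|\sigma_m(f,g)|^2\le 4\,Q(f,f)\,Q(g,g)$ with $Q(f,f)=\tfrac{1}{2}\langle f,(1+2\rho)f\rangle_m$, is applied with the right normalization (it is saturated by the vacuum $\omega_0$), and your Weyl-generator computation is accurate: the exponent of $\omega_\beta(V(f)V(g_t))$ does reduce, after absorbing the symplectic phase, to $-\tfrac12 K_+(f,g_t)-\tfrac12\overline{K_-(f,g_t)}$ with the split $1+2\rho=(1+\rho)+\rho$, the interchange of factors conjugates the kernels, and the identity $(1+\rho)e^{-\beta\vveps}=\rho$ is precisely what makes $F_\pm$ bounded and holomorphic on $S(0,\beta)$ with the boundary values required by (\ref{eq:KMScondition}). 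The extension from the linear span of Weyl unitaries (which is a norm-dense, $\tau$-invariant $*$-subalgebra) to all of $\Aa_m$ via the three-lines bound $|F_{A,B}(z)|\le\|A\|\,\|B\|$ is the standard density argument and closes the proof. It is worth noting that your decomposition $1+2\rho=(1+\rho)+\rho$ is exactly the structure the paper exploits afterwards, where Proposition \ref{propo:TFFAWrepresentation} represents $V(f)$ as $V_F(\sqrt{1+\rho}\,f)\otimes\overline{V_F(\sqrt{\rho}\,f)}$; an alternative, slightly less elementary route would be to verify the KMS property on the two-point function only and appeal to quasi-freeness, or to identify the modular group of the Araki--Woods von Neumann algebra, but your direct generating-functional argument needs neither.
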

\begin{remark}
	Since $\pp\mapsto [\exp(\beta\sqrt{\pp^2+m^2})-1]^{-1}$, with $\beta,m>0$, is a  bounded and continuous function which vanishes at infinity, there holds $\rho\vphi\in L^2(H_m^+,d\mu_m)$, if $\vphi\in L^2(H_m^+,d\mu_m)$. 
\end{remark}
\begin{proposition}[\cite{ArakiWoods:1963}]
\label{propo:TFFAWrepresentation}
 The GNS-triple $(\HS_{\omega_\beta},\pi_{\omega_\beta},\Omega_{\omega_\beta})$ which is associated with $(\Aa_m,\omega_\beta)$ is unitarily equivalent to the (left) Araki-Woods (AW) triple $(\HS_\beta,\pi_\beta,\Omega_\beta)$, where
\begin{align*}
 \HS_\beta&=\,\F(\HS_1\oplus\overline{\HS_1})=\F(\HS_1)\otimes\overline{\F(\HS_1)}\\
 \pi_\beta(V(f))
&=\,V_F(\sqrt{1+\rho}\,f\oplus \overline{\sqrt{\rho}\,f})
=V_F(\sqrt{1+\rho}\,f)\otimes \overline{V_F(\sqrt{\rho}\,f)}\\
 \Omega_\beta&=\,\Omega\otimes\overline{\Omega}.
\end{align*}
Here $\F(\HS_1)$ is the Fock space (\ref{eq:SFFvacuumFockspace}) from the vacuum representation of the scalar free field and $\overline{\F(\HS_1)}$ is its\label{conjugatehilbertspace} conjugate\footnote{The conjugate $\overline{\HS}$ of a Hilbert space $\HS$ is as a set the  the same as $\HS$, but it is equipped with the algebraic operations $(\vphi,\psi)\mapsto\vphi+\psi$ (addition) and $(a,\vphi)\mapsto \bar{a}\vphi$ (scalar multiplication) together with the inner product $(\vphi,\psi)\mapsto\left<\vphi,\psi\right>_{\overline{\HS}}=\left<\psi,\vphi\right>_{\HS}$. Elements in $\overline{\HS}$ are denoted by $\overline{\vphi}$ and if $A$ is a linear operator on $\HS$, then we denote by $\bar{A}$ the linear operator on $\overline{\HS}$ which is defined as $\bar{A}\overline{\vphi}:=\overline{A\vphi}$.
}. The tensor product which appears in $\HS_\beta$ is the unsymmetrized tensor product of Hilbert spaces. The vector $\Omega\in\F(\HS_1)$ is the Fock vacuum and $V_F(f)$,$f\in\hhs_m$ is the Fock representation (\ref{eq:SFFFockrep}) of $\Aa_m$.
\label{fockvacuumthermal}
\end{proposition}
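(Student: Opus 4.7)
The plan is to exhibit $(\HS_\beta,\pi_\beta,\Omega_\beta)$ as a GNS triple for $(\Aa_m,\omega_\beta)$ and then invoke uniqueness of the GNS construction up to unitary equivalence. Three items must be checked: $(i)$ $\pi_\beta$ is a $\ast$-representation of $\Aa_m$; $(ii)$ $\langle\Omega_\beta,\pi_\beta(A)\Omega_\beta\rangle_{\HS_\beta}=\omega_\beta(A)$ for all $A\in\Aa_m$; $(iii)$ $\Omega_\beta$ is cyclic for $\pi_\beta(\Aa_m)$.

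For $(i)$, first extend the Fock Weyl operators $V_F$ from the real domain $\hhs_m$ to all of $\HS_1$ by the standard CCR calculus: for each $h\in\HS_1$ the field operator $\phi(h)=2^{-1/2}(a(h)+a^\dagger(h))$ is essentially selfadjoint on $\F_0$, and $V_F(h):=\exp(i\overline{\phi(h)})$ satisfies $V_F(h_1)V_F(h_2)=\exp(-\tfrac{i}{2}\operatorname{Im}\langle h_1,h_2\rangle_m)V_F(h_1+h_2)$. Since $\rho$ is a bounded positive multiplication operator on $\HS_1$, the vectors $\sqrt{1+\rho}\,\tilde f_+$ and $\sqrt{\rho}\,\tilde f_+$ lie in $\HS_1$, so each $\pi_\beta(V(f))$ is a well-defined unitary on $\HS_\beta$. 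Composing $\pi_\beta(V(f))\pi_\beta(V(g))$ and using the rule $\overline{e^{-i\alpha}X}=e^{i\alpha}\overline{X}$ on the conjugate tensor factor, the phases combine to $-\tfrac{i}{2}\operatorname{Im}\langle f,(1+\rho)g\rangle_m+\tfrac{i}{2}\operatorname{Im}\langle f,\rho g\rangle_m=-\tfrac{i}{2}\sigma_m(f,g)$, so the Weyl relations (\ref{eq:WeylCCR}) of $\Aa_m$ are preserved.

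For $(ii)$, the tensor product factorizes the expectation value as $\langle\Omega,V_F(\sqrt{1+\rho}\,\tilde f_+)\Omega\rangle\cdot\overline{\langle\Omega,V_F(\sqrt\rho\,\tilde f_+)\Omega\rangle}$; a standard Gaussian computation gives $\langle\Omega,V_F(h)\Omega\rangle=\exp(-\tfrac14\|h\|^2_m)$ for any $h\in\HS_1$, and adding the two exponents reproduces $\exp(-\tfrac14\langle f,(1+2\rho)f\rangle_m)=\omega_\beta(V(f))$ from (\ref{eq:Weyl_KMS_state}).

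Item $(iii)$ is the main obstacle. Writing $V_F(h)\Omega$ as an exponential vector $\exp(\tfrac14\|h\|_m^2)e^h$ with $e^h=\bigoplus_{n\ge0}h^{\otimes n}/\sqrt{n!}$, cyclicity reduces to totality of the family $\{e^{\sqrt{1+\rho}\,\tilde f_+}\otimes\overline{e^{\sqrt\rho\,\tilde f_+}}:f\in\hhs_m\}$ in $\F(\HS_1)\otimes\overline{\F(\HS_1)}$. Rescaling $f\mapsto\lambda f$ with $\lambda\in\Rl$ and differentiating repeatedly at $\lambda=0$ extracts the homogeneous tensors $(\sqrt{1+\rho}\,\tilde f_+)^{\otimes n}\otimes\overline{(\sqrt\rho\,\tilde f_+)^{\otimes m}}$ from the closed real-linear span for all $n,m\ge0$. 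Since $\rho$ is strictly positive on $H_m^+$ (because $\beta<\infty$ and $m>0$), the multiplication operators $\sqrt{1+\rho}$ and $\sqrt\rho$ have dense range in $\HS_1$; combined with the density of $\{\tilde f_+:f\in\Ss(\Rl^4,\Rl)\}$ in $\HS_1$ as a real subspace, this gives totality in each bidegree $(n,m)$ and hence in the full Bose Fock tensor product. The difficulty here is that a single real parameter $f$ controls both tensor factors simultaneously; decoupling them is precisely where strict positivity of $\rho$ enters, and this is consistent with the fact that the zero-temperature limit $\rho\to 0$ forces the second tensor factor to degenerate to $\Cl\overline{\Omega}$, recovering the vacuum Fock representation.
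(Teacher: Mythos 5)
The paper offers no proof of this proposition at all: it is quoted from the Araki--Woods reference, so your argument has to stand on its own. Your overall strategy (exhibit $(\HS_\beta,\pi_\beta,\Omega_\beta)$ as a GNS triple and invoke uniqueness) is the right one, and items $(i)$ and $(ii)$ are correct: the sign flip of the Weyl phase on the conjugate factor, $\overline{e^{-i\alpha}X}=e^{+i\alpha}\overline{X}$, indeed turns $-\tfrac{i}{2}\mathrm{Im}\langle f,(1+\rho)g\rangle_m+\tfrac{i}{2}\mathrm{Im}\langle f,\rho g\rangle_m$ into $-\tfrac{i}{2}\sigma_m(f,g)$, and the Gaussian formula reproduces $\omega_\beta(V(f))$. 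Since products of Weyl operators are again phases times Weyl operators, reducing cyclicity to totality of the vectors $\pi_\beta(V(f))\Omega_\beta$ is also legitimate.

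The gap is in $(iii)$. Differentiating $\lambda\mapsto e^{\lambda h_1}\otimes\overline{e^{\lambda h_2}}$ (with $h_1=\sqrt{1+\rho}\,\tilde f_+$, $h_2=\sqrt{\rho}\,\tilde f_+$) repeatedly at $\lambda=0$ does \emph{not} produce the individual bidegree tensors $h_1^{\otimes n}\otimes\overline{h_2^{\otimes m}}$: for real $\lambda$ the conjugate factor scales by the same power of $\lambda$, so the $k$-th derivative yields only the diagonal sum over $n+m=k$; bidegrees with equal total degree cannot be separated by a single real scaling parameter (complex $\lambda$ would give $\lambda^n\bar\lambda^m$, but $\lambda f\notin\hhs_m$ then). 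Moreover, even if you had those vectors, the step ``dense ranges of $\sqrt{1+\rho},\sqrt{\rho}$ plus density of $\{\tilde f_+\}$ give totality in each bidegree'' ignores that both tensor arguments are slaved to the \emph{same} $f$: you only ever obtain pairs on the graph $(\sqrt{1+\rho}\,h,\sqrt{\rho}\,h)$, never independent ones, so some polarization argument is indispensable. The standard repair runs as follows: write $e^{h_1}\otimes\overline{e^{h_2}}=e^{h_1\oplus\overline{h_2}}$ in $\F(\HS_1\oplus\overline{\HS_1})$, so cyclicity is totality of exponential vectors over the \emph{real-linear} subspace $\mathcal{K}:=\{\sqrt{1+\rho}\,\tilde f_+\oplus\overline{\sqrt{\rho}\,\tilde f_+}\,:\,f\in\hhs_m\}$. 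Differentiating the multi-parameter families $(t_1,\dots,t_k)\mapsto\pi_\beta(V(\textstyle\sum_j t_jf_j))\Omega_\beta$ at $t=0$ gives all symmetric tensors with entries in $\mathcal{K}$; by complex multilinearity these span the symmetric tensors over $\mathcal{K}+i\mathcal{K}$, and totality follows once $\mathcal{K}+i\mathcal{K}$ is dense in $\HS_1\oplus\overline{\HS_1}$. Only in this one-particle statement do $\rho>0$ and the $J$-invariance of the test function space (i.e. $i\tilde f_+=(\widetilde{Jf})_+$) enter: if $\psi_1\oplus\overline{\psi_2}\perp\mathcal{K}$, then $\langle\sqrt{1+\rho}\,\psi_1,\tilde f_+\rangle_m+\overline{\langle\sqrt{\rho}\,\psi_2,\tilde f_+\rangle_m}=0$ for all $f$; replacing $f$ by $Jf$ and combining the two identities gives both scalar products zero, hence $\psi_1=0$ and $\sqrt{\rho}\,\psi_2=0$, so $\psi_2=0$ by injectivity of $\rho$. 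With your step replaced by this argument (or by a citation of the standard one-particle cyclicity criterion for Fock vacua restricted to real subspaces), the proof goes through.
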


Similar to the case of the charged scalar free field, the first tensor factor in $\HS_\beta$ corresponds to the usual multiparticle state vectors from the vacuum representation and the second tensor factor can be interpreted as holes (antiparticles) in the ambient heat bath. Note that $\rho\ra 0$ as $\beta\ra\infty$, so in the zero temperature limit the AW representation reduces to the vacuum representation (\ref{eq:SFFFockrep}).
\\\\
The state (\ref{eq:Weyl_KMS_state}) is invariant under translations. The automorphisms $\alpha_y(V(f))=V(f_y)$, $y\in\Rl^4$ are implemented by unitary operators
\begin{equation}
\label{eq:SFFthermaltranslations}
 U_\beta(y)=\exp(iy\cdot P_\beta),\quad (P_\beta)_\mu=P_\mu\otimes \bar{1}-1\otimes\overline{P_\mu},
\end{equation}
where $\{P_\mu:\mu=0,\dots,3\}$ are the energy momentum operators of the vacuum representation (\ref{eq:SFFFockrep}), so that
\begin{equation*}
 \pi_\beta(\alpha_y(V(f)))=U_\beta(y)\pi_\beta(V(f))U_\beta(y)^{-1},\quad y\in\Rl^4,\;f\in\hhs_m.
\end{equation*}
The generator of time translations $(P_\beta)_0=P_0\oplus 1-1\oplus\overline{P_0}$ is the Liouvillian in this model, where $P_0$ is the Hamiltonian from the vacuum representation. The spectrum of $(P_\beta)_0$ is the whole real line, since $\spec(P_0)=\{0\}\cup[m,\infty)$, $m>0$. From (\ref{eq:SFFthermaltranslations}) we see that translations in the thermal representation decompose into a product of translations in the vacuum representation:
\begin{equation*}
 U_\beta(y)=U(y)\otimes\overline{U(y)},\quad
U(y)=\exp(iy\cdot P).
\end{equation*}
Lorentz transformations are generically broken, {\it i.e.} not implementable in the GNS representation of KMS state \cite{Narhofer:1977, Ojima:1986}. More precisely, a Poincar\'e transformation is not implementable if it does not commute with the time translations. In our concrete case one can explicitly see that the state (\ref{eq:Weyl_KMS_state}) is invariant under $\SO(3)_0$ but not under boosts. Hence we consider a representation $U_\beta$ of the extended Euclidean group $\Euc$ on $\HS_\beta$, which is defined as
\begin{equation}
\label{eq:TFFEucRep}
 U_\beta(y,R):=U(y,R)\otimes\overline{U(y,R)},\quad (y,R)\in\Euc
\end{equation}
where $U$ is the representation (\ref{eq:SFFPoincareRep}) of the Poincar\'e group on the vacuum Fock space. As $U$ implements $\alpha\uhr\Euc$ in the vacuum representation, $U_\beta$ implements $\alpha\uhr\Euc$ in the thermal representation.

Since the AW representation is a regular representation we can introduce thermal field operators $\phi_\beta(f)$ by differentiating thermal Weyl operators $\pi_\beta(V(f))$ as in (\ref{eq:SFFregularrep}). There holds
\begin{equation}
\label{eq:TFF_definition}
 \phi_\beta(f)
=\phi(\sqrt{1+\rho}\,f)\otimes\bar{1}+1\otimes\overline{\phi(\sqrt{\rho}\,f)}
\end{equation}
by Proposition \ref{propo:TFFAWrepresentation}. Thermal creation and annihilation operators are defined as
\begin{equation*}
 a_\beta(\tilde{f}_+):=\frac{1}{\sqrt{2}}[\phi_\beta(f)+i\phi_\beta(J f)],\qquad
 a^\dagger_\beta(\tilde{f}_+):=\frac{1}{\sqrt{2}}[\phi_\beta(f)-i\phi_\beta(J f)],
\end{equation*}
where $J=i$ is the complex structure (\ref{eq:SFF_Complex_Structure}) on $\hhs_m$.
For $\vphi\in\HS_1$ we obtain
\begin{equation*}
\label{thermalcreationannihilationoperators}
 a_\beta(\vphi)
=a(\sqrt{1+\rho}\,\vphi)\otimes\bar{1}
+1\otimes \overline{a^\dagger(\sqrt{\rho}\,\vphi)},\qquad
a^\dagger_\beta(\vphi)
=a_\beta(\vphi)^*.
\end{equation*}
Note that $\vphi\mapsto a^\dagger(\vphi)$ is complex linear, while $\vphi\to a_\beta(\vphi)$ is anti-linear and
\begin{equation*}
 a_\beta(\vphi)\Omega_\beta=\Omega\otimes\overline{\sqrt{\rho}\,\vphi},\qquad
 a^\dagger_\beta(\vphi)\Omega_\beta=\sqrt{1+\rho}\,\vphi\otimes\bar{\Omega}
\end{equation*}
so the thermal annihilation operator does not annihilate the thermal vacuum, but rather creates a hole. The operators $a^\#_\beta(\vphi)$ satisfy the canonical commutation relations
\begin{equation}
\label{eq:TFF_CCR}
 [a_\beta(\vphi),a_\beta(\psi)]
=[a^\dagger_\beta(\vphi),a^\dagger_\beta(\psi)]=0,\qquad
[a_\beta(\vphi),a^\dagger_\beta(\psi)]=\left<\vphi,\psi\right>_m(1\otimes\bar{1}).
\end{equation}
and transform according to
\begin{equation}
\label{eq:TFF_Covariance_a}
 U_\beta(y,R)a^\#_\beta(\vphi)U_\beta(y,R)^{-1}=a^\#_\beta(\vphi_{(y,R)}),\qquad
\vphi_{(y,R)}(p)=e^{ip\cdot y}\vphi(R^{-1}p)
\end{equation}
under the representation (\ref{eq:TFFEucRep}) of $\Euc$. Again, we proceed to the operator-valued distributions $a^\#_\beta(p)$, $p\in H_m^+$:
\begin{equation*}
 a_\beta(\vphi)=\int d\mu_m(p)\overline{\vphi(p)}a_\beta(p),\qquad
 a^\dagger_\beta(\vphi)=\int d\mu_m(p)\vphi(p)a^\dagger_\beta(p),
\end{equation*}
with
\begin{equation*}
 a_\beta(p)=\sqrt{1+\rho_p}\,a(p)\otimes\bar{1}+1\otimes\sqrt{\rho_p}\,a^\dagger(p),\quad
 a^\dagger_\beta(p)=\sqrt{1+\rho_p}\,a^\dagger(p)\otimes\bar{1}+1\otimes\sqrt{\rho_p}\,a(p)
\end{equation*}
and $\rho_p:=[\exp(\beta\veps_\pp)-1]^{-1}$.
In this notation the CCR relations read
\begin{equation*}
 [a_\beta(p),a_\beta(q)]=[a^\dagger_\beta(p),a^\dagger_\beta(q)]=0,\qquad
[a_\beta(p),a^\dagger_\beta(q)]=2\veps_\pp\delta(\pp-\qq)(1\otimes \bar{1})
\end{equation*}
and the $a_\beta^\#(p),\,p\in H_m^+$ transform according to 
\begin{equation*}
 U_\beta(y,R)a_\beta(p)U_\beta(y,R)^{-1}=e^{-iRp\cdot y}a_\beta(Rp),\quad
 U_\beta(y,R)a^\dagger_\beta(p)U_\beta(y,R)^{-1}=e^{iRp\cdot y}a^\dagger_\beta(Rp)
\end{equation*}
under $\Euc$.

Since the scalar free field in the thermal representation 
\begin{equation*}
 \phi_\beta(f)
=\phi(\sqrt{1+\rho}\,f)\otimes\bar{1}+1\otimes\overline{\phi(\sqrt{\rho}\,f)}
=\frac{1}{\sqrt{2}}[a_\beta(\tilde{f}_+)+a^\dagger_\beta(\tilde{f}_+)]
\end{equation*}
is merely a superposition of two  scalar free fields in the vacuum representation, it is more or less straightforward to see that the Wightman properties from the vacuum case (see Proposition \ref{prop:SFF_operatorproperties}) carry over to the thermal case.
We give a brief outline of the proofs.
\begin{proposition}
	\label{prop:TFF_operatorproperties}
		Let $f\in\Ss(\Rl^4,\Rl)$. Then:
		\begin{itemize}
			\item[i)] The subspace $\F_{\beta,0}$ of vectors of finite particle number is contained in the domain of each 
			$\phi_\beta(f)$ and stable under the action of these operators.
			\item[ii)] The map $f\to\phi_\beta(f)\Psi$, $\Psi\in\F_{\beta,0}$ is a vector-valued tempered distribution.
			\item[iii)] Each $\phi_\beta(f)$ is essentially self-adjoint.
		\end{itemize}
		Moreover:	
		\begin{itemize}		
			\item[iv)] Each $\phi_\beta(f)$ satisfies the Klein-Gordon equation: $\phi_\beta((\Box+m^2)f)=0$.
			\item[v)] Each $\phi_\beta(f)$ transforms covariantly under the representation (\ref{eq:TFFEucRep}) of $\Euc$:
			\begin{equation*}
			 U_\beta(y,R)\phi_\beta(f)U_\beta(y,R)^{-1}\Psi
			=\phi_\beta(f_{(y,R)})\Psi,\quad (y,R)\in\Euc,\;\Psi\in\F_{\beta,0}.
			\end{equation*}
			\item[vi)] For $\Psi\in\F_{\beta,0}$ and $f,g\in\Ss(\Rl^4,\Rl)$ there holds
			\begin{equation*}
				[\phi_\beta(f),\phi_\beta(g)]\Psi=i\im\left<f,g\right>_m\Psi.
			\end{equation*} 
		\end{itemize}	
\end{proposition}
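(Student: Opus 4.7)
The strategy is to reduce every claim to the corresponding property of the vacuum scalar free field (Proposition \ref{prop:SFF_operatorproperties}) via the decomposition (\ref{eq:TFF_definition}). The key technical observation that makes this work is that since $m>0$, we have $\veps_{\pp}\ge m$, hence $\rho=(e^{\beta\vveps}-1)^{-1}$ is a bounded continuous function on $H_m^+$; both $\sqrt{\rho}$ and $\sqrt{1+\rho}$ are therefore bounded multiplication operators on $\HS_1$, and multiplication by $\sqrt{1+\rho}\,\tilde{f}_+$, $\sqrt{\rho}\,\tilde{f}_+$ maps $\Ss(\Rl^4,\Rl)$ continuously into $\HS_1$. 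Moreover, $\rho$ is a function of $\veps_\pp$ alone, hence invariant under both spacetime translations and spatial rotations, i.e.\ under $\Euc$.

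Parts i) and ii) are then immediate: both summands in (\ref{eq:TFF_definition}) individually preserve $\F_{\beta,0}=\F_0\otimes\overline{\F_0}$ and depend continuously on $f$ in the sense of distributions, by Proposition \ref{prop:SFF_operatorproperties}(i)--(ii) applied in each tensor factor. For iv), the Fourier transform of $(\Box+m^2)f$ carries the factor $(m^2-p^2)$ and therefore vanishes identically on $H_m^+$; consequently $\sqrt{1+\rho}\,\widetilde{(\Box+m^2)f}_+=\sqrt{\rho}\,\widetilde{(\Box+m^2)f}_+=0$, so both summands in $\phi_\beta((\Box+m^2)f)$ vanish. Property v) follows from Proposition \ref{prop:SFF_operatorproperties}(v) applied to each tensor factor: because $\rho\circ R^{-1}=\rho$ for $R\in\SO(3)_0$ and $\rho$ is unaffected by spacetime translations, one computes $(\sqrt{1+\rho}\,\tilde{f}_+)_{(y,R)}=\sqrt{1+\rho}\,(\tilde{f}_+)_{(y,R)}$ and similarly for $\sqrt{\rho}$, and the representation $U_\beta=U\otimes\overline{U}$ then intertwines as claimed. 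Property vi) is a short direct calculation: the two cross-commutators vanish by the tensor structure, and the remaining ones contribute
\begin{equation*}
 i\,\mrm{Im}\left<f,(1+\rho)g\right>_m\cdot\unit \;-\; i\,\mrm{Im}\left<f,\rho g\right>_m\cdot\unit \;=\; i\,\mrm{Im}\left<f,g\right>_m\cdot\unit ,
\end{equation*}
where the minus sign in front of the second term arises because the commutator in the second tensor slot is taken of complex-conjugated operators.

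The genuinely non-trivial step is iii), essential self-adjointness on $\F_{\beta,0}$. My plan is to invoke Nelson's analytic vector theorem. The two operators $A:=\phi(\sqrt{1+\rho}\,f)\otimes\bar\unit$ and $B:=\unit\otimes\overline{\phi(\sqrt{\rho}\,f)}$ are each essentially self-adjoint on $\F_{\beta,0}$ by Proposition \ref{prop:SFF_operatorproperties}(iii) together with the standard fact that $T\otimes\unit$ is essentially self-adjoint on the algebraic tensor product with a core of $T$; they strongly commute because they act on different tensor factors. Standard $n$-particle norm estimates for creation and annihilation operators yield the Wightman-type bound
\begin{equation*}
 \|\phi_\beta(f)^n\Psi_k\| \le C^n \sqrt{(n+k)!/k!}\,\|\Psi_k\|
\end{equation*}
for $\Psi_k$ of total particle number $k$, with $C$ proportional to $\|\sqrt{1+\rho}\,\tilde{f}_+\|_m+\|\sqrt{\rho}\,\tilde{f}_+\|_m$. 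This implies that every vector of finite total particle number is analytic for $\phi_\beta(f)$, and Nelson's theorem then yields essential self-adjointness. The main obstacle here is simply tracking the combinatorial factors through the AW tensor decomposition; no new idea beyond the vacuum case is required, precisely because the thermal smearing factors $\sqrt{1+\rho}$, $\sqrt{\rho}$ are bounded.
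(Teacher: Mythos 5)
Your proposal is correct and follows essentially the same route as the paper: reduce everything to the vacuum-field properties of Proposition \ref{prop:SFF_operatorproperties} using the boundedness and $\Euc$-invariance of the multiplication operator $\rho$, with the $\sqrt{\max(k,l)+1}$-type bounds feeding into Nelson's analytic vector theorem for essential self-adjointness. The only presentational difference is that you verify v) and vi) by a direct computation in the tensor factors (correctly tracking the conjugation sign in the second slot), whereas the paper routes these through the covariance and CCR relations of the thermal creation/annihilation operators, which amounts to the same argument.
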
	
\begin{proof}
 $i)$ Since the multiplication operator $\rho$ does not change the particle number, the statements $\F_{\beta,0}\subset D(\phi_{\beta}(f))$ and $\phi_{\beta}(f)\F_{\beta,0}\subset \F_{\beta,0}$ follow from the corresponding properties of the vacuum representation.

$ii)$ Let $\Psi_{k,l}=\Psi_k\otimes\overline{\Psi_l}\in\HS_k\otimes\overline{\HS_l}$ with $k,l\in\Nl_0$. Using the basic bounds $\|a^\#(\vphi)\Psi_k\|\le \sqrt{k+1}\|\vphi\|\|\Psi_k\|$ for creation and annihilation operators in the vacuum representation, one obtains the following bounds for the creation and annihilation operators in the AW representation:
\begin{align}
 \|a^\#_{\beta}(\tilde{f}_+)\Psi_{k,l}\|
&\le \Bl[\sqrt{k+1}\|\sqrt{1+\rho}\tilde{f}_+\|+\sqrt{l+1}\|\sqrt{\rho}\tilde{f}_+\|\Br]\|\Psi_k\|\|\Psi_l\|\nonumber\\
\label{eq:TFF_bound_a}
&\le C_{\beta,m}\cdot \sqrt{\max(k,l)+1}\|\Psi_k\|\|\Psi_l\|
\Bl[\int d\mu_m(p)|\tilde{f}_+(p)|^2\Br]^{1/2}
\end{align}
where $C_{\beta,m}$ is a positive constant independent of $f$. As the right hand side depends continuously on $f$ in the Schwartz space topology, the assertion follows.

$iii)$ Using the bound (\ref{eq:TFF_bound_a}), one can show along the same lines as in \cite[Prop.5.2.3]{BratteliRobinson:1997} that each $\Psi\in\F_{\beta,0}$ is an entire analytic vector for $\phi_{\beta}(f)$. By Nelson's analytic vector theorem the statement follows.

$iv)$ For $g:=(\Box+m^2)f$, $f\in\Ss(\Rl^4,\Rl)$ holds $\tilde{g}_+(p)=(-\veps_\pp^2+\pp^2+m^2)\tilde{f}(\veps_\pp,\pp)=0$ and the statement follows from the linearity of $\phi_\beta(f)$.

$v)$ This follows from the covariance properties (\ref{eq:TFF_Covariance_a}) of the operators $a^\#_\beta(\vphi)$.

$vi)$ This follows from the CCR relations (\ref{eq:TFF_CCR}).
\end{proof}

The field $\phi_\beta(f)$ yields the familiar two-point function from thermal field theory:
\begin{align*}
 w_\beta^{(2)}(f,g)
:=&\left<\Omega_\beta,\phi_\beta(f)\phi_\beta(g)\Omega_\beta\right>\\
=&\left<f,(1+\rho)g\right>+\left<g,\rho f\right>\\
=&\int d\mu_m(p)\Bl[\frac{\overline{\tilde{f}_+(p)}\tilde{g}_+(p)}{1-e^{-\beta\veps_\pp}}+\frac{\overline{\tilde{g}_+(p)}\tilde{f}_+(p)}{e^{\beta\veps_\pp}-1}\Br]\\
=&\int dx\, f(x)\int dy\, g(y)\int d\mu_m(p)
\Bl[\frac{e^{-ip\cdot(x-y)}}{1-e^{-\beta\veps_\pp}}+\frac{e^{ip\cdot(x-y)}}{e^{\beta\veps_\pp}-1}\Br].
\end{align*}
We will write $w_{\beta}^{(n)}(x_1,\dots,x_n)$ for the distributional kernels of the higher $n$-point functions $w_\beta^{(n)}(f_1,\dots,f_n)$. For the two-point function we find in momentum space
\begin{equation*}
 \tilde{w}^{(2)}_\beta(p,q)
=\frac{(2\pi)^4\delta(p+q)}{2\veps_\pp}
\Bl[\frac{\delta(p_0-\veps_\pp)}{1-e^{-\beta\veps_\pp}}
+\frac{\delta(p_0+\veps_\pp)}{e^{\beta\veps_\pp}-1}\Br].
\end{equation*}
Since the state $\omega_\beta$ is quasifree, the higher $n$-point functions vanish for $n$ odd, {\it i.e.} $w_\beta^{(n)}(x_1,\dots,x_{2n+1})=0$, $n\ge 0$ and the even $n$-point functions decompose into sums of products of two-point functions in the following way: 
\begin{equation*}
 w_\beta^{(n)}(x_1,\dots,x_{2n})=\sum_{\pi}\prod_{k=1}^n\omega_{\beta}^{(2)}(x_{\pi(k)},x_{\pi(k+n)}),\quad n\in\Nl,
\end{equation*}
where the sum runs over all permutations $\pi$ of the set $\{1,\dots,2n\}$ satisfying
\begin{equation*}
 \pi(1)<\dots<\pi(n),\quad \pi(k)<\pi(k+n),\quad k=1,\dots,n.
\end{equation*}

Next we proceed to the net of bounded operators which is generated by the fields $\phi_\beta(f)$. We denote the selfadjoint closure of $\phi_\beta(f)$ by the same symbol and by Borel functional calculus $\exp(i\phi_\beta(f))$ is a unitary on $\HS_\beta$. By construction there holds
\begin{equation*}
 \pi_\beta(V(f))=\exp(i\phi_\beta(f)),\quad f\in\hhs_m.
\end{equation*}
We will write $V_\beta(f):=\pi_\beta(V(f))$. From the covariance and locality properties of the thermal scalar free field $\phi_\beta(f)$ follows that the map
\begin{equation}
\label{eq:ThermalNet}
 \Rl^4\supset \OO\longmapsto \Aa_\beta(\OO):=\{V_\beta(f):\supp(f)\subset\OO\}''
\end{equation}
is an $\Euc$-covariant and local net of von Neumann algebras.\label{weylalgebraSFFthermal}

\section{Deformation of the thermal scalar free field}
\label{sec:Deformationsofthermalscalarfreefields}
In this section we define a deformation of the scalar free field in the AW representation along the same lines as in \cite{GrosseLechner:2007} by means of deformed creation and annihilation operators.
The important difference is that we do not use the action of the entire translation subgroup, but rather an $\Rl^2$-action which amounts to translations along the edge of the wedge $W_0$. This special form of the action appears to be suitable in this setting since it allows us to establish the wedge-locality of the deformed thermal field operators, despite the fact that the spectrum condition is not fulfilled.
\label{thermaldeformedcreationannihilationoperators}
\begin{definition}
\label{def:TFF_deformed_a}
For $p\in H_m^+$ define
\begin{equation}
\label{eq:DTF_deformed_a}
a_{\beta,\theta}(p):=e^{-ip\cdot\theta P_\beta}a_\beta(p),\quad
a^\dagger_{\beta,\theta}(p):=e^{ip\cdot\theta P_\beta}a^\dagger_\beta(p),
\end{equation}
with
\begin{equation*}
 \theta\in\Mat_-(4,\Rl):=\Bl\{
\begin{pmatrix}
0 & 0 & 0 & 0\\
0 & 0 & 0 & 0\\
0 & 0 & 0 & \kappa\\
0 & 0 & -\kappa & 0
\end{pmatrix}
:\kappa\in\Rl\Br\}.
\end{equation*}
\end{definition}
\begin{remark}
Note that the $a^\#_{\beta,\theta}(p)$ are the warped convolution of the $a^\#_{\beta}(p)$ with respect to translations along the edge $E_{W_0}\cong \Rl^2$ of the wedge $W_0$.
\end{remark}
From the definition (\ref{eq:DTF_deformed_a}) we see that
\begin{align}
\label{eq:DTF_deformed_aa1}
 a_{\beta,\theta}(p)
&=\sqrt{1+\rho_p}\,a_\theta(p)\otimes e^{ip\cdot\theta \bar{P}}
+e^{-ip\cdot\theta P}\otimes \sqrt{\rho_p}\,a^\dagger_\theta(p)\\
\label{eq:DTF_deformed_aa2}
 a^\dagger_{\beta,\theta} (p)
&=\sqrt{1+\rho_p}\,a^\dagger_\theta(p)\otimes e^{-ip\cdot\theta \bar{P}}
+e^{ip\cdot\theta P}\otimes \sqrt{\rho_p}\,a_\theta(p),
\end{align}
where $a_{\theta}(p)=e^{-i p\cdot\theta P}a(p)$ and $a^\dagger_{\theta}(p)=e^{ip\cdot\theta P}a^\dagger(p)$\label{vacuumdeformedcreationannihilationoperators} are the deformed annihilation and creation operators from \cite{GrosseLechner:2007}. The following lemma collects the covariance properties of these operators.

\begin{lemma}
\label{lem:TFF_Covariance_a}
Let $(y,R)\in\Euc$ and $p\in H^+_m$, $\theta\in\Mat_-(4,\Rl)$. Then
\begin{align*}
 U_\beta(y,R)a_{\beta,\theta}(p)U_\beta(y,R)^{-1}
&=e^{-iRp\cdot y}a_{\beta,R\theta R^T}(Rp)\\
 U_\beta(y,R)a^\dagger_{\beta,\theta}(p)U_\beta(y,R)^{-1}
&=e^{iRp\cdot y}a^\dagger_{\beta,R\theta R^T}(Rp).
\end{align*}
\end{lemma}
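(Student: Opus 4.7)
The plan is to reduce the claim to the already-established covariance of the undeformed operators in \eqref{eq:TFF_Covariance_a} by computing how the exponential prefactor $e^{\mp i p\cdot\theta P_\beta}$ behaves under the adjoint action of $U_\beta(y,R)$. Since the computation for $a^\dagger_{\beta,\theta}$ is obtained from that for $a_{\beta,\theta}$ by exchanging signs, I will only treat the annihilation operator.

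First I would split $U_\beta(y,R) = U_\beta(y,1)\,U_\beta(0,R)$ and note that $U_\beta(y,1) = e^{iy\cdot P_\beta}$ commutes with $e^{-ip\cdot\theta P_\beta}$ since both are functions of the mutually commuting generators $P_\beta^\mu$. Hence only the rotational part contributes to the transformation of the exponential factor, and the translational part contributes only the familiar phase through $a_\beta(p)$.

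Next I would use the representation property $U_\beta(0,R)\, e^{iz\cdot P_\beta}\, U_\beta(0,R)^{-1} = e^{i(Rz)\cdot P_\beta}$ valid for all $z\in\Rl^4$, which, upon differentiation, yields $U_\beta(0,R)\,(P_\beta)_\mu\,U_\beta(0,R)^{-1} = R^{\sigma}{}_{\mu}\,(P_\beta)_\sigma$. In matrix notation this means $U_\beta(0,R)\,(\eta P_\beta)\,U_\beta(0,R)^{-1}= R^T\eta P_\beta$, and hence
\begin{equation*}
U_\beta(0,R)\, e^{-ip\cdot\theta P_\beta}\, U_\beta(0,R)^{-1}
= e^{-i p^{T}\eta\,\theta\, R^T P_\beta}.
\end{equation*}
Crucially, for $R\in\SO(3)_0$ one has $R^T=R^{-1}$ together with the Lorentz condition $R^T\eta R=\eta$; inserting $\eta = R^T\eta R$ in the exponent and rearranging gives
\begin{equation*}
p^T\eta\,\theta\, R^T P_\beta
= p^T R^T\eta R\,\theta R^T P_\beta
= (Rp)\cdot\bl(R\theta R^T\br)P_\beta,
\end{equation*}
so that $U_\beta(0,R)\,e^{-ip\cdot\theta P_\beta}\,U_\beta(0,R)^{-1} = e^{-i(Rp)\cdot(R\theta R^T)P_\beta}$.

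Combining these two ingredients with the undeformed covariance \eqref{eq:TFF_Covariance_a}, one finds
\begin{equation*}
U_\beta(y,R)\,a_{\beta,\theta}(p)\,U_\beta(y,R)^{-1}
= e^{-i(Rp)\cdot(R\theta R^T)P_\beta}\, e^{-iRp\cdot y}\, a_\beta(Rp)
= e^{-iRp\cdot y}\, a_{\beta,R\theta R^T}(Rp),
\end{equation*}
where in the last step the scalar phase $e^{-iRp\cdot y}$ is commuted through freely. The same computation with the opposite sign in the exponential prefactor yields the statement for $a^\dagger_{\beta,\theta}$. The only delicate point is the use of $R^T=R^{-1}$, which is exactly what restricts $R$ to the rotational subgroup and explains why the symmetry group of the deformed theory collapses from $\P_0$ to $\Euc$; apart from this, the argument is routine index bookkeeping combined with the tensor-product structure of $U_\beta$ inherited from \eqref{eq:SFFthermaltranslations} and \eqref{eq:TFFEucRep}.
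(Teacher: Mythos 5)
Your proof is correct, and it takes a slightly different route than the paper. The paper's own argument first rewrites $a_{\beta,\theta}(p)$ in the factorized form \eqref{eq:DTF_deformed_aa1}--\eqref{eq:DTF_deformed_aa2}, so that $U_\beta(y,R)=U(y,R)\otimes\overline{U(y,R)}$ acts factorwise, and then quotes the \emph{vacuum}-level covariance of $a^\#_\theta(p)$ from \cite[Lem.2.1]{GrosseLechner:2007} together with $U(y,R)e^{ip\cdot\theta P}U(y,R)^{-1}=e^{iRp\cdot R\theta R^TP}$ and the fact that $\rho$ commutes with $U(y,R)$. You instead stay entirely at the thermal level: you conjugate the defining expression $e^{-ip\cdot\theta P_\beta}a_\beta(p)$ directly, obtaining the transformation of the exponential prefactor from the representation property of $U_\beta$ (translations commute with functions of $P_\beta$; rotations give $\theta\mapsto R\theta R^T$), and then use the already established covariance \eqref{eq:TFF_Covariance_a} of the undeformed thermal operators. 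This is self-contained and avoids both the tensor-product decomposition and the external citation, while the paper's version has the advantage of making explicit how the deformation acts in each tensor factor (which is reused elsewhere, e.g.\ in the inequivalence proof). One caveat about your closing remark: the use of $R^T=R^{-1}$ is \emph{not} what forces the restriction to $\Euc$ here — in the vacuum case the deformed operators are covariant under all of $\P_0$ with $\theta\mapsto\Lambda\theta\Lambda^T$ — rather, the lemma is only stated for $\Euc$ because Lorentz boosts fail to be unitarily implemented in the Araki--Woods representation in the first place, so $U_\beta$ in \eqref{eq:TFFEucRep} only exists for $(y,R)\in\Euc$. This misattribution does not affect the validity of the computation.
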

\begin{proof}
Use
\begin{align*}
U(y,R)a_\theta(p)U(y,R)^{-1}
&=e^{-iRp\cdot y}a_{R\theta R^T}(Rp)\\
U(y,R)a^\dagger_\theta(p)U(y,R)^{-1}
&=e^{iRp\cdot y}a^\dagger_{R\theta R^T}(Rp)
\end{align*}
from \cite[Lem.2.1]{GrosseLechner:2007} together with
\begin{equation*}
 U(y,R)e^{ip\cdot \theta P}U(y,R)^{-1}=e^{iRp\cdot R\theta R^T P}
\end{equation*}
and the fact that the multiplication operator $\rho$ commutes with $U(y,R)$, $(y,R)\in\Euc$.
\end{proof}

\begin{corollary}
$a_{\theta,\beta}(p)^*=a^\dagger_{\theta,\beta}(p)$.
\end{corollary}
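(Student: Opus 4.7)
The plan is to establish the identity by a direct formal computation, using two ingredients: the self-adjointness of the thermal translation generators $(P_\beta)_\mu$, and the antisymmetry of the admissible matrices $\theta\in\Mat_-(4,\Rl)$.

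First, I would take the formal adjoint of $a_{\beta,\theta}(p)=e^{-ip\cdot\theta P_\beta}\,a_\beta(p)$. Since $p\cdot\theta P_\beta$ is a real linear combination of the self-adjoint components of $P_\beta$, the exponential is a unitary operator whose adjoint is $e^{+ip\cdot\theta P_\beta}$; combined with the distributional identity $a_\beta(p)^*=a^\dagger_\beta(p)$ (interpreted after smearing with $\varphi\in\HS_1$), this yields $(a_{\beta,\theta}(p))^*=a^\dagger_\beta(p)\,e^{ip\cdot\theta P_\beta}$. This differs from $a^\dagger_{\beta,\theta}(p)=e^{ip\cdot\theta P_\beta}\,a^\dagger_\beta(p)$ only in the ordering of the two factors, so it suffices to show that $e^{ip\cdot\theta P_\beta}$ and $a^\dagger_\beta(p)$ commute.

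Second, I would apply the covariance relation (\ref{eq:TFF_Covariance_a}) with $R=\mathbf{1}$. Because $p\cdot\theta P_\beta$ equals $y\cdot P_\beta$ for the four-vector $y^\nu=p_\mu\theta^{\mu\nu}$, the unitary $e^{ip\cdot\theta P_\beta}$ coincides with the thermal translation $U_\beta(y)$. Conjugating $a^\dagger_\beta(p)$ by it produces a phase $e^{ip\cdot y}=\exp(i\,p_\nu\theta^{\mu\nu}p_\mu)$. Antisymmetry of $\theta$ collapses this phase to $1$, so the two factors commute and $(a_{\beta,\theta}(p))^*=a^\dagger_{\beta,\theta}(p)$.

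No genuine obstacle is expected. The only mild subtlety is that the $a^\#_\beta(p)$ are operator-valued distributions in $p$, so each step should strictly be read after smearing; but the vanishing phase arises point-wise in $p$, so no loss occurs when passing back to the unsmeared form used in the statement. The corollary is in essence a bookkeeping consequence of the antisymmetry property that characterises the admissible deformation matrices in Definition \ref{def:TFF_deformed_a}.
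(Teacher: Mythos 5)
Your argument is correct and is essentially the paper's own proof: the paper likewise takes the adjoint and invokes the translation relation $e^{iy\cdot P_\beta}a_\beta(p)=e^{-iy\cdot p}a_\beta(p)e^{iy\cdot P_\beta}$ together with the antisymmetry of $\theta$, which makes the phase $e^{ip\cdot\theta p}$ trivial exactly as you describe. Your extra remarks about smearing and the pointwise vanishing of the phase are fine but not needed beyond what the paper states.
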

\begin{proof}
Use $e^{iy\cdot P_\beta}a_\beta(p)=e^{-iy\cdot p}a_\beta(p)e^{iy\cdot P_\beta}$ and the antisymmetry of the matrix $\theta$.
\end{proof}	

The next lemma collects the exchange relations of the deformed thermal creation and annihilation operators. As a special case, we obtain commutation relations for operators with opposite deformation parameters.
\begin{lemma}
\label{lem:DTF_Covariance_a}
Let $p,p'\in H^+_m$ and $\theta,\theta'\in\Mat_-(4,\Rl)$. Then
\begin{align*}
 a_{\beta,\theta}(p)a_{\beta,\theta'}(p')
&=e^{ip\cdot(\theta+\theta')p'}a_{\beta,\theta'}(p')a_{\beta,\theta}(p)\\
 a^\dagger_{\beta,\theta}(p)a^\dagger_{\beta,\theta'}(p')
&=e^{ip\cdot(\theta+\theta')p'}a^\dagger_{\beta,\theta'}(p')a^\dagger_{\beta,\theta}(p)\\
a_{\beta,\theta}(p)a^\dagger_{\beta,\theta'}(p')
&=e^{-ip\cdot(\theta+\theta')p'}a^\dagger_{\beta,\theta'}(p')a_{\beta,\theta}(p)
+2\veps_\pp\delta(\pp-\pp')e^{ip\cdot(\theta-\theta')P_\beta}.
\end{align*}
Hence, for $\theta'=-\theta$ there holds
\begin{equation*}
 [a_{\beta,\theta}(p),a_{\beta,-\theta}(p')]
=[a^\dagger_{\beta,\theta}(p),a^\dagger_{\beta,-\theta}(p')]
=0,\quad
 [a_{\beta,\theta}(p),a^\dagger_{\beta,-\theta}(p')]
=2\veps_\pp\delta(\pp-\pp')e^{2ip\cdot\theta P_\beta}.
\end{equation*}
\end{lemma}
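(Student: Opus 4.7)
The plan is to derive all three exchange relations uniformly from three ingredients: the translation covariance of the undeformed thermal operators, equation (\ref{eq:TFF_Covariance_a}) restricted to $R=\unit$, the CCR (\ref{eq:TFF_CCR}), and the antisymmetry of $\theta,\theta'\in\Mat_-(4,\Rl)$, which yields $p\cdot\theta p' = -p'\cdot\theta p$ and in particular $p\cdot\theta p = 0$ for on-shell momenta $p,p'\in H_m^+$.

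Rewriting the covariance relations in the equivalent ``move-through'' form
\begin{equation*}
a_\beta(p)\,e^{iy\cdot P_\beta} = e^{ip\cdot y}\,e^{iy\cdot P_\beta}\,a_\beta(p),\qquad
a^\dagger_\beta(p)\,e^{iy\cdot P_\beta} = e^{-ip\cdot y}\,e^{iy\cdot P_\beta}\,a^\dagger_\beta(p),
\end{equation*}
the whole argument reduces to algebraic bookkeeping. For the $a$--$a$ relation, I would start from $a_{\beta,\theta}(p)a_{\beta,\theta'}(p') = e^{-ip\cdot\theta P_\beta}\,a_\beta(p)\,e^{-ip'\cdot\theta' P_\beta}\,a_\beta(p')$, use the move-through rule to pull the middle exponential past $a_\beta(p)$ at the cost of a scalar phase $\propto e^{-ip\cdot\theta' p'}$, then invoke $[a_\beta(p),a_\beta(p')]=0$ to swap the two annihilation operators, and finally identify the result with $a_{\beta,\theta'}(p')a_{\beta,\theta}(p)$ by performing the analogous move on the reversed product. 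The two accumulated phases combine, via $p'\cdot\theta p = -p\cdot\theta p'$, to the claimed prefactor $e^{ip\cdot(\theta+\theta')p'}$. The $a^\dagger$--$a^\dagger$ identity follows in exactly the same manner; the signs in the two move-through phases both flip, but symmetrically, and the overall factor is unchanged.

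For the mixed relation the same rearrangement applied to $a_{\beta,\theta}(p)a^\dagger_{\beta,\theta'}(p')$ reproduces the prefactor $e^{-ip\cdot(\theta+\theta')p'}$ in front of $a^\dagger_{\beta,\theta'}(p')a_{\beta,\theta}(p)$ exactly as before, but now the non-trivial commutator $[a_\beta(p),a^\dagger_\beta(p')]=2\veps_\pp\delta(\pp-\pp')(\unit\otimes\bar{\unit})$ from (\ref{eq:TFF_CCR}) contributes an additional $c$-number--times--exponential term. On the support of $\delta(\pp-\pp')$ the two on-shell momenta coincide, so by antisymmetry the accompanying scalar phase $e^{\pm ip\cdot\theta' p'}$ collapses to $1$ and the two surviving momentum exponentials $e^{-ip\cdot\theta P_\beta}$ and $e^{ip'\cdot\theta' P_\beta}$ merge into $e^{ip\cdot(\theta-\theta')P_\beta}$, which is precisely the claimed correction. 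The special case $\theta'=-\theta$ then follows at once: the two prefactors $e^{\pm ip\cdot(\theta+\theta')p'}$ degenerate to $1$ while the mixed correction becomes $e^{2ip\cdot\theta P_\beta}$.

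The main obstacle is not conceptual but purely one of keeping the signs straight: one must use the characterizing identity $\theta^T\eta = -\eta\theta$ for $\theta\in\Mat_-(4,\Rl)$ consistently when translating between expressions of the form $e^{iy\cdot P_\beta}$ and $e^{iq\cdot\theta P_\beta}$, since a sign mismatch at any step alters the final exponent. No analytic subtleties arise, however: the dense subspace $\F_{\beta,0}$ of vectors of finite particle number is invariant under the $a^\#_\beta(p)$ by Proposition \ref{prop:TFF_operatorproperties} and trivially also under the bounded unitaries $e^{iq\cdot\theta P_\beta}$, so all manipulations take place on $\F_{\beta,0}$ in the sense of operator-valued distributions, and notably nowhere is the spectrum condition used.
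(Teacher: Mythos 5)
Your proposal is correct and follows essentially the same route as the paper's own (one-line) proof: insert the definition (\ref{eq:DTF_deformed_a}), commute the exponentials of $P_\beta$ through the undeformed operators via the translation covariance $e^{iy\cdot P_\beta}a_\beta(p)=e^{-ip\cdot y}a_\beta(p)e^{iy\cdot P_\beta}$, and finish with the undeformed CCR (\ref{eq:TFF_CCR}) and the antisymmetry of $\theta$, which removes the residual scalar phase on the support of $\delta(\pp-\pp')$. The only caution is the one you flag yourself: each intermediate move-through phase must be fixed consistently via $\theta^T\eta=-\eta\theta$ (your loosely stated ``$\propto e^{-ip\cdot\theta' p'}$'' is sign-sensitive), so that the accumulated phases combine to the stated prefactors $e^{\pm ip\cdot(\theta+\theta')p'}$ and to $e^{ip\cdot(\theta-\theta')P_\beta}$ in the mixed relation, as your final formulas indeed do.
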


\begin{proof}
 Use the Definition (\ref{eq:DTF_deformed_a}) and Lemma \ref{lem:DTF_Covariance_a}
\end{proof}

For the definition of the deformed fields we proceed to smeared creation an annihilation operators:
\begin{equation*}
 a_{\beta,\theta}(\vphi):=\int d\mu_m(p)\overline{\vphi(p)}a_{\beta,\theta}(p),\quad
 a^\dagger_{\beta,\theta}(\vphi):=\int d\mu_m(p)\vphi(p)a^\dagger_{\beta,\theta}(p),
\quad \vphi\in\HS_1.
\end{equation*}

\begin{definition}
  Let $\theta\in\Mat_-(4,\Rl)$ and $f\in\Ss(\Rl^4,\Rl)$. Define
\begin{equation}
\label{eq:TFF_DeformedField}
 \phi_{\beta,\theta}(f):=\frac{1}{\sqrt{2}}[
a_{\beta,\theta}(\tilde{f}_+)+a^\dagger_{\beta,\theta}(\tilde{f}_+)].
\end{equation}
\end{definition}

\begin{remark}
Note that this field is, in contrast to (\ref{eq:TFF_definition}), not merely a superposition of two (deformed) vacuum fields from \cite{GrosseLechner:2007}, due to the presence of the extra translation operators in (\ref{eq:DTF_deformed_aa1}) and (\ref{eq:DTF_deformed_aa2}).
\end{remark}

The following proposition collects the basic Wightman properties of (\ref{eq:TFF_DeformedField}). The proofs are in fact very similar to \cite[Prop.2.2]{GrosseLechner:2007}, so we will be brief.
\begin{proposition}{\bf (Wightman properties of the field $\phi_{\beta,\theta}(f)$).}\\
\label{prop:TFF_WightmanProperties}
	Let $f\in\Ss(\Rl^4,\Rl)$ and $\theta\in\Mat_-(4,\Rl)$. Then:
	\begin{itemize}
		\item[i)] The subspace $\F_{\beta,0}$ of vectors of finite particle number is contained in the domain of each 
		$\phi_{\beta,\theta}(f)$ and stable under the action of these operators.
		\item[ii)] The map $f\to\phi_{\beta,\theta}(f)\Psi$, $\Psi\in\F_{\beta,0}$ is a vector-valued tempered distribution.
		\item[iii)] Each $\phi_{\beta,\theta}(f)$ is essentially self-adjoint.
	\end{itemize}
\end{proposition}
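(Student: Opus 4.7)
The strategy parallels Proposition \ref{prop:TFF_operatorproperties} for the undeformed thermal field, with the additional task of controlling the unitary deformation factors $e^{\pm ip\cdot\theta P_\beta}$ that appear in the decompositions (\ref{eq:DTF_deformed_aa1})--(\ref{eq:DTF_deformed_aa2}). Part (i) is essentially a bookkeeping statement: on a bi-particle component $\HS_k\otimes\overline{\HS_l}$, the first summand on the right-hand side of (\ref{eq:DTF_deformed_aa1}) sends vectors into $\HS_{k-1}\otimes\overline{\HS_l}$ (since $e^{ip\cdot\theta\bar P}$ preserves particle number and $a_\theta(p)$ lowers it by one), whereas the second summand sends them into $\HS_k\otimes\overline{\HS_{l+1}}$; analogously for $a^\dagger_{\beta,\theta}(\tilde f_+)$. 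Hence each application of $a^\#_{\beta,\theta}(\tilde f_+)$ shifts the total bi-particle number by a bounded amount, and once the strong convergence of the defining integrals is established in (ii), the stability of $\F_{\beta,0}$ follows immediately.

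For (ii), the goal is to establish the quantitative bound
\begin{equation*}
\|a^\#_{\beta,\theta}(\tilde f_+)\Psi_{k,l}\|\;\le\; C_{\beta,m}\,\sqrt{\max(k,l)+1}\,\|\Psi_k\|\,\|\Psi_l\|\,\Bl[\int d\mu_m(p)\,|\tilde f_+(p)|^2\Br]^{1/2},
\end{equation*}
with a constant $C_{\beta,m}$ independent of $f$ and $\theta$. I would exploit that the two summands on the right-hand sides of (\ref{eq:DTF_deformed_aa1})--(\ref{eq:DTF_deformed_aa2}) produce vectors in orthogonal bi-particle sectors, so that $\|\cdot\|^2$ splits into two pieces. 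In each piece I would expand $\|\cdot\|^2$ as a double integral over $p$ and $q$ and use the spectral calculus of $P,\bar P$: since $e^{\pm ip\cdot\theta \bar P}$ and $e^{\pm ip\cdot\theta P}$ are unitary and only contribute pure phases when resolved spectrally against $\Psi_k,\bar\Psi_l$, the resulting expressions are bounded in magnitude by those of the undeformed thermal case, which is controlled by the estimate (\ref{eq:TFF_bound_a}). Equivalently, one may invoke the deformed vacuum bounds of \cite[Prop.~2.2]{GrosseLechner:2007} applied tensorially to each of the two summands and combine them with the bounded multiplication operator $\rho$. The right-hand side depends continuously on $f$ in the Schwartz topology, which gives the tempered-distribution property.

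Part (iii) follows from Nelson's analytic vector theorem. Symmetry of $\phi_{\beta,\theta}(f)$ on $\F_{\beta,0}$ is immediate from $a_{\beta,\theta}(\vphi)^{*}=a^\dagger_{\beta,\theta}(\vphi)$ established in the corollary preceding Lemma \ref{lem:DTF_Covariance_a}. Iterating the bound from (ii) on $\Psi\in\F_{\beta,0}\cap(\HS_k\otimes\overline{\HS_l})$ yields
\begin{equation*}
\|\phi_{\beta,\theta}(f)^n\Psi\|\;\le\;C_{\beta,m}^{\,n}\,\sqrt{(\max(k,l)+n)!\,/\,\max(k,l)!}\,\|\Psi\|\,\|\tilde f_+\|^{n},
\end{equation*}
which in turn gives $\|\phi_{\beta,\theta}(f)^n\Psi\|/n!\le C^n/\sqrt{n!}\cdot(\text{poly in }n)$ and hence convergence of the analytic-vector power series for all complex parameters. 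Every $\Psi\in\F_{\beta,0}$ is therefore entire analytic for $\phi_{\beta,\theta}(f)$, and Nelson's theorem produces essential self-adjointness on this dense invariant domain.

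The hard part is the bound in (ii): the deformation factors $e^{\pm ip\cdot\theta P_\beta}$ depend non-trivially on the integration variable and couple the two tensor factors, so one has to argue carefully that the oscillatory dependence in $p$ does not spoil the integrability when squaring the norm. The key observation making this work is the splitting into orthogonal bi-particle sectors together with the fact that, for each fixed $p$, the deformation acts by a unitary; this reduces the problem to estimates that are already available from the undeformed setting and from the deformed vacuum analysis of \cite{GrosseLechner:2007}. Once this bound is in place, both (i) and (iii) are routine.
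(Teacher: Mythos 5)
Your proposal is correct and follows essentially the same route as the paper: particle-number bookkeeping for (i), the bound $\|a^\#_{\beta,\theta}(\tilde f_+)\Psi_{k,l}\|\le C_{\beta,m}\sqrt{\max(k,l)+1}\,\|\Psi_k\|\|\Psi_l\|\,[\int d\mu_m|\tilde f_+|^2]^{1/2}$ obtained from the undeformed estimate (\ref{eq:TFF_bound_a}) via the pure-phase (unitary) nature of the deformation factors for (ii), and entire analytic vectors plus Nelson's theorem for (iii). You merely spell out details (orthogonality of the two summands' ranges, the spectral-resolution argument) that the paper's proof leaves implicit.
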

\begin{proof}
 $i)$ Since the translation operators do not change the particle number, there holds
$\F_{\beta,0}\subset D(\phi_{\beta,\theta}(f))$ and $\phi_{\beta,\theta}(f)\F_{\beta,0}\subset \F_{\beta,0}$ by the corresponding property of the undeformed operators (see Proposition \ref{prop:TFF_operatorproperties}).

$ii)$ Let $\Psi_{k,l}=\Psi_k\otimes\overline{\Psi_l}\in\HS_k\otimes\overline{\HS_l}$ with $k,l\in\Nl_0$. Using the bound (\ref{eq:TFF_bound_a}) for the thermal creation operator and the fact that $|e^{ip\theta q}|=1$, $p,q\in\Rl$, there follows
\begin{equation}
\label{eq:TFF_bound_a_theta}
 \|a^\#_{\beta,\theta}(\tilde{f}_+)\Psi_{k,l}\|
\le C_{\beta,m}\cdot \sqrt{\max(k,l)+1}\|\Psi_k\|\|\Psi_l\|
\Bl[\int d\mu_m(p)|\tilde{f}_+(p)|^2\Br]^{1/2}
\end{equation}
where $C_{\beta,m}$ is a positive constant independent of $f$. As the right hand side depends continuously on $f$ in the Schwartz space topology, the assertion follows.

$iii)$ Using the bound (\ref{eq:TFF_bound_a_theta}), one can show along the same lines as in \cite[Prop.5.2.3]{BratteliRobinson:1997} that each $\Psi\in\F_{\beta,0}$ is an entire analytic vector for $\phi_{\beta,\theta}(f)$. By Nelson's analytic vector theorem the statement follows.
\end{proof}


\begin{proposition}{\bf (Covariance of the field operators $\phi_{\beta,\theta}(f)$).}\\
\label{prop:TFF_CovarianceDeformedField}
Let $f\in\Ss(\Rl^4,\Rl)$ and $\theta\in\Mat_-(4,\Rl)$. Then:
\begin{equation*}
 U_\beta(y,R)\phi_{\beta,\theta}(f)U_\beta(y,R)^{-1}
=\phi_{\beta,R\theta R^T}(f_{(y,R)}),\quad
f_{(y,R)}(x)=f(R^{-1}(x-y))
\end{equation*}
for all $(y,R)\in\Euc$.
\end{proposition}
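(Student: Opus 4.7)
The plan is to reduce the covariance of the deformed field to the covariance of the deformed creation and annihilation operators, which is already available as Lemma \ref{lem:TFF_Covariance_a}. Since $\phi_{\beta,\theta}(f)=\tfrac{1}{\sqrt 2}[a_{\beta,\theta}(\tilde f_+)+a^\dagger_{\beta,\theta}(\tilde f_+)]$, it suffices to establish
\begin{equation*}
 U_\beta(y,R)\,a^\#_{\beta,\theta}(\tilde f_+)\,U_\beta(y,R)^{-1}
 =a^\#_{\beta,R\theta R^T}\bigl(\widetilde{f_{(y,R)}}\bigr)_+,\qquad (y,R)\in\Euc,
\end{equation*}
and then take the sum of creation and annihilation parts. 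Note that even though $R\theta R^T$ need not lie in $\mathrm{Mat}_-(4,\Rl)$ for arbitrary $R\in\SO(3)_0$, the operator $a^\#_{\beta,R\theta R^T}(p)$ is well defined for every antisymmetric matrix, so the statement is meaningful.

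First, I would pull $U_\beta(y,R)$ inside the integral defining the smeared operator (justified by the strong continuity and the bound (\ref{eq:TFF_bound_a_theta}) on the subspace $\F_{\beta,0}$) and apply Lemma \ref{lem:TFF_Covariance_a} pointwise to obtain, say for the creation part,
\begin{equation*}
 U_\beta(y,R)a^\dagger_{\beta,\theta}(\tilde f_+)U_\beta(y,R)^{-1}
 =\int d\mu_m(p)\,\tilde f_+(p)\,e^{iRp\cdot y}\,a^\dagger_{\beta,R\theta R^T}(Rp).
\end{equation*}
Next, perform the change of variables $q:=Rp$ and exploit that $d\mu_m$ is Lorentz-invariant and in particular invariant under the rotation subgroup $\SO(3)_0\subset \L_0$. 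This yields
\begin{equation*}
 \int d\mu_m(q)\,e^{iq\cdot y}\tilde f_+(R^{-1}q)\,a^\dagger_{\beta,R\theta R^T}(q),
\end{equation*}
and recognizing $e^{iq\cdot y}\tilde f_+(R^{-1}q)=(\widetilde{f_{(y,R)}})_+(q)$ (as noted above equation (\ref{eq:SFFscalarproduct})) gives exactly $a^\dagger_{\beta,R\theta R^T}(\widetilde{f_{(y,R)}})_+$. The annihilation part is handled analogously, using the anti-linear dependence of $a_\beta$ on its argument together with the corresponding covariance formula of Lemma \ref{lem:TFF_Covariance_a}.

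There is no substantial obstacle here, since all nontrivial structural input has been packaged into Lemma \ref{lem:TFF_Covariance_a} and into the Lorentz invariance of $d\mu_m$; the only technical points to watch are (i) the interchange of the integral with the conjugation by $U_\beta(y,R)$, which is legitimate on the dense invariant domain $\F_{\beta,0}$ by the uniform bound (\ref{eq:TFF_bound_a_theta}) and dominated convergence, and (ii) the correct bookkeeping of the rotation action $R\theta R^T$ on the deformation matrix, which is dictated by Lemma \ref{lem:TFF_Covariance_a} and follows from the identity $U(y,R)e^{ip\cdot\theta P}U(y,R)^{-1}=e^{iRp\cdot R\theta R^T P}$.
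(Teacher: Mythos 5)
Your proposal is correct and follows essentially the same route as the paper, whose proof simply invokes Lemma \ref{lem:TFF_Covariance_a}; you merely spell out the smearing, the change of variables using the Lorentz invariance of $d\mu_m$, and the identification $(\widetilde{f_{(y,R)}})_+(q)=e^{iq\cdot y}\tilde f_+(R^{-1}q)$, all of which is the intended argument. The technical points you flag (interchanging conjugation with the integral on $\F_{\beta,0}$ and the well-definedness of $a^\#_{\beta,R\theta R^T}$ for the rotated matrix) are handled correctly.
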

\begin{proof}
This follows from Lemma \ref{lem:TFF_Covariance_a}.
\end{proof}

\begin{theorem}{\bf (Wedge-locality of the field operators $\phi_{\beta,\theta}(f)$).}\\
\label{thm:TFF_WedgeLocality}
 Let $f,g\in\Ss(\Rl^4,\Rl)$ and $\theta\in\Mat_-(4,\Rl)$. Then
\begin{equation*}
 [\phi_{\beta,\theta}(f),\phi_{\beta,-\theta}(g)]\Psi=0,\quad \Psi\in\F_{\beta,0},
\end{equation*}
if $\supp(f)\subset W_0$ and $\supp(g)\subset (W_0)'$.
\end{theorem}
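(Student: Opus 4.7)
The plan is to reduce the commutator to a scalar-valued integral via the joint spectral calculus for the four commuting components of $P_\beta$, and then to exploit locality of the undeformed scalar free field together with the fact that the antisymmetric matrix $\theta$ takes values in the edge $E_{W_0}$.

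First, I expand $[\phi_{\beta,\theta}(f), \phi_{\beta,-\theta}(g)]$ on $\F_{\beta,0}$ using (\ref{eq:TFF_DeformedField}) together with the exchange relations from Lemma~\ref{lem:DTF_Covariance_a} with $\theta' = -\theta$. The contributions from two annihilation or two creation operators vanish; only the mixed terms survive, and integrating out the delta function in $[a_{\beta,\theta}(p), a^\dagger_{\beta,-\theta}(p')] = 2\veps_\pp\delta(\pp - \pp')e^{2ip\cdot\theta P_\beta}$ yields
\begin{equation*}
[\phi_{\beta,\theta}(f), \phi_{\beta,-\theta}(g)]
= \frac{1}{2}\int d\mu_m(p)\,\Bl[\overline{\tilde f_+(p)}\,\tilde g_+(p)\,e^{2ip\cdot\theta P_\beta} - \overline{\tilde g_+(p)}\,\tilde f_+(p)\,e^{-2ip\cdot\theta P_\beta}\Br].
\end{equation*}
By joint spectral calculus for $P_\beta$, it suffices to show that the scalar function $F(P)$ obtained by replacing $P_\beta$ with an arbitrary $P \in \spec(P_\beta) \subseteq \Rl^4$ vanishes identically.

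The decisive geometric input is that the only nonzero entries of $\theta$ lie in the $(2,3)$-block, so $x_P := 2\theta P$ belongs to the edge $E_{W_0}$ for every $P \in \Rl^4$. Recognising $e^{ip\cdot x_P}\tilde g_+(p)$ as the one-particle wave function of the translate $g_{x_P}(y) := g(y - x_P)$, I rewrite
\begin{equation*}
F(P) = \left<f, g_{x_P}\right>_m - \left<g_{x_P}, f\right>_m = 2i\,\im\left<f, g_{x_P}\right>_m.
\end{equation*}
Since $(W_0)'$ is invariant under translations along its edge, $\supp g \subset (W_0)'$ implies $\supp g_{x_P} \subset (W_0)'$, so $f$ and $g_{x_P}$ are spacelike separated; locality of the undeformed scalar free field (Proposition~\ref{prop:SFF_operatorproperties}~(vi)) then forces $\im\left<f, g_{x_P}\right>_m = 0$, giving $F(P) \equiv 0$.

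The main technical point I expect to require care is justifying the spectral manipulation on the unbounded operators involved: one must verify that the operator-valued $p$-integral converges on $\F_{\beta,0}$ in the strong sense and commutes with the joint spectral measure of $P_\beta$, which follows from the bound (\ref{eq:TFF_bound_a_theta}) and dominated convergence, paralleling \cite[Prop.~2.2]{GrosseLechner:2007}. Remarkably, no spectrum condition on $P_\beta$ enters the argument; only $\theta\,\Rl^4 \subseteq E_{W_0}$ is used, which is precisely the reason for restricting to translations along the edge of the wedge in Definition~\ref{def:TFF_deformed_a} rather than using the full translation group as in \cite{GrosseLechner:2007}.
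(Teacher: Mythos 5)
Your proposal is correct and follows essentially the same route as the paper: the exchange relations reduce the commutator to $\int d\mu_m(p)\,\{\dots e^{\pm 2ip\cdot\theta P_\beta}\}$, the exponential is reinterpreted as a translation of $g$ by an edge vector, and wedge-locality of the undeformed field finishes the argument — the paper simply realizes your spectral-calculus step concretely by letting $P_\beta$ act on each $(k,l)$-particle component of $\Psi\in\F_{\beta,0}$ as multiplication by $a=\sum_n p_n-\sum_{n'}q_{n'}$, so that $e^{2ip\cdot\theta a}\tilde{g}_+(p)=\widetilde{(g_{2\theta a})}_+(p)$ with $2\theta a\in E_{W_0}$. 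Your closing observation that only $\theta\,\Rl^4\subseteq E_{W_0}$ (and no spectrum condition) is used coincides with the remark the paper makes immediately after the theorem.
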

\begin{proof}
By Lemma \ref{lem:DTF_Covariance_a} there holds
\begin{align*}
2[\phi_{\beta,\theta}(f),\phi_{\beta,-\theta}(g)]
&= [a_{\beta,\theta}(\tilde{f}_+),a^\dagger_{\beta,-\theta}(\tilde{g}_+)]
-[a_{\beta,-\theta}(\tilde{g}_+),a^\dagger_{\beta,\theta}(\tilde{f}_+)]\\
&=\int d\mu_m(p)\Bl\{
\tilde{f}_+(-p)\tilde{g}_+(p)e^{2ip\cdot\theta P_\beta}-
\tilde{f}_+(p)\tilde{g}_+(-p)e^{-2ip\cdot\theta P_\beta}
\Br\}.
\end{align*}
Let $\Psi\in\F_{\beta,0}$. Then
\begin{align*}
 (2[\phi_{\beta,\theta}(f),&\phi_{\beta,-\theta}(g)]\Psi)_{k,l}(p_{(k)},q_{(l)})\\
&=\int d\mu_m(p)\Bl\{
\tilde{f}_+(-p)\tilde{g}_+(p)e^{2ip\cdot\theta a}-
\tilde{f}_+(p)\tilde{g}_+(-p)e^{-2ip\cdot\theta a}
\Br\}\cdot\Psi_{k,l}(p_{(k)},q_{(l)})\\
&=\int d\mu_m(p)\Bl\{
\tilde{f}_+(-p)\widetilde{(g_{2\theta a})}_+(p)-
\tilde{f}_+(p)\widetilde{(g_{2\theta a})}_+(-p)
\Br\}\cdot\Psi_{k,l}(p_{(k)},q_{(l)})
\end{align*}
where we used the notation
\begin{equation*}
 p_{(k)}:=(p_1,\dots,p_k),\quad q_{(l)}:=(q_1,\dots,q_l),\quad p_j,q_j\in H_m^+
\end{equation*}
and $a:=\sum_{n=1}^k p_n-\sum_{n'=1}^l q_{n'}$. Since $\sum_{n=1}^k p_n,\sum_{n'=1}^l q_{n'} \in V^+$ we have $a\in\Rl^4$ but $\theta a\in E_{W_0}\subset \overline{W_0}$. Hence
\begin{equation*}
 \supp(g_{2\theta a})\subset W_0\, ',\quad a\in\Rl^4
\end{equation*}
and the wedge-locality follows from the wedge-locality of the undeformed field.
\end{proof}
\begin{remark}
    This pedestrian way of proving wedge locality can be understood from a more general point of view by using a result from \cite{BuchholzLechnerSummers:2010}. The authors show that the warped operators $A_\theta$ and $B_{-\theta}$ commute, if $[\alpha_{\theta p}(A),\alpha_{-\theta q}(B)]=0$ holds for all $p,q$ in the joint spectrum of the generators of the spacetime translations. In our case this condition is trivially satisfied for all $p,q\in\Rl^4$ due to the special form of the matrix $\theta$.
\end{remark}

Next we compute the $n$-point functions
\begin{equation*}
 w^{(n)}_{\beta,\theta}(f_1,\dots,f_n)
=\left<\Omega_\beta,\phi_{\beta,\theta}(f_1)\dots\phi_{\beta,\theta}(f_n)\Omega_\beta\right>
\end{equation*}
of the deformed thermal fields. Their distributional kernels are most conveniently expressed in momentum space. The structure that we obtain is very similar to the vacuum case \cite{GrosseLechner:2007}.
\begin{proposition}{\bf (Deformed $n$-point functions for the field $\phi_{\beta,\theta}(f)$).}\\
Let $n\in\Nl$ and $p_1,\dots,p_n\in H_m^+$. Then
\begin{equation*}
 \tilde{w}^{(n)}_{\beta,\theta}(p_1,\dots,p_n)
=\prod_{1\le k<l\le n}e^{ip_k\cdot \theta p_l}
\tilde{w}^{(n)}_{\beta}(p_1,\dots,p_n).
\end{equation*}
\end{proposition}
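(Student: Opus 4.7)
The plan is to compute the deformed $n$-point function directly by moving all the translation factors contained in the deformed creation and annihilation operators past each other, and then using translation invariance of $\Omega_\beta$ to absorb them. As a first step, I would expand each field $\phi_{\beta,\theta}(f_k)$ via its definition $(\ref{eq:TFF_DeformedField})$ into creation and annihilation parts. Inserting this into $w^{(n)}_{\beta,\theta}(f_1,\dots,f_n)=\langle\Omega_\beta,\phi_{\beta,\theta}(f_1)\cdots\phi_{\beta,\theta}(f_n)\Omega_\beta\rangle$ yields a sum of $2^n$ terms, one for each choice $\varepsilon=(\varepsilon_1,\dots,\varepsilon_n)\in\{+,-\}^n$ selecting the creation ($+$) or annihilation ($-$) part of the $k$-th field. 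Passing to momentum space then reduces the computation to evaluating the expectation values $\langle\Omega_\beta,a^{\varepsilon_1}_{\beta,\theta}(p_1)\cdots a^{\varepsilon_n}_{\beta,\theta}(p_n)\Omega_\beta\rangle$ and integrating against the Fourier transforms of the $f_k$.

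The core step is the rearrangement of these expectation values. Using the definition $a^{\varepsilon_k}_{\beta,\theta}(p_k)=e^{\varepsilon_k ip_k\cdot\theta P_\beta}a^{\varepsilon_k}_\beta(p_k)$ from $(\ref{eq:DTF_deformed_a})$ together with the distributional covariance relation
\begin{equation*}
 e^{iy\cdot P_\beta}a^{\varepsilon}_\beta(p)=e^{\varepsilon ip\cdot y}\,a^{\varepsilon}_\beta(p)\,e^{iy\cdot P_\beta},
\end{equation*}
which is the momentum-space form of $(\ref{eq:TFF_Covariance_a})$, I would commute each translation exponential $e^{\varepsilon_k ip_k\cdot\theta P_\beta}$ through the operators with larger index. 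Each commutation past $a^{\varepsilon_l}_\beta(p_l)$ with $l>k$ picks up, by the antisymmetry of $\theta$, a phase $e^{\varepsilon_k\varepsilon_l ip_k\cdot\theta p_l}$. Once all translation factors are collected on the right, their product acts on $\Omega_\beta$; since $\omega_\beta$ is translation invariant one has $P_\beta\Omega_\beta=0$, and hence this accumulated factor is trivial. The resulting identity reads
\begin{equation*}
 \langle\Omega_\beta,a^{\varepsilon_1}_{\beta,\theta}(p_1)\cdots a^{\varepsilon_n}_{\beta,\theta}(p_n)\Omega_\beta\rangle
 =\prod_{1\le k<l\le n}e^{\varepsilon_k\varepsilon_l ip_k\cdot\theta p_l}\,\langle\Omega_\beta,a^{\varepsilon_1}_\beta(p_1)\cdots a^{\varepsilon_n}_\beta(p_n)\Omega_\beta\rangle.
\end{equation*}

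Summing over $\varepsilon$ and passing to the full Fourier kernel, the sign pattern $\varepsilon_k\varepsilon_l$ in the deformation phases is absorbed into the signed mass-shell variables $q_k:=\varepsilon_k p_k$: since $\varepsilon_k\varepsilon_l\,p_k\cdot\theta p_l=q_k\cdot\theta q_l$, the deformation phase becomes uniform, equal to $\prod_{k<l}e^{iq_k\cdot\theta q_l}$, independently of the chosen $\varepsilon$. Identifying the $q_k$ with the kernel arguments then yields the claimed formula $\tilde w^{(n)}_{\beta,\theta}(p_1,\dots,p_n)=\prod_{k<l}e^{ip_k\cdot\theta p_l}\,\tilde w^{(n)}_\beta(p_1,\dots,p_n)$. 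The main technical obstacle is the sign bookkeeping, namely tracking the $\varepsilon$-dependence through each covariance move and then verifying that the substitution $q_k=\varepsilon_k p_k$ reabsorbs all signs into a single uniform phase. I would emphasize that the proof does \emph{not} invoke the spectrum condition, which is violated in the thermal representation; it uses only translation invariance of the state, a property common to both vacuum and KMS states, which is why the argument of \cite{GrosseLechner:2007} can be adapted to this thermal setting.
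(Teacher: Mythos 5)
Your argument is correct and is essentially the paper's proof: both rest on commuting the translation factors $e^{ip\cdot\theta P_\beta}$ through the remaining operators via the covariance relation, collecting the phases $e^{ip_k\cdot\theta p_l}$, and letting the accumulated translation unitary act trivially on the translation-invariant vector $\Omega_\beta$ (no spectrum condition needed). The paper merely compresses your $\varepsilon$-bookkeeping by working directly with the Fourier-transformed field, $\tilde{\phi}_{\beta,\theta}(p)=e^{ip\cdot\theta P_\beta}\tilde{\phi}_\beta(p)$, over signed mass-shell momenta, so the absorption of the signs $\varepsilon_k\varepsilon_l$ into $q_k=\varepsilon_k p_k$ is built in from the start rather than carried through a $2^n$-term expansion.
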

\begin{proof}
The Fourier transform $\tilde{\phi}_{\beta,\theta}(p)$ of $\phi_{\beta,\theta}(x)$ reads
\begin{equation*}
 \tilde{\phi}_{\beta,\theta}(p)=e^{ip\cdot \theta P_\beta}\tilde{\phi}_\beta(p)
=\tilde{\phi}_\beta(p)e^{ip\cdot \theta P_\beta},
\end{equation*}
where $\tilde{\phi}_{\beta}(p)$ is the Fourier transform of $\phi_{\beta}(x)$. The last equality follows from the covariance of the field and the antisymmetry of $\theta$. Again, by covariance there holds
\begin{equation*}
    e^{ip\cdot \theta P}\tilde{\phi}_\beta(p')
=e^{ip\cdot \theta p'}\tilde{\phi}_\beta(p')e^{ip\cdot \theta P}.
\end{equation*}
The statements follows from an iteration of this equality together with the translation invariance of $\Omega_\beta$.
\end{proof}

After we have described the deformation of a single operator $\phi_\beta(f)$, we now come to the deformation of the net $W\to\Aa_\beta(W)$ from (\ref{eq:ThermalNet}). This net is, in particular, wedge-local and we focus on localization in wedges, since this turns out to be stable under the deformation. The deformed net $\Aa_{\beta,\kappa},\,\kappa\in\Rl$ will be defined in terms of exponentiated deformed thermal field operators: We denote by $\phi_{\beta,\theta}(f)$ the self-adjoint closure of (\ref{eq:TFF_DeformedField}) and define 
\begin{equation*}
 V_{\beta,\theta}(f):=\exp(i\phi_{\beta,\theta}(f)),\quad f\in\hhs_m.
\end{equation*}
From the covariance of the fields (see Proposition \ref{prop:TFF_CovarianceDeformedField}) follows
\begin{equation}
\label{eq:TFF_WeylCovarianceDeformed}
 U(y,R)V_{\beta,\theta}(f)U(y,R)^{-1}=V_{\beta,R\theta R^T}(f_{(y,R)}),\quad (y,R)\in\Euc,\;f\in\hhs_m.
\end{equation}
Moreover, the following lemma holds.
\begin{lemma}
\label{lem:TFF_Weyl_Locality}
 Let $f,g\in\hhs_m$ with $\supp(f)\subset W_0$, $\supp(g)\subset W_0\,'$. Then
\begin{equation*}
[V_{\beta,\theta}(f),V_{\beta,-\theta}(g)]=0. 
\end{equation*}
\end{lemma}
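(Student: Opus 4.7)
The plan is to lift the commutation relation for the unbounded field operators (already established in Theorem on wedge-locality) to a commutation relation for their bounded exponentials. The standard tool for this is Nelson's commutativity theorem: if two essentially self-adjoint operators $A$, $B$ share a common dense invariant domain of analytic vectors on which they commute, then their self-adjoint closures are strongly commuting, and consequently the unitaries $e^{iA}$ and $e^{iB}$ commute.

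First I would check that all the hypotheses of Nelson's theorem are met on the dense subspace $\F_{\beta,0} \subset \HS_\beta$ of vectors of finite particle number. From Proposition \ref{prop:TFF_WightmanProperties}, part $i)$, both $\phi_{\beta,\theta}(f)$ and $\phi_{\beta,-\theta}(g)$ map $\F_{\beta,0}$ into itself, so $\F_{\beta,0}$ is a common invariant domain. From part $iii)$ of the same Proposition (whose proof proceeds exactly via the bound \eqref{eq:TFF_bound_a_theta} and Nelson's analytic vector theorem), every $\Psi \in \F_{\beta,0}$ is an entire analytic vector for each of the operators $\phi_{\beta,\theta}(f)$ and $\phi_{\beta,-\theta}(g)$. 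Finally, by Theorem \ref{thm:TFF_WedgeLocality}, the commutator
\begin{equation*}
 [\phi_{\beta,\theta}(f),\phi_{\beta,-\theta}(g)]\Psi = 0
\end{equation*}
vanishes on $\F_{\beta,0}$ whenever $\supp(f)\subset W_0$ and $\supp(g)\subset W_0'$.

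With these three ingredients in place, Nelson's commutativity theorem applies and yields that the self-adjoint closures of $\phi_{\beta,\theta}(f)$ and $\phi_{\beta,-\theta}(g)$ are strongly commuting, i.e., their spectral projections commute pairwise. By Borel functional calculus this implies that the unitaries $V_{\beta,\theta}(f)=\exp(i\phi_{\beta,\theta}(f))$ and $V_{\beta,-\theta}(g)=\exp(i\phi_{\beta,-\theta}(g))$ commute, which is the claim.

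The only mildly delicate point is verifying that we really are in the scope of Nelson's commutativity result rather than a weaker statement: commutation on a dense domain of \emph{arbitrary} $C^\infty$ vectors is known to be insufficient to conclude commutation of the bounded functional calculus, so it is essential that $\F_{\beta,0}$ be a domain of \emph{analytic} (here in fact entire) vectors. This is exactly what the bound \eqref{eq:TFF_bound_a_theta} delivers, so no extra work beyond invoking the estimate already used for essential self-adjointness is needed.
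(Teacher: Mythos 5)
Your argument is sound and verifies exactly the right ingredients (invariance of $\F_{\beta,0}$, entire analyticity from the bound \eqref{eq:TFF_bound_a_theta}, and the vanishing commutator from Theorem \ref{thm:TFF_WedgeLocality}), but it takes a genuinely different route from the paper, and the abstract theorem you lean on is misattributed. The paper argues directly: it expands both unitaries in their power series on $\F_{\beta,0}$ and commutes the double series term by term using Theorem \ref{thm:TFF_WedgeLocality}; the step that makes this legitimate is the additional observation (asserted in the paper's proof) that $V_{\beta,-\theta}(g)\Psi$, for $\Psi\in\F_{\beta,0}$, is again an entire analytic vector for $\phi_{\beta,\theta}(f)$ — without this, applying the series of $e^{i\phi_{\beta,\theta}(f)}$ to $V_{\beta,-\theta}(g)\Psi$ is not justified, since that vector leaves $\F_{\beta,0}$. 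Your approach buys precisely the avoidance of that step by outsourcing it to an abstract commutativity result, which is a legitimate alternative; however, what you call ``Nelson's commutativity theorem'' is actually the statement requiring that $\phi_{\beta,\theta}(f)^2+\phi_{\beta,-\theta}(g)^2$ be essentially self-adjoint on the common invariant domain (Nelson; see Reed--Simon, Thm.\ VIII.14), not the statement with a common invariant domain of separately analytic vectors. The latter version is true, but it is the Flato--Simon--Snellman--Sternheimer integrability theorem applied to the abelian Lie algebra $\Rl^2$, and it is the one whose hypotheses you have actually checked. So either cite that result, or stick with Nelson's theorem proper and add the (easy) extra estimate: the bound \eqref{eq:TFF_bound_a_theta} shows that finite particle vectors are analytic vectors for the sum of squares as well, since $\|(\phi_{\beta,\theta}(f)^2+\phi_{\beta,-\theta}(g)^2)^n\Psi\|$ grows at most like $c^{\,2n}\sqrt{(N+2n)!/N!}$ for $\Psi$ with at most $N$ particles, giving a positive radius of convergence and hence essential self-adjointness by Nelson's analytic vector theorem. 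With either citation made precise, your proof is complete; mere commutation on a dense invariant $C^\infty$-domain would indeed not suffice, as you correctly note.
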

\begin{proof}
In part $iii)$ of Proposition \ref{prop:TFF_WightmanProperties} we have shown that each $\Psi\in\F_{\beta,0}$ is an entire analytic vector for $\phi_{\beta,\theta}(f)$. Similarly, one can show that $V_{\beta,-\theta}(g)\Psi$, $\Psi\in\F_{\beta,0}$ is also an entire analytic vector for $\phi_{\beta,\theta}(f)$. Hence, for $\Psi\in\F_{\beta,0}$ we can compute the commutator with the power series
\begin{equation*}
 [V_{\beta,\theta}(f),V_{\beta,-\theta}(g)]\Psi
=\sum_{n,n'=0}^\infty\frac{i^{n+n'}}{n!n'!}
[\phi_{\beta,\theta}(f)^n,\phi_{\beta,-\theta}(g)^{n'}]\Psi.
\end{equation*}
 Now the statement follows from Theorem \ref{thm:TFF_WedgeLocality}.
\end{proof}

As we mentioned in Section \ref{sec:MinkowskiSpacetime}, the group $\Euc$ does not act transitively on the set of wedges $\W$. Hence it is not sufficient to prescribe the deformation of a single initial algebra to generate a net as in Section \ref{sec:CAQFT}, but we rather need to specify a collection of algebras $-$ one for each $\W_{[s,\ee]}$. Let $(s,\ee)$ be a representative of the equivalence class $[s,\ee]$ and define
\begin{equation*}
 \M^{s,\ee}_{\beta,\kappa}
:=\{V_{\beta,\Lambda_\ee(s)\theta\Lambda_\ee(s)^T}(f)
:\supp(f)\subset \Lambda_\ee(s)W_0\}''.
\end{equation*}
\begin{theorem}
There holds
 \begin{enumerate}
  \item[i)] $(y,R)\Lambda_\ee(s)W_0\subset \Lambda_\ee(s)W_0$ $\Longrightarrow$
$\alpha_{(y,R)}(\M^{s,\ee}_{\beta,\kappa})\subset \M^{s,\ee}_{\beta,\kappa}$.
  \item[ii)]$(y,R)\Lambda_\ee(s)W_0\subset (\Lambda_\ee(s)W_0)'$ $\Longrightarrow$ $\alpha_{(y,R)}(\M^{s,\ee}_{\beta,\kappa})\subset (\M^{s,\ee}_{\beta,\kappa})'$
 \end{enumerate}
\end{theorem}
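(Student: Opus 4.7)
The plan is to verify both inclusions on the generating Weyl operators and then extend to the weak closure by normality of $\alpha_{(y,R)}=\Ad\,U_\beta(y,R)$. Write $\Lambda:=\Lambda_\ee(s)$, $W:=\Lambda W_0$ and $\eta:=\Lambda\theta\Lambda^T$, so that $\M^{s,\ee}_{\beta,\kappa}$ is generated by the Weyl operators $V_{\beta,\eta}(f)$ with $\supp f\subset W$. By the deformed covariance relation (\ref{eq:TFF_WeylCovarianceDeformed}), such a generator is mapped by $\alpha_{(y,R)}$ to $V_{\beta,R\eta R^T}(f_{(y,R)})$, whose support is contained in $(y,R)W$, so the whole task reduces to identifying the transformed deformation matrix $R\eta R^T$ in each of the two cases.

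The main step is a geometric analysis of $L:=\Lambda^{-1}R\Lambda\in\L_0$, in terms of which the hypothesis in (i), respectively (ii), reads $LW_0+\Lambda^{-1}y\subset W_0$, respectively $LW_0+\Lambda^{-1}y\subset W_0'$. A standard wedge-rigidity argument---two wedges satisfying such an inclusion must share the same pair of characteristic null hyperplanes, which are therefore either preserved or swapped by $L$---forces $LW_0=W_0$ in case (i) and $LW_0=W_0'$ in case (ii). Consequently $L$ lies either in the stabilizer of $W_0$ in $\L_0$, which is generated by the $x^1$-boosts $\Lambda_1(s)$ and the rotations $R_1(\alpha)$ around the $x^1$-axis, or else in the coset $R_\pi\cdot\mathrm{Stab}_{\L_0}(W_0)$, where $R_\pi:=\diag(1,-1,1,-1)$ is the spatial rotation by $\pi$ around the $x^2$-axis that sends $W_0$ onto $W_0'$. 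Since $\theta$ is concentrated in the $(x^2,x^3)$-block, $\Lambda_1(s)$ acts trivially there and $R_1(\alpha)$ acts by a two-dimensional rotation that preserves the antisymmetric generator $\kappa J$ of this block, so $L\theta L^T=\theta$ in case (i). A short direct calculation gives $R_\pi\theta R_\pi^T=-\theta$, hence $L\theta L^T=-\theta$ in case (ii). Conjugating with $\Lambda$ therefore yields $R\eta R^T=\eta$ in case (i) and $R\eta R^T=-\eta$ in case (ii).

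In case (i) the transformed generator is $V_{\beta,\eta}(f_{(y,R)})$ and its support lies in $(y,R)W\subset W$ by hypothesis, so it belongs to $\M^{s,\ee}_{\beta,\kappa}$. In case (ii) the transformed generator is $V_{\beta,-\eta}(f_{(y,R)})$ with $\supp f_{(y,R)}\subset W'$, and one has to show that it commutes with every generator $V_{\beta,\eta}(g)$ of $\M^{s,\ee}_{\beta,\kappa}$. This is the natural transplant of Lemma \ref{lem:TFF_Weyl_Locality} from $W_0$ to $W$: the proof of Theorem \ref{thm:TFF_WedgeLocality} carries over verbatim, its crucial input $\theta\Rl^4\subset E_{W_0}$ being replaced by $\eta\Rl^4=\Lambda E_{W_0}=E_W$, and shifts along the edge $E_W$ leave both $W$ and $W'$ invariant, which reduces the claim to the ordinary wedge-locality of the undeformed thermal field $\phi_\beta$. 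The final passage from generators to the full von Neumann algebras is the usual commutant argument. The main obstacle is really the wedge-rigidity step, which has to exclude any twisted $L$ that would spoil the sign of $L\theta L^T$; once this is secured, the rest is a direct combination of the deformed covariance with the edge-adapted wedge-locality already established in Theorem \ref{thm:TFF_WedgeLocality}.
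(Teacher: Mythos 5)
Your proposal is correct and follows essentially the same route as the paper: verify the claim on the generating deformed Weyl operators, use the covariance relation (\ref{eq:TFF_WeylCovarianceDeformed}) to reduce everything to showing $R\Lambda_\ee(s)\theta\Lambda_\ee(s)^TR^T=\pm\Lambda_\ee(s)\theta\Lambda_\ee(s)^T$, settle case ii) by the deformed wedge-locality, and pass to the weak closure. The only deviations are that you reprove the geometric input (which the paper simply cites as \cite[Lem.3.1]{GrosseLechner:2007}) by a stabilizer analysis of $W_0$, and that in case ii) you transplant the proof of Theorem \ref{thm:TFF_WedgeLocality} directly to the wedge $\Lambda_\ee(s)W_0$ with matrix $\Lambda_\ee(s)\theta\Lambda_\ee(s)^T$ instead of conjugating back to $W_0$ with $\alpha_{\Lambda_\ee(s)}$ as the paper does $-$ a variant that is, if anything, slightly cleaner, since the boost $\Lambda_\ee(s)$ is not unitarily implemented in the thermal representation.
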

\begin{proof}
 $i)$ By assumption we have $(\Lambda_\ee(s)^{-1}y,\Lambda_\ee(s)^{-1}R\Lambda_\ee(s))W_0\subset W_0$, which implies
\begin{equation*}
 \Lambda_\ee(s)^{-1}R\Lambda_\ee(s)\theta (\Lambda_\ee(s)^{-1}R\Lambda_\ee(s))^T
=\theta.
\end{equation*}
by \cite[Lem.3.1]{GrosseLechner:2007}.
Hence
\begin{equation}
\label{eq:SFF_IsotonyCondition}
R\Lambda_\ee(s)\theta\Lambda_\ee(s)^{T}R^T=\Lambda_\ee(s)\theta\Lambda_\ee(s)^{T}
\end{equation}
For $V_{\beta,\Lambda_\ee(s)\theta\Lambda_\ee(s)^T}(f)\in \M^{s,\ee}_{\beta,\kappa}$ there holds
\begin{equation*}
 U(y,R)V_{\beta,\Lambda_\ee(s)\theta\Lambda_\ee(s)^T}(f)U(y,R)^{-1}
=V_{\beta,R\Lambda_\ee(s)\theta\Lambda_\ee(s)^TR^T}(f_{(y,R)})
=V_{\beta,\Lambda_\ee(s)\theta\Lambda_\ee(s)^T}(f_{(y,R)})
\end{equation*}
by covariance (\ref{eq:TFF_WeylCovarianceDeformed}) and (\ref{eq:SFF_IsotonyCondition}). As
$\supp(f_{(y,R)})\subset (y,R)\Lambda_\ee(s)W_0\subset \Lambda_\ee(s)W_0$
by assumption, we find
\begin{equation*}
 U(y,R)V_{\beta,\Lambda_\ee(s)\theta\Lambda_\ee(s)^T}(f)U(y,R)^{-1}\in
\M^{s,\ee}_{\beta,\kappa}.
\end{equation*}
The assertion follows by taking the weak closure of 
\begin{equation*}
 \{U(y,R)V_{\beta,\Lambda_\ee(s)\theta\Lambda_\ee(s)^T}(f)U(y,R)^{-1}
:\supp(f)\subset \Lambda_\ee(s)W_0\}.
\end{equation*}

$ii)$ There holds $(\Lambda_\ee(s)W_0)'=\Lambda_\ee(s)W_0\,'$ and $(\Lambda_\ee(s)^{-1}y,\Lambda_\ee(s)^{-1}R\Lambda_\ee(s))W_0\subset W_0\,'$ by assumption. This implies
\begin{equation*}
 \Lambda_\ee(s)^{-1}R\Lambda_\ee(s)\theta (\Lambda_\ee(s)^{-1}R\Lambda_\ee(s))^T
=-\theta.
\end{equation*}
by \cite[Lem.3.1]{GrosseLechner:2007}.
Hence
\begin{equation}
\label{eq:SFF_LocalityCondition}
R\Lambda_\ee(s)\theta\Lambda_\ee(s)^{T}R^T=-\Lambda_\ee(s)\theta\Lambda_\ee(s)^{T}.
\end{equation}
Let $f,\hat{f}\in\hhs_m$, with $\supp(f),\supp(\hat{f})\subset \Lambda_\ee(s)W_0$, and consider 
$$
V_{\beta,\Lambda_\ee(s)\theta\Lambda_\ee(s)^T}(f),
V_{\beta,\Lambda_\ee(s)\theta\Lambda_\ee(s)^T}(\hat{f})
\in \M^{s,\ee}_{\beta,\kappa}.
$$ 
Then
\begin{equation*}
 U(y,R)V_{\beta,\Lambda_\ee(s)\theta\Lambda_\ee(s)^T}(f)U(y,R)^{-1}
=V_{\beta,R\Lambda_\ee(s)\theta\Lambda_\ee(s)^TR^T}(f_{(y,R)})
=V_{\beta,-\Lambda_\ee(s)\theta\Lambda_\ee(s)^T}(f_{(y,R)})
\end{equation*}
by covariance and (\ref{eq:SFF_LocalityCondition}). Hence
\begin{align}
 \bl[U(y,R)V_{\beta,\Lambda_\ee(s)\theta\Lambda_\ee(s)^T}(f)&U(y,R)^{-1},
V_{\beta,\Lambda_\ee(s)\theta\Lambda_\ee(s)^T}(\hat{f})\br]\nonumber\\
\label{eq:TFF_Commutator}
&=\alpha_{\Lambda_\ee(s)}\Bl(
\bl[V_{\beta,-\theta}(f_{(0,\Lambda_\ee(s)^{-1})(y,R)}))
,V_{\beta,\theta}(\hat{f}_{(0,\Lambda_\ee(s)^{-1})})\br]
\Br).
\end{align}
As $\supp(f),\supp(\hat{f})\subset \Lambda_\ee(s) W_0$ by assumption, there holds 
\begin{equation*}
\supp(\hat{f}_{(0,\Lambda_\ee(s)^{-1})})\subset W_0,
\quad 
\supp(f_{(0,\Lambda_\ee(s)^{-1})(y,R)})\subset W_0\, '.
\end{equation*}
Hence by Theorem \ref{thm:TFF_WedgeLocality} the commutator (\ref{eq:TFF_Commutator}) vanishes, {\it i.e.}
\begin{equation*}
 U(y,R)V_{\beta,\Lambda_\ee(s)\theta\Lambda_\ee(s)^T}(f)U(y,R)^{-1}
\in (\M^{s,\ee}_{\beta,\kappa})'.
\end{equation*}
The assertion follows by taking the weak closure of 
\begin{equation*}
 \{U(y,R)V_{\beta,\Lambda_\ee(s)\theta\Lambda_\ee(s)^T}(f)U(y,R)^{-1}
:\supp(f)\subset \Lambda_\ee(s)W_0\}.
\end{equation*}
\end{proof}
\begin{corollary}
 The map
\begin{equation*}
 W:=R\Lambda_\ee(s) W_0+y\longmapsto
\alpha_{(y,R)}(\M^{s,\ee}_{\beta,\kappa})=:\Aa_{\beta,\kappa}(W).
\end{equation*}
is an $\Euc$-covariant and wedge-local net of von Neumann algebra.
\end{corollary}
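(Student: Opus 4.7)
The corollary is essentially a repackaging of the preceding theorem. My plan is to verify, in succession, well-definedness of the assignment, $\Euc$-covariance, isotony, and locality.

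First I would check that $\Aa_{\beta,\kappa}(W)$ depends only on $W$, not on the chosen parameters. A given wedge $W$ lies in a unique $\Euc$-orbit $\W_{[s,\ee]}$; once a representative $\Lambda_\ee(s)$ of the class is fixed, the pair $(y,R)\in\Euc$ with $W=R\Lambda_\ee(s)W_0+y$ is determined up to the stabilizer $\mathrm{Stab}_{\Euc}(\Lambda_\ee(s)W_0)$. For any element $(y_0,R_0)$ of this stabilizer, part $i)$ of the theorem applied to both $(y_0,R_0)$ and its inverse yields $\alpha_{(y_0,R_0)}(\M^{s,\ee}_{\beta,\kappa})=\M^{s,\ee}_{\beta,\kappa}$, so the algebra $\alpha_{(y,R)}(\M^{s,\ee}_{\beta,\kappa})$ indeed depends only on $W$. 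A change of representative within the equivalence class $[s,\ee]$ is then absorbed into the rotation $R$.

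Covariance is a direct computation. For $h=(y_0,R_0)\in\Euc$ and $W=R\Lambda_\ee(s)W_0+y$ one has $hW=(R_0R)\Lambda_\ee(s)W_0+(y_0+R_0y)$, and the group law on $\Euc$ together with the definition of the map and the homomorphism property of $\alpha$ immediately give $\Aa_{\beta,\kappa}(hW)=\alpha_h\circ\alpha_{(y,R)}(\M^{s,\ee}_{\beta,\kappa})=\alpha_h(\Aa_{\beta,\kappa}(W))$.

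For isotony and locality I would use the covariance just established to reduce inclusions $W_1\subset W_2$ and causal disjointness $W_1\subset W_2\,'$ to the case where $W_2=\Lambda_\ee(s)W_0$ is a reference wedge. The mildly non-trivial geometric ingredient, which is the only place requiring attention, is that in either situation $W_1$ must itself lie in the orbit $\W_{[s,\ee]}$: a boost transverse to the wedge axis changes the opening of the wedge, so a wedge obtained by such a boost (modulo $\SO(3)_0$ rotation and translation) cannot be contained in or causally disjoint from $\Lambda_\ee(s)W_0$. This can be verified directly from the explicit description of $W_0$ and its causal complement. Granted this, we may write $W_1=(y,R)\Lambda_\ee(s)W_0$ for some $(y,R)\in\Euc$, and parts $i)$ and $ii)$ of the theorem respectively yield $\Aa_{\beta,\kappa}(W_1)\subset\Aa_{\beta,\kappa}(W_2)$ and $\Aa_{\beta,\kappa}(W_1)\subset\Aa_{\beta,\kappa}(W_2)'$. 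All substantive work is contained in the preceding theorem; the corollary amounts to the bookkeeping described above.
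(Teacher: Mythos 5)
Your proposal is correct and follows exactly the route the paper intends: the corollary is stated without proof as an immediate instance of the wedge-triple argument recalled in Section \ref{sec:CAQFT} (isotony from part $i)$, locality from part $ii)$, covariance by definition of the net), and your additional bookkeeping — well-definedness via the stabilizer and the reduction of arbitrary inclusions/causal configurations to a single coherent subfamily — is precisely what that implicit argument relies on. The one geometric ingredient you assert is indeed true and routine to check: an inclusion or causal disjointness forces parallel edges, so $W_1$ is a translate of $W_2$ or of $W_2\,'$, and $W_2\,'$ lies in the same $\Euc$-orbit as $W_2$ because a rotation by $\pi$ about the spatial part of the tilted edge direction maps any wedge onto its causal complement (this is the flat-space analogue of Proposition \ref{prop:inclusions}, which makes the same point explicit in the curved setting).
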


As Haag-Ruelle scattering theory is not available in representations which are associated with KMS states \cite{NarhoferRequardtThirring:1983}, we use the following more indirect method to show that the model which we have constructed is actually something new and not the old theory rewritten in a complicated way. 
\begin{theorem}
 The nets $\Aa_\beta$ and $\Aa_{\beta,\kappa}$ are unitarily inequivalent for $\kappa\neq 0$.
\end{theorem}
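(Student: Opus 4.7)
Proof plan. I would assume for contradiction that such a unitary $U\colon\HS_\beta\to\HS_\beta$ exists. The overall strategy is first to reduce, via a KMS-uniqueness argument, to the case $U\Omega_\beta=\Omega_\beta$, and then to derive a contradiction from the fact that the deformed and undeformed four-point functions differ.

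First, I note that time translations $\tau_t$ are implemented on both nets by the same unitary group $U_\beta(t)$, and that the vector state $\omega_\beta=\langle\Omega_\beta,\cdot\,\Omega_\beta\rangle$ is a normal $(\tau,\beta)$-KMS state on both $\Aa_\beta$ and $\Aa_{\beta,\kappa}$. For $\Aa_\beta$ this is the defining property, while for $\Aa_{\beta,\kappa}$ it can be checked using the preceding Proposition on the deformed $n$-point functions together with the fact that the two-point function is unchanged by the deformation, since the phase factor $e^{ip\cdot\theta q}$ collapses to unity on the $\delta(p+q)$ support by antisymmetry of $\theta$. By uniqueness of the normal $(\tau,\beta)$-KMS state for the scalar free field, the pull-back $\omega_\beta\circ\mathrm{Ad}(U)$ on $\Aa_\beta$ coincides with $\omega_\beta$, which forces $U^*\Omega_\beta=c\Omega_\beta$ for some phase $c$; absorbing $c$ into $U$, I may assume $U\Omega_\beta=\Omega_\beta$. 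A standard argument based on the cyclicity and separating property of $\Omega_\beta$ then shows that $UU_\beta(g)U^{-1}=U_\beta(g)$ for every $g\in\Euc$, so $\rho:=\mathrm{Ad}(U)$ is an $\Euc$-covariant, state-preserving $\ast$-isomorphism between the two nets.

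To reach the contradiction, I invoke the preceding Proposition on the deformed $n$-point functions. For $n=4$ one has
\[
\tilde w^{(4)}_{\beta,\theta}(p_1,\dots,p_4)
=\Bl(\prod_{1\le k<l\le 4}e^{ip_k\cdot\theta p_l}\Br)\tilde w^{(4)}_\beta(p_1,\dots,p_4),
\]
and evaluating on the support of each of the three quasifree pairings of $\tilde w^{(4)}_\beta$ shows that the accompanying phase is trivial for the two ``uncrossed'' pairings $\{(1,2),(3,4)\}$ and $\{(1,4),(2,3)\}$, but equals $e^{2ip_1\cdot\theta p_2}$ on the crossed pairing $\{(1,3),(2,4)\}$. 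Consequently the state $\omega_\beta$ on $\Aa_{\beta,\kappa}$ fails to be quasifree on the natural $\Euc$-covariant generators $V_{\beta,\theta}(f)$, whereas $\omega_\beta$ on $\Aa_\beta$ is quasifree on the Weyl generators $V_\beta(f)$. Since $\rho$ preserves both $\omega_\beta$ and the $\Euc$-covariance, this discrepancy in the quasifree structure, combined with a suitable choice of test functions $f_1,\dots,f_4\in\hhs_m$ for which $p_1\cdot\theta p_2$ does not vanish on the relevant supports, is incompatible with $\rho$ being an isomorphism for $\kappa\ne 0$.

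The main obstacle I expect is precisely the last step, because $\rho$ need not identify $V_\beta(f)$ with $V_{\beta,\theta}(f)$: the contradiction has to be extracted as an intrinsic invariant of the net-plus-state pair, namely the quasifree-versus-non-quasifree dichotomy, rather than by a direct comparison of generators. A secondary point requiring care is the verification that $\omega_\beta$ is indeed a $(\tau,\beta)$-KMS state on $\Aa_{\beta,\kappa}$, since the deformed Weyl operators do not lie inside $\pi_\beta(\Aa_m)''$; here it is crucial that the deformation is defined by translations along the edge $E_{W_0}$ rather than by the full spacetime translations, so that the deformed correlators remain compatible with the KMS boundary condition for $\tau$.
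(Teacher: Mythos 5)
There are two genuine gaps here, one in each half of your plan. The first is the claim that $\omega_\beta$ is a $(\tau,\beta)$-KMS state on the deformed net $\Aa_{\beta,\kappa}$, which you justify only by the invariance of the two-point function. That is not sufficient, and the claim is in fact untenable: the deformed net is not generated by fields warped with a single matrix $\theta$, but by operators $V_{\beta,\Lambda_\ee(s)\theta\Lambda_\ee(s)^T}(f)$ carrying wedge-dependent deformation matrices, and the KMS boundary condition would have to hold for mixed products of such operators. If $\omega_\beta$ were KMS on $\Aa_{\beta,\kappa}$, then $\Omega_\beta$ would be separating for the generated von Neumann algebras, in particular for $\Aa_{\beta,\kappa}(W_0)\vee\Aa_{\beta,\kappa}(r_\vtheta W_0)$ — and exactly this separating property is what fails for $\kappa\neq 0$; indeed the paper records as a byproduct of its proof that the initial KMS state is \emph{not} KMS on the deformed net. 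So the KMS-uniqueness normalization of the intertwiner collapses. (The paper sidesteps this issue by taking $V\Omega_\beta=\Omega_\beta$ as part of the notion of equivalence of the two nets, so no such reduction is attempted.)

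The second and more serious gap is the one you flag yourself: quasifreeness is a property of a state relative to a distinguished family of generators, not an isomorphism invariant of the pair (net, state). Since $\rho=\mathrm{Ad}(U)$ need not carry $V_\beta(f)$ to $V_{\beta,\theta}(f)$, the difference between the deformed and undeformed four-point functions of these particular generators yields no contradiction, and you offer no mechanism to replace it — but this is precisely where the whole burden of the proof lies. The paper closes this gap differently: the distinguishing invariant is the separating property of $\Omega_\beta$ for the von Neumann algebra generated by \emph{two} wedge algebras, $\Aa(W_0)\vee\Aa(r_\vtheta W_0)$, which is manifestly preserved by a vacuum-preserving unitary net isomorphism. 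For the undeformed net it holds by the KMS property; for the deformed net it fails, because for $\supp(f)\subset r_\vtheta W_0\cap r_{-\vtheta}W_0$ the operators $\phi_{\beta,\theta}(f)$ and $\phi_{\beta,r_\vtheta\theta r_\vtheta^T}(f)$ are affiliated with $\Aa_{\beta,\kappa}(W_0)$ and $\Aa_{\beta,\kappa}(r_\vtheta W_0)$ respectively, agree on $\Omega_\beta$ by translation invariance, and yet differ as operators — an explicit computation of their matrix elements on one-particle, one-hole vectors would force $\theta=r_\vtheta\theta r_\vtheta^T$, which is impossible for $\vtheta\neq 0$. Without some substitute for this affiliation-plus-separating argument, your quasifree-versus-non-quasifree dichotomy does not produce the contradiction.
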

\begin{proof}
 Consider the wedge $W_0$ together with a rotation $r_\vtheta$ in the $(x^1,x^2)$-plane about a fixed angle $\vtheta>0$ and define
 \begin{equation*}
 \K:=r_\vtheta W_0\cap r_{-\vtheta}W_0,\quad |\vtheta|<\frac{\pi}{2}.
 \end{equation*}
 Pick some $f\in\hhs_m$ with $\supp(f)\subset\K$. Obviously there holds $\supp(f), \supp(f_{r_{-\vtheta}})\subset W_0$ and by covariance we have
\begin{equation*}
 U(r_\vtheta)\phi_{\beta,\theta}(f_{r_{-\vtheta}})U(r_\vtheta)^{-1}
=\phi_{\beta,r_\vtheta\theta r_\vtheta^T}(f).
\end{equation*}
 
 In the first step, we show that the operators $\phi_{\beta,\theta}(f)$ and $\phi_{\beta,r_\vtheta\theta r_\vtheta^T}(f)$ are affiliated with the algebra $\Aa_{\beta,\kappa}(W_0)\vee \Aa_{\beta,\kappa}(r_\vtheta W_0)$. The line of argument is essentially the same as in the proof of \cite[Thm.5.2]{Lechner:2011}, but we include it for matters of completeness. First we show that $F:=\phi_{\beta,\theta}(f)$ is affiliated with $\Aa_{\beta,\kappa}(W_0)$. Let $\Psi\in D(F)$ be a vector in the domain of $F$ and $\Psi_0\in\F_{\beta,0}$ a finite particle vector. As $F^*$ changes the particle number only by a finite amount, it follows that $\Psi_0$ and $F^*\Psi_0$ are entire analytic vectors for $G':=\phi_{\beta,-\theta}(g')$ with $g'\in\hhs_m$ and $\supp(g')\subset W_0\, '$. Since $F^*$ and $(G')^k$ commute on $\F_{\beta,0}$ for all $k\in\Nl_0$ by Theorem \ref{thm:TFF_WedgeLocality}, there follows
\begin{align*}
 \bl<\Psi_0,e^{iG'}F\Psi\br>
&=\bl<F^*e^{-iG'}\Psi_0,\Psi\br>\\
&=\sum_{k=0}^\infty\frac{(-i)^k}{k!}\bl<F^*(G')^k\Psi_0,\Psi\br>\\
&=\sum_{k=0}^\infty\frac{(-i)^k}{k!}\bl<(G')^kF^*\Psi_0,\Psi\br>\\
&=\bl<e^{-iG'}F^*\Psi_0,\Psi\br>\\
&=\bl<\Psi_0,Fe^{iG'}\Psi\br>
\end{align*}
As $\F_{\beta,0}$ is dense in $\HS_\beta$, there holds $e^{iG'}F\Psi=Fe^{iG'}\Psi$ for all $\Psi\in D(F)$. Now, this identity also holds for all operators in the $*$-algebra $\Aa$ which is algebraically generated by $\exp(i\phi_{\beta,\theta}(g'))$ with $\supp(g')\subset W_0\,'$. Moreover, any $A'\in\Aa_{\beta,\kappa}(W_0)'$ is a weak limit of a sequence $A_n'$ in $\Aa$ and since $A_n'F\Psi=FA_n'\Psi$ is stable under taking weak limits there follows $A'F\Psi=FA'\Psi$ for all $A'\in \Aa_{\beta,\kappa}(W_0)'$, {\it i.e.} $F$ is affiliated with $\Aa_{\beta,\kappa}(W_0)$. The proof that $\phi_{\beta,r_\vtheta\theta r_\vtheta^T}(f)$ is affiliated with $\Aa_{\beta,\kappa}(r_\vtheta W_0)$ works the same way. Hence $\phi_{\beta,\theta}(f)$ and $\phi_{\beta,r_\vtheta\theta r_\vtheta^T}(f)$ are affiliated with the generated von Neumann-algebra $\Aa_{\beta,\kappa}(W_0)\vee \Aa_{\beta,\kappa}(r_\vtheta W_0)$.

Now, from the translation invariance of the state $\omega_\beta$ follows
\begin{equation*}
 \phi_{\beta,\theta}(f)\Omega_\beta
=\phi_{\beta}(f)\Omega_\beta
=\phi_{\beta,r_\vtheta\theta r_\vtheta^T}(f)\Omega_\beta.
\end{equation*}
The KMS condition implies that $\Omega_\beta$ is separating for $\pi_\beta(\Aa_m)''$ (see Proposition \ref{prop:KMS_basicproperties}), so it is also separating for $\Aa_{\beta}(W_0)\vee \Aa_{\beta}(r_\vtheta W_0)$. For a contradiction, assume that $\Omega_\beta$ is also separating for the deformed algebra $\Aa_{\beta,\kappa}(W_0)\vee \Aa_{\beta,\kappa}(r_\vtheta W_0)$. Then
\begin{equation}
\label{eq:TFF_SoughtContradiction}
 \phi_{\beta,\theta}(f)=\phi_{\beta,r_\vtheta\theta r_\vtheta^T}(f),
\end{equation}
since $\phi_{\beta,\theta}(f)$ and $\phi_{\beta,r_\vtheta\theta r_\vtheta^T}(f)$ are affiliated with this algebra.
We proceed by showing that these operators are in fact not equal. Consider the Fock vector $\psi\otimes \bar{\Omega}$, $\psi\in\HS_1$ describing a one-particle, no-hole state. For the undeformed thermal creation and annihilation operators there holds
\begin{align*}
 a_\beta(\vphi)(\psi\otimes\bar{\Omega})
 &=\bl<\sqrt{1+\rho}\,\vphi,\psi\br>_m(\Omega\otimes\bar\Omega)
+\psi\otimes \overline{\sqrt{\rho}\,\vphi}\\
 a^\dagger_\beta(\vphi)(\psi\otimes\bar{\Omega})
 &=(\sqrt{1+\rho}\,\vphi\otimes_s \psi)\otimes\bar{\Omega}.
\end{align*}
Hence, for the 1,1-contribution of the deformed thermal creation and annihilation operators we obtain (see Definition \ref{def:TFF_deformed_a})
\begin{equation*}
 ([a_{\beta,\theta}(\vphi)+a^\dagger_{\beta,\theta}(\vphi)](\psi\otimes\bar{\Omega}))_{1,1}(p,q)
=(a_{\beta,\theta}(\vphi)(\psi\otimes\bar{\Omega}))_{1,1}(p,q)
=e^{iq\cdot\theta p}\psi(p)\sqrt{\rho_q}\cdot \overline{\vphi(q)},
\end{equation*}
which implies 
\begin{equation}
\label{eq:InequivalenceCrucialVector}
 ([\phi_{\beta,\theta}(f)-\phi_{\beta,r_\vtheta\theta r_\vtheta^T}(f)](\psi\otimes\bar{\Omega}))_{1,1}(p,q)
=\frac{1}{\sqrt{2}}(e^{iq\cdot\theta p}-e^{iq\cdot r_\vtheta\theta r_\vtheta^T p})\psi(p)\sqrt{\rho_q}\cdot \tilde{f}_+(-q)=0
\end{equation}
for all $f\in\hhs_m$, $\psi\in\HS_1$ and $p,q\in H_m^+$ by (\ref{eq:TFF_SoughtContradiction}). Excluding the trivial case $f,\psi\equiv 0$ we find
\begin{equation}
\label{eq:InequivalenceMatrixElements}
 q\cdot(\theta - r_\vtheta\theta r_\vtheta^T)p=2\pi k,
\end{equation}
for some $k=k(p,q)\in\Zl$. As the matrix $\theta$ is antisymmetric, there holds $k(p,p)=0$ for all $p\in H^+_m$. This and the continuity of $(p,q)\mapsto k(p,q)$ implies that $k(p,q)=0$ for all $p,q\in H^+_m$. Since the matrix elements (\ref{eq:InequivalenceMatrixElements}) coincide for all $p,q\in H_m^+$, it follows that the matrices $\theta$ and $r_\vtheta\theta r_\vtheta^T$ must coincide. However, as $\Lpo\ni\Lambda\to\Lambda\theta\Lambda^T$ is a homomorphism of homogeneous spaces \cite[Lem.3.1]{GrosseLechner:2007}, there follows $\vtheta=0$ which contradicts our initial assumption about the rotation $r_\vtheta$. Hence $\Omega_\beta$ is in fact not separating for $\Aa_{\beta,\kappa}(W_0)\vee \Aa_{\beta,\kappa}(r_\vtheta W_0)$.

If the nets $\Aa_\beta$ and $\Aa_{\beta,\kappa}$ were unitarily equivalent, there would exist a unitary $V$, such that $V^*\Aa_{\beta,\kappa}(W)V=\Aa_{\beta}(W)$ for all $W\in\W$ and $V\Omega_\beta=\Omega_\beta$. But such a unitary would clearly preserve the separating property for wedge-algebras, from which we conclude that $\Aa_\beta$ and $\Aa_{\beta,\kappa}$ are not unitarily equivalent.
\end{proof}

\chapter{Quantum Field Theories on Cosmological Spacetimes}
\label{ch:Cosmological}

\noindent
In order to apply the deformation scheme from Section \ref{sec:CAQFT} to quantum field theories on curved manifolds, we will consider in this chapter spacetimes with a sufficiently large isometry group containing two commuting Killing fields, which give rise to a representation of $\Rl^2$ as required in (\ref{eq:IntroWarpedConvolution}). This setting is wide enough to encompass a number of cosmologically relevant manifolds such as the Bianchi type models I-VII. Making use of the algebraic framework of quantum field theory, we can then formulate quantum field theories in an operator-algebraic language and study their deformations. Despite the fact that the warped convolution was invented for the deformation of Minkowski space quantum field theories, it turns out that all reference to the particular structure of flat spacetime, such as Poincar\'e transformations and a Poincar\'e invariant vacuum state, can be avoided.

We are interested in understanding to what extent the familiar structure of quantum field theories on curved spacetimes is preserved under such deformations, and investigate in particular covariance and localization properties. Concerning locality, it is known that in warped models on Minkowski space, point-like localization is weakened to localization in wedges \cite{GrosseLechner:2007,BuchholzSummers:2008,GrosseLechner:2008,BuchholzLechnerSummers:2010}. Because of their intimate relation to the Poincar\'e symmetry of Minkowski spacetime, it is not obvious what a good replacement for such a collection of regions is in the presence of non-vanishing curvature. In fact, different definitions are possible, and wedges on special manifolds have been studied by many authors in the literature \cite{Kay:1985,BorchersBuchholz:1999,Rehren:2000,BuchholzMundSummers:2001,GuidoLongoRobertsVerch:2001,BuchholzSummers:2004-2,LauridsenRibeiro:2007,Strich:2008,Borchers:2009}.
\\
\\
This chapter is structured as follows. In Section \ref{sec:geometry} we show that on those four-dimensional curved spacetimes which allow for the application of the deformation methods from \cite{BuchholzLechnerSummers:2010}, and thus carry two commuting Killing fields, there also exists a family of wedges with causal properties analogous to the Minkowski space wedges. Because of the prominent role wedges play in many areas of Minkowski space quantum field theory \cite{BisognanoWichmann:1975, Borchers:1992, Borchers:2000, BuchholzDreyerFlorigSummers:2000, BrunettiGuidoLongo:2002}, this geometric and manifestly covariant construction is also of interest independently of its relation to deformations.

In Section \ref{sec:deformation}, we then consider quantum field theories on curved spacetimes, and deform them by warped convolution. We first show that these deformations can be carried through in a model-independent, operator-algebraic framework, and that the emerging models share many structural properties with deformations of field theories on flat spacetime (see Section \ref{sec:generaldeformations}). In particular, deformed quantum fields are localized in the wedges of the considered spacetime. These and further aspects of deformed quantum field theories are discussed in the concrete example of a Dirac field in Section \ref{sec:dirac}.

\section{Geometric setup}\label{sec:geometry}

To prepare the ground for our discussion of deformations of quantum field theories on curved backgrounds, we introduce in this section a suitable class of spacetimes and study their geometrical properties. In particular, we show how the concept of {\em wedges}, known from Minkowski space, generalizes to these manifolds. Recall in preparation that a wedge in four-dimensional Minkowski space is a region which is bounded by two non-parallel characteristic hyperplanes \cite{ThomasWichmann:1997}, or, equivalently, a region which is a connected component of the causal complement of a two-dimensional spacelike plane. The latter definition has a natural analogue in the curved setting. Making use of this observation, we construct corresponding wedge regions in Section \ref{sec:edges+wedges}, and analyse their covariance, causality and inclusion properties. At the end of that section, we compare our notion of wedges to other definitions which have been made in the literature \cite{BorchersBuchholz:1999, BuchholzMundSummers:2001, BuchholzSummers:2004-2, LauridsenRibeiro:2007, Borchers:2009}, and point out the similarities and differences.

In Section \ref{sec:examples}, the abstract analysis of wedge regions is complemented by a number of concrete examples of spacetimes fulfilling our assumptions.

\subsection{Edges and wedges in curved spacetimes}\label{sec:edges+wedges}

Let $(M,\gST)$ be a globally hyperbolic spacetime manifold\footnote{Our conventions concerning curved spacetimes are collected in Appendix \ref{ch:NotationConventions}}.
To avoid pathological geometric situations such as closed causal curves, and also to define a full-fledged Cauchy problem for a free field theory whose dynamics is determined by a second order hyperbolic partial differential equation, we will restrict ourselves to globally hyperbolic spacetimes. So in particular, $M$ is orientable and time-orientable, and we fix both orientations. The (open) causal complement of a set $\OO\subset M$ is defined as
\label{causalcomplement}
\begin{align}
 \OO ' := M \backslash \Big[\overline{J^+(\OO)}\cup \overline{J^-(\OO)}\Big]\,,
\end{align}
where $J^\pm(\OO)$ is the causal future respectively past of $\OO$ in $M$ \cite[Section 8.1]{Wald:1984}. While this setting is standard in quantum field theory on curved backgrounds, we will make additional assumptions regarding the structure of the isometry group $\iso(M,\gST)$ of $(M,\gST)$, motivated by our desire to define wedges in $M$ which resemble those in Minkowski space.

Our most important assumption on the structure of $(M,\gST)$ is that it admits two linearly independent, spacelike, complete, commuting smooth Killing fields $\xi_1,\xi_2$, which will later be essential in the context of deformed quantum field theories. We refer here and in the following always to pointwise linear independence, which entails in particular that these vector fields have no zeros. Denoting the flows of $\xi_1,\xi_2$ by $\flow_{\xi_1},\flow_{\xi_2}$, the orbit of a point $p\in M$ is a smooth two-dimensional spacelike embedded submanifold of $M$,
\begin{align}\label{edge1}
 E := \{\flow_{\xi_1,s_1}(\flow_{\xi_2,s_2}(p))\in M \,:\, s_1,s_2\in\Rl\}\,,
\end{align}
where $s_1,s_2$ are the flow parameters of $\xi_1,\xi_2$.

Since $M$ is globally hyperbolic, it is isometric to a smooth product manifold $\Rl\times\Sigma$, where $\Sigma$ is a smooth, three-dimensional embedded Cauchy hypersurface. It is known that the metric splits according to $\gST=\beta d\mathcal{T}^2- h$ with a temporal function $\mathcal{T}:\Rl\times\Sigma\ra\Rl$ and a positive function $\beta\in C^\infty(\Rl\times\Sigma,(0,\infty))$, while $h$ induces a smooth Riemannian metric on $\Sigma$ \cite[Thm.2.1]{BernalSanchez:2005}.  We assume that, with $E$ as in \eqref{edge1}, the Cauchy surface $\Sigma$ is smoothly homeomorphic to a product manifold $I\times E$, where $I$ is an open interval or the full real line. Thus $M\cong \Rl\times I\times E$, and we require in addition that there exists a smooth embedding $\iota:\Rl\times I\ra M$. By our assumption on the topology of $I$, it follows that $(\Rl\times I,\iota^*\gST)$ is a globally hyperbolic spacetime without null focal points, a feature that we will need in the subsequent construction of wedge regions.

\begin{definition}\label{admissible}
A spacetime $(M,\gST)$ is called {\em admissible} if it admits two linearly independent, spacelike, complete, commuting, smooth Killing fields $\xi_1,\xi_2$ and the corresponding split $M\cong \Rl\times I\times E$, with $E$ defined in \eqref{edge1}, has the properties described above.

The set of all ordered pairs $\xi:=(\xi_1,\xi_2)$ satisfying these conditions for a given admissible spacetime $(M,\gST)$ is denoted $\Xi(M,\gST)$. The elements of $\Xi(M,\gST)$ will be referred to as Killing pairs.
\end{definition}

For the remainder of this section, we will work with an arbitrary but fixed admissible spacetime $(M,\gST)$, and usually drop the $(M,\gST)$-dependence of various objects in our notation, {\it e.g.} write $\Xi$ instead of $\Xi(M,\gST)$ for the set of Killing pairs, and $\iso$ in place of $\iso(M,\gST)$ for the isometry group. Concrete examples of admissible spacetimes, such as Friedmann-Robertson-Walker-, Kasner- and Bianchi-spacetimes, will be discussed in Section \ref{sec:examples}.

The flow of a Killing pair $\xi\in\Xi$ is written as
\begin{align}
 \flow_{\xi,s}:=\flow_{\xi_1,s_1}\circ \flow_{\xi_2,s_2}=\flow_{\xi_2,s_2}\circ \flow_{\xi_1,s_1}\,,\qquad \xi=(\xi_1,\xi_2)\in\Xi,\;\;s=(s_1,s_2)\in\Rl^2,
\end{align}
where $s_1,s_2\in\Rl$ are the parameters of the (complete) flows $\flow_{\xi_1}, \flow_{\xi_2}$ of $\xi_1,\xi_2$. By construction, each $\flow_\xi$ is an isometric $\Rl^2$-action by diffeomorphisms on $(M,\gST)$, {\it i.e.} $\varphi_{\xi,s}\in \iso$ and $\varphi_{\xi,s}\varphi_{\xi,u}=\varphi_{\xi,s+u}$ for all $s,u\in\Rl^2$.

On the set $\Xi$, the isometry group $\iso$ and the general linear group $\GL(2,\Rl)$ act in a natural manner.
\begin{lemma}\label{lemma:group-xi}
 Let $h\in \iso$, $N\in \GL(2,\Rl)$, and define, $\xi=(\xi_1,\xi_2)\in\Xi$,
\begin{align}
 h_*\xi &:= (h_*\xi_1,\,h_*\xi_2)\,,\\
 (N\xi)(p) &:= N(\xi_1(p),\xi_2(p))\,,\qquad p\in M\,.
\end{align}
These operations are commuting group actions of \,$\iso$ and $\GL(2,\Rl)$ on $\Xi$, respectively. The $\GL(2,\Rl)$-action transforms the flow of $\xi\in\Xi$ according to, $s\in\Rl^2$,
\begin{align}\label{N-flow}
 \varphi_{N\xi,s} &= \varphi_{\xi,N^Ts}\,.
\end{align}
If $h_*\xi=N\xi$ for some $\xi\in\Xi$, $h\in\iso$, $N\in\GL(2,\Rl)$, then $\det N=\pm1$.
\end{lemma}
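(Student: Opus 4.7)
The lemma contains three claims: that $h_*$ and $N\cdot$ define commuting group actions on $\Xi$, that the flow formula \eqref{N-flow} holds, and that the algebraic constraint $h_*\xi=N\xi$ forces $\det N=\pm 1$. The first two are essentially direct computations, while the last is where the real geometric content sits.

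First I would verify that $h_*\xi,N\xi\in\Xi$. For $h_*\xi$, push-forward by an isometry preserves the Killing property, smoothness, vanishing commutator, completeness, and linear independence, and sends spacelike vectors to spacelike vectors; it also carries the orbit $E$ and the Cauchy surface $\Sigma$ to objects exhibiting the same product structure, so the admissibility data are inherited by functoriality. For $N\xi$, each $(N\xi)_i=\sum_j N_{ij}\xi_j$ is an $\Rl$-linear combination of commuting Killing fields, hence Killing and commuting; it lies in the spacelike plane distribution $\Span(\xi_1,\xi_2)$, is complete because linear combinations of commuting complete vector fields are complete, and the pair is linearly independent because $\det N\neq 0$. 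Since the orbit foliation of $N\xi$ coincides with that of $\xi$, the admissibility splitting $M\cong\Rl\times I\times E$ survives. The group axioms for $\iso$ and $\GL(2,\Rl)$ are immediate, and the two actions commute since $h_*(N\xi)=N(h_*\xi)$ by $\Rl$-linearity of the push-forward.

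The flow formula then follows from a short calculation. For commuting vector fields, $\flow_{a\xi_1+b\xi_2,t}=\flow_{\xi_1,at}\circ\flow_{\xi_2,bt}$; applying this separately to $(N\xi)_1$ and $(N\xi)_2$, composing, and regrouping the four factors using commutativity of $\flow_{\xi_1}$ and $\flow_{\xi_2}$ gives $\flow_{N\xi,s}=\flow_{\xi_1,\,s_1N_{11}+s_2N_{21}}\circ\flow_{\xi_2,\,s_1N_{12}+s_2N_{22}}=\flow_{\xi,N^Ts}$, as required.

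For the determinant claim, evaluate $h_*\xi=N\xi$ pointwise: $dh_p$ restricts to a linear isomorphism between the spacelike $2$-planes $V_p:=\Span(\xi_1,\xi_2)|_p$ and $V_{h(p)}$ whose matrix in the $\xi$-bases is $N^T$. Since $h$ is an isometry and both planes are positive-definite, this restriction is a linear isometry of $2$-dimensional inner-product spaces. Passing to $\Lambda^2$, the bivector transforms by $h_*(\xi_1\wedge\xi_2)=(\det N)\,\xi_1\wedge\xi_2$, and isometric invariance of its squared $g$-norm $\det G$ (with $G_{ij}:=g(\xi_i,\xi_j)$) produces the globally valid identity $(\det N)^2\det G=\det G\circ h^{-1}$ on $M$. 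The main obstacle is to promote this functional relation to $|\det N|=1$, since a priori $\det G$ need not be $h$-invariant. To close the gap I would iterate, obtaining $(\det N)^{2n}\det G=\det G\circ h^{-n}$ for every $n\in\Zl$, and then exploit the admissibility structure of $(M,\gST)$ - in particular the transverse embedding $\iota:\Rl\times I\to M$ and the global hyperbolicity of $(\Rl\times I,\iota^*\gST)$ - to rule out that $\det G\circ h^n$ tends monotonically to $0$ or $\infty$ as $|n|\to\infty$. Once the sequence $(\det N)^{2n}$ is forced to stay bounded, one concludes $|\det N|=1$ and hence $\det N=\pm 1$.
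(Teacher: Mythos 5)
Your treatment of the two group actions and of the flow law \eqref{N-flow} is correct and essentially the same as the paper's: push-forward by an isometry preserves every defining property of a Killing pair, the linear combinations $N\xi$ are again commuting, complete, spacelike Killing fields with the same orbits (so the splitting $M\cong\Rl\times I\times E$ is untouched), and the identity $\flow_{N\xi,s}=\flow_{\xi,N^Ts}$ follows from $\flow_{a\xi_1+b\xi_2,t}=\flow_{\xi_1,at}\circ\flow_{\xi_2,bt}$ exactly as you compute.

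The determinant claim is where your proposal has a genuine gap, and you have in fact located the crux precisely: from $h_*\xi=N\xi$ and the isometry property one only gets the pointwise relation $G(p)=N\,G(h(p))\,N^{T}$ for the Gram matrices $G_{ij}=\gST(\xi_i,\xi_j)$, hence $(\det N)^{2}\det G(h(p))=\det G(p)$, and this yields $\det N=\pm1$ only if the values of $\det G$ at $p$ and $h(p)$ can be identified ($\det G$ is constant along each single edge, since the $\xi_i$ are commuting Killing fields, but not automatically from edge to edge). The paper's proof steps over this point by asserting that $N$ acts as an isometry of $T_pE$ with its Euclidean metric, which tacitly identifies the induced edge metrics at $p$ and at $h(p)$; your Gram-matrix relation isolates exactly what that identification assumes. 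Your proposed repair, however --- iterate to $(\det N)^{2n}\det G=\det G\circ h^{-n}$ and then ``exploit the admissibility structure'' to exclude monotone growth or decay --- is only a statement of intent, and it cannot be completed from Definition \ref{admissible} alone. Consider $\gST=dt^{2}-dx^{2}-e^{2x}(dy^{2}+dz^{2})$ on $\Rl^{4}$, which is admissible by Proposition \ref{prop:metric} (the spatial slices are copies of hyperbolic three-space, so everything is complete and globally hyperbolic), with $\xi=(\partial_y,\partial_z)$. Then $h(t,x,y,z)=(t,x-\log\lambda,\lambda y,\lambda z)$, $\lambda>1$, is an isometry with $h_*\xi=\lambda\,\xi$, i.e. $N=\diag(\lambda,\lambda)$ and $\det N=\lambda^{2}$, and here $\det G\circ h^{-n}=\lambda^{4n}\,\det G$ grows monotonically --- precisely the behaviour your argument would have to rule out. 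So no boundedness argument of the kind you sketch can close the step; to make the final claim go through one needs an additional input identifying the edge geometry at $p$ with that at $h(p)$ (for instance that $h$ carries the edge through $p$ onto an edge with the same Gram matrix), which is exactly the step the paper's proof takes when it declares $N$ to be an isometry of a fixed Euclidean plane.
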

\begin{proof}
Due to the standard properties of isometries, $\iso$ acts on the Lie algebra of Killing fields by the push-forward  isomorphisms $\xi_1\mapsto h_*\xi_1$ \cite{O'Neill:1983}. Therefore, for any $(\xi_1,\xi_2)\in\Xi$, also the vector fields $h_*\xi_1, h_*\xi_2$ are spacelike, complete, commuting, linearly independent, smooth Killing fields. The demanded properties of the splitting $M\cong \Rl\times I\times E$ directly carry over to the corresponding split with respect to $h_*\xi$. So $h_*$ maps $\Xi$ onto $\Xi$, and since $h_*(k_*\xi_1)=(hk)_*\xi_1$ for $h,k\in\iso$,  we have an action of $\iso$.

The second map, $\xi\mapsto N\xi$, amounts to taking linear combinations of the Killing fields $\xi_1,\xi_2$. The relation \eqref{N-flow} holds because $\xi_1,\xi_2$ commute and are complete, which entails that the respective flows can be constructed via the exponential map. Since $\det N\neq0$, the two components of $N\xi$ are still linearly independent, and since $E$ \eqref{edge1} is invariant under $\xi\mapsto N\xi$, the splitting $M\cong \Rl\times I\times E$ is the same for $\xi$ and $N\xi$. Hence $N\xi\in\Xi$, {\it i.e.} $\xi\mapsto N\xi$ is a $\GL(2,\Rl)$-action on $\Xi$, and since the push-forward is linear, it is clear that the two actions commute.

To prove the last statement, we consider the submanifold $E$ \eqref{edge1} together with its induced metric. Since the Killing fields $\xi_1,\xi_2$ are tangent to $E$, their flows are isometries of $E$. Since $h_*\xi=N\xi$ and $E$ is two-dimensional, it follows that $N$ acts as an isometry on the tangent space $T_pE$, $p\in E$. But as $E$ is spacelike and two-dimensional, we can assume without loss of generality that the metric of $T_pE$ is the Euclidean metric, and therefore has the two-dimensional Euclidean group ${\rm E}(2)$ as its isometry group. Thus $N\in\GL(2,\Rl)\cap{\rm E}(2)={\rm O}(2)$, {\it i.e.} $\det N=\pm1$.
\end{proof}

The $\GL(2,\Rl)$-transformation given by the flip matrix $\Pi:=\left(0\;\;1\atop1\;\;0\right)$ will play a special role later on. We therefore reserve the name {\em inverted Killing pair} of $\xi=(\xi_1,\xi_2)\in\Xi$ for
\begin{align}\label{def:xiprime}
 \xi'&:=\Pi\xi=(\xi_2,\xi_1)\,.
\end{align}
Note that since we consider ordered tuples $(\xi_1,\xi_2)$, the Killing pairs $\xi$ and $\xi'$ are not identical. Clearly, the map $\xi\mapsto\xi'$ is an involution on $\Xi$, {\it i.e.} $(\xi')'=\xi$.
\\
\\
After these preparations, we turn to the construction of wedge regions in $M$, and begin by specifying their {\em edges}.
\begin{definition}\label{edgedef}
An edge is a subset of $M$ which has the form
\begin{align}\label{def:edge}
 E_{\xi,p}:=\{\flow_{\xi,s}(p)\in M\,:\,s\in\Rl^2\}
\end{align}
for some $\xi\in\Xi$, $p\in M$. Any spacelike vector $n_{\xi,p}\in T_pM$ which completes the gradient of the chosen temporal function and the Killing vectors $\xi_1(p),\xi_2(p)$ to a positively oriented basis $(\nabla\mathcal{T}(p),\xi_1(p),\xi_2(p),n_{\xi,p})$ of $T_pM$ is called an oriented normal of $E_{\xi,p}$.
\end{definition}

It is clear from this definition that each edge is a two-dimensional, spacelike, smooth submanifold of $M$. Our definition of admissible spacetimes $M\cong \Rl\times I\times E$ explicitly restricts the topology of $I$, but not of the edge \eqref{edge1}, which can be homeomorphic to a plane, cylinder, or torus.

Note also that the description of the edge $E_{\xi,p}$ in terms of $\xi$ and $p$ is somewhat redundant: Replacing the Killing fields $\xi_1,\xi_2$ by linear combinations $\xiti:=N\xi$, $N\in\GL(2,\Rl)$, or replacing $p$ by $\pti:=\varphi_{\xi,u}(p)$ with some $u\in\Rl^2$, results in the same manifold $E_{\xiti,\pti}=E_{\xi,p}$.
\\\\
Before we define wedges as connected components of causal complements of edges, we have to prove the following key lemma, from which the relevant properties of wedges follow. For its proof, it might be helpful to visualize the geometrical situation as sketched in Figure \ref{fig:wedge}.

\begin{figure}[htbp]
  \centering
    \includegraphics[width=0.5\textwidth]{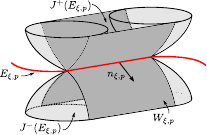}\\
  \caption{\em Three-dimensional sketch of the wedge $W_{\xi,p}$ and its edge $E_{\xi,p}$}
  \label{fig:wedge}
\end{figure}

\begin{lemma}\label{edgeprop}
 The causal complement $E_{\xi,p}'$ of an edge $E_{\xi,p}$ is the disjoint union of two connected components, which are causal complements of each other.
\end{lemma}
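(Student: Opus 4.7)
The plan is to reduce the statement to the analogous fact on the two-dimensional slice $(\iota(\Rl\times I),\iota^*\gST)$ and then to lift it back to $M$ using the $\Rl^2$-action of $\varphi_\xi$. By the admissibility hypothesis, this slice is a two-dimensional globally hyperbolic spacetime without null focal points, in which the causal complement of a single point is well known to split into two connected components, bounded by the two null geodesics through the point, which are mutual causal complements. This two-dimensional decomposition is what I would eventually transport to all of $M$.

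First, I would use the product structure $M\cong\Rl\times I\times E$ to identify the edge $E_{\xi,p}$ with the $\varphi_\xi$-orbit of a uniquely determined point $\bar p\in\iota(\Rl\times I)$. Since every $\varphi_{\xi,s}$ is an isometry fixing $E_{\xi,p}$ setwise, the causal complement $E_{\xi,p}'$ is itself $\varphi_\xi$-invariant, and hence is determined by its trace on the slice. I would then prove the key reduction: for $q\in\iota(\Rl\times I)$, $q\in E_{\xi,p}'$ if and only if $q$ lies in the causal complement of $\bar p$ taken inside the slice. The ``if'' direction is immediate, since a causal curve in the slice is also a causal curve in $M$. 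The converse amounts to showing that a causal curve in $M$ connecting $\bar p$ to some $\varphi_{\xi,s}(q)$ can be replaced by a causal curve in the slice connecting $\bar p$ to $q$; the spacelike and commuting character of $\xi_1,\xi_2$ together with the absence of null focal points in the slice is exactly what is needed to carry out this projection along the $\Rl^2$-orbits without losing causality.

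Given the reduction, the two-dimensional result applied at $\bar p$ in the slice provides two disjoint connected components $L,R$ of the causal complement of $\bar p$ which are mutual causal complements inside the slice. I would then set
\begin{equation*}
W^+:=\{\varphi_{\xi,s}(q)\,:\,s\in\Rl^2,\;q\in L\}\,,\quad W^-:=\{\varphi_{\xi,s}(q)\,:\,s\in\Rl^2,\;q\in R\}\,.
\end{equation*}
By the reduction, $E_{\xi,p}'=W^+\sqcup W^-$; each $W^\pm$ is connected as the continuous image of a connected set; the two are disjoint because their intersections with the slice are; and they are mutual causal complements in $M$ by the $\varphi_\xi$-invariance of the causal structure combined with the corresponding statement for $L,R$ in the slice.

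The main obstacle is the projection step inside the reduction lemma, that is, showing that a causal curve in $M$ from $\bar p$ to a $\varphi_\xi$-translate of $q$ implies a causal curve in the slice from $\bar p$ to $q$ itself. This is where the product splitting, the spacelike nature of $\xi_1,\xi_2$, and the no-null-focal-points hypothesis must be combined carefully; the remaining steps are essentially formal consequences of this reduction together with the two-dimensional geometric fact.
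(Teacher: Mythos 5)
Your proposed reduction hinges on the equivalence ``for $q$ in the slice, $q\in E_{\xi,p}'$ if and only if $q$ lies in the causal complement of $\bar p$ computed inside $(\iota(\Rl\times I),\iota^*\gST)$'', whose hard half is the projection step: a causal curve in $M$ from $\bar p$ to some $\flow_{\xi,s}(q)$ should yield a causal curve \emph{inside the slice} from $\bar p$ to $q$. This step is not just difficult, it is false in the generality of Definition \ref{admissible}, because nothing in the admissibility assumptions makes the embedded slice orthogonal to the Killing orbits. In coordinates adapted to the flow the metric has block form with a Lorentzian $2\times2$ block $A$ on the slice directions, a negative definite block $-C$ on the orbit directions, and a cross block $B$; projecting a causal curve along the orbits yields a curve that is causal for the orbit-space metric $A+BC^{-1}B^{T}$ obtained by completing the square, but not for the induced slice metric $A$, and the two differ exactly when $B\neq0$ (the paper itself points out, after Proposition \ref{prop:metric}, that extra reflection symmetry is needed to remove such cross terms). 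A concrete counterexample: Minkowski space written as $\gST=dt^2-dx^2-(dy+a\,dx)^2-dz^2$ with $a\neq0$, Killing pair $\xi=(\partial_y,\partial_z)$, $p$ the origin, slice $\{y=z=0\}$. The induced slice metric is $dt^2-(1+a^2)dx^2$, so the slice causal complement of $\bar p$ is $\{(1+a^2)x^2>t^2\}$, whereas $E_{\xi,p}'\cap\{y=z=0\}=\{x^2>t^2\}$ is strictly smaller: points with $t^2/(1+a^2)<x^2<t^2$ are slice-spacelike to $\bar p$ yet causally connected in $M$ to another point of the orbit. Hence your sets $W^\pm$, obtained by flowing the two slice components, strictly overshoot $E_{\xi,p}'$ and the identity $E_{\xi,p}'=W^+\sqcup W^-$ fails. (Note also that the inclusion you call ``immediate'' is actually $E_{\xi,p}'\cap\iota(\Rl\times I)\subset$ slice causal complement; the converse is the one needing the projection, and it is the one that breaks.)

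The dimensional-reduction idea could perhaps be salvaged by passing to the orbit space of the $\Rl^2$-action with the quotient (submersion) metric, for which projections of causal curves are causal since the orbits are spacelike; but then you must show the action is free and proper, that the quotient is a two-dimensional globally hyperbolic manifold of the right topology, justify the two-component statement there, and lift both the decomposition and the mutual causal complementarity back to $M$ --- none of which is in your sketch. The paper avoids any reduction and argues directly in $M$: it connects an arbitrary $q\in E_{\xi,p}'$ to $p$ by a spacelike curve lying in a common Cauchy surface (via the Bernal--S\'anchez theorem), defines the two candidate regions through the orientation of the normal tangent vectors at the endpoints of such curves, proves disjointness by projecting a hypothetical loop onto the simply connected factor $I$ of the splitting $M\cong\Rl\times I\times E$, and establishes that the two components are causal complements of each other by a time-orientation argument along causal curves.
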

\begin{proof}
We first show that any point $q\in E'_{\xi,p}$ is connected to the base point $p$ by a smooth, spacelike curve. Since $M$ is globally hyperbolic, there exist Cauchy surfaces $\Sigma_p,\Sigma_q$ passing through $p$ and $q$, respectively. We pick two compact subsets $K_q\subset \Sigma_q$, containing $q$, and $K_p\subset \Sigma_p$, containing $p$. If $K_p, K_q$ are chosen sufficiently small, their union $K_p\cup K_q$ is an acausal, compact, codimension one submanifold of $M$. It thus fulfills the hypothesis of Thm. 1.1 in \cite{BernalSanchez:2006}, which guarantees that there exists a spacelike Cauchy surface $\Sigma$ containing the said union. In particular, there exists a smooth, spacelike curve $\gamma$ connecting $p=\gamma(0)$ and $q=\gamma(1)$. Picking spacelike vectors $v\in T_p\Sigma$ and $w\in T_q\Sigma$, we have the freedom of choosing $\gamma$ in such a way that $\dot{\gamma}(0)=v$ and $\dot{\gamma}(1)=w$. If $v$ and $w$ are chosen linearly independent from $\xi_1(p),\xi_2(p)$ and $\xi_1(q),\xi_2(q)$, respectively, these vectors are oriented normals of $E_{\xi,p}$ respectively $E_{\xi,q}$, and we can select $\gamma$ such that it intersects the edge $E_{\xi,p}$ only in $p$.

Let us define the region
\newpage

\noindent
\begin{align}
\label{wedgechar}
W_{\xi,p}
&:=
\{q\in E_{\xi,p}':\;\exists\, \gamma\in C^1([0,1],M)\text{ with }\gamma(0)=p, \gamma(1)=q, E_{\xi,p}\cap\gamma=\{p\},\nonumber\\
&\qquad
\dot{\gamma}(0) \text{ is an oriented normal of } E_{\xi,p},\,\dot{\gamma}(1) \text{ is an oriented normal of } E_{\xi,q}
\}\,,
\end{align}
and, exchanging $\xi$ with the inverted Killing pair $\xi'$, we correspondingly define the region $W_{\xi',p}$. It is clear from the above argument that $W_{\xi,p}\cup W_{\xi',p}=E_{\xi,p}'$, and that we can prescribe arbitrary normals $n,m$ of $E_{\xi,p}$, $E_{\xi,q}$ as initial respectively final tangent vectors of the curve $\gamma$ connecting $p$ to $q\in W_{\xi,p}$.

The proof of the lemma consists in establishing that $W_{\xi,p}$ and $W_{\xi',p}$ are disjoint, and causal complements of each other. To prove disjointness of $W_{\xi,p}, W_{\xi',p}$, assume there exists a point $q\in W_{\xi,p}\cap W_{\xi',p}$. Then $q$ can be connected with the base point $p$ by two spacelike curves, whose tangent vectors satisfy the conditions in \eqref{wedgechar} with $\xi$ respectively $\xi'$. By joining these two curves, we have identified a continuous loop $\la$ in $E_{\xi,p}'$. As an oriented normal, the tangent vector $\dot{\la}(0)$ at $p$ is linearly independent of $\xi_1(p), \xi_2(p)$, so that $\la$ intersects $E_{\xi,p}$ only in $p$.

Recall that according to Definition \ref{admissible}, $M$ splits as the product $M\cong \Rl\times I\times E_{\xi,p}$, with an open interval $I$ which is smoothly embedded in $M$. Hence we can consider the projection $\pi(\la)$ of the loop $\la$ onto $I$, which is a closed interval $\pi(\la)\subset I$ because the simple connectedness of $I$ rules out the possibility that $\pi(\la)$ forms a loop, and on account of the linear independence of $\{\xi_1(p),\xi_2(p),n_{\xi,p}\}$, the projection cannot be just a single point. Yet, as $\la$ is a loop, there exists $p'\in\la$ such that $\pi(p')=\pi(p)$. We also know that $\pi^{-1}(\{\pi(p)\})=\Rl\times\{\pi(p)\}\times E_{\xi,p}$ is contained in $J^+(E_{\xi,p})\cup E_{\xi,p}\cup J^-(E_{\xi,p})$ and, since $p$ and $p'$ are causally separated, the only possibility left is that they both lie on the same edge. Yet, per construction, we know that the loop intersects the edge only once at $p$ and, thus, $p$ and $p'$ must coincide, which is the sought contradiction.

To verify the claim about causal complements, assume there exist points $q\in W_{\xi,p}$, $q'\in W_{\xi',p}$ and a causal curve $\gamma$ connecting them, $\gamma(0)=q$, $\gamma(1)=q'$. By definition of the causal complement, it is clear that $\gamma$ does not intersect $E_{\xi,p}$. In view of our restriction on the topology of $M$, it follows that $\gamma$ intersects either $J^+(E_{\xi,p})$ or $J^-(E_{\xi,p})$. These two cases are completely analogous, and we consider the latter one, where there exists a point $q''\in \gamma\cap J^-(E_{\xi,p})$. In this situation, we have a causal curve connecting $q\in W_{\xi,p}$ with $q''\in J^-(E_{\xi,p})$, and since $q\notin J^-(q'')\subset J^-(E_{\xi,p})$, it follows that $\gamma$ must be past directed. As the time orientation of $\gamma$ is the same for the whole curve, it follows that also the part of $\gamma$ connecting $q''$ and $q'$ is past directed. Hence $q'\in J^-(q'')\subset J^-(E_{\xi,p})$, which is a contradiction to $q'\in W_{\xi',p}$. Thus $W_{\xi',p}\subset {W_{\xi,p}}'$.

To show that $W_{\xi',p}$ coincides with ${W_{\xi,p}}'$, let $q\in {W_{\xi,p}}'=\overline{W_{\xi,p}}'\subset E_{\xi,p}' = W_{\xi,p}\sqcup W_{\xi',p}$. Yet $q\in W_{\xi,p}$ is not possible since $q\in {W_{\xi,p}}'$ and $W_{\xi,p}$ is open. So $q\in W_{\xi',p}$, {\em i.e.} we have shown ${W_{\xi,p}}'\subset W_{\xi',p}$, and the claimed identity $W_{\xi',p} = {W_{\xi,p}}'$ follows.
\end{proof}
\newpage 

\noindent
\vspace{-0.8cm}
\begin{wrapfigure}{r}{46mm}
 \centering
{\includegraphics{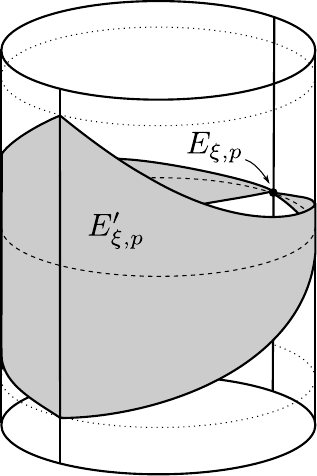}}\\
{\em \small $E_{\xi,p}'$ in a Lorentz cylinder}
\end{wrapfigure}

Lemma \ref{edgeprop} does not hold if the topological requirements on $M$ are dropped. As an example, consider a cylinder universe $\Rl\times S^1\times\Rl^2$, the product of the Lorentz cylinder $\Rl\times S^1$ \cite{O'Neill:1983} and the Euclidean plane $\Rl^2$. The translations in the last factor $\Rl^2$ define spacelike, complete, commuting, linearly independent Killing fields $\xi$.  Yet the causal complement of the edge $E_{\xi,p}=\{0\}\times\{1\}\times\Rl^2$ has only a single connected component, which has empty causal complement. In this situation, wedges lose many of the useful properties which we establish below for admissible spacetimes.
\\\\
In view of Lemma \ref{edgeprop}, wedges in $M$ can be defined as follows.

\label{wedgecosmological}
\begin{definition}{\bf (Wedges)}\\
A wedge is a subset of $M$ which is a connected component of the causal complement of an edge in $M$. Given $\xi\in\Xi$, $p\in M$, we denote by $W_{\xi,p}$ the component of $E_{\xi,p}'$ which intersects the curves $\gamma(t) := \exp_p(t\,n_{\xi,p})$, $t>0$, for any oriented normal $n_{\xi,p}$ of $E_{\xi,p}$. The family of all wedges is denoted
\begin{align}
  \W
:=
\{W_{\xi,p}\,:\, \xi\in\Xi,\,p\in M\}\,.
\end{align}
\end{definition}

As explained in the proof of Lemma \ref{edgeprop}, the condition that the curve $\Rl^+\ni t\mapsto \exp_p(t\,n_{\xi,p})$ intersects a connected component of $E_{\xi,p}'$ is independent of the chosen normal $n_{\xi,p}$, and each such curve intersects precisely one of the two components of $E_{\xi,p}'$.

Some properties of wedges which immediately follow from the construction carried out in the proof of Lemma \ref{edgeprop} are listed in the following proposition.

\begin{proposition}{\bf (Properties of wedges)}\label{prop:wedge-properties}
\\
 Let $W=W_{\xi,p}$ be a wedge. Then
\begin{enumerate}
 \item $W$ is causally complete, {\it i.e.} $W''=W$, and hence globally hyperbolic.
\item The causal complement of a wedge is given by inverting its Killing pair,
\begin{align}\label{cc-inverted}
 (W_{\xi,p})' = W_{\xi',p}\,.
\end{align}
 \item A wedge is invariant under the Killing flow generating its edge,
\begin{align}
\flow_{\xi,s}(W_{\xi,p})=W_{\xi,p}\,,\qquad s\in\Rl^2\,.
\end{align}
\end{enumerate}
\end{proposition}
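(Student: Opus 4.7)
My plan is to establish the three properties in the order (2), (1), (3), since the first follows essentially from Lemma \ref{edgeprop}, the second is a formal consequence of the first via the involution $\xi \mapsto \xi'$, and the third is a separate argument based on isometric invariance.

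For part (2), the decomposition of $E_{\xi,p}'$ into two disjoint connected components which are mutually causal complements has already been obtained in Lemma \ref{edgeprop}. What remains is a bookkeeping step: identifying the component picked out in Definition~\ref{edgedef} (via the exponential of an oriented normal $n_{\xi,p}$) with one of the two components of the lemma, and then checking that replacing $\xi$ by $\xi' = (\xi_2,\xi_1)$ selects the opposite component. Since swapping $\xi_1$ and $\xi_2$ reverses the orientation of the basis $(\nabla\mathcal{T}(p), \xi_1(p), \xi_2(p), n_{\xi,p})$, one may choose $n_{\xi',p} = -n_{\xi,p}$, and the curves $t \mapsto \exp_p(t\, n_{\xi,p})$ and $t \mapsto \exp_p(-t\, n_{\xi,p})$ enter different connected components of $E_{\xi,p}'$ for small $t>0$. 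Hence $(W_{\xi,p})' = W_{\xi',p}$.

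For part (1), inversion of the Killing pair is an involution, $(\xi')' = \xi$. Applying (2) twice yields $W'' = (W_{\xi,p})'' = (W_{\xi',p})' = W_{(\xi')',p} = W_{\xi,p}$. Global hyperbolicity of $W$ then follows by the standard argument that a causally complete open subregion of a globally hyperbolic spacetime inherits strong causality from the ambient manifold, while causal diamonds $J^+(q_1) \cap J^-(q_2)$ with $q_1,q_2 \in W$ lie in $W$ (by causal completeness) and remain compact in $M$.

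For part (3), the map $\flow_{\xi,s}$ is an isometry of $(M,\gST)$, and by the defining description of the edge as an $\flow_\xi$-orbit one has $\flow_{\xi,s}(E_{\xi,p}) = E_{\xi,p}$, hence $\flow_{\xi,s}(E_{\xi,p}') = E_{\xi,p}'$, so the flow permutes the two connected components. Since $\flow_{\xi,0} = \id$ and $s \mapsto \flow_{\xi,s}(q)$ is continuous for each $q$, the connected image $\flow_{\xi,s}(W_{\xi,p})$ cannot jump between the two components, so it coincides with $W_{\xi,p}$ for every $s\in\Rl^2$. I do not foresee a serious obstacle here; the only mildly delicate point is the orientation check in (2), ensuring that inverting the Killing pair truly exchanges the two wedges of $E_{\xi,p}'$ rather than fixing them, but this is a direct consequence of the handedness convention in Definition~\ref{edgedef} together with the fact (from Lemma~\ref{lemma:group-xi}) that such transition matrices have determinant $\pm 1$, with $\Pi$ having determinant $-1$.
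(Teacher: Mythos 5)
Your proof is correct and follows essentially the same route as the paper's: part (2) is read off from Lemma \ref{edgeprop}, part (1) is deduced from it (the paper computes $W''=V'''=V'=W$ and cites a standard result for the global hyperbolicity of causally complete subsets, where you sketch that argument instead), and part (3) uses invariance of the edge under $\flow_{\xi,s}$ together with continuity, exactly as in the paper. The one caveat is that your claim that the curves $\exp_p(\pm t\,n_{\xi,p})$ enter \emph{different} components is not a purely local handedness count (and the determinant statement of Lemma \ref{lemma:group-xi} is not what does the work): it is precisely the disjointness $W_{\xi,p}\cap W_{\xi',p}=\emptyset$ established globally in the proof of Lemma \ref{edgeprop}, which the paper simply invokes at this point.
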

\begin{proof}
{\em a)} By Lemma \ref{edgeprop}, $W$ is the causal complement of another wedge $V$, and therefore causally complete: $W''=V'''=V'=W$. Since $M$ is globally hyperbolic, this implies that $W$ is globally hyperbolic, too \cite[Prop.12.5]{Keyl:1996}.

{\em b)} This statement has already been checked in the proof of Lemma \ref{edgeprop}.

{\em c)} By definition of the edge $E_{\xi,p}$ \eqref{def:edge}, we have $\flow_{\xi,s}(E_{\xi,p})= E_{\xi,p}$ for any $s\in\Rl^2$, and since the $\flow_{\xi,s}$ are isometries, it follows that $\flow_{\xi,s}(E_{\xi,p}')= E_{\xi,p}'$. Continuity of the flow implies that also the two connected components of this set are invariant.
\end{proof}

\begin{corollary}{\bf (Properties of the family of wedge regions)}\\
 The family $\W$ of wedge regions is invariant under the isometry group $\iso$ and under taking causal complements. For $h\in \iso$, it holds
\begin{align}\label{h-W}
 h(W_{\xi,p})=W_{h_*\xi,h(p)}\,.
\end{align}
\end{corollary}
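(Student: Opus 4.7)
The plan is to establish the covariance formula $h(W_{\xi,p}) = W_{h_*\xi, h(p)}$, from which the invariance of $\W$ under $\iso$ is immediate; invariance under causal complements is then essentially a rewriting of part b) of the preceding proposition.

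First I would transport the edge. Since $h$ is an isometry, its flow intertwines with the Killing flows via $h\circ \flow_{\xi,s} = \flow_{h_*\xi,s}\circ h$ (this is the standard fact that isometries map integral curves of Killing fields to integral curves of their push-forwards). Applying $h$ to Definition \ref{edgedef} gives
\begin{equation*}
h(E_{\xi,p}) = \{h(\flow_{\xi,s}(p)) : s\in\Rl^2\} = \{\flow_{h_*\xi,s}(h(p)) : s\in\Rl^2\} = E_{h_*\xi,h(p)}.
\end{equation*}
Note that $h_*\xi\in\Xi$ by Lemma \ref{lemma:group-xi}, so $E_{h_*\xi,h(p)}$ is indeed an edge in the sense of Definition \ref{edgedef}.

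Next I would use the fact that any isometry preserves the causal structure, so $h$ commutes with taking causal complements: $h(E_{\xi,p}') = (h(E_{\xi,p}))' = E_{h_*\xi,h(p)}'$. Since $h$ is a homeomorphism, it sends connected components to connected components, so $h(W_{\xi,p})$ is one of the two connected components of $E_{h_*\xi,h(p)}'$ identified in Lemma \ref{edgeprop}. To identify which one, I would track an oriented normal through $h$: assuming $h$ preserves orientation and time-orientation (the case to which the statement effectively refers, since $\iso$ acts on the oriented, time-oriented manifold $M$), the differential $h_*$ maps the positively oriented basis $(\nabla\mathcal{T}(p),\xi_1(p),\xi_2(p),n_{\xi,p})$ to a positively oriented basis at $h(p)$, so $h_*n_{\xi,p}$ is an oriented normal of $E_{h_*\xi,h(p)}$. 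Since isometries commute with the exponential map, $h(\exp_p(tn_{\xi,p})) = \exp_{h(p)}(t\,h_*n_{\xi,p})$, and this curve selects precisely the component $W_{h_*\xi,h(p)}$ of $E_{h_*\xi,h(p)}'$ according to Definition \ref{wedgecosmological}. Hence $h(W_{\xi,p}) = W_{h_*\xi,h(p)}$, proving \eqref{h-W} and the $\iso$-invariance of $\W$.

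For the invariance under causal complementation, I would simply invoke part b) of Proposition \ref{prop:wedge-properties}, which gives $(W_{\xi,p})' = W_{\xi',p}$ with $\xi' = \Pi\xi\in\Xi$, so the causal complement of a wedge is again a wedge. The main subtlety in the whole argument is the orientation bookkeeping in the identification of the correct connected component; once one convinces oneself that $h_*$ sends oriented normals to oriented normals, the rest of the proof is bookkeeping through the definitions.
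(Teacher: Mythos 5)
Your proof is correct and follows essentially the same route as the paper: transport the edge via $h\circ\flow_{\xi,s}=\flow_{h_*\xi,s}\circ h$, invoke Lemma \ref{lemma:group-xi} to see $h_*\xi\in\Xi$, use that isometries preserve the causal structure, and get closedness under causal complements from Proposition \ref{prop:wedge-properties} b). The only difference is that you additionally track the connected component through oriented normals and the exponential map (under the tacit assumption that $h$ preserves both orientations), a bookkeeping step the paper leaves implicit.
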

\begin{proof}
Since isometries preserve the causal structure of a spacetime, we only need to look at the action of  isometries on edges. We find
\begin{equation}
hE_{\xi,p}
=\{h\circ\flow_{\xi,s}\circ h^{-1}(h(p)):s\in\Rl^2\}
=\{\flow_{h_*\xi,s}(h(p)):s\in\Rl^2\}
=E_{h_*\xi,h(p)}
\end{equation}
by using the well-known fact that conjugation of flows by isometries amounts to the push-forward by the isometry of the associated vector field. Since $h_*\xi\in\Xi$ for any $\xi\in\Xi$, $h\in \iso$ (Lemma \ref{lemma:group-xi}), the family $\W$ is invariant under the action of the isometry group. Closedness of $\W$ under causal complementation is clear from Prop. \ref{prop:wedge-properties} b).
\end{proof}

In contrast to the situation in flat spacetime, the isometry group $\iso$ does {\em not} act transitively on $\W(M,\gST)$ for generic admissible $M$, and there is no isometry mapping a given wedge onto its causal complement. This can be seen explicitly in the examples discussed in Section \ref{sec:examples}. To keep track of this structure of $\W(M,\gST)$, we decompose $\Xi(M,\gST)$ into orbits under the $\iso$- and $\GL(2,\Rl)$-actions.

\begin{definition}\label{def:equivalence}
Two Killing pairs $\xi,\xiti\in\Xi$ are equivalent, written $\xi\sim\xiti$, if there exist $h\in\iso$ and $N\in\GL(2,\Rl)$ such that $\xiti = N h_* \xi$.
\end{definition}
As $\xi\mapsto N\xi$ and $\xi\mapsto h_*\xi$ are commuting group actions, $\sim$ is an equivalence relation. According to Lemma \ref{lemma:group-xi} and Prop. \ref{prop:wedge-properties} b), c), acting with $N\in\GL(2,\Rl)$ on $\xi$ either leaves $W_{N\xi,p}=W_{\xi,p}$ invariant (if $\det N>0$) or exchanges this wedge with its causal complement, $W_{N\xi,p}=W_{\xi,p}'$ (if $\det N<0$). Therefore the ``coherent''\footnote{See \cite{BuchholzSummers:2007} for a related notion on Minkowski space.} subfamilies arising in the decomposition of the family of all wedges along the equivalence classes $[\xi]\in\Xi\slash\!\!\sim$,
\begin{align}\label{W-decomposition}
 \W = \bigsqcup_{[\xi]}\W_{[\xi]}\,,
\qquad
\W_{[\xi]}
:=
\{W_{\xiti,p}\,:\,\xiti\sim\xi,\,p\in M\}\,,
\end{align}
take the form
\begin{align}\label{def:Wxi}
\W_{[\xi]}
=
\{W_{h_*\xi,p},\,W_{h_*\xi,p}'\,:\,h\in\iso,\,p\in M\}
\,.
\end{align}
In particular, each subfamily $\W_{[\xi]}$ is invariant under the action of the isometry group and causal complementation.

In our later applications to quantum field theory, it will be important to have control over causal configurations $W_1\subset W_2'$ and inclusions $W_1\subset W_2$ of wedges $W_1,W_2\in\W$. Since $\W$ is closed under taking causal complements, it is sufficient to consider inclusions. Note that the following proposition states in particular that inclusions can only occur between wedges from the same coherent subfamily $\W_{[\xi]}$.

\begin{proposition}{\bf (Inclusions of wedges).}\label{prop:inclusions}\\
Let $\xi,\xiti\in\Xi$, $p,\pti\in M$. The wedges $W_{\xi,p}$ and $W_{\xiti,\pti}$ form an inclusion, $W_{\xi,p}\subset W_{\xiti,\pti}$, if and only if $p\in \overline{W_{\xiti,\pti}}$ and there exists $N\in\GL(2,\Rl)$ with $\det N>0$, such that $\xiti=N\xi$.
\end{proposition}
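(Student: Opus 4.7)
The plan is to treat the two implications separately, using the $\GL(2,\Rl)$-action on $\Xi$ from Lemma~\ref{lemma:group-xi} for the sufficiency, and the flow invariance of wedges (Proposition~\ref{prop:wedge-properties}(c)) together with the rigidity of Killing fields for the necessity.

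For the ``$\Leftarrow$'' direction, suppose $\xiti=N\xi$ with $\det N>0$ and $p\in\overline{W_{\xiti,\pti}}$. By Lemma~\ref{lemma:group-xi} together with Proposition~\ref{prop:wedge-properties}(b), $\det N>0$ gives $W_{N\xi,q}=W_{\xi,q}$ for every $q\in M$, so the claim reduces to showing $W_{\xi,p}\subset W_{\xi,\pti}$ whenever $p\in\overline{W_{\xi,\pti}}$. Since $\overline{W_{\xi,\pti}}$ is $\flow_{\xi,s}$-invariant by Proposition~\ref{prop:wedge-properties}(c), the whole edge $E_{\xi,p}=\flow_{\xi,\Rl^2}(p)$ sits in $\overline{W_{\xi,\pti}}$. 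One then checks that the component of $E_{\xi,p}'$ selected by the orientation convention of Definition~\ref{edgedef} is precisely the one inside $W_{\xi,\pti}$: the positivity condition on $(\nabla\mathcal{T}(p),\xi_1(p),\xi_2(p),n_{\xi,p})$ is the same orientation condition that fixes the side of $E_{\xiti,\pti}$ on which $W_{\xi,\pti}$ lies at $\pti$, so the normal geodesic $t\mapsto\exp_p(tn_{\xi,p})$ initially enters $W_{\xi,\pti}$, and the connected component of $E_{\xi,p}'$ through this geodesic is contained in $W_{\xi,\pti}$.

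For the ``$\Rightarrow$'' direction, assume $W_{\xi,p}\subset W_{\xiti,\pti}$. Taking closures immediately yields $p\in E_{\xi,p}\subset\overline{W_{\xi,p}}\subset\overline{W_{\xiti,\pti}}$. To produce $N$, the plan is to show that $\xi_1,\xi_2$ are tangent to $E_{\xiti,\pti}$ at each of its points, and then to invoke rigidity of Killing fields. The crucial observation is $\flow_{\xi,s}(W_{\xi,p})=W_{\xi,p}\subset W_{\xiti,\pti}$, whence $W_{\xi,p}\subset W_{\xiti,\pti}\cap\flow_{\xi,s}(W_{\xiti,\pti})$ for all $s\in\Rl^2$, where by \eqref{h-W} the transported wedge equals $W_{(\flow_{\xi,s})_*\xiti,\flow_{\xi,s}(\pti)}$. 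Using the characterisation of the edge as a canonical invariant of the wedge ($E_{\xiti,\pti}=\partial W_{\xiti,\pti}\cap \partial W_{\xiti,\pti}'$) together with the product splitting $M\cong\Rl\times I\times E$ from Definition~\ref{admissible}, the intended conclusion is $\flow_{\xi,s}(W_{\xiti,\pti})=W_{\xiti,\pti}$, so $\flow_{\xi,s}$ preserves $E_{\xiti,\pti}$. Then $\xi_1(q),\xi_2(q)\in T_qE_{\xiti,\pti}=\Span(\xiti_1(q),\xiti_2(q))$ at every $q\in E_{\xiti,\pti}$, and since a Killing field is determined globally by its value and first covariant derivative at a single point, this pointwise linear relation extends to a global identity $\xiti=N\xi$ with constant $N\in\GL(2,\Rl)$. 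The sign $\det N>0$ is then forced by Proposition~\ref{prop:wedge-properties}(b): if $\det N<0$ we would have $W_{N\xi,\pti}=W_{\xi,\pti}'$, but $W_{\xi,p}\subset W_{\xiti,\pti}=W_{N\xi,\pti}$ is incompatible with $W_{\xi,p}$ being on the same side of $E_{\xi,\pti}$ as the causal complement.

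The main obstacle will be the uniqueness claim $\flow_{\xi,s}(W_{\xiti,\pti})=W_{\xiti,\pti}$. The naive argument only produces two wedges in the isometry orbit of $W_{\xiti,\pti}$ that both contain $W_{\xi,p}$, which a priori does not force them to coincide. Bridging this gap should rely on extracting $E_{\xiti,\pti}$ canonically from $W_{\xiti,\pti}$ (via the double causal complement together with the boundary description used in Lemma~\ref{edgeprop}) and on the topological rigidity supplied by the splitting $M\cong\Rl\times I\times E$, which rules out ``tilted'' containing wedges whose edges would otherwise project nontrivially onto the $I$-factor. Once this uniqueness step is secured, the remaining conclusions follow from Lemma~\ref{lemma:group-xi} and standard Killing-field rigidity.
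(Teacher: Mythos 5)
Your proposal has genuine gaps in both directions. In the ($\Leftarrow$) direction, the entire content of the statement is the claim you merely assert: that the connected component of $E_{\xi,p}'$ entered by the normal geodesic ``is contained in $W_{\xi,\pti}$.'' Knowing that $E_{\xi,p}\subset\overline{W_{\xi,\pti}}$ and that the geodesic $t\mapsto\exp_p(t\,n_{\xi,p})$ initially enters $W_{\xi,\pti}$ does not by itself prevent the component from leaking out of $W_{\xi,\pti}$; this is exactly what has to be proved. The paper does it by using the curve characterization \eqref{wedgechar}: for $q\in W_{\xi,p}$ one joins a curve from $\pti$ to $p$ with a curve from $p$ to $q$, choosing the two tangent vectors at $p$ to be the \emph{same} oriented normal of $E_{\xi,p}$, so that the composite curve again satisfies \eqref{wedgechar} and hence $q\in W_{\xiti,\pti}$. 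Moreover, your phrase ``the normal geodesic initially enters $W_{\xi,\pti}$'' silently assumes $p\in W_{\xiti,\pti}$; the case $p\in\partial\overline{W_{\xiti,\pti}}$ (edges connected by a null geodesic) needs the separate argument with the null curve $\mu$ and the intermediate wedge $W_{\xi,\mu(\varepsilon)}$ given in the paper, and your sketch does not address it at all.

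In the ($\Rightarrow$) direction your route is different from the paper's, but it stalls precisely at the step you flag yourself: $W_{\xi,p}\subset W_{\xiti,\pti}\cap\flow_{\xi,s}(W_{\xiti,\pti})$ does not yield $\flow_{\xi,s}(W_{\xiti,\pti})=W_{\xiti,\pti}$, and the appeal to ``extracting the edge canonically'' plus ``topological rigidity'' is not an argument. Even granting it, the rigidity step is also incomplete: tangency of $\xi_1,\xi_2$ to $E_{\xiti,\pti}$ only gives a pointwise relation $\xi_i(q)=\sum_j N_{ij}(q)\,\xiti_j(q)$ along a two-dimensional submanifold with a priori varying coefficients, and passing to a single constant $N\in\GL(2,\Rl)$ needs more than quoting that a Killing field is fixed by its 1-jet at a point. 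The paper avoids both issues by a more elementary argument: if three of $\xi_1,\xi_2,\xiti_1,\xiti_2$ were linearly independent, one could use the global coordinates built from their flows and the temporal function to move a point of $E_{\xi,p}$ and a point of $E_{\xiti,\pti}$ to positions connected by a timelike curve (an integral curve of $\nabla\mathcal{T}$), which is impossible for nested wedges; hence $\xiti=N\xi$ directly. Finally, your determinant argument is too vague as stated: with $\det N<0$ one has $W_{\xiti,\pti}=W_{\xi',\pti}$, and the contradiction is obtained not from $p$ lying on a particular ``side'' of $E_{\xi,\pti}$ but by invoking the already-proved ($\Leftarrow$) direction to conclude that $W_{\xi,p}'=W_{\xi',p}\subset W_{\xiti,\pti}$ as well, so that a wedge and its causal complement would both be contained in $W_{\xiti,\pti}$, forcing $W_{\xiti,\pti}'=\emptyset$.
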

\begin{proof}
($\Leftarrow$) Let us assume that $\xiti=N\xi$ holds for some $N\in\GL(2,\Rl)$ with $\det N>0$,  and $p\in \overline{W_{\xiti,\pti}}$. In this case, the Killing fields in $\xiti$ are linear combinations of those in $\xi$, and consequently, the edges $E_{\xi,p}$ and $E_{\xiti,\pti}$ intersect if and only if they coincide, {\it i.e.} if $\pti\in E_{\xi,p}$. If the edges coincide, we clearly have $W_{\xiti,\pti}=W_{\xi,p}$. If they do not coincide, it follows from $p\in \overline{W_{\xiti,\pti}}$ that $E_{\xi,p}$ and $E_{\xiti,\pti}$ are either spacelike separated or they can be connected by a null geodesic.

Consider now the case that $E_{\xi,p}$ and $E_{\xiti,\pti}$ are spacelike separated, {\it i.e.} $p\in W_{\xiti,\pti}$. Pick a point $q\in W_{\xi,p}$ and recall that $W_{\xi,p}$ can be characterized by equation (\ref{wedgechar}). Since  $p\in W_{\xiti,\pti}$ and $q\in W_{\xi,p}$, there exist curves $\gamma_p$ and $\gamma_q$, which connect the pairs of points $(\pti,p)$ and $(p,q)$, respectively, and comply with the conditions in (\ref{wedgechar}). By joining $\gamma_{p}$ and $\gamma_q$ we obtain a curve which connects $\pti$ and $q$. The tangent vectors $\dot{\gamma}_p(1)$ and $\dot{\gamma}_q(0)$ are oriented normals of $E_{\xi, p}$ and we choose $\gamma_p$ and $\gamma_q$ in such a way that these tangent vectors coincide. Due to the properties of $\gamma_p$ and $\gamma_q$, the joint curve also complies with the conditions in (\ref{wedgechar}), from which we conclude $q\in W_{\xiti,\pti}$, and thus $W_{\xi,p}\subset W_{\xiti,\pti}$.

Consider now the case that $E_{\xiti,\pti}$ and $E_{\xi,p}$ are connected by null geodesics, {\it i.e.} $p\in \partial\overline{W_{\xiti,\pti}}$. Let $r$ be the point in $E_{\xi,p}$ which can be connected by a null geodesic with $\pti$ and pick a point $q\in W_{\xi,p}$. The intersection $J^-(r)\cap \partial\overline{W_{\xi,p}}$ yields another null curve, say $\mu$, and the intersection $\mu\cap J^-(q)=:p'$ is non-empty since $r$ and $q$ are spacelike separated and $q\in W_{\xi,p}$. The null curve $\mu$ is chosen future directed and parametrized in such a way that $\mu(0)=p'$ and $\mu(1)=r$. By taking $\varepsilon\in(0,1)$ we find $q\in W_{\xi,\mu(\varepsilon)}$ and $\mu(\varepsilon)\in W_{\xiti,\pti}$ which entails $q\in W_{\xiti,\pti}$.

($\Rightarrow$) Let us assume that we have an inclusion of wedges $W_{\xi,p} \subset W_{\xiti,\pti}$. Then clearly $p\in\overline{W_{\xiti,\pti}}$. Since $M$ is four-dimensional and $\xi_1,\xi_2,\xiti_1,\xiti_2$ are all spacelike, they cannot be linearly independent. Let us first assume that three of them are linearly independent, and without loss of generality, let $\xi=(\xi_1,\xi_2)$ and $\xiti=(\xi_2, \xi_3)$ with three linearly independent spacelike Killing fields $\xi_1,\xi_2,\xi_3$. Picking points $q\in E_{\xi,p}$, $\tilde{q}\in E_{\xiti,\pti}$ these can be written as $q=(t,x_1,x_2,x_3)$ and $\tilde{q}=(\tilde{t},\tilde{x}_1,\tilde{x}_2,\tilde{x}_3)$ in the global coordinate system of flow parameters constructed from $\xi_1,\xi_2,\xi_3$ and the gradient of the temporal function.

For suitable flow parameters $s_1,s_2,s_3$, we have $\flow_{\xi_1,s_1}(q)=(t,\tilde{x}_1,x_2,x_3)=:q'\in E_{\xi,p}$ and $\flow_{(\xi_2,\xi_3),(s_2,s_3)}(\tilde{q})=(\tilde{t},\tilde{x}_1,x_2,x_3)=:\tilde{q}'\in E_{\xiti,\pti}$. Clearly, the points $q'$ and $\tilde{q}'$ are  connected by a timelike curve, {\it e.g.} the curve whose tangent vector field is given by the gradient of the temporal function. But a timelike curve connecting the edges of $W_{\xi,p}, W_{\xiti,\pti}$ is a contradiction to these wedges forming an inclusion. So no three of the vector fields $\xi_1,\xi_2,\xiti_1,\xiti_2$ can be linearly independent.

Hence $\xiti= N\xi$ with an invertible matrix $N$. It remains to establish the correct sign of $\det N$, and to this end, we assume $\det N<0$. Then we have $(W_{\xi,p})'=W_{\xi',p}\subset W_{\xiti,\pti}$, by  (Prop. \ref{prop:wedge-properties} b)) and the ($\Leftarrow$) statement in this proof, since $\xiti$ and $\xi'$ are related by a positive determinant transformation and $p\in \overline{W_{\xiti,\pti}}$. This yields that both, $W_{\xi,p}$ and its causal complement, must be contained in $W_{\xiti,\pti}$, a contradiction. Hence $\det N>0$, and the proof is finished.
\end{proof}

Having derived the structural properties of the set $\W$ of wedges needed later, we now compare our wedge regions to the Minkowski wedges and to other definitions proposed in the literature.

The flat Minkowski spacetime $(\Rl^4,\eta)$ clearly belongs to the class of admissible spacetimes, with translations along spacelike directions and rotations in the standard time zero Cauchy surface as its complete spacelike Killing fields. However, as Killing pairs consist of non-vanishing vector fields, and each rotation leaves its rotation axis invariant, the set $\Xi(\Rl^4,\eta)$ consists precisely of all pairs $(\xi_1,\xi_2)$ such that the flows $\flow_{\xi_1}$, $\flow_{\xi_2}$ are translations along two linearly independent spacelike directions. Hence the set of all edges in Minkowski space coincides with the set of all two-dimensional spacelike planes. Consequently, each wedge $W\in\W(\Rl^4,\eta)$ is bounded by two non-parallel characteristic three-dimensional planes. This is precisely the family of wedges usually considered in Minkowski space\footnote{Note that we would get a ``too large'' family of wedges in Minkowski space if we would drop the requirement that the vector fields generating edges are Killing. However, the assumption that edges are generated by {\em commuting} Killing fields is motivated by the application to deformations of quantum field theories, and one could generalize our framework to spacetimes with edges generated by complete, linearly independent smooth Killing fields.} (see, for example, \cite{ThomasWichmann:1997}).

Besides the features we established above in the general admissible setting, the family of Minkowski wedges has the following well-known properties:

\begin{enumerate}
 \item Each wedge $W\in\W(\Rl^4,\eta)$ is the causal completion of the world line of a uniformly accelerated observer.
 \item Each wedge $W\in\W(\Rl^4,\eta)$ is the union of a family of double cones whose tips lie on two fixed lightrays.
\item The isometry group (the Poincar\'e group) acts transitively on $\W(\Rl^4,\eta)$.
 \item $\W(\Rl^4,\eta)$ is {\em causally separating}\label{causal-sep} in the sense that given
any two spacelike separated double cones $\OO_1,\OO_2\subset\Rl^4$, then there exists a wedge $W$ such that $\OO_1\subset
W\subset\OO_2'$ \cite{ThomasWichmann:1997}. $\W(\Rl^4,\eta)$ is a subbase for the topology of $\Rl^4$.
\end{enumerate}

All these properties a)--d) do {\em not} hold for the class $\W(M,\gST)$ of wedges on a {\em general} admissible spacetime, but some hold for certain subclasses, as can be seen from the explicit examples in the subsequent section.

There exist a number of different constructions for wedges in curved spacetimes in the literature, mostly for special manifolds. On de Sitter respectively anti de Sitter space Borchers and Buchholz \cite{BorchersBuchholz:1999} respectively Buchholz and Summers \cite{BuchholzSummers:2004-2} construct wedges by taking property a) as their defining feature, see also the generalization by Strich \cite{Strich:2008}. In the de Sitter case, this definition is equivalent to our definition of a wedge as a connected component of the causal complement of an edge \cite{BuchholzMundSummers:2001}. But as two-dimensional spheres, the de Sitter edges do not admit two linearly independent commuting Killing fields. Apart from this difference due to our restriction to commuting, linearly independent, Killing fields, the de Sitter wedges can be constructed  in the same way as presented here. Thanks to the maximal symmetry of the de Sitter and anti de Sitter spacetimes, the respective isometry groups act transitively on the corresponding wedge families c), and causally separate in the sense of d).

A definition related to the previous examples has been given by Lauridsen-Ribeiro for wedges in asymptotically anti de Sitter spacetimes (see Def. 1.5 in \cite{LauridsenRibeiro:2007}). Note that these spacetimes are not admissible in our sense since anti de Sitter space is not globally hyperbolic.

Property b) has recently been taken by Borchers \cite{Borchers:2009} as a definition of wedges in a quite general class of curved spacetimes which is closely related to the structure of double cones. In that setting, wedges do not exhibit in general all of the features we derived in our framework, and can for example have compact closure.

Wedges in a class of Friedmann-Robertson-Walker spacetimes with spherical spatial sections have been constructed with the help of conformal embeddings into de Sitter space \cite{BuchholzMundSummers:2001}. This construction also yields wedges defined as connected components of causal complements of edges. Here a) does not, but c) and d) do hold, see also our discussion of Friedmann-Robertson-Walker spacetimes with flat spatial sections in the next section.

The idea of constructing wedges as connected components of causal complements of specific two-dimensional submanifolds has also been used in the context of globally hyperbolic spacetimes with a bifurcate Killing horizon \cite{GuidoLongoRobertsVerch:2001}, building on earlier work in \cite{Kay:1985}. Here the edge is given as the fixed point manifold of the Killing flow associated with the horizon.

 \subsection{Concrete examples}\label{sec:examples}

In the previous section we provided a complete but abstract characterization of the geometric structures of the class of spacetimes we are interested in. This analysis is now complemented by presenting a number of explicit examples of admissible spacetimes.

The easiest way to construct an admissible spacetime is to take the warped product \cite[Chap.7]{O'Neill:1983} of an edge with another manifold. Let $(E,\gST_E)$ be a two-dimensional Riemannian manifold endowed with two complete, commuting, linearly independent, smooth Killing fields, and let $(X,\gST_X)$ be a two-dimensional, globally hyperbolic spacetime diffeomorphic to $\Rl\times I$, with $I$ an open interval or the full real line. Then, given a positive smooth function $f$ on $X$, consider the {\em warped product}
$M:=X\times_f E$, {\em i.e.} the product manifold $X\times E$ endowed with the metric tensor field
\begin{align*}
\gST:=\pi_X^*(\gST_X)+(f\circ\pi_X) \cdot \pi_E^*(\gST_E),
\end{align*}
where $\pi_X:M\ra X$ and $\pi_E:M\ra E$ are the projections on $X$ and $E$. It readily follows that $(M,\gST)$ is admissible in the sense of Definition \ref{admissible}.
\\\\
The following proposition describes an explicit class of admissible spacetimes in terms of their metrics.
\begin{proposition}\label{prop:metric}
Let $(M,\gST)$ be a spacetime diffeomorphic to $\Rl\times I\times \Rl^2$, where $I\subseteq\Rl$ is open and simply connected, endowed with a global coordinate system $(t,x,y,z)$ according to which the metric reads
\begin{equation}\label{metric}
 \gST=e^{2f_0}dt^2-e^{2f_1}dx^2-e^{2f_2}dy^2-e^{2f_3}(dz-q\,dy)^2.
\end{equation}
Here $t$ runs over the whole $\Rl$, $f_i,q\in C^\infty(M)$ for $i=0,...,3$ and $f_i,q$ do not depend on $y$ and $z$. Then $(M,\gST)$ is an admissible spacetime in the sense of Definition \ref{admissible}.
\end{proposition}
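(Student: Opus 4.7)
The plan is to exhibit a commuting Killing pair directly from the coordinate symmetries of \eqref{metric} and to read off the required splitting from the given global chart; global hyperbolicity enters only in verifying that the coordinate $t$ defines a Cauchy foliation.

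First, I would set $\xi_1:=\partial_y$ and $\xi_2:=\partial_z$. Since by hypothesis $f_0,\dots,f_3,q$ depend only on $(t,x)$, every coordinate component $g_{\mu\nu}$ of \eqref{metric} is independent of $y$ and $z$, and the Killing equation for coordinate fields collapses to $\partial_y g_{\mu\nu}=\partial_z g_{\mu\nu}=0$, which is immediate. As coordinate fields on a chart whose $y$- and $z$-directions cover $\Rl$, they are complete, commute ($[\partial_y,\partial_z]=0$), and are pointwise linearly independent. The norms
\[
\gST(\partial_y,\partial_y)=-e^{2f_2}-e^{2f_3}q^2<0,\qquad \gST(\partial_z,\partial_z)=-e^{2f_3}<0
\]
are strictly negative, so both fields are spacelike.

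Next, the orbit of any $p=(t_0,x_0,y_0,z_0)$ under $\flow_{\xi,\cdot}$ is $\{(t_0,x_0)\}\times\Rl^2$, so the edge is $E\cong\Rl^2$. The assumed diffeomorphism $M\cong\Rl\times I\times\Rl^2$ is then precisely the split $M\cong\Rl\times I\times E$ required in Definition \ref{admissible}; the map $(t,x)\mapsto(t,x,0,0)$ provides the smooth embedding $\iota:\Rl\times I\ra M$, and the simple connectedness of $I$ (hence $I$ being an open interval or all of $\Rl$) supplies the topological hypothesis needed for the wedge constructions of Section \ref{sec:edges+wedges}.

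The remaining point, and the main obstacle, is global hyperbolicity, which is implicit in the Bernal--Sanchez splitting $M\cong\Rl\times\Sigma$ of Definition \ref{admissible}. I would take $\mathcal{T}:=t$ as candidate temporal function: inverting the block form of \eqref{metric} gives $g^{tt}=e^{-2f_0}>0$, so $\nabla\mathcal{T}=e^{-2f_0}\partial_t$ is timelike throughout $M$, and any causal tangent $v\neq 0$ obeys $e^{2f_0}(v^t)^2\geq e^{2f_1}(v^x)^2+e^{2f_2}(v^y)^2+e^{2f_3}(v^z-qv^y)^2\geq 0$, forcing $v^t\neq 0$; hence $\mathcal{T}$ is strictly monotonic along every causal curve. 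The delicate step of showing that the slices $\{t=\text{const}\}\cong I\times\Rl^2$ are in fact Cauchy reduces to the globally hyperbolic character of the two-dimensional reduced metric $e^{2f_0}dt^2-e^{2f_1}dx^2$ on $\Rl\times I$, which together with the simple connectedness of $I$ is encoded in the standing hypothesis that $(M,\gST)$ is a spacetime. With Cauchy slices in hand, the Bernal--Sanchez decomposition yields $\Sigma\cong I\times E$, and all conditions of Definition \ref{admissible} are verified.
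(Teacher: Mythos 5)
Most of your bookkeeping is sound and parallels the paper: the coefficients of \eqref{metric} are independent of $y,z$, so $\partial_y,\partial_z$ are commuting, pointwise linearly independent Killing fields; your norm computation shows each is spacelike (you could also note that the $y$--$z$ block of the metric is negative definite, so the orbits $\{(t_0,x_0)\}\times\Rl^2$ are spacelike surfaces); and your completeness argument via the globally defined coordinate flows $(y,z)\mapsto(y+s_1,z+s_2)$ is even a bit more direct than the paper's route, which deduces completeness of the Cauchy slice from Hopf--Rinow and completeness of its Killing fields from that. The identification of the split $M\cong\Rl\times I\times E$ and of the embedding $(t,x)\mapsto(t,x,0,0)$ is also as required.

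The genuine gap is your treatment of global hyperbolicity. In this paper ``spacetime'' means only a connected smooth four-manifold with a smooth Lorentzian metric (Appendix \ref{ch:NotationConventions}); global hyperbolicity is \emph{not} part of that definition but is a standing assumption underlying Definition \ref{admissible}, so Proposition \ref{prop:metric} asserts in particular that metrics of the form \eqref{metric} are automatically globally hyperbolic, and this must be proved. Your proposal discharges it by claiming that Cauchyness of the $t$-slices ``reduces to the globally hyperbolic character of the two-dimensional reduced metric\dots which\dots is encoded in the standing hypothesis that $(M,\gST)$ is a spacetime''. That is circular: neither the global hyperbolicity of $(M,\gST)$ nor that of $e^{2f_0}dt^2-e^{2f_1}dx^2$ is among the hypotheses, and the claimed reduction is itself unproven. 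Showing that $t$ is a time function, as you do, is far weaker than showing its level sets are Cauchy; for metrics of the block form $\beta\,dt^2-h$ this can genuinely fail (an unbounded lapse can let causal curves run off to spatial infinity at bounded $t$), so an actual argument is needed here. The paper supplies one: it checks that the induced metric $h$ on the constant-$t$ hypersurfaces $\Sigma\cong I\times\Rl^2$ is Riemannian (block structure, $\det h=e^{2(f_1+f_2+f_3)}>0$, positivity of the $y$--$z$ block), so that $M\cong\Rl\times\Sigma$ with $\gST=\beta\,dt^2-h$ and smooth spacelike slices, and then invokes Theorem 1.1 of \cite{BernalSanchez:2005} to conclude global hyperbolicity. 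To complete your proof you would have to replace the appeal to the hypotheses by an argument of this kind for the Cauchy property of the slices.
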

\begin{proof}
Per direct inspection of \eqref{metric}, $M$ is isometric to $\Rl\times\Sigma$ with $\Sigma\cong I\times\Rl^2$ with $\gST=\beta\, dt^2-h_{ij}dx^i dx^j$, where $\beta$ is smooth and positive, and $h$ is a metric which depends smoothly on $t$. Furthermore, on the hypersurfaces at constant $t$, $\det h=e^{2(f_1+f_2+f_3)}>0$ and $h$ is block-diagonal. If we consider the sub-matrix with $i,j=y,z$, this has a positive determinant and a positive trace. Hence we can conclude that the induced metric on $\Sigma$ is Riemannian, or, in other words, $\Sigma$ is a spacelike, smooth, three-dimensional Riemannian hypersurface. Therefore we can apply Theorem 1.1 in \cite{BernalSanchez:2005} to conclude that $M$ is globally hyperbolic.

Since the metric coefficients are independent from $y$ and $z$, the vector fields $\xi_1=\partial_y$ and $\xi_2=\partial_z$ are smooth Killing fields which commute and, as they lie tangent to the Riemannian hypersurfaces at constant time, they are also spacelike. Furthermore, since per definition of spacetime, $M$ and thus also $\Sigma$ is connected, we can invoke the Hopf-Rinow-Theorem \cite[Chap.5, Thm.21]{O'Neill:1983} to conclude that $\Sigma$ is complete and, thus, all its Killing fields are complete. As $I$ is simply connected by assumption, it follows that $(M,\gST)$ is admissible.
\end{proof}

Under an additional assumption, also a partial converse of Proposition \ref{prop:metric} is true. Namely, let $(M,\gST)$ be a globally hyperbolic spacetime with two complete, spacelike, commuting, smooth Killing fields, and pick a local coordinate system $(t,x,y,z)$, where $y$ and $z$ are the flow parameters of the Killing fields. Then, if the reflection map $r:M\ra M$, $r(t,x,y,z)=(t,x,-y,-z)$, is an isometry, the metric is locally of the form \eqref{metric}, as was proven in \cite{Chandrasekhar:1983, ChandrasekharFerrari:1984}. The reflection $r$ is used to guarantee the vanishing of the unwanted off-diagonal metric coefficients, namely those associated to ``$dx\,dy$" and ``$dx\,dz$". Notice that the cited papers allow only to establish a result on the local structure of $M$ and no a priori condition is imposed on the topology of $I$, in distinction to Proposition \ref{prop:metric}.
\\\\
Some of the metrics \eqref{metric} are used in cosmology. For the description of a spatially homogeneous but in general anisotropic universe $M\cong J\times\Rl^3$ where $J\subseteq\Rl$ (see chapter 5 in \cite{Wald:1984} and \cite{FullingParkerHu:1974}), one puts $f_0=q=0$ in \eqref{metric} and takes $f_1,f_2,f_3$ to depend only on $t$. This yields the metric of {\em Kasner spacetimes} respectively {\em Bianchi I models}\footnote{
The Bianchi models I--IX \cite{Ellis:2006} arise from the classification of three-dimensional real Lie algebras, thought of as Lie subalgebras of the Lie algebra of Killing fields. Only the cases  Bianchi I--VII, in which the three-dimensional Lie algebra contains $\Rl^2$ as a subalgebra, are of direct interest here, since only in these cases Killing pairs exist.}
 \begin{equation}\label{Kasner}
 \gST=dt^2-e^{2f_1}dx^2-e^{2f_2}dy^2-e^{2f_3}dz^2\,.
 \end{equation}
Clearly here the isometry group contains three smooth Killing fields, locally given by $\partial_x, \partial_y, \partial_z$, which are everywhere linearly independent, complete and commuting. In particular,   $(\partial_x,\partial_y)$, $(\partial_x,\partial_z)$ and $(\partial_y,\partial_z)$ are Killing pairs.
\\\\
A case of great physical relevance arises when specializing the metric further by taking all the functions $f_i$ in \eqref{Kasner} to coincide. In this case, the metric takes the so-called  {\em Friedmann-Robertson-Walker} form
 \begin{equation}\label{FRW}
 \gST
= dt^2-a(t)^2\,[dx^2+dy^2+dz^2]
=a(t(\tau))^2\,\left[d\tau^2-dx^2-dy^2-dz^2\right]
\,.
 \end{equation}
Here the {\em scale factor} $a(t) := e^{f_1(t)}$ is defined on some interval $J\subseteq\Rl$, and in the second equality, we have introduced the {\em conformal time} $\tau$, which is implicitly defined by $d\tau = a^{-1}(t)dt$. Notice that, as in the Bianchi I model, the manifold is $M\cong J\times\Rl^3$, {\em i.e.} the variable $t$ does not need to range over the whole real axis. (This does not affect the property of global hyperbolicity.)

By inspection of \eqref{FRW}, it is clear that the isometry group of this spacetime contains the three-dimensional Euclidean group ${\rm E}(3) = \Rl^3\rtimes{\rm O}(3)$. Disregarding the Minkowski case, where $J=\Rl$ and $a$ is constant, the isometry group in fact coincides with ${\rm E}(3)$. Edges in such a Friedmann-Robertson-Walker universe are of the form $\{\tau\}\times S$, where $S$ is a two-dimensional plane in $\Rl^3$ and $t(\tau)\in J$. Here $\W$ consists of a single coherent family, and the $\iso$-orbits in $\W$ are labelled by the time parameter $\tau$ for the corresponding edges. Also note that the family of Friedmann-Robertson-Walker wedges is causally separating in the sense discussed on page \pageref{causal-sep}.

The second form of the metric in \eqref{FRW} is manifestly a conformal rescaling of the flat Minkowski metric. Interpreting the coordinates $(\tau,x,y,z)$ as coordinates of a point in $\Rl^4$ therefore gives rise to a conformal embedding $\iota:M\hookrightarrow\Rl^4$.

\begin{wrapfigure}{r}{60mm}
 \centering
{\includegraphics{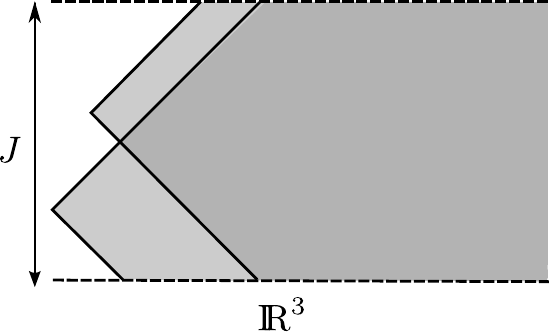}}\\
{\em Two wedges in FRW spacetime}
\end{wrapfigure}
\noindent
In this situation, it is interesting to note that the set of all images $\iota(E)$ of edges $E$ in the Friedmann-Robertson-Walker spacetime coincides with the set of all Minkowski space edges which lie completely in $\iota(M)=J\times\Rl^3$, provided that $J$ does not coincide with $\Rl$. These are just the edges parallel to the standard Cauchy surfaces of constant $\tau$ in $\Rl^4$.
So Friedmann-Robertson-Walker edges can also be characterized in terms of Minkowski space edges and the conformal embedding $\iota$, analogous to the construction of wedges in Friedmann-Robertson-Walker spacetimes with spherical spatial sections in \cite{BuchholzMundSummers:2001}.

\section{Quantum field theories on admissible spacetimes}\label{sec:deformation}

Having discussed the relevant geometric structures, we now fix an admissible spacetime $(M,\gST)$ and discuss warped convolution deformations of quantum field theories on it. For models on flat Minkowski space, it is known that this deformation procedure weakens point-like localization to localization in wedges \cite{BuchholzLechnerSummers:2010}, and we will show here that the same holds true for admissible curved spacetimes. For a convenient description of this weakened form of localization, and for a straightforward application of the warped convolution technique, we will work in the framework of local quantum physics (see Section \ref{sec:AQFT}).

In this setting, a model theory is defined by a net of field algebras, and here we consider algebras $\fF(W)$ of quantum fields supported in wedges $W\in\W(M,\gST)$. The main idea underlying the deformation is to apply the formalism developed in \cite{BuchholzSummers:2008,BuchholzLechnerSummers:2010}, but with the global translation symmetries of Minkowski space replaced by the Killing flow $\flow_\xi$ corresponding to the wedge $W=W_{\xi,p}$ under consideration. In the case of Minkowski spacetime, these deformations reduce to the familiar structure of a noncommutative Minkowski space with commuting time.

The details of the model under consideration will not be important in Section \ref{sec:generaldeformations}, since our construction relies only on a few structural properties satisfied in any well-behaved quantum field theory. In Section \ref{sec:dirac}, the deformed Dirac quantum field  is presented as a particular example.

\subsection{Deformations of nets with Killing symmetries}\label{sec:generaldeformations}

Proceeding to the standard mathematical formalism \cite{Haag:1996,Araki:1999}, we consider a C$^*$-algebra $\fF$, whose elements are interpreted as (bounded functions of) quantum fields on the spacetime $M$. The field algebra $\fF$ has a local structure, and in the present context, we focus on localization in wedges $W\in\W$, since this form of localization turns out to be stable under the deformation. Therefore, corresponding to each wedge $W\in\W$, we consider the C$^*$-subalgebra $\fF(W)\subset\fF$ of fields supported in $W$. Furthermore, we assume there exists a strongly continuous action $\alpha$ of the isometry group $\iso$ of $(M,\gST)$ on $\fF$, and a Bose/Fermi automorphism $\gamma$ whose square is the identity automorphism, and which commutes with $\alpha$. This automorphism will be used to separate the Bose/Fermi parts of fields $F\in\fF$; in the model theory of a free Dirac field discussed later, it can be chosen as a rotation by $2\pi$ in the Dirac bundle.

To allow for a straightforward application of the results of \cite{BuchholzLechnerSummers:2010}, we will also assume in the following that the field algebra is concretely realized on a separable Hilbert space $\Hil$, which carries a unitary representation $U$ of $\iso$ implementing the action $\alpha$, {\em i.e.}
\begin{align*}
U(h)FU(h)^{-1} = \alpha_h(F)
\,,\qquad
h\in\iso,\,F\in\fF\,.
\end{align*}
We emphasize that despite working on a Hilbert space, we do not select a state, since we do not make any assumptions regarding $U$-invariant vectors in $\Hil$ or the spectrum of subgroups of the representation $U$.\footnote{Note that every C$^*$-dynamical system $(\A,\alpha,G)$, where $\A\subset \mathcal{B(H)}$ is a concrete C$^*$-algebra on a separable Hilbert space $\mathcal{H}$ and $\alpha:G\ra \Aut(\A)$ is a strongly continuous representation of a locally compact group $G$, has a covariant representation \cite[Prop.7.4.7, Lem.7.4.9]{Pedersen:1979}, build out of the left-regular representation on the Hilbert space $L^2(G)\otimes \mathcal{H}$.} The subsequent analysis will be carried out in a C$^*$-setting, without using the weak closures of the field algebras $\fF(W)$ in $\B(\Hil)$.

For convenience, we also require the Bose/Fermi automorphism $\gamma$ to be unitarily implemented on $\Hil$, {\em i.e.} there exists a unitary $V=V^*=V^{-1}\in\B(\Hil)$ such that $\gamma(F)=VFV$. We will also use the associated unitary twist operator
\begin{align}\label{def:Z}
 Z := \frac{1}{\sqrt{2}}(1-iV)
\,.
\end{align}
Clearly, the unitarily implemented $\alpha$ and $\gamma$ can be continued to all of $\B(\Hil)$. By a slight abuse of notation, these extensions will be denoted by the same symbols.

In terms of the data $\{\fF(W)\}_{W\in\W}, \alpha, \gamma$, the structural properties of a quantum field theory on $M$ can be summarized as follows \cite{Haag:1996,Araki:1999}.

\begin{enumerate}
\label{fieldnetcosmological}
 \item {\em Isotony:} $\fF(W)\subset\fF(\tilde{W})$ whenever $W\subset\tilde{W}$.
 \item {\em Covariance} under $\iso$:
\begin{align}
 \alpha_h(\fF(W)) = \fF(h W)\,,\qquad h\in \iso,\;W\in\W\,.\label{covariance}
\end{align}
\item {\em Twisted Locality:} With the unitary $Z$ \eqref{def:Z},  there holds
\begin{align}\label{twisted-locality}
[ZFZ^*,\,G]=0,\qquad  F\in\fF(W), G\in\fF(W'),\;W\in\W\,.
\end{align}
\end{enumerate}

The twisted locality condition \eqref{twisted-locality} is equivalent to normal commutation relations between the Bose/Fermi parts $F_\pm:=\frac{1}{2}(F\pm \gamma(F))$ of fields in spacelike separated wedges, $[F_+,G_\pm]=[F_\pm,G_+]=\{F_-,G_-\}=0$ for $F\in\fF(W), G\in\fF(W')$ \cite{DoplicherHaagRoberts:1969}.

The covariance requirement \eqref{covariance} entails that for any Killing pair $\xi\in\Xi$, the algebra $\fF$ carries a corresponding $\Rl^2$-action $\tau_\xi$, defined by
\begin{align*}
  \tau_{\xi,s}:=\alpha_{\flow_{\xi,s}}=\text{ad}\,U_\xi(s)\,,\quad s\in\Rl^2\,,
\end{align*}
where $U_\xi(s)$ is shorthand for $U(\flow_{\xi,s})$. Since a wedge of the form $W_{\xi,p}$ with some $p\in M$ is invariant under the flows $\varphi_{N\xi,s}$ for any $N\in\GL(2,\Rl)$ (see Proposition \ref{prop:wedge-properties} c) and Lemma \ref{lemma:group-xi}), we have in view of covariance
\begin{align*}
 \tau_{N\xi,s}(\fF(W_{\xi,p}))
=
\fF(W_{\xi,p})
\,,\qquad
N\in\GL(2,\Rl),\; s\in\Rl^2\,.
\end{align*}

In this setting, all structural elements necessary for the application of warped convolution deformations \cite{BuchholzLechnerSummers:2010} are present, and we will use this technique to define a deformed net $W\longmapsto\fF(W)_\kappa$ of C$^*$-algebras on $M$, depending on a deformation parameter $\kappa\in\Rl$. For $\kappa=0$, we will recover the original theory, $\fF(W)_0=\fF(W)$, and for each $\kappa\in\Rl$, the three basic properties a)--c) listed above will remain valid. To achieve this, the elements of $\fF(W)$ will be deformed with the help of the Killing flow leaving $W$ invariant. We begin by recalling some definitions and results from \cite{BuchholzLechnerSummers:2010}, adapted to the situation at hand.
\\\\
Similar to the Weyl product appearing in the quantization of classical systems, the warped convolution deformation is defined in terms of oscillatory integrals of $\fF$-valued functions, and we have to introduce the appropriate smooth elements first. The action $\alpha$ is a strongly continuous representation of the Lie group $\iso$, which acts automorphically and thus isometrically on the C$^*$-algebra $\fF$. In view of these properties, the smooth elements $\fF^\infty:=\{F\in\fF\,:\,\iso\ni h\mapsto\alpha_h(F)\;\text{is } \|\cdot\|_\fF\text{-smooth}\}$ form a norm-dense $*$-subalgebra $\fF^\infty\subset\fF$ (see, for example, \cite{Taylor:1986}). However, the subalgebras $\fF(W_{\xi,p})\subset\fF$ are in general only invariant under the $\Rl^2$-action $\tau_\xi$, and we therefore also introduce a weakened form of smoothness. An operator $F\in\fF$ will be called $\xi${\em -smooth} if
\begin{align}
 \Rl^2\ni s\mapsto \tau_{\xi,s}(F)\in\fF
\end{align}
is smooth in the norm topology of $\fF$. On the Hilbert space level, we have a dense domain\label{smoothvectors} $\Hil^\infty:=\{\Psi\in\Hil\,:\,\iso\ni h\mapsto U(h)\Psi\;\text{is } \|\cdot\|_\Hil\text{-smooth}\}$ of smooth vectors in $\Hil$.

As further ingredients for the definition of the oscillatory integrals, we pick a smooth, compactly supported ``cutoff'' function $\chi\in C_0^\infty(\Rl^2\times\Rl^2)$ with $\chi(0,0)=1$, and the standard antisymmetric $2\times2$ matrix
\begin{align}\label{def:Q}
 \theta:=
\left(
\begin{array}{cc}
 0&1\\-1&0
\end{array}
\right)
\,.
\end{align}
With these data, we associate to a $\xi$-smooth $F\in\fF$ the deformed operator ({\em warped convolution}) \cite{BuchholzLechnerSummers:2010}
\begin{align}\label{def:Ala}
F_{\xi,\kappa}
:=
\frac{1}{4\pi^2}
\lim_{\eps\ra0}
\int ds\,ds'\,e^{-iss'}\chi(\eps s,\eps s')\,U_\xi(\kappa\theta s)FU_\xi(s'-\kappa\theta s)
\,,
\end{align}
where $\kappa$ is a real parameter, and $ss'$ denotes the standard Euclidean inner product of $s,s'\in\Rl^2$. The above limit exists in the strong operator topology of $\B(\Hil)$ on the dense subspace $\Hil^\infty$, and is independent of the chosen cutoff function $\chi$ within the specified class. The thus (densely) defined operator $F_{\xi,\kappa}$ can be shown to extend to a bounded $\xi$-smooth operator on all of $\Hil$, which we denote by the same symbol \cite{BuchholzLechnerSummers:2010}. As can be seen from the above formula, setting $\kappa=0$ yields the undeformed operator $F_{\xi,0}=F$, for any $\xi\in\Xi$.

The deformation $F\to F_{\xi,\kappa}$ is closely related to Rieffel's deformation of C$^*$-algebras \cite{Rieffel:1992}, where one introduces the deformed product
\begin{align}\label{rieffel-product}
 F\times_{\xi,\kappa} G :=
\frac{1}{4\pi^2}
\lim_{\eps\ra0}
\int ds\,ds'\,e^{-iss'}\chi(\eps s,\eps s')\,\tau_{\xi,\kappa\theta s}(F)\tau_{\xi,s'}(G)
\,.
\end{align}
This limit exists in the norm topology of $\fF$ for any $\xi$-smooth $F,G\in\fF$, and $F\times_{\xi,\kappa} G$ is $\xi$-smooth as well.

As is well known, this procedure applies in particular to the deformation of classical theories in terms of star products. As field algebra, one would then take a suitable commutative $*$-algebra of functions on $M$, endowed with the usual pointwise operations. The isometry group acts on this algebra automorphically by pullback, and in particular, the flow $\flow_{\xi}$ of any Killing pair $\xi\in\Xi$ induces automorphisms. The Rieffel product therefore defines a star product on the subalgebra of smooth elements $f,g$ for this action,
\begin{align}
(f\star_{\xi,\kappa} g)(p)
=
\frac{1}{4\pi^2}
\lim_{\eps\ra0}
\int d^2s\,d^2s' e^{-iss'}\,\chi(\eps s,\eps s')\, f(\flow_{\xi,\kappa\theta  s}(p))g(\flow_{\xi,s'}(p))
\,.
\end{align}
The function algebra endowed with this star product can be interpreted as a noncommutative version of the manifold $M$, similar to the flat case \cite{GayralGraciaBondiaIochumSchuckerVarilly:2004}. Note that since we are using a two-dimensional spacelike flow on a four-dimensional spacetime, the deformation corresponds to a noncommutative Minkowski space with ``commuting time'' in the flat case.
\\\\
The properties of the deformation map $F\to F_{\xi,\kappa}$ which will be relevant here are the following.
\begin{lemma}{\bf \cite{BuchholzLechnerSummers:2010}:}\label{thm:deformationproperties}
\\
Let $\xi\in\Xi$, $\kappa\in\Rl$, and consider $\xi$-smooth operators
$F,G\in\fF$. Then
 \begin{enumerate}
  \item ${F_{\xi,\kappa}}^*={F^*}_{\xi,\kappa}$.
  \item $F_{\xi,\kappa} G_{\xi,\kappa} = (F\times_{\xi,\kappa} G)_{\xi,\kappa}$.
  \item If\,\footnote{In \cite{BuchholzSummers:2008,BuchholzLechnerSummers:2010}, this statement is shown to hold under the weaker assumption that the commutator $[\tau_{\xi,s}(F),\,G]$ vanishes only for all $s\in S+S$, where $S$ is the joint spectrum of the generators of the $\Rl^2$-representation $U_\xi$ implementing $\tau_\xi$. But since usually $S=\Rl^2$ in the present setting, we refer here only to the weaker statement, where $S+S\subset\Rl^2$ has been replaced by $\Rl^2$.} $[\tau_{\xi,s}(F),\,G]=0$ for all $s\in\Rl^2$, then $[F_{\xi,\kappa}, G_{\xi,-\kappa}]=0$.
  \item If a unitary $Y\in\B(\Hil)$ commutes with $U_\xi(s)$, $s\in\Rl^2$, then $YF_{\xi,\kappa}Y^{-1} = (YFY^{-1})_{\xi,\kappa}$, and $YF_{\xi,\kappa}Y^{-1}$
 is $\xi$-smooth.
 \end{enumerate}
\end{lemma}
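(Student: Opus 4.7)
My approach is to reduce each item to a change-of-variables argument in the oscillatory integral defining $F_{\xi,\kappa}$, working on the dense domain $\Hil^\infty$ where the strong-operator limit in \eqref{def:Ala} exists unambiguously, and then passing to the bounded extensions at the end. Throughout I would exploit three features of the setup: (i) the antisymmetry of $\theta$, so that $s\cdot\theta s = 0$ and $s\cdot\theta s' = -s'\cdot\theta s$; (ii) the group law $U_\xi(s)U_\xi(s') = U_\xi(s+s')$ together with $\tau_{\xi,s}=\mathrm{ad}\, U_\xi(s)$; and (iii) the independence of $F_{\xi,\kappa}$ of the cutoff $\chi\in C_0^\infty(\Rl^2\times\Rl^2)$ with $\chi(0,0)=1$, which lets me choose symmetric cutoffs whenever convenient. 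Concretely:

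For part (a), I would take the adjoint inside the integral and then perform the substitution $(s,s')\mapsto(-s,\,-s'+2\kappa\theta s)$. Its Jacobian is $1$, the phase transforms as $s\cdot s'\mapsto -s\cdot s'$ because $s\cdot\theta s=0$, and the two $U_\xi$-factors recombine into $U_\xi(\kappa\theta s)F^* U_\xi(s'-\kappa\theta s)$. Choosing $\chi$ symmetric under $(s,s')\mapsto(-s,-s')$ (and approximately invariant under the affine shift, which is legitimate by cutoff-independence) identifies the result with $(F^*)_{\xi,\kappa}$.

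For part (b), I would apply the definition twice, obtaining a quadruple oscillatory integral for $F_{\xi,\kappa}G_{\xi,\kappa}$, and regroup the unitaries using $U_\xi(\kappa\theta s)F U_\xi(s'-\kappa\theta s)\cdot U_\xi(\kappa\theta r)G U_\xi(r'-\kappa\theta r) = U_\xi(\kappa\theta(s+r))\bigl[\tau_{\xi,\kappa\theta r}^{-1}(F)\cdot\tau_{\xi,s'}(G_1)\bigr]U_\xi((s'+r')-\kappa\theta(s+r))$ after appropriate shifts — essentially the calculation of Rieffel, translating concatenation of warpings into the deformed product $\times_{\xi,\kappa}$ of \eqref{rieffel-product}. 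The substitutions $(s,r)\mapsto(s+r,r)$ and a corresponding pair in the primed variables uncouple the integrations and produce, on the one hand, the single oscillatory integral defining $(F\times_{\xi,\kappa} G)_{\xi,\kappa}$ and, on the other, a decoupled factor which collapses to $1$ by the Fourier inversion identity $\frac{1}{4\pi^2}\int ds'\,dr\, e^{-is'\cdot r}=\delta$.

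For part (c), the hypothesis $[\tau_{\xi,s}(F),G]=0$ for all $s\in\Rl^2$ lets me commute $F$ past the full unitaries in the integrand of $G_{\xi,-\kappa}$ while only picking up the appropriate shift of its argument; writing out $F_{\xi,\kappa}G_{\xi,-\kappa}-G_{\xi,-\kappa}F_{\xi,\kappa}$ on $\Hil^\infty$ and performing the substitution $s\leftrightarrow-s'$, $s'\leftrightarrow-s$ in one of the two four-fold integrals (now well-defined termwise thanks to the commutation assumption), the two contributions cancel because the matrices $+\kappa\theta$ and $-\kappa\theta$ produce opposite phase shifts. Part (d) follows at once by inserting $YY^{-1}=1$ on both sides of $F$ in \eqref{def:Ala} and using $YU_\xi(s)Y^{-1}=U_\xi(s)$; smoothness of $YFY^{-1}$ in the $\tau_\xi$-action is immediate from that of $F$.

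The main obstacle in all four items is purely analytic rather than algebraic: justifying the formal manipulations (interchange of $\lim_{\eps\to 0}$ with substitutions and with compositions of operators, applicability of a Fubini-type argument to the quadruple integrals in (b) and (c)) in the oscillatory sense. I would handle this by restricting to $\xi$-smooth $F,G$ and vectors $\Psi\in\Hil^\infty$, where the standard integration-by-parts arguments of Rieffel's framework \cite{Rieffel:1992} convert the oscillatory integrals into absolutely convergent ones with rapidly decaying integrands, after which all the algebraic identities above become genuine equalities of Bochner integrals; boundedness of the warped operators then lets me drop the restriction to $\Hil^\infty$ at the very end.
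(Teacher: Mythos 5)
You should first be aware that the thesis does not prove this lemma at all: it is imported verbatim from \cite{BuchholzLechnerSummers:2010} (the label even carries the citation), and only the subsequent Lemmas \ref{lemma:twist} and \ref{lemma:deformed-operators} are proved in the text. So there is no internal proof to compare with; your plan of attack (affine, unimodular substitutions in the oscillatory integral \eqref{def:Ala} on $\Hil^\infty$, antisymmetry of $\theta$, cutoff independence, and Rieffel-type smoothing to justify the analysis) is in spirit the route taken in the cited source, and part (d) as you describe it is indeed immediate. However, two of your concrete computational steps fail as stated. For (a): taking adjoints gives the integrand $U_\xi(\kappa\theta s-s')F^*U_\xi(-\kappa\theta s)$ with phase $e^{+iss'}$, and your substitution $(s,s')\mapsto(-s,\,-s'+2\kappa\theta s)$ turns this into $U_\xi(u'+\kappa\theta u)F^*U_\xi(\kappa\theta u)=U_\xi(u')\,\tau_{\xi,\kappa\theta u}(F^*)\,U_\xi(2\kappa\theta u)$, which is not the defining expression for $(F^*)_{\xi,\kappa}$. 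A substitution that does work is, e.g., $(s,s')\mapsto(s-\kappa^{-1}\theta^{-1}s',\,-s')$ (Jacobian $1$, phase preserved because $\theta^{-1}$ is antisymmetric); alternatively one may rewrite the adjoint as $\int e^{-iss'}U_\xi(s')\tau_{\xi,\kappa\theta s}(F^*)$ and invoke the left/right symmetry of the warped convolution, but that symmetry is itself a lemma of \cite{BuchholzLechnerSummers:2010} needing proof.

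For (c) the gap is more serious. After using the hypothesis to commute the translated factors, the integrand of $F_{\xi,\kappa}G_{\xi,-\kappa}$ is $\tau_{\xi,s'-\kappa\theta r}(G)\,\tau_{\xi,\kappa\theta s}(F)\,U_\xi(s'+r')$, and identifying this with the corresponding expression for $G_{\xi,-\kappa}F_{\xi,\kappa}$ cannot be achieved by your swap $s\leftrightarrow-s'$; one needs a substitution mixing the variables of the two factors, for instance $\sigma=r-\kappa^{-1}\theta^{-1}s'$, $\sigma'=\kappa\theta(s-r)$, $\rho=r$, $\rho'=s'+r'-\kappa\theta(s-r)$, which is unimodular and preserves the phase only thanks to the antisymmetry of $\theta$. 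Moreover, once $s'$ enters the argument of an automorphism the integrand no longer has the oscillatory form ``slowly varying bounded amplitude times $e^{-iss'}$'', so the appeal to Rieffel's integration-by-parts machinery does not apply verbatim; justifying the Fubini-type interchanges and the delta-collapse in (b) and (c) (in \cite{BuchholzLechnerSummers:2010} this is done via the spectral resolution of $U_\xi$) is precisely the technical content of the cited proof and is not discharged by the final paragraph of your proposal. Also, the displayed regrouping identity in (b) contains an undefined operator $G_1$ and unspecified shifts, so as written it cannot be checked. The strategy is salvageable, but the substitutions in (a) and (c) and the analytic justification of the quadruple integrals must be redone along the lines indicated above.
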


Since we are dealing here with a field algebra obeying twisted locality, we also point out that statement c) of the above lemma carries over to the twisted local case.

\begin{lemma}\label{lemma:twist}
Let $\xi\in\Xi$ and $F,G\in\fF$ be $\xi$-smooth such that $[Z \tau_{\xi,s}(F)Z^*,\,G]=0$. Then
\begin{align}
 [ZF_{\xi,\kappa}Z^*, G_{\xi,-\kappa}]=0\,.
\end{align}
\end{lemma}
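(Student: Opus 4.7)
My plan is to reduce Lemma 4.2 to the untwisted commutation statement in Lemma 4.1(c) by absorbing the twist $Z$ into the first argument. The key point is that the unitary $V$ implementing the Bose/Fermi automorphism $\gamma$ commutes with the representation $U$ of the isometry group. Indeed, since $\gamma$ and $\alpha$ commute as automorphisms, $\mathrm{Ad}(VU(h)) = \mathrm{Ad}(U(h)V)$ for every $h\in\mathrm{Iso}$, so $VU(h) = \chi(h)\,U(h)V$ for some scalar $\chi(h)$; using $V^2 = 1$ one gets $\chi(h)^2 = 1$, and by continuity plus connectedness of the relevant one-parameter groups $s\mapsto \flow_{\xi,s}$, $\chi\equiv 1$ on the image of $U_\xi$. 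Consequently $V$, and hence $Z$, commutes with every $U_\xi(s)$.

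With this observation in hand, set $\Fti := ZFZ^*$. First I would check that $\Fti$ is still $\xi$-smooth: because $Z$ commutes with $U_\xi(s)$, the map $s\mapsto\tau_{\xi,s}(\Fti) = Z\tau_{\xi,s}(F)Z^*$ inherits norm-smoothness from that of $\tau_{\xi,\cdot}(F)$. Next, Lemma 4.1(d) applied with $Y = Z$ (which is unitary and commutes with $U_\xi$) gives
\begin{equation*}
 \Fti_{\xi,\kappa} \;=\; (ZFZ^*)_{\xi,\kappa} \;=\; ZF_{\xi,\kappa}Z^*\,,
\end{equation*}
so the twisted warped operator on the left-hand side of the claim is exactly the warped convolution of $\Fti$.

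Now the hypothesis $[Z\tau_{\xi,s}(F)Z^*, G] = 0$ for all $s\in\Rl^2$ translates, using once more that $Z$ commutes with $U_\xi(s)$, into
\begin{equation*}
 [\tau_{\xi,s}(\Fti),\,G] \;=\; [Z\tau_{\xi,s}(F)Z^*,\,G] \;=\; 0\,,\qquad s\in\Rl^2\,.
\end{equation*}
This is precisely the premise of Lemma 4.1(c) for the pair $(\Fti,G)$. Applying that lemma yields $[\Fti_{\xi,\kappa},\,G_{\xi,-\kappa}] = 0$, which after substituting $\Fti_{\xi,\kappa} = ZF_{\xi,\kappa}Z^*$ gives the asserted identity
\begin{equation*}
 [ZF_{\xi,\kappa}Z^*,\,G_{\xi,-\kappa}] \;=\; 0\,.
\end{equation*}

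The only subtle point — and the one I would be most careful to justify — is the commutativity of $Z$ and $U_\xi(s)$; everything else is a clean reduction to an already established result. If one prefers not to invoke connectedness arguments, the same conclusion is ensured by the standing assumption that $\gamma$ commutes with $\alpha$ together with any natural choice of implementer $V$ (for instance the second-quantized $(-1)^F$), which in all models of interest is manifestly in the commutant of $U$. Once that is granted, the proof is essentially a one-line substitution.
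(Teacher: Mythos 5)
Your proof is correct and follows essentially the same route as the paper: absorb the twist into $F$ using that $Z$ commutes with $U_\xi(s)$, apply Lemma \ref{thm:deformationproperties}~d) to get $(ZFZ^*)_{\xi,\kappa}=ZF_{\xi,\kappa}Z^*$, and then conclude via Lemma \ref{thm:deformationproperties}~c). The only difference is that the paper simply takes the commutativity of $V$ (hence $Z$) with the representation $U$ as part of the standing setup, so your extra scalar-phase/connectedness argument is unnecessary -- and as written it would require irreducibility, since $VU(h)V^{-1}U(h)^{-1}$ a priori only lies in the commutant of $\fF$ -- but your fallback to the standing assumption covers this.
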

\begin{proof}
The Bose/Fermi operator $V$ commutes with the representation of the isometry group, and thus the same holds true for the twist $Z$ \eqref{def:Z}. So
in view of Lemma \ref{thm:deformationproperties} d), the assumption implies that $ZFZ^*$ is $\xi$-smooth, and $[\tau_{\xi,s}(ZFZ^*),\,G]=0$ for all $s\in\Rl^2$. In view of Lemma \ref{thm:deformationproperties} c), we thus have $[(ZFZ^*)_{\xi,\kappa}, G_{\xi,-\kappa}]=0$.  But as $Z$ and $U_\xi(s)$ commute, $(ZFZ^*)_{\xi,\kappa}=ZF_{\xi,\kappa}Z^*$, and the claim follows.
\end{proof}

The results summarized in Lemma \ref{thm:deformationproperties} and Lemma \ref{lemma:twist} will be essential for establishing the isotony and twisted locality properties of the deformed quantum field theory. To also control the covariance properties relating different Killing pairs, we need an additional lemma, closely related to \cite[Prop.2.9]{BuchholzLechnerSummers:2010}.

\begin{lemma}\label{lemma:deformed-operators}
 Let $\xi\in\Xi$, $\kappa\in\Rl$, and $F\in\fF$ be $\xi$-smooth.
\begin{enumerate}
 \item Let $h\in\iso$. Then $\alpha_h(F)$ is $h_*\xi$-smooth, and
\begin{align}\label{axi-covariance}
 \alpha_h(F_{\xi,\kappa})
=
\alpha_h(F)_{h_*\xi,\kappa}
\,.
\end{align}
\item For $N\in\GL(2,\Rl)$, it holds
\begin{align}
 F_{N\xi,\kappa} &=F_{\xi,\det N\cdot \kappa}\,. \label{ANxi}
\end{align}
In particular,
\begin{align}\label{Axi-prime}
F_{\xi',\kappa} &= F_{\xi,-\kappa}\,.
\end{align}
\end{enumerate}
\end{lemma}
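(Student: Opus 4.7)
The plan is to prove the two statements by direct computation in the oscillatory integral defining the warped convolution, using only the strong-convergence properties of $F_{\xi,\kappa}$ on $\Hil^\infty$ and the key algebraic facts that (i) conjugation by isometries intertwines Killing flows, and (ii) the $2\times 2$ antisymmetric matrix $\theta$ satisfies $N^T\theta N=\det N\cdot\theta$ for every $N\in\GL(2,\Rl)$. I expect part (a) to be routine, while the main mild obstacle in part (b) is to reabsorb the transformed cutoff function $\chi$ into the standard form used in the definition of $F_{\xi,\kappa\det N}$; this is handled by invoking the independence of the oscillatory integral from the particular regulator.

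For part (a), I first check that $\alpha_h(F)$ is $h_*\xi$-smooth. Since conjugation of flows by isometries yields $h\circ\flow_{\xi,s}\circ h^{-1}=\flow_{h_*\xi,s}$, one has $\tau_{h_*\xi,s}\circ\alpha_h=\alpha_h\circ\tau_{\xi,s}$, and smoothness of $s\mapsto\tau_{\xi,s}(F)$ together with the $\|\cdot\|_\fF$-isometry of $\alpha_h$ gives the claim. On the Hilbert space level this isometry translates to $U(h)U_\xi(s)U(h)^{-1}=U_{h_*\xi}(s)$. Applied to \eqref{def:Ala} on a smooth vector $\Psi\in\Hil^\infty$, conjugation by $U(h)$ commutes with the (strongly convergent) integral and $U(h)^{-1}U(h)$-insertions convert each factor according to the intertwining rule, producing
\begin{equation*}
U(h)F_{\xi,\kappa}U(h)^{-1}\Psi
=\frac{1}{4\pi^2}\lim_{\eps\to0}\int ds\,ds'\,e^{-iss'}\chi(\eps s,\eps s')\,U_{h_*\xi}(\kappa\theta s)\alpha_h(F)U_{h_*\xi}(s'-\kappa\theta s)\Psi,
\end{equation*}
which is $\alpha_h(F)_{h_*\xi,\kappa}\Psi$ (note that $U(h)\Psi$ is again smooth, so $\Psi$ can range over a dense set); boundedness extends the identity to all of $\Hil$.

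For part (b), starting from the definition of $F_{N\xi,\kappa}$ and using $U_{N\xi}(s)=U_\xi(N^Ts)$ (cf.\ Lemma \ref{lemma:group-xi}, eq.\ \eqref{N-flow}), I perform the substitution $s=Nr$, $s'=(N^T)^{-1}r'$ in the regulated integral. The Jacobians are $|\det N|$ and $|\det N|^{-1}$ and cancel, while the Euclidean pairing transforms as $(Nr)\cdot((N^T)^{-1}r')=r^TN^T(N^T)^{-1}r'=r\cdot r'$, so the oscillatory factor $e^{-iss'}$ becomes $e^{-irr'}$. The crucial identity $N^T\theta N=\det N\cdot\theta$ gives
\begin{equation*}
\kappa N^T\theta s=\kappa N^T\theta N r=(\kappa\det N)\,\theta r,\qquad N^Ts'=r',
\end{equation*}
and the integrand reduces to $U_\xi((\kappa\det N)\theta r)\,F\,U_\xi(r'-(\kappa\det N)\theta r)$. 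The transformed cutoff $\chi(\eps Nr,\eps N^{-T}r')$ belongs to $C_0^\infty(\Rl^2\times\Rl^2)$ and still equals $1$ at the origin, so it is an admissible regulator; by the independence of the warped convolution from the choice of cutoff this yields exactly $F_{\xi,\kappa\det N}$, proving \eqref{ANxi}. Finally, specializing $N=\Pi$ with $\det\Pi=-1$ produces \eqref{Axi-prime}.
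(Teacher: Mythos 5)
Your proposal is correct and follows essentially the same route as the paper's own proof: part (a) via the intertwining $h\flow_{\xi,s}h^{-1}=\flow_{h_*\xi,s}$, isometry of $\alpha_h$, and conjugation inside the strongly convergent oscillatory integral on $\Hil^\infty$; part (b) via the substitution in the regulated integral using $\varphi_{N\xi,s}=\varphi_{\xi,N^Ts}$, the identity $N^T\theta N=\det N\cdot\theta$, cancellation of Jacobians, and the admissibility of the transformed cutoff, with $N=\Pi$ giving \eqref{Axi-prime}. No gaps.
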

\begin{proof}
a) The flow of $\xi$ transforms under $h$ according to $ h\flow_{\xi,s} = \flow_{h_*\xi,s}h$, so that $\alpha_h(\tau_{\xi,s}(F))=\tau_{h_*\xi,s}(\alpha_h(F))$. Since $F$ is $\xi$-smooth, and $\alpha_h$ is isometric, the smoothness of $s\mapsto\tau_{h_*\xi,s}(\alpha_h(F))$ follows. Using the strong convergence of the oscillatory integrals \eqref{def:Ala}, we compute on a smooth vector $\Psi\in\Hil^\infty$
\begin{align*}
 \alpha_h(F_{\xi,\kappa})\Psi
&=
\frac{1}{4\pi^2}
\lim_{\eps\ra0}
 \int ds\,ds'\, e^{-iss'}\, \chi(\eps s,\eps s')\,U(h \flow_{\xi,\kappa\theta s}h^{-1})\alpha_h(F)U(h \flow_{\xi,s'-\kappa\theta s}h^{-1})\Psi
\\
&=
\frac{1}{4\pi^2}
\lim_{\eps\ra0}
 \int ds\,ds'\, e^{-iss'}\, \chi(\eps s,\eps s')\,U(\flow_{h_*\xi,\kappa\theta s})\alpha_h(F)U(\flow_{h_*\xi,s'-\kappa\theta s})\Psi
\\
&=
\alpha_h(F)_{h_*\xi,\kappa}\Psi
\,,
\end{align*}
which entails \eqref{axi-covariance} since $\Hil^\infty\subset\Hil$ is dense.

b) In view of the transformation law $\varphi_{N\xi,s}=\varphi_{\xi,N^Ts}$ \eqref{N-flow}, we get, $\Psi\in\Hil^\infty$,
\begin{align*}
&F_{N\xi,\kappa}\Psi
=
\frac{1}{4\pi^2}
\lim_{\eps\ra0}
 \int ds\,ds'\, e^{-iss'}\, \chi(\eps s,\eps s')\,U(\flow_{N\xi,\kappa\theta s}) F U(\flow_{N\xi,s'-\kappa\theta s})\Psi
\\
&=
\frac{1}{4\pi^2|\det N|}
\lim_{\eps\ra0}
 \int ds\,ds'\, e^{-i(N^{-1}s,s')}\, \chi(\eps s,\eps (N^T)^{-1}s')\,
U_\xi(\kappa N^T\theta s) F U_\xi(s'-\kappa N^T\theta s)\Psi
\\
&=
\frac{1}{4\pi^2}
\lim_{\eps\ra0}
 \int ds\,ds'\, e^{-iss'}\, \chi(\eps N s,\eps (N^T)^{-1}s')\,
U_\xi(\kappa N^T\theta Ns) F U_\xi(s'-\kappa N^T\theta N s)\Psi
\\
&=
F_{\xi,\det N\cdot \kappa}\Psi
\,.
\end{align*}
In the last line, we used the fact that the value of the oscillatory integral does not depend on the choice of cutoff function $\chi$ or $\chi_N(s,s'):=\chi(Ns,(N^T)^{-1}s')$, and the equation $N^T\theta N=\det N\cdot \theta$, which holds for any $(2\times2)$-matrix $N$.

This proves \eqref{ANxi}, and since $\xi'=\Pi\xi$, with the flip matrix $\Pi=\left(0\;\;1\atop1\;\;0\right)$ which has $\det\Pi=-1$, also \eqref{Axi-prime} follows.
\end{proof}

Having established these properties of individual deformed operators, we now set out to deform the net $W\mapsto\fF(W)$ of wedge algebras. In contrast to the Minkowski space setting \cite{BuchholzLechnerSummers:2010}, we are here in a situation where the set $\Xi$ of all Killing pairs is not a single orbit of one reference pair under the isometry group. Whereas the deformation of a net of wedge algebras on Minkowski space amounts to deforming a single algebra associated with a fixed reference wedge (causal Borchers triple), we have to specify here more data, related to the coherent subfamilies $\W_{[\xi]}$ in the decomposition $\W=\bigsqcup_{[\xi]}\W_{[\xi]}$ of $\W$ \eqref{W-decomposition}. For each equivalence class $[\xi]$, we choose a representative $\xi$. In case there exists only a single equivalence class, this simply amounts to fixing a reference wedge together with a length scale for the Killing flow. With this choice  of representatives $\xi\in[\xi]$ made, we introduce the sets, $p\in M$,
\begin{align}
\fF(W_{\xi,p})_\kappa
&:=
\overline{\{F_{\xi,\kappa}\,:\, F\in\fF(W_{\xi,p}) \;\,\xi\text{-smooth } \}}^{\|\cdot\|}
\,,
\label{def:awla}
\\
\fF(W_{\xi',p})_\kappa
&:=
\overline{\{F_{\xi',\kappa}\,:\, F\in\fF(W_{\xi,p}') \;\,\xi'\text{-smooth } \}}^{\|\cdot\|}
\,.
\label{def:awla-2}
\end{align}
Note that the deformed operators in $\fF(W_{\xi',p})_\kappa$ have the form $F_{\xi',\kappa}=F_{\xi,-\kappa}$ \eqref{Axi-prime}, {\em i.e.} the sign of the deformation parameter depends on the choice of reference Killing pair.

The definitions (\ref{def:awla}) and (\ref{def:awla-2}) are extended to arbitrary wedges by setting
\begin{align}\label{def:AhWla}
 \fF(hW_{\xi,p})_\kappa := \alpha_h(\fF(W_{\xi,p})_\kappa)
\,,\qquad
 \fF(hW_{\xi,p}')_\kappa := \alpha_h(\fF(W_{\xi,p}')_\kappa)\,.
\end{align}
Recall that as $h$, $p$ and $[\xi]$ vary over $\iso$, $M$ and $\Xi\slash\!\!\sim$, respectively, this defines $\fF(W)_\kappa$ for all $W\in\W$ ({\it cf.} \eqref{def:Wxi}). It has to be proven that this assignment is well-defined, {\em e.g.} that \eqref{def:AhWla} is independent of the way the wedge $hW_{\xi,p}=\hti W_{\xi,\pti}$ is represented. This will be done below. However, note that the definition of $\fF(W)_\kappa$ {\em does} depend on our choice of representatives $\xi\in[\xi]$, since rescaling $\xi$ amounts to rescaling the deformation parameter (Lemma \ref{lemma:deformed-operators} b)).
\\
\\
Before establishing the main properties of the assignment $W\to\fF(W)_\kappa$, we check that the sets (\ref{def:awla}) and (\ref{def:awla-2}) are C$^*$-algebras. As the C$^*$-algebra $\fF(W_{\xi,p})$ is $\tau_{\xi}$-invariant and $\tau_{\xi}$ acts strongly continuously, the $\xi$-smooth operators in $\fF(W_{\xi,p})$ which appear in the definition \eqref{def:awla} form a norm-dense $*$-subalgebra. Now the deformation $F \mapsto F_{\xi,\kappa}$ is evidently linear and commutes with taking adjoints (Lemma \ref{thm:deformationproperties} a)); so the sets $\fF(W_{\xi,p})_\kappa$ are $*$-invariant norm-closed subspaces of $\B(\Hil)$. To check that these spaces are also closed under taking products, we again use the invariance of $\fF(W_{\xi,p})$ under $\tau_\xi$: By inspection of the Rieffel product \eqref{rieffel-product}, it follows that for any two $\xi$-smooth $F,G\in\fF(W_{\xi,p})$,  also the product $F\times_{\xi,\kappa} G$ lies in this algebra (and is $\xi$-smooth, see \cite{Rieffel:1992}). Hence the multiplication formula from Lemma \ref{thm:deformationproperties} b) entails that the above defined $\fF(W)_\kappa$ are actually C$^*$-algebras in $\B(\Hil)$.
\\
\\
The map $W\mapsto\fF(W)_\kappa$ defines the wedge-local field algebras of the deformed quantum field theory. Their basic properties are collected in the following theorem.

\begin{theorem}\label{thm:deformed-net}
The above constructed map $W\longmapsto\fF(W)_\kappa$, $W\in\W$, is a well-defined, isotonous, twisted wedge-local, $\iso$-covariant net of $C^*$-algebras on $\Hil$, {\it i.e.} $W,\Wti\in\W$,
\begin{align}
& \fF(W)_\kappa \subset \fF(\Wti)_\kappa \hspace*{19mm}\text{for } W\subset\Wti\,,\label{isotony2}
\\
& [Z F_\kappa Z^*,\, G_\kappa] = 0\hspace*{19mm}\text{for } F_\kappa\in\fF(W)_\kappa, G_\kappa\in\fF(W')_\kappa\,,\label{locality2}
\\
& \alpha_h(\fF(W)_\kappa) = \fF(hW)_\kappa\,,\hspace*{8mm} h\in \iso\,.\label{covariance2}
\end{align}
For $\kappa=0$, this net coincides with the original net, $\fF(W)_0=\fF(W)$, $W\in\W$.
\end{theorem}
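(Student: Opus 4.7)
The proof will proceed by verifying, in order, well-definedness of the assignment $W\mapsto\fF(W)_\kappa$, reduction to the original net at $\kappa=0$, $\iso$-covariance \eqref{covariance2}, isotony \eqref{isotony2}, and twisted wedge-locality \eqref{locality2}. The main tools are the algebraic properties of the warped convolution (Lemma \ref{thm:deformationproperties}), the twist lemma (Lemma \ref{lemma:twist}), the behaviour of $F\mapsto F_{\xi,\kappa}$ under isometries and $\GL(2,\Rl)$-rescalings (Lemma \ref{lemma:deformed-operators}), together with the geometric Propositions \ref{prop:wedge-properties} (the causal complement inverts the Killing pair, $W'_{\xi,p}=W_{\xi',p}$, and each wedge is invariant under its own Killing flow) and \ref{prop:inclusions} (characterisation of inclusions via a positive-determinant rescaling).

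The first and hardest point is well-definedness. Given $W\in\W_{[\xi]}$ with chosen representative $\xi$, the wedge $W$ can be written either as $h W_{\xi,p}$ or as $h W_{\xi,p}'=hW_{\xi',p}$, and the same wedge can admit several such parametrisations. Suppose $hW_{\xi,p}=\hti W_{\xi,\pti}$ and set $\psi:=\hti^{-1}h$. Then $\psi W_{\xi,p}=W_{\psi_*\xi,\psi(p)}=W_{\xi,\pti}$, so Proposition \ref{prop:inclusions} forces $\psi_*\xi = N\xi$ with $\det N>0$, and Lemma \ref{lemma:group-xi} further restricts $N$ to $\GL(2,\Rl)\cap{\rm O}(2)$, giving $\det N=1$. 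Combining parts (a) and (b) of Lemma \ref{lemma:deformed-operators}, for any $\xi$-smooth $F\in\fF(W_{\xi,p})$ one obtains $\alpha_\psi(F_{\xi,\kappa})=\alpha_\psi(F)_{\psi_*\xi,\kappa}=\alpha_\psi(F)_{N\xi,\kappa}=\alpha_\psi(F)_{\xi,\det N\cdot\kappa}=\alpha_\psi(F)_{\xi,\kappa}$, and $\alpha_\psi(F)\in\fF(W_{\xi,\pti})$ is again $\xi$-smooth; hence the two definitions of $\fF(W)_\kappa$ coincide. The mixed case $hW_{\xi,p}=\hti W_{\xi,\pti}'$ reduces to the previous one by means of $W_{\xi,\pti}'=W_{\xi',\pti}$ together with the identity $F_{\xi',\kappa}=F_{\xi,-\kappa}$ from Lemma \ref{lemma:deformed-operators}(b), which is precisely the reason the sign of $\kappa$ has to be reversed in \eqref{def:awla-2}. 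At $\kappa=0$ the warped convolution collapses to $F_{\xi,0}=F$ on a norm-dense set of $\xi$-smooth elements, so $\fF(W)_0=\fF(W)$; and covariance \eqref{covariance2} follows immediately from the definition \eqref{def:AhWla} once well-definedness is in hand, since $\alpha_h\circ\alpha_k=\alpha_{hk}$.

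For isotony, let $W\subset\Wti$ be two wedges. By Proposition \ref{prop:inclusions}, $W$ and $\Wti$ belong to the same coherent subfamily $\W_{[\xi]}$, and after applying the common isometry used to bring $\Wti$ to the form $\Wti=W_{\xi,\pti}$ (using the chosen representative $\xi$), the inclusion takes the shape $W_{\xi,p}\subset W_{\xi,\pti}$ (possibly after absorbing a positive-determinant rescaling, which by Lemma \ref{lemma:deformed-operators}(b) does not affect the deformed algebra, since $\det N=1$ on ${\rm O}(2)$). In the undeformed theory $\fF(W_{\xi,p})\subset\fF(W_{\xi,\pti})$ by isotony of the original net, and every $\xi$-smooth element of the smaller algebra is also a $\xi$-smooth element of the larger one. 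Since the deformation $F\mapsto F_{\xi,\kappa}$ is performed with the same Killing pair and parameter on both sides, the inclusion is preserved on generators, and passing to the norm closure yields \eqref{isotony2}; inclusions involving causally inverted wedges are handled in the same way via $W_{\xi,p}'=W_{\xi',p}$.

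Finally, twisted locality is reduced to its undeformed version via Lemma \ref{lemma:twist}. It suffices to treat a pair $W=W_{\xi,p}$ and $W'=W_{\xi',p}$ (Proposition \ref{prop:wedge-properties}(b)); the general case follows by covariance. For $\xi$-smooth $F\in\fF(W)$ and $\xi'$-smooth $G\in\fF(W')$ we note that $\tau_{\xi,s}(F)\in\fF(\flow_{\xi,s}(W))=\fF(W)$ for every $s\in\Rl^2$, because $W$ is invariant under its own Killing flow by Proposition \ref{prop:wedge-properties}(c). The undeformed twisted locality \eqref{twisted-locality} therefore gives $[Z\tau_{\xi,s}(F)Z^*,G]=0$ for all $s\in\Rl^2$, which is the hypothesis of Lemma \ref{lemma:twist}; applying it yields $[ZF_{\xi,\kappa}Z^*,G_{\xi,-\kappa}]=0$, and the identity $G_{\xi',\kappa}=G_{\xi,-\kappa}$ of Lemma \ref{lemma:deformed-operators}(b) rewrites this as $[ZF_\kappa Z^*,G_\kappa]=0$ on the generating sets of \eqref{def:awla}--\eqref{def:awla-2}. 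The relation extends to all of $\fF(W)_\kappa,\fF(W')_\kappa$ by norm continuity and the $*$-algebra operations. The step I expect to be the most delicate is the well-definedness argument, since it requires simultaneously controlling the ambiguity of the parametrisation $W=hW_{\xi,p}$, the interaction between the $\iso$- and $\GL(2,\Rl)$-actions on $\Xi$, and the sign change in $\kappa$ induced by the flip $\xi\leftrightarrow\xi'$; everything else is essentially a direct application of Lemmas \ref{thm:deformationproperties}--\ref{lemma:deformed-operators} once the geometry has been disentangled.
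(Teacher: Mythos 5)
Your proposal is correct and follows essentially the same route as the paper's proof: it rests on the same ingredients (Lemma \ref{lemma:group-xi}, Proposition \ref{prop:inclusions}, Lemma \ref{lemma:deformed-operators} a),b) with the sign flip $F_{\xi',\kappa}=F_{\xi,-\kappa}$, and Lemma \ref{lemma:twist} together with the flow-invariance of wedges), the only organisational difference being that you establish well-definedness first while the paper obtains it as a by-product of the isotony argument. One small wording slip: the conclusion $\det N=1$ comes from combining $\det N>0$ (Proposition \ref{prop:inclusions}) with $\det N=\pm1$ (Lemma \ref{lemma:group-xi}), as you state in the well-definedness step, not from ``$\det N=1$ on ${\rm O}(2)$'' as written in your isotony paragraph.
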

\begin{proof}
It is important to note from the beginning that all claimed properties relate only wedges in the same coherent subfamily $\W_{[\xi]}$. This can be seen from the form \eqref{def:Wxi} of $\W_{[\xi]}$, which is manifestly invariant under isometries and causal complementation, and the structure of the inclusions (Proposition \ref{prop:inclusions}). So in the following proof, it is sufficient to consider a fixed but arbitrary equivalence class $[\xi]$, with selected representative $\xi$.

We begin with establishing the isotony of the deformed net, and therefore consider inclusions of wedges of the form $hW_{\xi,p}, hW_{\xi,p}'$, with $h\in\iso$, $p\in M$ arbitrary, and $\xi\in\Xi$ fixed. Starting with the inclusions  $hW_{\xi,p}\subseteq\hti W_{\xi,\pti}$, we note that according to \eqref{h-W} and Prop. \ref{prop:inclusions}, there exists $N\in\GL(2,\Rl)$ with positive determinant such that $h_*\xi=N\hti_*\xi$. Equivalently, $(\hti^{-1}h)_*\xi=N\xi$, which by Lemma \ref{lemma:group-xi} implies $\det N=1$. By definition, a generic $\xi$-smooth element of $\fF(hW_{\xi,p})_\kappa$ is of the form $\alpha_h(F_{\xi,\kappa})=\alpha_h(F)_{h_*\xi,\kappa}$ with some $\xi$-smooth $F\in\fF(W_{\xi,p})$. But according to the above observation, this can be rewritten as
\begin{align}\label{calc1}
\alpha_h(F_{\xi,\kappa})
=
\alpha_h(F)_{h_*\xi,\kappa}
=
\alpha_h(F)_{N\hti_*\xi,\kappa}
=
\alpha_h(F)_{\hti_*\xi,\kappa}
\,,
\end{align}
where in the last equation we used $\det N=1$ and Lemma \ref{lemma:deformed-operators} b). Taking into account that $hW_{\xi,p}\subseteq\hti W_{\xi,\pti}$, and that the undeformed net is covariant and isotonous, we have $\alpha_h(F)\in\fF(hW_{\xi,p})\subset\fF(\hti W_{\xi,p})$, and so the very right hand side of \eqref{calc1} is an element of $\fF(\hti W_{\xi,p})_\kappa$. Going to the norm closures, the inclusion $\fF(h W_{\xi,p})_\kappa \subset \fF(\hti W_{\xi,p})_\kappa$ of C$^*$-algebras follows.

Analogously, an inclusion of causal complements, $hW_{\xi,p}'\subseteq\hti W_{\xi,\pti}'$, leads to the inclusion of C$^*$-algebras $\fF(h W_{\xi,p}')_\kappa \subset \fF(\hti W_{\xi,p}')_\kappa$, the only difference to the previous argument consisting in an exchange of $h$,$\hti$ and $p,\pti$.

To complete the investigation of inclusions of wedges in $\W_{[\xi]}$, we must also consider the case $hW_{\xi,p}\subseteq \hti W_{\xi,\pti}'=W_{\hti_*\xi',\pti}$. By the same reasoning as before, there exists a matrix $N$ with $\det N=1$ such that $(\hti^{-1}h)_*\xi=N\xi'=N\Pi\xi$ with the flip matrix $\Pi$. So  $N':=N\Pi$ has determinant $\det N'=-1$, and $h_*\xi = N'\hti_*\xi'$. Using \eqref{Axi-prime}, we find for $\xi$-smooth $F\in\fF(hW_{\xi,p})$,
\begin{align}
\alpha_h(F_{\xi,\kappa})
=
\alpha_h(F)_{h_*\xi,\kappa}
=
\alpha_h(F)_{N'\hti_*\xi',\kappa}
=
\alpha_h(F)_{\hti_*\xi',-\kappa}
\,.
\end{align}
By isotony and covariance of the undeformed net, this deformed operator is an element of $\fF(\hti W_{\xi,\pti}')_\kappa$ \eqref{def:awla-2}, and taking the norm closure in \eqref{def:awla-2} yields $\fF(hW_{\xi,p})_\kappa \subset\fF(\hti W_{\xi,\pti}')_\kappa$. So the isotony \eqref{isotony2} of the net is established. This implies in particular that the net $\fF_\kappa$ is well-defined, since in case $hW_{\xi,p}$ equals $\hti W_{\xi,\pti}$ or its causal complement, the same arguments yield the equality of $\fF(hW_{\xi,p})_\kappa$ and $\fF(\hti W_{\xi,\pti})_\kappa$ respectively $\fF(\hti W_{\xi,\pti}')_\kappa$.

The covariance of $W\mapsto\fF(W)_\kappa$ is evident from the definition. To check twisted locality, it is thus sufficient to consider the pair of wedges $W_{\xi,p}$, $W_{\xi,p}'$. In view of the definition of the C$^*$-algebras $\fF(W)_\kappa$ \eqref{def:awla} as norm closures of algebras of deformed smooth operators, it suffices to show that any $\xi$-smooth $F\in\fF(W_{\xi,p})$, $G\in\fF(W_{\xi',\pti})$ fulfill the commutation relation
\begin{align}\label{comm-thm}
 [Z F_{\xi,\kappa}Z^*,\,G_{\xi',\kappa}]=0\,.
\end{align}
But $G_{\xi',\kappa}=G_{\xi,-\kappa}$ \eqref{Axi-prime}, and since the undeformed net is twisted local and covariant, we have $[\tau_{\xi,s}(F),\,G]=0$, for all $s\in\Rl^2$, which implies $[Z F_{\xi,\kappa}Z^*,\,G_{\xi,-\kappa}]=0$ by Lemma \ref{lemma:twist}.

The fact that setting $\kappa=0$ reproduces the undeformed net is a straightforward consequence of $F_{\xi,0}=F$ for any $\xi$-smooth operator, $\xi\in\Xi$.
\end{proof}

Theorem \ref{thm:deformed-net} is our main result concerning the structure of deformed quantum field theories on admissible spacetimes: It states that the same covariance and localization properties as on flat spacetime can be maintained in the curved setting. Whereas the action of the isometry group and the chosen representation space of $\fF$ are the same for all values of the deformation parameter $\kappa$, the concrete C$^*$-algebras $\fF(W)_\kappa$ depend in a non-trivial and continuous way on $\kappa$: For a fixed wedge $W$, the collection $\{\fF(W)_\kappa\,:\,\kappa\in\Rl\}$ forms a continuous field of C$^*$-algebras \cite{Dixmier:1977}; this follows from Rieffel's results \cite{Rieffel:1992} and the fact that $\fF(W)_\kappa$ forms a faithful representation of Rieffel's deformed C$^*$-algebra $(\fF(W),\times_\kappa)$ \cite{BuchholzLechnerSummers:2010}.

For deformed nets on Minkowski space, there also exist proofs showing that the {\em net} $W\mapsto\fF(W)_\kappa$ depends on $\kappa$, for example by working in a vacuum representation and calculating the corresponding collision operators \cite{BuchholzSummers:2008}. There one finds as a striking effect of the deformation that the interaction depends on $\kappa$, {\em i.e.} that deformations of interaction-free models have non-trivial S-matrices. However, on generic curved spacetimes, a distinguished invariant state like the vacuum state with its positive energy representation of the translations does not exist. Consequently, the result concerning scattering theory cannot be reproduced here. Instead we will establish the non-equivalence of the undeformed net $W\mapsto \fF(W)$ and the deformed net $W\mapsto \fF(W)_\kappa$, $\kappa\neq0$, in a concrete example model in Section \ref{sec:dirac}.

As mentioned earlier, the family of wedge regions $\W(M,\gST)$ is causally separating in a subclass of admissible spacetimes, including the Friedmann-Robertson-Walker universes. In this  case, the extension of the net $\fF_\kappa$ to double cones or similar regions $\OO\subset M$ via
\begin{align}
 \fF(\OO)_\kappa := \bigcap_{W\supset\OO}\fF(W)_\kappa
\end{align}
is still twisted local. These algebras contain all operators localized in the region $\OO$ in the deformed theory. On other spacetimes $(M,\gST)$, such an extension is possible only for special regions, intersections of wedges, whose shape and size depend on the structure of $\W(M,\gST)$.

 Because of the relation of warped convolution to noncommutative spaces, where sharp localization is impossible, it is expected that $\fF(\OO)$ contains only multiples of the identity if $\OO$ has compact closure. We will study this question in the context of the deformed Dirac field in Section \ref{sec:dirac}.
\\\\
We conclude this section with a remark concerning the relation between the field and observable net structure of deformed quantum field theories. The field net $\fF$ is composed of Bose and Fermi fields, and therefore contains observable as well as unobservable quantities. The former give rise to the {\em observable net} $\frA$ which consists of the subalgebras invariant under the grading automorphism $\gamma$. In terms of the projection $v(F):=\frac{1}{2}(F+\gamma(F))$, the observable wedge algebras are
\begin{align}\label{def:AW}
 \frA(W) := \{F\in\fF(W)\,:\, F=\gamma(F)\} = v(\fF(W))
\,,\qquad
W\in\W\,,
\end{align}
so that $\frA(W)$, $\frA(\Wti)$ commute (without twist) if $W$ and $\Wti$ are spacelike separated.

Since the observables are the physically relevant objects, we could have considered a deformation $\frA(W)\to\frA(W)_\kappa$ of the observable wedge algebras along the same lines as we did for the field algebras. This approach would have resulted precisely in the $\gamma$-invariant subalgebras of the deformed field algebras $\fF(W)_\kappa$, {\em i.e.} the diagram
$$
\begin{CD}
\fF(W)   @>\text{deformation\;}>>  \fF(W)_\kappa\\
@VvVV  @VVvV\\
\frA(W)  @>\text{deformation\;}>>  \frA(W)_\kappa
\end{CD}
$$
commutes. This claim can quickly be verified by noting that the projection $v$ commutes with the deformation map $F\mapsto F_{\xi,\kappa}$.

\subsection{The Dirac field and its deformation}\label{sec:dirac}

After the model-independent description of deformed quantum field theories carried out in the previous section, we now consider the theory of a free Dirac quantum field as a concrete example model. We first briefly recall the notion of Dirac (co)spinors and the classical Dirac equation, following largely \cite{DappiaggiHackPinamonti:2009, Sanders:2008} and partly \cite{Dimock:1982,FewsterVerch:2002}, where the proofs of all the statements below are presented and an extensive description of the relevant concepts is available. Afterwards, we consider the quantum Dirac field using Araki's self-dual CAR-algebra formulation \cite{Araki:1971}.
\\
\\
As before, we work on a fixed but arbitrary admissible spacetime $(M,\gST)$ in the sense of Definition \ref{admissible} and we fix its orientation. Therefore, as a four-dimensional, time oriented and oriented, globally hyperbolic spacetime, $M$ admits a {\em spin structure} $(SM,\rho)$, consisting of a principle bundle $SM$ over $M$ with structure group $\SL(2,\Cl)$, and a smooth bundle homomorphism $\rho$ projecting $SM$ onto the frame bundle $FM$, which is a principal bundle over $M$ with $\SO(1,3)_0$ as structure group. The map $\rho$ preserves base points and is equivariant in the sense that it intertwines the natural right actions $R$ of $\SL(2,\Cl)$ on $SM$ and of $\SO(1,3)_0$ on $FM$, respectively,
\begin{align}
\rho\circ R_{\widetilde\Lambda} = R_\Lambda\circ\rho,\qquad\Lambda\in \SO(1,3)_0\,,
\end{align}
with the covering homomorphism $\Lambda\mapsto\widetilde\Lambda$ from $\SL(2,\Cl)$ to $\SO(1,3)_0$.

Although each spacetime of the type considered here has a spin structure \cite[Thm.2.1, Lem.2.1]{DappiaggiHackPinamonti:2009}, this is only unique if the underlying manifold is simply connected \cite{Geroch:1968,Geroch:1970}, {\em i.e.} if all edges are simply connected in the case of an admissible spacetime. In the following, it is understood that a fixed choice of spin structure has been made.

The main object we shall be interested in is the {\em Dirac bundle}, that is the associated vector bundle
\begin{align}
 DM:= SM\times_T\Cl^4
\end{align}
with the representation $T:=D^{(\frac{1}{2},0)}\oplus D^{(0,\frac{1}{2})}$ of $\SL(2,\Cl)$ on $\Cl^4$. {\em Dirac spinors} $\psi$ are smooth global sections of $DM$, and the space they span will be denoted $\mathcal{E}(DM)$. The dual bundle $D^*M$ is called the {\em dual Dirac bundle}, and its smooth global sections $\psi'\in\mathcal{E}(D^*M)$ are referred to as {\em Dirac cospinors}.

For the formulation of the Dirac equation, we need two more ingredients. The first are the so-called {\em gamma-matrices}, which are the coefficients of a global tensor $\gamma\in\mathcal{E}(T^*M\otimes DM\otimes D^*M)$ such that $\gamma=\gamma_{aB}^Ae^a\otimes E_A\otimes E^B$. Here $E_A$ and $E^B$ with $A,B=1,...,4$ are four global sections of $DM$ and $D^*M$ respectively, such that $(E_A,E^B)=\delta_A^B$, with $(.\,,.)$ the natural pairing between dual elements. Notice that $E_A$ descends also from a global section $E$ of $SM$ since we can define $E_A(x):=[(E(x),z_A)]$ where $z_A$ is the standard basis of $\Cl^4$. At the same time, out of $E$, we can construct $e_a$, with $a=0,...,3$, as a set of four global sections of $TM$ once we define $e:=\rho\circ E$ as a global section of the frame bundle, which, in turn, can be read as a vector bundle over $TM$ with $\Rl^4$ as typical fibre. The set of all $e_a$ is often referred to as the non-holonomic basis of the base manifold. In this case upper indices are defined via the natural pairing over $\Rl^4$, that is $(e^b,e_a)=\delta_a^b$. Furthermore we choose the gamma-matrices to be of the following form:
\begin{align}
\gamma_0=\left(\begin{array}{cc}
1_2 & 0 \\
0 & -1_2
\end{array}\right),\qquad \gamma_k=\left(\begin{array}{cc}
0 & \sigma_k \\
-\sigma_k & 0
\end{array}\right),
\qquad k=1,2,3,
\end{align}
where the $\sigma_k$ are the Pauli matrices, and $1_n$ denotes the $n\times n$ identity matrix. These matrices fulfill the anticommutation relation $\left\{\gamma_a,\gamma_b\right\}=2\eta_{ab}1_4,$ with the flat Minkowski metric $\eta$. They therefore depend on the sign convention in the metric signature, and differ from those introduced in \cite{DappiaggiHackPinamonti:2009}, where a different convention was used.

The last ingredient we need to specify is the {\em covariant derivative} (spin connection) on the space of smooth sections of the Dirac bundle, that is
\begin{align}
\nabla:\mathcal{E}(DM)\ra\mathcal{E}(T^*M\otimes DM),
\end{align}
whose action on the sections $E_A$ is given as $\nabla E_A=\sigma_{aA}^B e^a E_B$. The connection coefficients can be expressed as $\sigma^B_{aA}=\frac{1}{4}\Gamma^b_{ad}\gamma_{bC}^B\gamma^{dC}_A$, where $\Gamma^b_{ad}$ are the coefficients arising from the Levi-Civita connection expressed in terms of non-holonomic basis \cite[Lem.2.2]{DappiaggiHackPinamonti:2009}.
\\\\
We can now introduce the Dirac equation for spinors $\psi\in\mathcal{E}(DM)$ and cospinors  $\psi'\in\mathcal{E}(D^*M)$ as
\begin{align}\label{Diraceq}
D\psi  &:=(-i\gamma^\mu\nabla_\mu+m1)\psi=0\\
D'\psi'&:=(+i\gamma^\mu\nabla_\mu+m1)\psi'=0\,,
\end{align}
where $m\geq0$ is a constant while $1$ is the identity on the respective space.

The Dirac equation has unique advanced and retarded fundamental solutions \cite{DappiaggiHackPinamonti:2009}: Denoting the smooth and compactly supported sections of the Dirac bundle by $\mathcal{D}(DM)$, there exist two continuous linear maps
$$S^\pm:\mathcal{D}(DM)\ra\mathcal{E}(DM),$$
such that $S^\pm D=D S^\pm = 1$ and $\supp(S^\pm f)\subseteq J^\pm (\supp(f))$ for all $f\in\mathcal{D}(DM)$. In the case of cospinors, we shall instead talk about $S_*^\pm:\mathcal{D}(D^*M)\ra\mathcal{E}(D^*M)$ and they have the same properties of $S^\pm$, except that $S_*^\pm D'=D'S_*^\pm=1$. In analogy with the theory of real scalar fields, the {\em causal propagators} for Dirac spinors and cospinors are defined as $S := S^+-S^-$ and $S_*:=S^+_*-S^-_*$, respectively.

For the formulation of a quantized Dirac field, it is advantageous to collect spinors and cospinors in a single object. We therefore introduce the space
\begin{align}
 \DD := \DD(DM\oplus D^*M)\,,
\end{align}
on which we have the conjugation
\begin{align}
 \Gamma(f_1\oplus f_2) := f_1^*\beta \oplus \beta^{-1}f_2^*\,,
\end{align}
defined in terms of the adjoint $f\mapsto f^*$ on $\Cl^4$ and the Dirac conjugation matrix $\beta$. This matrix is the unique selfadjoint element of $\SL(4,\Cl)$ with the properties that $\gamma^*_a=-\beta\gamma_a\beta^{-1}$, $a=0,...,3$, and $i\beta n^a\gamma_a$ is a positive definite matrix, for any timelike future-directed vector $n$.

Due to these properties, the sesquilinear form on $\DD$ defined as
\begin{equation}\label{sesquilinear}
(f_1 \oplus f_2,\, g_1\oplus g_2)
:=
- i\langle f_1^*\beta, Sg_1\rangle
+ i\langle S_* g_2, \beta^{-1}f^*_2\rangle,
\end{equation}
where $\langle.\,,.\rangle$ is the global pairing between $\cal{E}(DM)$ and $\DD(D^*M)$ or between $\cal{E}(D^*M)$ and $\DD(DM)$, is positive semi-definite. Therefore the quotient
\begin{align}
 \K := \DD/(\ker S\oplus \ker S_*)\,.
\end{align}
has the structure of a pre-Hilbert space, and we denote the corresponding scalar product and norm by $\langle\,.\,,\,.\,\rangle_S$ and $\|f\|_S:=\langle f,f\rangle_S^{1/2}$. The conjugation $\Gamma$ descends to the quotient $\K$, and we denote its action on $\K$ by the same symbol. Moreover, $\Gamma$ is compatible with the sesquilinear form \eqref{sesquilinear} in such a way that it extends to an antiunitary involution $\Gamma=\Gamma^*=\Gamma^{-1}$ on the Hilbert space $\overline{\K}$ \cite[Lem.4.2.4]{Sanders:2008}.

Regarding covariance, the isometry group $\iso$ of $(M,\gST)$ naturally acts on the sections in $\DD$ by pullback. In view of the geometrical nature of the causal propagator, this action descends to the quotient $\K$ and extends to a unitary representation $u$ of $\iso$ on the Hilbert space $\overline{\K}$.

Given the pre-Hilbert space $\K$, the conjugation $\Gamma$, and the representation $u$ as above, the quantized Dirac field can be conveniently described as follows \cite{Araki:1971}.  We consider the C$^*$-algebra $\CAR(\K,\Gamma)$, that is, the unique unital C$^*$-algebra generated by the symbols $B(f)$, $f\in\K$, such that, for all $f,g\in\K$,
\begin{enumerate}
\item $f \longmapsto B(f)$ is complex linear,
\item $B(f)^* = B(\Gamma f)$,
\item $\{B(f),\,B(g)\} = \langle\Gamma f,g\rangle_S\cdot 1$.
\end{enumerate}
The field equation is implicit here since we took the quotient with respect to the kernels of $S,S_*$. The standard picture of spinors and cospinors can be recovered via the identifications $\psi(g)=B(0\oplus g)$ and $\psi^\dagger(f)=B(f\oplus 0)$.

As is well known, the Dirac field satisfies the standard assumptions of quantum field theory, and we briefly point out how this model fits into the general framework used in Section \ref{sec:generaldeformations}. The global field algebra $\fF:=\CAR(\K,\Gamma)$ carries a natural $\iso$-action $\alpha$ by Bogoliubov transformations: Since the unitaries $u(h)$, $h\in\iso$, commute with the conjugation $\Gamma$, the maps
\begin{align*}
 \alpha_h(B(f)) := B(u(h)f)\,,\qquad f\in\K,
\end{align*}
extend to automorphisms of $\fF$. Similarly, the grading automorphism $\gamma$ is fixed by
\begin{align*}
 \gamma(B(f)) := -B(f)\,,
\end{align*}
and clearly commutes with $\alpha_h$. The field algebra is faithfully represented on the Fermi Fock space $\Hil$ over $\overline{\K}$, where the field operators take the form
\begin{align}
 B(f) = \frac{1}{\sqrt{2}}(a^\dagger(f)+a(\Gamma f))\,,
\end{align}
with the usual Fermi Fock space creation/annihilation operators $a^{\#}(f)$, $f\in\K$. In this representation, the second quantization $U$ of $u$ implements the action $\alpha$. The Bose/Fermi-grading can be implemented by $V=(-1)^N$, where $N\in\B(\Hil)$ is the Fock space number operator \cite{Foit:1983}.

Regarding the regularity assumptions on the symmetries, recall that the anticommutation relations of the CAR algebra imply that $\|a(f)\|=\|f\|_S$, and thus $f\mapsto B(f)$ is a linear continuous map from $\K$ to $\fF$. As $s\mapsto u_\xi(s)f$ is smooth in the norm topology of $\K$ for any $\xi\in\Xi$, $f\in\K$, this implies that the field operators $B(f)$ transform smoothly under the action $\alpha$. Furthermore, the unitarity of $u$ yields strong continuity of $\alpha$ on all of $\fF$, as required in Section \ref{sec:generaldeformations}.

The field algebra $\fF(W)\subset\fF$ associated with a wedge $W\in\W$ is defined as the unital C$^*$-algebra generated by all $B(f)$, $f\in\K(W)$, where $\K(W)$ is the set of (equivalence classes of) smooth and compactly supported sections of $DM\oplus D^*M$ with support in $W$. Since $\langle \Gamma f,g\rangle_S=0$ for $f\in\K(W)$, $g\in\K(W')$, we have $\{B(f),B(g)\}=0$ for $f\in\K(W)$, $g\in\K(W')$, which implies the twisted locality condition \eqref{twisted-locality}. Isotony is clear from the definition and covariance under the isometry group follows from $u(h)\K(W)=\K(hW)$.
\\\\
The model of the Dirac field therefore fits into the framework of Section \ref{sec:generaldeformations}, and the warped convolution deformation defines a one-parameter family of deformed nets $\fF_\kappa$. Besides the properties which were established in the general setting in Section \ref{sec:generaldeformations}, we can here consider the explicit deformed field $B(f)_{\xi,\kappa}$. A nice characterization of these operators can be given in terms of their $n$-point functions associated with a quasifree state $\om$ on $\fF$.

Let $\om$ be an $\iso$-invariant quasifree state on $\fF$, and let ($\Hil^\om,\pi^\om,\Om^\om)$ denote the associated GNS triple. As a consequence of invariance of $\om$, the GNS space $\Hil^\om$ carries a unitary representation $U^\om$ of $\iso$ which leaves $\Om^\om$ invariant. In this situation, the warping map
\begin{align}\label{warp-om}
F_{\xi,\kappa} \mapsto F_{\xi,\kappa}^\om
:=
\frac{1}{4\pi^2}\lim_{\eps\ra0}
\int ds\,ds'\,
e^{-iss'}\,
\chi(\eps s,\eps s')\,
U^\om_\xi(\kappa\theta  s)\pi^\om(F)U_\xi^\om(s'-\kappa\theta s)
\,,
\end{align}
defined for $\xi$-smooth $F\in\fF$ as before, extends continuously to a representation of the Rieffel-deformed C$^*$-algebra $(\fF,\times_{\xi,\kappa})$ on $\Hil^\om$ \cite[Thm.2.8]{BuchholzLechnerSummers:2010}. Moreover, the $U^\om$-invariance of $\Om^\om$ implies
\begin{align}\label{FxiOm}
F^\om_{\xi,\kappa}\Om^\om
=
\pi^\om(F)\Om^\om
\,,\qquad \xi\in\Xi,\kappa\in\Rl,\,F\in\fF\;\;\xi\text{-smooth}.
\end{align}
Since the CAR-algebra is simple, all its representations are faithful \cite{BratteliRobinson:1997}. We will therefore identify $\fF$ with its representation $\pi^\om(\fF)$ in the following, and drop the $\om$-dependence of $\Hil^\om,\Om^\om,U^\om$ and the warped convolutions $F^\om_{\xi,\kappa}$ from our notation.

To characterize the deformed field operators $B(f)_{\xi,\kappa}$, we will consider the $n$-point functions
\begin{align*}
\om_n(f_1,...\,,f_n)
:=
\om(B(f_1)\cdots B(f_n))
=
\langle\Om,\,B(f_1)\cdots B(f_n)\Om\rangle
\,,\qquad
f_1,...\,,f_n\in\K\,,
\end{align*}
and the corresponding deformed expectation values of the deformed fields, called {\em deformed $n$-point functions},
\begin{align*}
\om^{\xi,\kappa}_n(f_1,...\,,f_n)
:=
\langle\Om,\,B(f_1)_{\xi,\kappa}\cdots B(f_n)_{\xi,\kappa}\Om\rangle
\,,\qquad
f_1,...\,,f_n\in\K\,.
\end{align*}
Of particular interest are the {\em quasifree} states, where $\om_n$ vanishes if $n$ is odd, and  $\om_n$ is a linear combination of products of two-point functions $\om_2$ if $n$ is even. In particular, the undeformed four-point function of a quasifree state reads
\begin{align}\label{4pt-null}
\!\!\!\!\!\!
\om_4(f_1,f_2,f_3,f_4)
=
\om_2(f_1,f_2)\om_2(f_3,f_4)
+\om_2(f_1,f_4)\om_2(f_2,f_3)
-\om_2(f_1,f_3)\om_2(f_2,f_4)
\,.
\!\!\!\!
\end{align}

\begin{proposition}\label{prop:n-pt}
 The deformed $n$-point functions of a quasifree and $\iso$-invariant state vanish for odd $n$. The lowest deformed even $n$-point functions are: $f_1,...,f_4\in\K$,
\begin{align}
 \om_2^{\xi,\kappa}(f_1,f_2) &= \om_2(f_1,f_2)
\,,\label{2pt}
\\
\omega_4^{\xi,\kappa}(f_1,f_2,f_3,f_4)
&=
\om_2(f_1,f_2)\om_2(f_3,f_4)
+\om_2(f_1,f_4)\om_2(f_2,f_3)
\label{4pt}
\\
&
- \frac{1}{4\pi^2}
\lim_{\varepsilon\ra 0}
\int ds\,ds'\,
e^{-iss'}
\chi(\eps s,\eps s')
\,
\om_2(f_1,u_\xi(s)f_3)\cdot \om_2(f_2,u_\xi(2\kappa\theta  s')f_4)
\,.
\nonumber
\end{align}
\end{proposition}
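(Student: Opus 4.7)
The plan is to reduce the deformed $n$-point functions to oscillatory integrals of undeformed $n$-point functions via the multiplication formula for warped convolutions, and then to evaluate those integrals explicitly using the quasifree property of $\omega$. As a preliminary observation, each $B(f)$, $f\in\K$, is $\xi$-smooth for every $\xi\in\Xi$, because $\tau_{\xi,s}(B(f))=B(u_\xi(s)f)$ and $s\mapsto u_\xi(s)f$ is smooth in the norm topology of $\K$ (since $f$ is represented by a smooth, compactly supported section and $u_\xi$ acts by pullback along the complete Killing flow). Iterating Lemma \ref{thm:deformationproperties}(b) gives
\begin{align*}
B(f_1)_{\xi,\kappa}\cdots B(f_n)_{\xi,\kappa} = \bigl(B(f_1)\times_{\xi,\kappa}\cdots\times_{\xi,\kappa}B(f_n)\bigr)_{\xi,\kappa}\,,
\end{align*}
and applying \eqref{FxiOm} to the outermost warping,
\begin{align*}
\omega_n^{\xi,\kappa}(f_1,\dots,f_n) = \omega\bigl(B(f_1)\times_{\xi,\kappa}\cdots\times_{\xi,\kappa}B(f_n)\bigr)\,.
\end{align*}

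Expanding the iterated Rieffel product \eqref{rieffel-product} produces an oscillatory integral in $n-1$ pairs of variables $(s_i,s_i')\in\Rl^2\times\Rl^2$. Using the $\iso$-invariance of $\omega$, the intertwining $\tau_\xi$-factors are absorbed and the integrand becomes an undeformed $n$-point function $\omega_n$ evaluated on shifted arguments $u_\xi(\,\cdot\,)f_k$. For odd $n$ this integrand vanishes identically by the quasifree property, so $\omega_n^{\xi,\kappa}\equiv 0$. For the two-point function, the shift is $u_\xi(s_1'-\kappa\theta s_1)f_2$; the substitution $v:=s_1'-\kappa\theta s_1$ leaves the phase $e^{-is_1 s_1'}$ invariant because $s_1\cdot\theta s_1=0$ by antisymmetry of $\theta$, and the standard oscillatory integral identity $\frac{1}{4\pi^2}\lim_{\eps\to 0}\int du\,dv\,e^{-iuv}\chi(\eps u,\eps v)F(u,v)=F(0,0)$ collapses the integral to $\omega_2(f_1,f_2)$.

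The four-point function is the main calculation. Inside the oscillatory integral, expand the undeformed $\omega_4$ via the Wick-like decomposition \eqref{4pt-null} into three pair products of $\omega_2$, and analyse each of the three contributions separately. The $(1,2)$--$(3,4)$ and $(1,4)$--$(2,3)$ pairings decouple, after appropriate substitutions exploiting $\theta^T=-\theta$, into independent oscillatory integrals of the two-point type, so each collapses to the undeformed expressions $\omega_2(f_1,f_2)\omega_2(f_3,f_4)$ and $\omega_2(f_1,f_4)\omega_2(f_2,f_3)$. The $(1,3)$--$(2,4)$ pairing is qualitatively different: its integrand couples the inner integration variables through the combinations $v-u$ and $v'-u'$ appearing in the two-point function arguments, and cannot be fully collapsed. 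After integrating out the decoupled variables via a centre-of-mass versus relative substitution, this term produces the displayed oscillatory integral in \eqref{4pt}, the coefficient $2$ in $u_\xi(2\kappa\theta s')f_4$ arising from the combination of the $\theta$-shifts of the outer and inner Rieffel products.

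The principal obstacle will be the combinatorial and analytical bookkeeping in the four-point computation: one must choose substitutions so that the phases combine into products of independent bilinear oscillations of the form $e^{-iss'}$, and the integrand factorises into pieces to which the oscillatory integral identity can be applied iteratively, while carefully tracking the signs and coefficients produced by $\theta^T=-\theta$ in order to arrive at precisely the form stated.
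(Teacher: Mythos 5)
Your strategy is correct and ends up in the same place as the paper's proof, but the reduction is organized differently. You turn the whole product into a single iterated Rieffel product via Lemma \ref{thm:deformationproperties} b) (using that Rieffel products of $\xi$-smooth elements are again $\xi$-smooth) and then remove the one remaining warping with \eqref{FxiOm}; the paper instead undeforms only the two \emph{outermost} operators, combining $(F_{\xi,\kappa})^*=(F^*)_{\xi,\kappa}$ with $F_{\xi,\kappa}\Om=F\Om$, and inserts the integral formula \eqref{def:Ala} only for the two middle factors. That trick makes \eqref{2pt} immediate, with no oscillatory integral at all, and reduces the four-point function to a two-pair integral in $(s_1,s_1',s_2,s_2')$, whereas your route already needs an oscillatory collapse for $n=2$ and produces three pairs of $\Rl^2$-variables for $n=4$; from there on your plan (quasifree decomposition \eqref{4pt-null} inside the integral, invariance of $\om$ to decouple the $(12)(34)$ and $(14)(23)$ pairings, a residual integral from the $(13)(24)$ pairing, with the factor $2$ coming from the $+\kappa\theta$-shift of $f_2$ meeting the $-\kappa\theta$-shift of $f_4$ after invariance) is exactly what the paper does, only with more variables to integrate out. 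So the approach is fine, just heavier than necessary; the substantive work you defer as ``bookkeeping'' is precisely the paper's displayed computation. One caution for carrying it out: the placement of the dilation matters, since $\int ds\,ds'\,e^{-iss'}\,\om_2(f_1,u_\xi(2\kappa\theta s)f_3)\,\om_2(f_2,u_\xi(s')f_4)$ coincides with $\int ds\,ds'\,e^{-iss'}\,\om_2(f_1,u_\xi(s)f_3)\,\om_2(f_2,u_\xi(-2\kappa\theta s')f_4)$ rather than with the same expression carrying $+2\kappa\theta$; moving the $2\kappa\theta$ between the $f_3$- and $f_4$-slots without the accompanying sign change amounts to replacing $\kappa$ by $-\kappa$, so this sign must be tracked carefully when matching your final expression to \eqref{4pt}.
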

\begin{proof}
The covariant transformation behaviour of the Dirac field, $U_\xi(s)B(f)U_\xi(s)^{-1}=B(u_\xi(s)f)$, the invariance of $\Om$ and the form \eqref{warp-om} of the warped convolution imply that any deformed $n$-point function can be written as an integral over undeformed $n$-point functions with transformed arguments. As the latter functions vanish for odd $n$, we also have $\om^{\xi,\kappa}_n=0$ for odd $n$.

Taking into account \eqref{FxiOm}, we obtain for the deformed two-point function
\begin{align*}
\om^{\xi,\kappa}_2(f_1,f_2)
&=
\langle\Om,\,B(f_1)_{\xi,\kappa}B(f_2)_{\xi,\kappa}\Om\rangle
\\
&=
\langle (B(f_1)^*)_{\xi,\kappa}\Om,\,B(f_2)_{\xi,\kappa}\Om\rangle
=
\langle B(f_1)^*\Om,\,B(f_2)\Om\rangle
=
\om_2(f_1,f_2)\,,
\end{align*}
proving \eqref{2pt}.

To compute the four-point function, we use again $B(f)_{\xi,\kappa}\Om=B(f)\Om$ and $U_\xi(s)\Om=\Om$. Inserting the definition of the warped convolution \eqref{def:Ala}, and using the transformation law $U_\xi(s)B(f)U_\xi(s)^{-1}=B(u_\xi(s)f)$ and the shorthand $f(s):=u_\xi(s)f$, we obtain
\begin{align*}
&\om^{\xi,\kappa}_4(f_1,f_2,f_3,f_4)
=
\langle\Om,\,B(f_1)B(f_2)_{\xi,\kappa}B(f_3)_{\xi,\kappa}B(f_4)\Om\rangle
\\
&=
(2\pi)^{-4}
\lim_{\eps_1,\eps_2\ra0}
\int d\bs\,
e^{-i(s_1s_1'+s_2s_2')}
\chi_\eps(\bs)
\,
\om_4(f_1,f_2(\kappa\theta s_1),f_3(\kappa\theta  s_2+s_1'),f_4(s_1'+s_2'))\,,
\end{align*}
where $d\bs=ds_1\,ds_1'\,ds_2\,ds_2'$ and $\chi_\eps(\bs)=\chi(\eps_1 s_1,\eps_1 s_1')
\chi(\eps_2 s_2,\eps_2 s_2')$. After the substitutions $s_2'\to s_2'-s_1'$ and $s_1'\to s_1'-\kappa\theta  s_2$, the integrations over $s_2,s_2'$ and the limit $\eps_2\ra0$ can be carried out. The result is
\begin{align*}
(2\pi)^{-2}
\lim_{\eps_1\ra0}
\int ds_1\,ds'_1\,
e^{-is_1s_1'}
\hat{\chi}(\eps_1 s_1,\eps_1 s_1')
\,
\om_4(f_1,f_2(\kappa\theta s_1),f_3(s_1'),f_4(s_1'-\kappa\theta  s_1))\,,
\end{align*}
with a smooth, compactly supported cutoff function $\hat{\chi}$ with $\hat{\chi}(0,0)=1$.

We now use the fact that $\om$ is quasi-free and write $\om_4$ as a sum of products of two-point functions \eqref{4pt-null}. Considering the term where $f_1,f_2$  and $f_3,f_4$ are contracted, in the second factor $\om_2(f_3(s'_1),f_4(s_1'-\kappa\theta s_1))$ the $s_1'$-dependence drops out because $\om$ is invariant under isometries. So the integral over $s_1'$ can be performed, and yields a factor $\delta(s_1)$ in the limit $\eps_1\ra0$. Hence all $\kappa$-dependence drops out in this term, as claimed in \eqref{4pt}.

Similarly, in the term where $f_1,f_4$  and $f_2,f_3$ are contracted, all integrations disappear after using the invariance of $\om$ and making the substitution $s_1'\to s_1'+\kappa\theta s_1$. Also this term does not depend on $\kappa$.

Finally, we compute the term containing the contractions $f_1,f_3$ and $f_2,f_4$,
\begin{align*}
&(2\pi)^{-2}
\lim_{\eps_1\ra0}
\int ds_1\,ds'_1\,
e^{-is_1s_1'}
\hat{\chi}(\eps_1 s_1,\eps_1 s_1')
\,
\om_2(f_1,f_3(s_1'))\cdot \om_2(f_2(\kappa\theta s_1),f_4(s_1'-\kappa\theta  s_1))
\\
&=
(2\pi)^{-2}
\lim_{\eps_1\ra0}
\int ds_1\,ds'_1\,
e^{-is_1s_1'}
\hat{\chi}(\eps_1 s_1,\eps_1 s_1')
\,
\om_2(f_1,f_3(s_1'))\cdot \om_2(f_2,f_4(s_1'-2\kappa\theta  s_1))
\\
&=
(2\pi)^{-2}
\lim_{\eps_1\ra0}
\int ds_1\,ds'_1\,
e^{-is_1s_1'}
\tilde{\chi}(\eps_1 s_1,\eps_1 s_1')
\,
\om_2(f_1,f_3(s_1'))\cdot \om_2(f_2,f_4(2\kappa\theta  s_1))
\,.
\end{align*}
In the last step, we substituted $s_1\to s_1+\frac{1}{2\kappa}\theta^{-1}s_1'$, used the antisymmetry of $\theta$ and absorbed the new variables in $\hat{\chi}$ in a redefinition of this function. Since the oscillatory integrals are independent of the particular choice of cutoff function, comparison with \eqref{4pt} shows that the proof is finished.
\end{proof}

The structure of the deformed $n$-point functions build from a quasifree state $\om$ is quite different from the undeformed case. In particular, the two-point function is undeformed, but the four-point function depends on the deformation parameter. For even $n>4$, a structure similar to the $n$-point functions on noncommutative Minkowski space \cite{GrosseLechner:2008} is expected, which are all $\kappa$-dependent. These features clearly show that the deformed field $B(f)_{\xi,\kappa}$, $\kappa\neq0$, differs from the undeformed field. Considering the commutation relations of the deformed field operators, it is also straightforward to check that the deformed field is not unitarily equivalent to the undeformed one.

However, this structure does not yet imply that the deformed and undeformed Dirac quantum field theories are inequivalent. For there could exist a unitary $V$ on $\Hil$ satisfying $VU(h)V^*=U(h)$, $h\in\iso$, $V\Om=\Om$, which does not interpolate the deformed and undeformed fields, but the C$^*$-algebras according to $V\fF(W)_\kappa V^*=\fF(W)$, $W\in\W$. If such a unitary exists, the two theories would be physically indistinguishable.

On flat spacetime, an indirect way of ruling out the existence of such an intertwiner $V$, and thus establishing the non-equivalence of deformed and undeformed theories, is to compute their S-matrices and show that these depend on $\kappa$ in a non-trivial manner. However, on curved spacetimes, collision theory is not available and we will therefore follow the more direct non-equivalence proof of \cite[Lem.4.6]{BuchholzLechnerSummers:2010}, adapted to our setting. This proof aims at showing that the local observable content of warped theories is restricted in comparison to the undeformed setting, as one would expect because of the connection to noncommutative spacetime. However, the argument requires a certain amount of symmetry, and we therefore restrict here to the case of a Friedmann-Robertson-Walker spacetime $M$.

As discussed in Section \ref{sec:examples}, $M$ can then be viewed as $J\times\Rl^3\subset\Rl^4$ via a conformal embedding, where $J\subset\Rl$ is an interval. Recall that in this case, we have the Euclidean group E$(3)$ contained in $\iso(M,\gST)$, and can work in global coordinates $(\tau,x,y,z)$, where $\tau\in J$ and $x,y,z\in\Rl$ are the flow parameters of Killing fields. As reference Killing pair, we pick $\zeta:=(\partial_y,\partial_z)$, and as reference wedge, the ``right wedge'' $W^0:=W_{\zeta,0}=\{(\tau,x,y,z)\,:\,\tau\in J,\,x>|\tau|\}$.

In this geometric context, consider the rotation $r^\varphi$ about angle $\varphi$ in the $x$-$y$-plane, and the cone
\begin{align}\label{def:cone}
 \C := r^\varphi W^0 \cap r^{-\varphi} W^0\,,
\end{align}
with some fixed angle $|\varphi|<\frac{\pi}{2}$. Clearly $\C\subset W^0$, and the reflected cone $j_x\C$, where $j_x(t,x,y,z)=(t,-x,y,z)$, lies spacelike to $W^0$ and $r^\varphi W^0$.

Moreover, we will work in the GNS-representation of a particular state $\om$ on $\fF$ for the subsequent proposition, which besides the properties mentioned above also has the Reeh-Schlieder property. That is, the von Neumann algebra $\fF(\C)''\subset\B(\Hil^\om)$ has $\Om^\om$ as a cyclic vector.

Since the Dirac field theory is a locally covariant quantum field theory satisfying the time slice axiom \cite{Sanders:2010}, the existence of such states can be deduced by spacetime deformation arguments \cite[Thm.4.1]{Sanders:2009-1}. As $M$ and the Minkowski space have unique spin structures, and $M$ can be deformed to Minkowski spacetime in such a way that its E$(3)$ symmetry is preserved, the state obtained from deforming the Poincar\'e invariant vacuum state on $\Rl^4$ is still invariant under the action of the Euclidean group.

In the GNS representation of such a Reeh-Schlieder state on a Friedmann-Robertson-Walker spacetime, we find the following non-equivalence result.

\begin{proposition}\label{prop:inequivalence}
Consider the net $\fF_\kappa$ generated by the deformed Dirac field on a Fried\-mann-Robertson-Walker spacetime with flat spatial sections in the GNS representation of a quasifree invariant state with the Reeh-Schlieder property. Then the implementing vector $\Om$ is cyclic for the field algebra ${\fF(\C)_\kappa}''$ associated with the cone \eqref{def:cone} if and only if $\kappa=0$. In particular, the nets $\fF_0$ and $\fF_\kappa$ are inequivalent for $\kappa\neq0$.
\end{proposition}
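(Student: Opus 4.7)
The plan is to follow the strategy of the inequivalence theorem at the end of Chapter \ref{ch:Thermal} and of \cite[Lem.4.6]{BuchholzLechnerSummers:2010}, adapted to the Friedmann-Robertson-Walker geometry. The implication $\kappa = 0 \Rightarrow$ cyclicity is just the assumed Reeh-Schlieder property of $\omega$, so the real task is the converse: assuming $\Omega$ is cyclic for $\fF(\C)_\kappa''$, derive $\kappa = 0$. By definition $\C \subset r^\varphi W^0 \cap r^{-\varphi} W^0$, with both wedges on the right lying in the coherent subfamily $\W_{[\zeta]}$, so isotony (Theorem \ref{thm:deformed-net}) yields
\begin{equation*}
\fF(\C)_\kappa'' \;\subset\; \fF(r^\varphi W^0)_\kappa'' \cap \fF(r^{-\varphi} W^0)_\kappa'' \;\subset\; \mathcal{N} := \fF(r^\varphi W^0)_\kappa'' \vee \fF(r^{-\varphi} W^0)_\kappa''.
\end{equation*}

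For a Dirac test section $f \in \K(\C)$ I would then consider the two bounded deformed fields $B(f)_{r^\varphi_*\zeta,\kappa}$ and $B(f)_{r^{-\varphi}_*\zeta,\kappa}$. Because $B:\K \to \fF$ is linear continuous and $s \mapsto u_\xi(s)f$ is smooth in $\K$ for either Killing pair $\xi = r^{\pm\varphi}_*\zeta$, the undeformed $B(f)$ is $\xi$-smooth, and so by the construction of the deformed net \eqref{def:awla}--\eqref{def:AhWla}, the two warped operators lie in $\fF(r^\varphi W^0)_\kappa$ and $\fF(r^{-\varphi} W^0)_\kappa$ respectively, hence in $\mathcal{N}$. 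Equation \eqref{FxiOm} moreover gives $B(f)_{r^\varphi_*\zeta,\kappa}\Omega = B(f)\Omega = B(f)_{r^{-\varphi}_*\zeta,\kappa}\Omega$. The two operators are genuinely distinct for $\kappa \neq 0$, as they are warped along two Killing planes related by a rotation of $2\varphi$.

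The hypothesized cyclicity of $\Omega$ for the small algebra $\fF(\C)_\kappa''$, combined with twisted locality (Theorem \ref{thm:deformed-net}) and the observation that, thanks to E$(3)$-invariance and flat spatial sections, one can find a local region $\OO \subset (r^\varphi W^0)' \cap (r^{-\varphi} W^0)'$ which contains an E$(3)$-image of $\C$ serving as a Reeh-Schlieder set for the undeformed theory, should yield that $\Omega$ is cyclic for $\mathcal{N}'$ and hence separating for $\mathcal{N}$. The equality on $\Omega$ above then promotes to the operator identity $B(f)_{r^\varphi_*\zeta,\kappa} = B(f)_{r^{-\varphi}_*\zeta,\kappa}$. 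Evaluating on a one-particle vector exactly as in \eqref{eq:InequivalenceCrucialVector}--\eqref{eq:InequivalenceMatrixElements} of Chapter \ref{ch:Thermal}, and invoking antisymmetry of $\theta$ together with the homogeneous-space property of $N \mapsto N\theta N^T$ from \cite[Lem.3.1]{GrosseLechner:2007}, this forces the two rotated deformation planes to coincide, contradicting $0 < |\varphi| < \pi/2$ unless $\kappa = 0$.

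The main obstacle is the third step: passing from cyclicity of $\Omega$ for the small algebra $\fF(\C)_\kappa''$ to a separating property for the much larger $\mathcal{N}$. On Minkowski space this rests on the twist-dualized inclusion $Z \fF(\OO)_\kappa Z^* \subset \fF(\C)_\kappa'$ for regions $\OO$ spacelike to $\C$, combined with the Reeh-Schlieder property of the undeformed theory applied to a region spacelike to $r^\varphi W^0 \cup r^{-\varphi} W^0$. The conformal embedding $\iota:M \hookrightarrow \Rl^4$ from Section \ref{sec:examples} preserves wedge/cone incidence and causal complements, so the Minkowski geometric input should transport to the FRW setting; carefully verifying that all required E$(3)$-images of $\C$ remain spacelike to the chosen wedges, and that the commutant manipulations respect the CAR twist, is the bulk of the remaining technical work.
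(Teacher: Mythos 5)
Your reduction follows the same skeleton as the paper's proof: transport the assumed cyclicity of $\Om$ for $\fF(\C)_\kappa''$ by a unitarily implemented isometry to a copy of $\C$ spacelike to the two wedges (the paper simply takes the reflected cone $j_x\C$), conclude via twisted locality that $\Om$ is separating for an algebra containing both wedge algebras, and use \eqref{FxiOm} to promote $B(f)_{\zeta,\kappa}\Om=B(f)\Om=B(f)_{r^\varphi_*\zeta,\kappa}\Om$ to an operator identity. One caution on your third step: what yields the separating property is the assumed cyclicity for the \emph{deformed} algebra of the displaced cone, obtained from covariance of the deformed net and the fact that the implementing unitary fixes $\Om$ (up to the harmless twist phase), not the Reeh--Schlieder property of the \emph{undeformed} theory for that region, which is how your sketch reads.

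The genuine gap is the final step, where the whole burden lies: you assert that the two differently warped fields are distinct "exactly as in" the thermal chapter, by a one-particle phase computation plus \cite[Lem.3.1]{GrosseLechner:2007}. Neither ingredient is available here. The computation in \eqref{eq:InequivalenceCrucialVector}--\eqref{eq:InequivalenceMatrixElements} rests on explicit momentum-space formulas of the Araki--Woods representation of the scalar free field, where the warping acts on sharp-momentum modes by the phase $e^{iq\cdot\theta p}$ with $p,q$ running over the mass shell; for an abstract quasifree $\iso$-invariant state on an FRW spacetime there is no such diagonalization of the spatial translation generators with controlled spectral support, and the warped convolution with respect to $u_\zeta$ does \emph{not} reduce to phase multiplication on one-particle vectors (contrast the de Sitter case, where it is the discreteness of the charge spectrum that collapses the integral in Proposition \ref{prop:CARDeformation}). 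Moreover, \cite[Lem.3.1]{GrosseLechner:2007} concerns the Lorentz orbit of admissible $4\times4$ matrices $\theta$ on Minkowski space; in Chapter \ref{ch:Cosmological} the deformation data are Killing pairs with a fixed $2\times2$ matrix $\theta$, the relevant transformation law being Lemma \ref{lemma:deformed-operators}, so there is no statement "$\theta=r_\vartheta\theta r_\vartheta^T$" to contradict. The paper closes this step by a different argument: it computes the deformed four-point functions (Proposition \ref{prop:n-pt}), writes the difference of the two candidate operators in matrix elements as a difference of Weyl--Moyal star products, expands to first order in $\kappa$ to obtain a difference of Poisson brackets, and shows that equality would force $P^\zeta_j=P^{r^\varphi_*\zeta}_j$, i.e.\ a trivial action of spacelike translations on the Dirac field, contradicting its locality and covariance. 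Without this (or a rigorous substitute for your phase argument in the abstract quasifree setting), your contradiction does not go through.
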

\begin{proof}
Let $f\in\K(\C)$, so that $f, u(r^{-\varphi})f\in\K(W^0)$, and both field operators, $B(f)_{\zeta,\kappa}$ and $B(u(r^{-\varphi})f)_{\zeta,\kappa}$, are contained in $\fF(W^0)_\kappa$. Taking into account the covariance of the deformed net, it follows that $U(r^\varphi)B(u(r^{-\varphi})f)_{\zeta,\kappa} U(r^{-\varphi}) = B(f)_{r^\varphi_*\zeta,\kappa}$ lies in $\fF(r^\varphi W^0)_\kappa$.

Now the cone $\C$ is defined in such a way that the two wedges $W^0$ and $r^\varphi W^0$ lie spacelike to $j_x\C$. Let us assume that $\Om$ is cyclic for ${\fF(\C)_\kappa}''$, which by the unitarity of $U(j_x)$ is equivalent to $\Om$ being cyclic for ${\fF(j_x\C)_\kappa}''$. Hence $\Om$ is separating for the commutant ${\fF(j_x\C)_\kappa}'$, which by locality contains $\fF(W^0)_\kappa$ and $\fF(r^\varphi W^0)_\kappa$. But in view of \eqref{FxiOm}, $B(f)_{\zeta,\kappa}$ and $B(f)_{r^\varphi_*\zeta,\kappa}$ coincide on $\Om$,
\begin{align*}
B(f)_{r^\varphi_*\zeta,\kappa}\Om
=
B(f)\Om
=
B(f)_{\zeta,\kappa}\Om
\,,
\end{align*}
so that the separation property implies $B(f)_{\zeta,\kappa}=B(f)_{r^\varphi_*\zeta,\kappa}$.

To produce a contradiction, we now show that these two operators are actually not equal. To this end, we consider a difference of four-point functions \eqref{4pt}, with smooth vectors $f_1,f_2:=f,f_3:=f,f_4$, and Killing pairs $\zeta$ respectively $r^\varphi_*\zeta$. With the abbreviations $w^\varphi_{ij}(s):=\om_2(f_i,u_{r^\varphi_*\zeta}(s)f_j)$, we obtain
\begin{align*}
&\hspace*{-4cm}
\langle\Om,B(f_1)\left(B(f)_{\zeta,\kappa}B(f)_{\zeta,\kappa}
-
B(f)_{r^\varphi_*\zeta,\kappa}B(f)_{r^\varphi_*\zeta,\kappa}\right)B(f_4)\Om\rangle
\\
&=
\om^{\zeta,\kappa}_4(f_1,f,f,f_4)
-
\om^{r^\varphi_*\zeta,\kappa}_4(f_1,f,f,f_4)
\\
&=
(w^0_{13}\star_\kappa w^0_{24})(0)
-
(w^\varphi_{13}\star_\kappa w^\varphi_{24})(0)\,,
\end{align*}
where $\star_\kappa$ denotes the Weyl-Moyal star product on smooth bounded functions on $\Rl^2$, with the standard Poisson bracket given by the matrix \eqref{def:Q} in the basis $\{\zeta_1,\zeta_2\}$. Now the asymptotic expansion of this expression for $\kappa\ra0$ gives in first order the difference of Poisson brackets \cite{EstradaGracia-BondiaVarilly:1989}
\begin{align*}
\{w^0_{13},\,w^0_{24}\}(0)-\{w^\varphi_{13},\,w^\varphi_{24}\}(0)
&=
\langle f_1, P^\zeta_1 f\rangle\langle f, P^\zeta_2 f_4\rangle
-
\langle f_1, P^\zeta_2 f\rangle\langle f, P^\zeta_1 f_4\rangle
\\
&\qquad -
\langle f_1, P^{r^\varphi_*\zeta}_1 f\rangle\langle f, P^{r^\varphi_*\zeta}_2 f_4\rangle
+
\langle f_1, P^{r^\varphi_*\zeta}_2 f\rangle\langle f, P^{r^\varphi_*\zeta}_1 f_4\rangle
\,,
\end{align*}
where all scalar products are in $\K$ and $P^{r^\varphi_*\zeta}_1,P^{r^\varphi_*\zeta}_2$ denote the generators of $s\mapsto u_{r^\varphi_*\zeta}(s)$. By considering $f_4$ orthogonal to $P_1^\zeta f$ and $P_1^{r^\varphi_*\zeta}f$, we see that for $B(f)_{\zeta,\kappa}=B(f)_{r^\varphi_*\zeta,\kappa}$ it is necessary that $\langle f_1,(P^\zeta_j-P^{r^\varphi_*\zeta}_j)f\rangle=0$. But varying $f,f_1$ within the specified limitations gives dense subspaces in $\K$, {\em i.e.} we must have $P^\zeta_j=P^{r^\varphi_*\zeta}_j$. This implies that translations in a spacelike direction are represented trivially on the Dirac field, which is not compatible with its locality and covariance properties.

So we conclude that the deformed field operator $B(f)_{r^\varphi_*\zeta,\kappa}$ is not independent of $\varphi$ for $\kappa\neq0$, and hence the cyclicity assumption is not valid for $\kappa\neq0$. Since on the other hand $\Om$ is cyclic for ${\fF(\C)_0}''$ by the Reeh-Schlieder property of $\om$, and a unitary $V$ leaving $\Om$ invariant and mapping $\fF(\C)_0$ onto $\fF(\C)_\kappa$ would preserve this property, we have established that the nets $\fF_0$ and $\fF_\kappa$, $\kappa\neq0$, are not equivalent.
\end{proof}

\chapter{Quantum Field Theories on de Sitter Spacetime}
\label{ch:deSitter}

\noindent
In this chapter we apply the warped convolution deformation procedure to quantum field theories with global $\Uone$ gauge symmetry on four-dimensional de Sitter spacetime. This spacetime does not belong to the class considered in chapter \ref{ch:Cosmological}, since it does not admit two linearly independent commuting Killing vector fields. We use a combination of external and internal symmetries, consisting of a one-parameter group of boosts associated with a wedge and the gauge symmetry, as an $\Rl^2$-action to define the deformation. The resulting theory is wedge-local and unitarily inequivalent to the undeformed one for a class of theories, including the free charged Dirac field.
\\
\\
This chapter structured as follows. In Section \ref{sec:deSitterSpacetime} the basic notions concerning the geometry and causal structure of de Sitter spacetime are recalled and we discuss the de Sitter group together with its universal covering. After that, the covariance and inclusion properties of wedges in de Sitter space are studied. 

In Section \ref{sec:DeformationsQFT} we consider quantum field theories with global gauge symmetry within the algebraic setting (field nets) and we show how to reconstruct a wedge-local field net from an inclusion of two C$^*$-algebras, which are in a suitable relative position to a wedge. Then the warped convolution deformation is applied to a field net with global $\Uone$ gauge symmetry  and the properties of the resulting theory are studied. 

In Section \ref{subsec:CARnets} a particular class of field nets is investigated in more detail, namely, nets of $\CAR$-algebras. For these theories the deformed operators can be computed explicitly. The fixed-points of the deformation map are determined and it is shown that the deformed and undeformed field nets are non-isomorphic. In the end we comment on warped convolutions in terms of purely external and internal symmetries using other Abelian subgroups of the de Sitter and gauge group.

\newpage
\section{de Sitter spacetime} 
\label{sec:deSitterSpacetime}
\subsection{Geometry and causal structure} 
\label{subsec:deSitterGeometryCausalStructure}
The de Sitter spacetime $(M,\gdS)$ is a vacuum solution of Einstein's equation with positive cosmological constant. 
It is maximally symmetric, so it admits 10 Killing vector fields, which is the maximum number for a spacetime of dimension four. It is also globally hyperbolic, so the Cauchy problem for partial differential equations of hyperbolic type, such as the Klein-Gordon and Dirac equation, is well-posed. Furthermore, it is a special case of the Friedmann-Robertson-Walker spacetimes which describe a spatially homogeneous and isotropic universe and it plays a prominent role in many inflationary scenarios for the early universe \cite{Linde09}.

Most conveniently it can be represented as the embedded submanifold
\begin{equation*}
	M=\bl\{x\in\Rl^5: x\cdot x=-1\br\} 
\end{equation*}
of five-dimensional Minkowski space $(\Rl^5,\eta)$. The signature of $\eta$ is $(1,-1,-1,-1,-1)$ and the de Sitter radius is fixed to one. The metric $\gdS$ on $M$ is the induced metric from the ambient space, {\it i.e.} $\gdS=\iota^*\eta$, where $\iota:M\hookrightarrow \Rl^5$ is the embedding map. We use the ambient space notation to parametrize the de Sitter hyperboloid, so we write $x=(x^0,x^1,\vec{x})$, $\vec{x}=(x^2,x^3,x^4)$ for points in $M$, subject to the relation $(x^0)^2-\sum_{k=1}^4(x^k)^2=-1$, where $\{x^\mu:\mu=0,\dots, 4\}$ is a Cartesian coordinate system of $\Rl^5$.

Since the metric on $M$ is the induced metric from the ambient Minkowski space, the causal structure is also inherited. Hence points in $(M,\gdS)$ are called timelike, spacelike or null related, if they are so as points in $(\Rl^5,\eta)$, respectively. We fix a time orientation in $(M,\gdS)$, such that $(1,0,0,0,0)\in\Rl^5$ is future-directed. The interior of the causal complement of a spacetime region $\OO\subset M$ is denoted by $\OO'$.

For the generators of the Clifford algebra which is associated with the quadratic form $x\cdot x$ on the vector space $\Rl^5$ we use the representation \cite{Gazeau07}
\begin{equation*}
\gamma_0=
\begin{pmatrix}
 1 & 0\\
 0 & -1
\end{pmatrix},\qquad
\gamma_1=
\begin{pmatrix}
 0 & 1\\
 -1 & 0
\end{pmatrix},\qquad
\gamma_k=
\begin{pmatrix}
 0 & e_k\\
 e_k & 0
\end{pmatrix},\, k=2,3,4,
\end{equation*}
where $e_k=(-1)^k\sigma_{k-1}$ and $\sigma_{k-1}$ are the Pauli matrices. We have $\{\gamma_\mu,\gamma_\nu\}=2\eta_{\mu\nu}\cdot 1$ and $\{1,e_1,e_2,e_3\}$ is a basis of the quaternions $\Quat$. Note that this representation is in fact not faithful, since $i\gamma_0\cdots\gamma_4=1$. Similar to the case of four-dimensional Minkowski space, where points are parametrized by hermitian $2\times 2$ matrices, we parametrize points on the de Sitter hyperboloid by $2\times 2$ quaternionic matrices. This parametrization is useful for the discussion of the covering group of the de Sitter group later on. Define
\begin{equation*}
 M\ni (x^0,x^1,\vec{x})=x\longmapsto
 \undertilde{x}:=\sum_{\mu=0}^4x^\mu\gamma_\mu=
\begin{pmatrix}
 x^0 & -q \\
 \overline{q} & -x^0
\end{pmatrix}
\in\Mat(2,\Quat),
\end{equation*}
where $\overline{q}=(x^1,-\vec{x})$ is the quaternionic conjugate of $q=(x^1,\vec{x})$. Conversely, every $2\times 2$ matrix of the above form determines a point in de Sitter space via
\begin{equation*}
x^\mu=\frac{1}{4}\Tr(\gamma_\mu \undertilde{x}),\quad \mu=0,\dots,4.
\end{equation*}
The map $x\mapsto \undertilde{x}$ defines an isomorphism between $M$ and $\mathrm{H}(2,\Quat)\gamma_0\subset\Mat(2,\Quat)$, where $\mathrm{H}(2,\Quat)$ are the hermitian $2\times 2$ matrices over $\Quat$. Furthermore, there holds 
$(x\cdot x)1= \undertilde{x}^*\gamma_0\undertilde{x}\gamma_0$, where $\undertilde{x}^*$ is the transpose of the quaternionic conjugate of $\undertilde{x}$.

\subsection{The de Sitter group and its covering}
\label{subsec:DeSitterGroup}
The isometry group of de Sitter spacetime $(M,\gdS)$ is 
\begin{equation*}
 \mathrm{O}(1,4)=\{\Lambda\in\Mat(5,\Rl):\Lambda^T\eta\Lambda=\eta\}
\end{equation*}
and its action on $M$ is given by the action of the Lorentz group in the ambient Minkowski space.
This group is a ten-dimensional, non-compact, non-connected and real Lie group which has four connected components. The connected component which contains the identity is denoted by $\L_0:=\SO(1,4)_0$. This group is called de Sitter group (proper orthochronous Lorentz group) and its elements preserve the orientation and time orientation of $(M,\gdS)$. 

Since we also want to treat quantum fields with half-integer spin we consider the two-fold (and universal) covering of $\L_0$, which is the spin group $\widetilde{\L}_0:=\Spin(1,4)$. Hence there exists a short exact sequence of group homomorphisms
\begin{equation*}
 1\lra\ker(\pi)=\{\pm 1\}\lra \widetilde{\L}_0\overset{\pi}{\lra} \L_0\lra 1.
\end{equation*}
There holds $\L_0\cong\widetilde{\L}_0/\{\pm 1\}$ and $\widetilde{\L}_0$ is simply connected. Note that the Lie group $\Spin(1,4)$ is isomorphic to the pseudo-symplectic group \cite{Takahashi63}
\begin{equation*}
	\Sp(1,1)=\Bl\{
\begin{pmatrix}
	a & b\\
	c & d 
\end{pmatrix}
\in \Mat(2,\Quat):\;\bar{a}b=\bar{c}d,\; |a|^2-|c|^2=1,\; |d|^2-|b|^2=1\Br\}.
\end{equation*}
Equivalently, $h\in\Sp(1,1)$ if and only if $h^*\gamma_0h=\gamma_0$. In this representation the covering homomorphism $\pi:\widetilde{\L}_0\ra \L_0$ is given by
\begin{equation*}
 (\pi(h))_{\mu\nu}=\frac{1}{4}\Tr(\gamma_\mu h\gamma_\nu h^{-1}),\quad h\in\tLo
\end{equation*}
and $\tLo$ acts on $M$ by conjugation $\undertilde{x}\mapsto h\undertilde{x}h^{-1}$.

\subsection{de Sitter wedges} 
\label{subsec:wedges}
Now we discuss the typical localization regions of the deformed quantum fields from Section \ref{sec:DeformationsQFT}. In \cite{BorchersBuchholz:1999} a de Sitter wedge is defined as the causal completion of the worldline of a uniformly accelerated observer in de Sitter space. Equivalently, they can be characterized as intersections of wedges in the ambient Minkowski space \cite{ThomasWichmann:1997} and the de Sitter hyperboloid.\label{wedgedeSitter} Hence we specify a reference wedge
\begin{equation*}
 W_0:=\{x\in\Rl^5: x^1>|x^0|\}\cap M
\end{equation*}
and define the family of wedges $\W$ as the set of all de Sitter transforms of $W_0$:\label{familywedgesdeSitter}
\begin{equation*}
 \W:=\{h W_0:h\in\L_0\}.
\end{equation*}
By definition, $\L_0$ acts transitively on $\W$. Each wedge $W\in\W$ has an attached edge\label{edgedeSitter} $E_W$ which is a two-sphere. We have $E_{W_0}=\{x\in M:x^0=x^1=0\}$ and $E_W=h E_{W_0}$ for $W=h W_0$. The wedge $W$ coincides with a connected component of the causal complement of the edge $E_W$ \cite{BorchersBuchholz:1999}. For the stabilizer of the wedge $W$ we write 
$\L_0(W):=\{h\in\L_0:h W=W\}$.

From the properties of wedges in $(\Rl^5,\eta)$ follows that the causal complement of a wedge is again a wedge and that every $W\in\W$ is causally complete, {\it i.e.} $W''=(W')'=W$.  
Furthermore, the family $\W$ is causally separating, so given spacelike separated double cones $\OO_1,\OO_2\subset M$, there exists a $W\in\W$ such that $\OO_1\subset W\subset \OO_2\,'$ (see \cite{ThomasWichmann:1997}). 
\\
\\
For every $W\in\W$ there exists a one-parameter group\label{onepargroupwedge} $\Gamma_W=\{\Lambda_W(t)\in\L_0:t\in\Rl\}$, such that each $\Lambda_W(t),t\in\Rl$ maps $\W$ onto $\W$ and $\Lambda_W(t) W=W$ for all $t\in\Rl$. Moreover 
\begin{equation}
\label{eq:GammaCovarianceL0}
 \Lambda_{hW}(t)=h\Lambda_W(t) h^{-1},\quad h\in\L_0,\; t\in\Rl.
\end{equation}
Associated with $\Gamma_W$ is a future-directed Killing vector field $\xi_W$ in the wedge $W$ and the worldline from which the wedge is constructed is an integral curve of (a portion of) this vector field. 
Furthermore, for every $W\in\W$ there exists a reflection $j_W\in \L_0$ which maps $\W$ onto $\W$ and satisfies 
\begin{equation}
\label{eq:WedgeReflection}
 j_W W=W',\quad j_{hW}=hj_Wh^{-1},\quad h\in\L_0.
\end{equation}
Since $\L_0$ acts transitively on $\W$ we only need to specify these maps for $W_0$. We choose
\begin{equation}
\label{eq:GammaW0}
	\Lambda_{W_0}(t):=\left( 
	\begin{array}{ccc}
		\cosh(2\pi t) & \sinh(2\pi t) & 0\\
		\sinh(2\pi t) & \cosh(2\pi t) & 0\\
		0&0& 1_3 
	\end{array}
	\right),\quad j_{W_0}(x^0,x^1,\vec{x}):=(x^0,-x^1,-\vec{x})
\end{equation}
and note that $\xi_{W_0}=x^1\del_{x^0}+x^0\del_{x^1}$ is the associated Killing vector field.
\begin{remark}
 Within the context of applications of Tomita-Takesaki modular theory in quantum field theory the standard choice for the reflection  is $(x^0,x^1,\vec{x})\mapsto (-x^0,-x^1,\vec{x})$, which is an element of the extended symmetry group $\L_+=\L_0\rtimes \Zl_2$. In this thesis we have no intention to use these techniques and the choice (\ref{eq:GammaW0}) appears to be more natural since we restrict our considerations to $\L_0$. However, all of our results can be generalized to the group $\L_+$ in a straightforward manner.
\end{remark}

The following lemma collects the basic properties of these maps.

\begin{lemma}
\label{lem:Gamma-J}
 Let $W\in\W$ and $\Lambda_W(t)\in \Gamma_W$, $j_W$ be as above. Then
\begin{enumerate}
 \item[a)] $h\Lambda_W(t)h^{-1}=\Lambda_W(t),\; h\in\L_0(W),t\in\Rl$,
 \item[b)] $j_W\Lambda_W(t)j_W=\Lambda_W(-t),\; t\in\Rl$,
\end{enumerate}
\end{lemma}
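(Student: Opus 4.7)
The plan is to treat the two parts separately, using the covariance relations (\ref{eq:GammaCovarianceL0}) and (\ref{eq:WedgeReflection}) as the main inputs and reducing part (b) to a direct matrix computation for the reference wedge $W_0$.

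For part (a) the argument is essentially immediate. If $h\in\L_0(W)$ then $hW=W$ by definition of the stabilizer. Applying the covariance property (\ref{eq:GammaCovarianceL0}) with this $h$, we obtain
\begin{equation*}
\Lambda_W(t)=\Lambda_{hW}(t)=h\,\Lambda_W(t)\,h^{-1},\qquad t\in\Rl,
\end{equation*}
which is the claim. Conceptually this just expresses the fact that the one-parameter group $\Gamma_W$ is intrinsically attached to the wedge $W$, so elements of $\L_0$ which fix $W$ must commute with it.

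For part (b) I would first check the identity on the reference wedge $W_0$ by direct computation from the explicit formulae (\ref{eq:GammaW0}). Since $j_{W_0}=\diag(1,-1,-1,-1,-1)$ only flips the $x^1$-entry in the $(x^0,x^1)$-block where $\Lambda_{W_0}(t)$ is non-trivial, a two-by-two matrix calculation
\begin{equation*}
\begin{pmatrix}1 & 0\\ 0 & -1\end{pmatrix}
\begin{pmatrix}\cosh(2\pi t) & \sinh(2\pi t)\\ \sinh(2\pi t) & \cosh(2\pi t)\end{pmatrix}
\begin{pmatrix}1 & 0\\ 0 & -1\end{pmatrix}
=
\begin{pmatrix}\cosh(2\pi t) & -\sinh(2\pi t)\\ -\sinh(2\pi t) & \cosh(2\pi t)\end{pmatrix}
\end{equation*}
(together with the trivial action on the transverse components) yields $j_{W_0}\Lambda_{W_0}(t)j_{W_0}=\Lambda_{W_0}(-t)$.

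To pass from $W_0$ to an arbitrary wedge $W=hW_0$ with $h\in\L_0$, I would combine (\ref{eq:GammaCovarianceL0}) and (\ref{eq:WedgeReflection}) and use that $j_{W_0}$ is an involution (hence so is $j_W=hj_{W_0}h^{-1}$):
\begin{align*}
j_W\,\Lambda_W(t)\,j_W
&=(h j_{W_0} h^{-1})(h\Lambda_{W_0}(t)h^{-1})(h j_{W_0} h^{-1})\\
&=h\bigl(j_{W_0}\Lambda_{W_0}(t)j_{W_0}\bigr)h^{-1}
=h\,\Lambda_{W_0}(-t)\,h^{-1}
=\Lambda_W(-t).
\end{align*}
There is no real obstacle here; the only mildly delicate point is making sure that one may reduce to $W_0$ via transitivity of $\L_0$ on $\W$, which is guaranteed by the definition of the family of wedges in Section~\ref{subsec:wedges}.
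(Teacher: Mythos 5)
Your proof is correct and follows essentially the same route as the paper: part (a) is the one-line application of the covariance relation (\ref{eq:GammaCovarianceL0}) to $hW=W$, and part (b) is the identity $j_{W_0}\Lambda_{W_0}(t)j_{W_0}=\Lambda_{W_0}(-t)$ for the reference wedge transported to general $W$ via (\ref{eq:GammaCovarianceL0}) and (\ref{eq:WedgeReflection}). You merely spell out the explicit $2\times2$ computation and the conjugation step that the paper leaves implicit.
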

\begin{proof}
	a): For $h\in\L_0(W)$ holds $\Lambda_W(t)=\Lambda_{hW}(t)=h\Lambda_W(t)h^{-1}$ for all
	$t\in\Rl$ by (\ref{eq:GammaCovarianceL0}). b): This follows from $j_{W_0}\Lambda_{W_0}(t)j_{W_0}=\Lambda_{W_0}(-t)$ and (\ref{eq:GammaCovarianceL0}), (\ref{eq:WedgeReflection}).
\end{proof}

The stabilizer of $W$ has the form $\L_0(W)=\Gamma_W\times \SO(3)$, where $\SO(3)$ are rotations in $E_W$. Hence $\Gamma_W$ coincides with the center of $\L_0(W)$. From b) follows that the Killing vector fields associated with $W$ and $W'$ differ only by temporal orientation.

The following lemma shows that the possible causal configurations of wedges are very much constrained in de Sitter space.

\begin{lemma}
\label{lem:WedgeInclusions}
Let $W_1,W_2\in\W$ and $W_1\subset W_2$. Then $W_1=W_2$.
\end{lemma}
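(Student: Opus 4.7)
The plan is to invoke the transitivity of $\L_0$ on $\W$ to reduce to the special case $W_2 = W_0$, writing $W_1 = hW_0$ for some $h \in \L_0$. The problem then becomes purely group-theoretic: show that $hW_0 \subset W_0$ implies $h \in \L_0(W_0)$, hence $hW_0 = W_0$.

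The key step is to pass from the inclusion of open regions to an inclusion of their compact edges. Since $E_W \subset \overline{W}$ and taking closures preserves inclusions, we obtain $hE_{W_0} \subset \overline{W_0} = \{x \in M : x^1 \geq |x^0|\}$. Parametrizing $E_{W_0} = \{(0, 0, \vec v) \in M : |\vec v| = 1\}$ and writing $h = (h^\mu_\nu)_{\mu,\nu = 0,\dots,4}$ in the ambient coordinates, this condition reads
\begin{equation*}
	\sum_{k=2}^{4} h^1_k v^k \;\geq\; \Bl|\sum_{k=2}^{4} h^0_k v^k\Br|
\end{equation*}
for every unit vector $\vec v \in \Rl^3$. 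Testing against $\vec v = \pm e_k$ pins down $h^0_k = h^1_k = 0$ for $k = 2, 3, 4$. Combined with the $\SO(1,4)$-relation $h^T \eta h = \eta$ and the invertibility of the residual $3 \times 3$ block, an orthogonality argument then forces $h^k_0 = h^k_1 = 0$ for $k = 2, 3, 4$ as well, so $h$ is block diagonal for the splitting $\Rl^5 = \Rl^2_{(x^0, x^1)} \oplus \Rl^3_{\vec x}$.

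The remaining step is a short case analysis of block-diagonal $h \in \SO(1,4)_0$. The constraints $\det h = 1$ and $h^0_0 > 0$ leave two possibilities: either the $2 \times 2$ block is a boost coinciding with the top-left block of some $\Lambda_{W_0}(t)$ and the $3\times 3$ block lies in $\SO(3)$, or the $2\times 2$ block is a PT-boost of determinant $-1$ paired with a $3 \times 3$ block in $\mathrm{O}(3)\setminus\SO(3)$. In the first case $h \in \L_0(W_0) = \Gamma_{W_0}\times\SO(3)$, so $W_1 = hW_0 = W_0$. The second case is excluded by evaluating $h$ on the test point $p = (0, 1, 0, 0, 0) \in W_0$: its image has the form $(\sinh\tau, -\cosh\tau, 0, 0, 0)$, which satisfies $x^1 \leq -|x^0|$, hence lies in $W_0'$, contradicting $hW_0 \subset W_0$ in view of $W_0 \cap W_0' = \varnothing$.

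The main technical difficulty will be the spherical extremization that turns $hE_{W_0} \subset \overline{W_0}$ into the vanishing conditions $h^0_k = h^1_k = 0$. This step is precisely where the compactness of the edge (a $2$-sphere) is decisive and contrasts sharply with the Minkowski situation, where the edges are non-compact $2$-planes, the analogous inequality admits non-trivial solutions, and wedges genuinely can be properly included in one another via translations.
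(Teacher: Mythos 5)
Your proof is correct, but it takes a genuinely different route from the paper's. The paper argues through the ambient five-dimensional Minkowski space: writing $W_k=M\cap\widetilde{W}_k$ with $\widetilde{W}_k$ a flat wedge whose edge passes through the origin, it deduces $\widetilde{W}_1\subset\widetilde{W}_2$ from $W_1\subset W_2$ via causal closure in $\Rl^5$, then uses the fact that an inclusion of Minkowski wedges whose edges share a point forces the edges to coincide, and finally concludes $W_1=W_2$ because each $W_k$ is a connected component of the causal complement of the (now common) edge. You instead reduce to $W_2=W_0$ by transitivity and work directly in $\SO(1,4)$: compactness of the de Sitter edge lets you test the inequality at $\vec v=\pm e_k$, which forces $h^0_k=h^1_k=0$; the relation $h^T\eta h=\eta$ then gives $D^TD=1_3$ and $C^TD=0$ for the lower blocks, hence $C=0$ and $h$ is block diagonal; and the case analysis lands either in $\Gamma_{W_0}\times\SO(3)=\L_0(W_0)$, giving $W_1=W_0$, or in the reflected component, which your test point $(0,1,0,0,0)$ maps strictly into $W_0'$ (since $\cosh\tau>|\sinh\tau|$), yielding the contradiction. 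Your approach buys self-containedness -- it needs neither the causal-closure step in $\Rl^5$ nor the Thomas--Wichmann facts about inclusions of flat wedges, only linear algebra -- at the cost of more explicit computation; the paper's proof is shorter precisely because it outsources these flat-space inputs. Two minor remarks: the determinant-$(-1)$, orthochronous elements of $\mathrm{O}(1,1)$ you exclude are parity-boosts (an $x^1$-reflection composed with a boost) rather than ``PT-boosts'' -- a PT transformation has determinant $+1$ -- though the matrices and the image $(\pm\sinh\tau,-\cosh\tau,0,0,0)$ you use are the correct ones; and it is worth stressing, as your argument implicitly does, that this second case cannot be dismissed on connectedness grounds (such elements, e.g. the rotation by $\pi$ in the $(x^1,x^2)$-plane, do lie in $\SO(1,4)_0$), so the geometric exclusion via the test point is genuinely needed.
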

\begin{proof}
The wedges $W_1,W_2$ can be written as $W_k=M\cap \widetilde{W}_k$, $k=1,2$, where $\widetilde{W}_k$ is a wedge in the ambient Minkowski space. Since the causal closure of $W_k$ in $\Rl^5$ coincides with $\widetilde{W}_k$, there follows $\widetilde{W}_1\subset \widetilde{W}_2$ from $W_1\subset W_2$. As the edges $E_{\widetilde{W}_k}$ both contain the origin, there follows $E_{\widetilde{W}_1}=E_{\widetilde{W}_2}$ and also $E_{W_1}=E_{W_2}$ since $E_{W_k}=M\cap E_{\widetilde{W}_k}$. The assertion $W_1=W_2$ follows from the assumption $W_1\subset W_2$ together with the fact that $W_k$ is a connected component of the causal complement of $E_{W_k}$.
\end{proof}

\begin{remark}
All the previous statements carry over to the covering $\tLo$ in a straightforward manner. Define an action of $\tLo$ on $\W$ with the covering homomorphism
\begin{equation}
\label{eq:CoveringWedgeAction}
 h W:=\pi(h) W,\quad h\in\tLo,\, W\in\W,
\end{equation}
which is transitive, since $\L_0$ acts transitively. The one-parameter group $\Gamma_W\subset\L_0$ lifts to a {\em unique} one-parameter group $\widetilde{\Gamma}_W\subset\tLo$ and for its elements we write $\lambda_W(t),t\in\Rl$. Again, since $\tLo$ acts transitively on $\W$, we only need to specify these maps for $W_0$. We have \cite[p.368]{Takahashi63}
\begin{equation*}
 \lambda_{W_0}(t)=
\begin{pmatrix}
 \cosh(\pi t) & \sinh(\pi t)\\
 \sinh(\pi t) & \cosh(\pi t)
\end{pmatrix}.
\end{equation*}
Clearly, $\lambda_W(t)W=W$ and $\lambda_{hW}=h\lambda_{W}h^{-1}$ for all $h\in\tLo$, $W\in\W$ with respect to the action (\ref{eq:CoveringWedgeAction}). For the lift of the reflection $j_{W_0}$ we choose 
\begin{equation*}
 \bj_{W_0}:=
\begin{pmatrix}
 1 & 0\\
 0 & -1
\end{pmatrix}.
\end{equation*}                                                                                                                                            
Again, $\bj_W W=W'$ and $\bj_{hW}=h\bj_{W_0}h^{-1}$ for all $h\in\tLo$, $W\in\W$ with respect to (\ref{eq:CoveringWedgeAction}). Hence analogous statements as in Lemma \ref{lem:Gamma-J} hold for $\widetilde{\Gamma}_W$ with $\L_0$ replaced by $\tLo$, {\it i.e.} 
\begin{equation}
\label{eq:CovarianceGammaJ}
	h\lambda_W(t)h^{-1}=\lambda_W(t),\qquad \bj_W\lambda_W(t)\bj_W=\lambda_W(-t),
\end{equation}
for all $h\in\tLo(W):=\{h\in\tLo:hW=W\}$ and $t\in\Rl$.
\end{remark}

\section{Deformations of quantum field theories on de Sitter spacetime} 
\label{sec:DeformationsQFT}
\subsection{Field nets} 
\label{subsec:FieldNetsOnDS}
We use the algebraic formulation of quantum field theory on curved spacetimes \cite{Dimock:1980} adapted to the concrete case of de Sitter space \cite{BorchersBuchholz:1999} (see also section \ref{sec:AQFT}).
To this end, we consider a C$^*$-algebra $\fF$ (field algebra) whose elements are physically interpreted as (bounded functions of) quantum fields on $M$. We equip $\fF$ with a local structure and focus on localization in wedges, since this turns out to be stable under the deformation. Hence we associate to each $W\in\W$ a C$^*$-subalgebra $\fF(W)\subset \fF$. Due to the trivial inclusion properties of wedges in de Sitter space (see Lemma \ref{lem:WedgeInclusions}) the usual isotony condition reduces to well-definedness of $W\mapsto\fF(W)$.\label{fieldnetdesitter}

We assume that there exists a strongly continuous representation $\alpha$ of $\tLo$ by automorphisms on $\fF$, such that
\begin{itemize}
 \item[1)] (De Sitter Covariance): for all $h\in\tLo$, $W\in\W$ holds 
\begin{equation*}
\label{eq:DeSitterCovariance}
	\alpha_h(\fF(W))=\fF(hW). 
\end{equation*}
\end{itemize}

\noindent
Furthermore, we assume that there is a Lie group $\G$ (global gauge group) and a strongly continuous representation $\sigma$ of $\G$ by automorphisms on $\fF$, such that
\begin{itemize}
 \item[2)] (Gauge Invariance): for all $g\in \G$, $h\in\tLo$, $W\in\W$ holds 
\begin{equation}
\label{eq:GaugeInvariance}
	\sigma_g(\fF(W))=\fF(W),\qquad \sigma_g\circ\alpha_h=\alpha_h\circ \sigma_g. 
\end{equation}
\end{itemize}

\noindent
We assume that there exists a distinguished element $g_0\in \G$ such that $\gamma:=\sigma(g_0)$ satisfies
\begin{equation}
\label{eq:BFauto} 
	\gamma^2=\id.
\end{equation}
This (grading) automorphism can be used to separate an operator $F\in \fF(W)$ into its Bose($+$) and Fermi($-$) part via $F_\pm:=(F\pm\gamma(F))/2$. 

\begin{remark}
	For convenience, we assume that the datum $(\{\fF(W):W\in\W\},\alpha,\sigma,\gamma)$ is faithfully and covariantly represented on a Hilbert space $\HS$. So to each $\fF(W)$ corresponds a norm-closed $*$-subalgebra of $\BH$ and the automorphisms $\alpha,\sigma,\gamma$ are implemented by the adjoint action of unitary operators $U,V,Y$ on $\HS$, respectively. Note that this is only a slight loss of generality since we can either use the (universal) covariant representation which exists for every C$^*$-dynamical system (see \cite{BuchholzLechnerSummers:2010, DappiaggiLechnerMorfa-Morales10} and references therein) or we work in the GNS-representation of a de Sitter- and gauge-invariant state. In the former case we assume that $\HS$ is separable, as it is the case in a variety of concrete models.
\end{remark}

We assume that the grading satisfies $Y^2=1$. With the operator $Y$ a unitary twisting map $Z$ is defined to treat the (anti)commutation relations between the Bose/Fermi parts of  fields on the same footing \cite{DoplicherHaagRoberts:1969}. Let $Z:=(1-iY)/\sqrt{2}$ and 
\begin{equation*}
 \fF(W)^t:=Z\fF(W)Z^{-1}.
\end{equation*}
The map $F\mapsto ZFZ^{-1}$ is an isomorphism of $\fF(W)$ and we have \cite{Foit:1983}
\begin{equation*}
 \fF(W)^{tt}=\fF(W),\quad \fF(W)^t{}'=\fF(W)'{}^t,\quad W\in\W,
\end{equation*}
where the commutant is understood as the relative commutant in $\fF$. Locality is now formulated in the following way
\begin{itemize}
 \item[3)] (Twisted Locality): $\fF(W)\subset \fF(W')^t{}',\;W\in\W$.
\end{itemize}
Twisted locality is equivalent to the ordinary (anti)commutation relations between the Bose/Fermi parts $F_\pm$ of fields, {\it i.e.} $[F_+,G_\pm]=[F_\pm,G_+]=\{F_-,G_-\}=0$ for $F\in\fF(W)$, $G\in\fF(W')$, $W\in\W$ (see \cite{DoplicherHaagRoberts:1969}).

For later reference we define the joint action $\tau:\tLo\times \G\ra \Aut(\fF)$ of the external and internal symmetry group on $\fF$ by
\begin{equation}
\label{eq:jointautomorphism}
 \tau_{h,g}:=\alpha_h\circ \sigma_g,\quad h\in\tLo,\; g\in \G.
\end{equation}
The unitary which implements this action is $U(h)V(g)=:\U(h,g)$.

\begin{remark}
A datum $(\{\fF(W):W\in\W\},\alpha,\sigma,\gamma)$ which satisfies conditions 1) $-$ 3) is referred to as a \emph{wedge-local field net}. We simply write $\fF$ to denote it, if no confusion can arise. Examples are nets of $\CAR$-algebras with gauge symmetry, such as the free charged Dirac field (see section \ref{subsec:CARnets}).
\end{remark}

\begin{remark}
 Given a field net, the net of observables is defined as
\begin{equation*}
	\Aa(W):=\{F\in\fF(W):\sigma_g(F)=F,\; g\in \G\},
\end{equation*}
so observables form the gauge-invariant part of the field net. 
\end{remark}

Similar to the construction of wedge-local nets from wedge triples (see section \ref{sec:CAQFT}), it is possible to define a wedge-local field net in terms of an inclusion of just two C$^*$-algebras which are in a suitable relative position to $W_0$. This point of view will be advantageous for the warped convolution later on, since the deformation of a wedge-local field net amounts to deforming the relative position of one algebra in the other. Following \cite{BuchholzLechnerSummers:2010} we make the following definition.
\begin{definition}
\label{def:CausalBorchersSystem}
 A {\em causal Borchers system} $(\fF_0,\fF,\alpha,\sigma,\gamma)$, relative to $W_0$, consists of
 \begin{itemize}
  \item an inclusion $\fF_0\subset \fF$ of concrete C$^*$-algebras
  \item commuting representations $\alpha:\tLo\ra \Aut(\fF)$ and $\sigma:\G\ra \Aut(\fF)$ which are unitarily implemented
  \item an automorphism $\gamma$ on $\fF$ which commutes with $\alpha$ and $\sigma$ and satisfies $\gamma^2=\id$
 \end{itemize}
 such that
 \begin{enumerate}
  \item[a)] $\alpha_h(\fF_0)= \fF_0,\; h\in\tLo(W_0)$
  \item[b)] $\alpha_{\bj_{W_0}}(\fF_0)\subset (\fF_0)^t{}'$
  \item[c)] $\sigma_g(\fF_0)=\fF_0$, $g\in \G$.
 \end{enumerate}
\end{definition}

\begin{proposition}
\label{prop:fieldquadruple-fieldnet} 
	Let $(\fF_0,\fF,\alpha,\sigma,\gamma)$ be a causal Borchers system relative to $W_0$. Then 
	\begin{equation}
		\label{eq:WT-WLN} W:=hW_0\longmapsto \alpha_h(\fF_0)=:\fF(W), 
	\end{equation}
	defines a wedge-local field net together with $(\alpha,\sigma,\gamma)$. 
\end{proposition}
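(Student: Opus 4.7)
The plan is to verify the three defining properties of a wedge-local field net for the map defined in \eqref{eq:WT-WLN}, and in particular to first check that the assignment $W\mapsto \fF(W)$ is well-defined. Since $\tLo$ acts transitively on $\W$, every wedge has the form $W=hW_0$ for some $h\in\tLo$, but this representation is not unique. If $hW_0=h'W_0$, then $h'^{-1}h\in\tLo(W_0)$, and property a) of Definition \ref{def:CausalBorchersSystem} gives $\alpha_{h'^{-1}h}(\fF_0)=\fF_0$, hence $\alpha_h(\fF_0)=\alpha_{h'}(\fF_0)$. This makes $\fF(W)$ unambiguous.

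Covariance is then immediate from the group law for $\alpha$: for any $k\in\tLo$ and $W=hW_0$ one has $\alpha_k(\fF(W))=\alpha_{kh}(\fF_0)=\fF(kW)$. Gauge invariance follows from property c) together with the commutativity of $\sigma$ and $\alpha$ in the causal Borchers system, for $\sigma_g(\fF(W))=\sigma_g\alpha_h(\fF_0)=\alpha_h\sigma_g(\fF_0)=\alpha_h(\fF_0)=\fF(W)$, and $\sigma_g\circ\alpha_k=\alpha_k\circ\sigma_g$ holds globally on $\fF$ by hypothesis.

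For twisted locality, the key input is property b), which is exactly the twisted locality axiom for the pair $(W_0,W_0')$: since $W_0'=\bj_{W_0}W_0$, we have $\fF(W_0')=\alpha_{\bj_{W_0}}(\fF_0)\subset(\fF_0)^t{}'=\fF(W_0)^t{}'$, equivalent to $\fF(W_0)\subset\fF(W_0')^t{}'$. A general wedge is $W=hW_0$ with $W'=h\bj_{W_0}W_0$, so that $\fF(W)=\alpha_h(\fF(W_0))$ and $\fF(W')=\alpha_h(\fF(W_0'))$. Applying $\alpha_h$ to the inclusion for $W_0$ gives the result, provided one can commute $\alpha_h$ with the twisting operation and with the relative commutant in $\fF$. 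The first point holds because $Z=(1-iY)/\sqrt{2}$ commutes with $U(h)$: indeed $\gamma$ and $\alpha$ commute in the causal Borchers system, hence $Y$ commutes with every $U(h)$, hence so does $Z$, which yields $\alpha_h(F^t)=\alpha_h(F)^t$. The second point is standard: since $\alpha_h$ is a unitarily implemented automorphism of $\fF$, for any subalgebra $\A\subset\fF$ one has $\alpha_h(\A')=\alpha_h(\A)'$, with commutants understood relative to $\fF$.

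The only mildly delicate step is the twisted locality argument, but once one has noticed that $Z$ commutes with the representation $U$ of $\tLo$ (so that deforming by $h$ respects the twist) the whole proof reduces to bookkeeping with the group action. There is no genuine obstacle since all the structural hypotheses needed — invariance of $\fF_0$ under the wedge stabilizer, the twisted commutation of $\fF_0$ with its $\bj_{W_0}$-transform, gauge invariance of $\fF_0$, and $[\alpha,\sigma]=[\alpha,\gamma]=0$ — are built directly into Definition \ref{def:CausalBorchersSystem}. The converse direction (that every wedge-local field net arises in this way, by setting $\fF_0:=\fF(W_0)$) is worth noting as a remark, as it legitimises causal Borchers systems as the basic building blocks to be deformed in what follows.
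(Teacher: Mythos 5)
Your proof is correct and follows essentially the same route as the paper: well-definedness from condition a), covariance by construction, gauge invariance from c) together with the commutativity of $\alpha$ and $\sigma$, and twisted locality from b) transported by $\alpha_h$, using that the twist is respected because $Z$ commutes with the implementers $U(h)$. The only cosmetic difference is in the locality step, where the paper parametrizes by the complement, writing $W'=hW_0$ so that $W=h\bj_{W_0}W_0$ and condition b) applies directly, whereas you first pass from $\fF(W_0')\subset\fF(W_0)^t{}'$ to $\fF(W_0)\subset\fF(W_0')^t{}'$ via the symmetry of twisted locality for $\gamma$-invariant algebras — both amount to the same argument.
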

\begin{proof}
	We begin by proving well-definedness. From $h_1 W_0=h_2W_0$ follows $h_2^{-1}h_1 W_0= W_0$ and $\alpha_{h_2^{-1}h_1}(\fF_0)= \fF_0$ by assumption a). Hence $\alpha_{h_1}(\fF_0)= \alpha_{h_2}(\fF_0)$ and the assertion follows.
	
	Covariance holds by definition. 

	Twisted locality is proved in a similar way. Let $W'=hW_0$. Since $W=h\bj_{W_0}W_0$ there holds
	\begin{equation*}
	    \fF(W)=\alpha_{h\bj_{W_0}}(\fF_0)
	    \subset\alpha_h((\fF_0)^t{}')
	    =\alpha_h(\fF_0)^t{}'=\fF(W')^t{}',\quad W\in\W,
	\end{equation*}
	where we used condition b), together with the assumption that each $\alpha_h$, $h\in\tLo$ is a homomorphism which commutes with $\gamma$.
	
	The gauge invariance of the local algebras follows immediately:
	\begin{equation*}	
	\sigma_g(\fF(W))=\sigma_g(\alpha_h(\fF_0))=\alpha_h(\sigma_g(\fF_0))=\alpha_h(\fF_0)
	=\fF(W),
	\end{equation*}
	since the representations $\alpha,\sigma$ commute and by assumption c). 
\end{proof}

\noindent
Note that the converse of this proposition is trivially true. Given a wedge-local field net, then $\fF(W_0)\subset \fF$ satisfies property a) by covariance and b) by twisted locality. Property c) holds by definition.

\begin{remark}
A causal Borchers system $(\fF_0,\fF,\alpha,\sigma,\gamma)$ is closely connected to the notion of a causal Borchers triple \cite{BuchholzLechnerSummers:2010} on Minkowski spacetime (see also \cite{Lechner10} for the related notion of a wedge triple). In this setting, $\fF_0\subset\BH$ is a von Neumann algebra and $\alpha$ is the adjoint action of a unitary representation $U$ of the Poincar\'e group. In addition one assumes that the joint spectrum of the generators of the translations $U\uhr \Rl^4$ is contained in the closed forward lightcone (spectrum condition) and that $\fF_0$ admits a cyclic and separating vector (existence of a vacuum state). Gauge transformations are absent in this setting since nets of observables are considered.
\end{remark}

\begin{remark}
 For the sake of brevity we will write $\fF_0\subset \fF$ to denote a causal Borchers system relative to $W_0$.
\end{remark}

\subsection{Deformations of field nets with $\Uone$ gauge symmetry}
 \label{subsec:DeformationsFieldNets}
Now we apply the warped convolution deformation method to our present setting. Let $\fF_0\subset\fF$ be a causal Borchers system relative to $W_0$. The basic idea is to define a deformation $(\fF_0)_{\xi,\kappa}$ of the small algebra $\fF_0$ using a suitable $\Rl^2$-action (see below) in such a way that $(\fF_0)_{\xi,\kappa}\subset\fF$ is again a causal Borchers system. Then the inclusion $(\fF_0)_{\xi,\kappa}\subset \fF$ gives rise to another wedge-local field net by Proposition \ref{prop:fieldquadruple-fieldnet}. 
\\

\noindent
For the warped convolution we make the further assumption that the gauge group is $\G=\Uone\cong\Rl/2\pi\Zl$. The representation $\sigma$ of $\Uone$ yields a $2\pi$-periodic $\Rl$-action $F\mapsto\sigma_{\exp(is)}(F)$, $s\in\Rl$ by automorphisms on $\fF$. The warped convolution is now defined with the $\Rl^2$-action $\tau^\xi$ coming from the one-parameter group of boosts $\widetilde{\Gamma}_{W_0}\subset \tLo$ and the internal symmetry group:
\begin{equation*}
\Rl^2\ni (t,s)\longmapsto 
\tau_{\lambda_{W_0}(t),\exp(is)}=:\tau^\xi_{t,s}:\fF\lra \fF.
\end{equation*}
Note that $\widetilde{\Gamma}_{W_0}$ implicitly depends on the Killing field $\xi:=\xi_{W_0}$ which is associated with $W_0$ (see Section \ref{subsec:wedges}). We will use the notation
\begin{equation*}
 \lambda_\xi(t):=\lambda_{W_0}(t),\quad U_\xi(t):=U(\lambda_\xi(t)),\quad \U_\xi(t,s):=\U(\lambda_\xi(t),s).
\end{equation*}

Since the warped convolution is defined is terms of oscillatory integrals of operator-valued functions, we first need to specify suitable smooth elements of $\fF$ for which these integrals are well-defined. The joint action (\ref{eq:jointautomorphism}) is a strongly continuous action of the Lie group $\tLo\times \Uone$ which acts automorphically, and therefore isometrically, on $\fF$. The algebra $\fF_0$ is, in general, only invariant under the action of the subgroup $\widetilde{\L}_0(W_0)\times \Uone$. Adapted to the present setting, and following \cite{DappiaggiLechnerMorfa-Morales10}, we consider the following weakened notion of smoothness with respect to the subgroup $\widetilde{\Gamma}_{W_0}\times\Uone$. 

\begin{definition}
	An operator $F\in\fF$ is called {\it $\xi$-smooth}, if $\Rl^2\ni v\mapsto \tau^\xi_v(F)\in\fF$ is smooth in the norm topology of $\fF$. The set of all $\xi$-smooth operators in $\fF$ is denoted by $\fF^\infty_\xi$.
\end{definition}

Note that the set $\fF^\infty_\xi$ is a norm-dense $*$-subalgebra of $\fF$ (see \cite{Taylor:1986}). Another ingredient for the definition of the warped convolution is the antisymmetric (real) matrix 
\begin{equation*}
	\theta:= 
	\begin{pmatrix}
		0 & 1\\
		-1 & 0 
	\end{pmatrix}
\end{equation*}
and an arbitrary but fixed real number $\kappa$ which plays the role of a deformation parameter. 

\begin{definition}
The {\em warped convolution} of an operator $F\in \fF^\infty_\xi$ is defined as 
\begin{equation}
	\label{eq:WarpedConvolution} 
	F_{\xi,\kappa}:=\frac{1}{4\pi^2}\lim_{\eps\ra 0}\int_{\Rl^2\times \Rl^2}dv\, d v'\; e^{-ivv'}\, \chi(\eps v,\eps v')\, \tau^\xi_{\kappa\theta v}(F)\U_\xi(v').
\end{equation}
Here $vv'$ denotes the standard Euclidean inner product of $v,v'\in\Rl^2$ and $\chi\in C^\infty_0(\Rl^2\times \Rl^2)$, $\chi(0,0)=1$ is a cutoff function which is necessary to define this operator-valued integral in an oscillatory sense.  
\end{definition}

From the results in \cite{BuchholzLechnerSummers:2010} follows that the above limit exists in the strong operator topology of $\BH$ on the dense domain\label{smoothvectorsdeSitter}
\begin{equation*}
	\HS^\infty:=\{\Phi\in\HS: \tLo\times\Uone\ni (h,g) \mapsto \U(h,g)\Phi\in \HS \text{ is smooth in } \|\cdot \|_{\HS}\} 
\end{equation*}
and is independent of the chosen cutoff function $\chi$ within the specified class. The densely defined operator $F_{\xi,\kappa}$ extends to a bounded and smooth operator, which is denoted by the same symbol. For the space of all vectors which are smooth with respect to the representation $\U_\xi$ we write $\HS^\infty_\xi$.\label{xismoothvectorsdeSitter}

Furthermore, it is shown in \cite{BuchholzLechnerSummers:2010} that the warped convolution (\ref{eq:WarpedConvolution}) is closely related to Rieffel deformations of C$^*$-algebras \cite{Rieffel:1992}. In this context one defines, instead of a deformation of the algebra elements, a new product $\times_{\xi,\kappa}$ on $\fF^\infty_\xi$ by
\begin{equation*}
 F\times_{\xi,\kappa}G:=\frac{1}{4\pi^2}\lim_{\eps\ra 0}\int_{\Rl^2\times\Rl^2}d v\,d v' e^{-ivv'}
\chi(\eps v,\eps v') \tau^\xi_{\kappa \theta v}(F)\tau^\xi_{v'}(G).
\end{equation*}
This limit exists in the norm-topology of $\fF$ for all $F,G\in\fF^\infty_\xi$ and $F\times_{\xi,\kappa}G$ is again in $\fF^\infty_\xi$. The completion of $(\fF^\infty_\xi,\times_{\xi,\kappa})$ in a suitable norm yields another C$^*$-algebra \cite{Rieffel:1992}.

The following lemma collects the basic properties of the map $F\mapsto F_{\xi,\kappa}$ and shows that the warped operators form a representation of the Rieffel deformed C$^*$-algebra for a fixed deformation parameter.

\begin{lemma}[\cite{BuchholzLechnerSummers:2010,DappiaggiLechnerMorfa-Morales10}]
\label{lem:WCbasicproperties}
 Let $F,G\in\fF^\infty_\xi$ and $\kappa\in\Rl$. Then
\begin{enumerate}
 \item[a)] $(F_{\xi,\kappa})^*=(F^*)_{\xi,\kappa}$.
 \item[b)] $F_{\xi,\kappa}G_{\xi,\kappa}=(F\times_{\xi,\kappa}G)_{\xi,\kappa}$.
 \item[c)] If $[\tau^\xi_v(F),G]=0$ for all $v\in\Rl^2$, then $[F_{\xi,\kappa},G_{\xi,-\kappa}]=0$.
 \item[d)] If $[Z\tau^\xi_v(F)Z^*,G]=0$ for all $v\in\Rl^2$, then $[ZF_{\xi,\kappa}Z^*,G_{\xi,-\kappa}]=0$. 
 \item[e)] Let $X\in\BH$ be a unitary which commutes with $\U_\xi(v)$ for all $v\in\Rl^2$. Then 
$XF_{\xi,\kappa}X^{-1}=(XFX^{-1})_{\xi,\kappa}$ and $XF_{\xi,\kappa}X^{-1}$ is $\xi$-smooth.
 
\end{enumerate}
\end{lemma}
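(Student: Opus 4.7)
The plan is to verify each assertion by direct manipulation of the oscillatory integral \eqref{eq:WarpedConvolution}, treating it as a strong-operator limit on the dense subspace $\HS^\infty$ and extending by boundedness. Throughout, I will freely interchange the limit $\eps\to 0$ with bounded operations and use that the value of the integral is independent of the admissible cutoff~$\chi$. All the statements below are established in \cite{BuchholzLechnerSummers:2010} in a more general setting; I only sketch how the arguments specialize to our $\Rl^2$-action $\tau^\xi$.

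\textbf{For (a):} Taking the adjoint of \eqref{eq:WarpedConvolution} termwise, substituting $(v,v')\mapsto(v,-v'+2\kappa\theta v)$ (which has unit Jacobian and preserves the class of admissible cutoff functions, up to a redefinition $\chi\mapsto\chi'$), and using the antisymmetry of $\theta$ together with $\U_\xi(-w)=\U_\xi(w)^{-1}$ yields the integral representing $(F^*)_{\xi,\kappa}$. For (b), I would insert the integral for $F_{\xi,\kappa}$ and $G_{\xi,\kappa}$ on a vector $\Psi\in\HS^\infty$, use $\U_\xi(v')F=\tau^\xi_{v'}(F)\U_\xi(v')$ between the two factors, and perform the substitution that absorbs the inner translation into the outer one; this rewrites the resulting four-fold oscillatory integral as a two-fold integral over $\tau^\xi_{\kappa\theta v}(F\times_{\xi,\kappa}G)\U_\xi(v')$, which is exactly $(F\times_{\xi,\kappa}G)_{\xi,\kappa}\Psi$. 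The only delicate point is to justify the change of order of the oscillatory integrations; this is handled by the standard regularization with $\chi(\eps\cdot,\eps\cdot)$ and by invoking the $\xi$-smoothness of $F$ and $G$, which guarantees that the integrand is a Schwartz function in $v,v'$ after pairing with smooth vectors.

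\textbf{For (c) and (d):} I would write the commutator $[F_{\xi,\kappa},G_{\xi,-\kappa}]\Psi$, $\Psi\in\HS^\infty$, as a four-fold oscillatory integral and use $\U_\xi(v')G=\tau^\xi_{v'}(G)\U_\xi(v')$ (and correspondingly for~$F$) to move all the $\tau^\xi$-transformed factors to one side, so that the integrand becomes proportional to the commutator $[\tau^\xi_{\kappa\theta v}(F),\tau^\xi_{-\kappa\theta w}(G)]$. Since $\tau^\xi$ is an action by automorphisms, this commutator is the $\tau^\xi_{\kappa\theta v}$-image of $[F,\tau^\xi_{-\kappa\theta w-\kappa\theta v}(G)]$, which vanishes by assumption for every $v,w\in\Rl^2$; hence the integral is zero. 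Part (d) is then an immediate consequence of (c): the twist operator $Z=(1-iY)/\sqrt 2$ commutes with $\U_\xi(v)$ for all $v$ by \eqref{eq:GaugeInvariance} (since $Y$ implements~$\gamma=\sigma_{g_0}$ and $\gamma$ commutes with $\alpha$ and~$\sigma$), so conjugation by $Z$ commutes with the warping map and with $\tau^\xi$; thus the hypothesis of (d) is the hypothesis of (c) applied to $ZFZ^*$ and $G$, and the conclusion of (c) gives the conclusion of (d) after conjugating back.

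\textbf{For (e):} The assumed commutativity $[X,\U_\xi(v)]=0$ allows $X$ to be pulled through the unitaries in \eqref{eq:WarpedConvolution}, so that $XF_{\xi,\kappa}X^{-1}$ equals the oscillatory integral representing $(XFX^{-1})_{\xi,\kappa}$. Smoothness of $XFX^{-1}$ under $\tau^\xi$ follows from the same commutation property: $\tau^\xi_v(XFX^{-1})=X\tau^\xi_v(F)X^{-1}$, and norm-smoothness of $v\mapsto\tau^\xi_v(F)$ transfers to norm-smoothness of $v\mapsto X\tau^\xi_v(F)X^{-1}$ because $\|X\,\cdot\,X^{-1}\|=\|\,\cdot\,\|$.

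The main obstacle throughout is a purely technical one: justifying all the manipulations with oscillatory integrals (Fubini-type exchanges, changes of variables, interchange with $\eps\to 0$) on the dense subspace $\HS^\infty$. This is precisely the content of the general analysis in \cite{BuchholzLechnerSummers:2010}, whose results apply verbatim here because our $\Rl^2$-action $\tau^\xi$, built from the boost $\widetilde\Gamma_{W_0}$ and the $\Uone$ gauge group, satisfies the same abstract assumptions (strong continuity, unitary implementation, commuting factors) as the Minkowski translations used there. No positivity or spectrum condition on $\U_\xi$ is needed for statements (a)--(e).
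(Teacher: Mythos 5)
Your proposal is correct and follows essentially the same route as the paper: parts a), b), c), e) are taken over from \cite{BuchholzLechnerSummers:2010} (the paper simply cites them, as you do, since the $\Rl^2$-action $\tau^\xi$ built from $\widetilde{\Gamma}_{W_0}\times\Uone$ satisfies the same hypotheses and no spectrum condition is needed), and d) is reduced to c) by observing that $Z$ commutes with $\U_\xi(v)$ because $\gamma$ commutes with $\alpha$ and $\sigma$ — which is precisely the paper's argument via Lemma \ref{lemma:twist}. The only caveat is cosmetic: the explicit change of variables you quote in the sketch of a) is not quite the one that brings the adjoint back to the standard form (one rather shifts $v$ by $\kappa^{-1}\theta^{-1}v'$ and flips $v'$), but since you defer the oscillatory-integral technicalities to \cite{BuchholzLechnerSummers:2010} this does not affect the validity of the proof.
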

\begin{proof}
 Statements a), b), c), e) were shown in \cite{BuchholzLechnerSummers:2010}. Assertion d) is a consequence of Lemma \ref{lemma:twist} in Chapter \ref{ch:Cosmological} and the fact that $\gamma$ commutes with $\alpha$ and $\sigma$.
\end{proof}

The next lemma lists the transformation properties of warped operators under the de Sitter and gauge group.

\begin{lemma}
\label{lem:WCcovariance}
 Let $F\in\fF^\infty_\xi$, $\kappa\in\Rl$, $h\in\tLo$ and $g\in\Uone$. Then
\begin{enumerate}
 \item[a)] $\alpha_h(F)$ is $h_*\xi$-smooth and 
\begin{equation}
\label{eq:DeformedOperatorsDeSitterCovariance}
 \alpha_h(F_{\xi,\kappa})=\alpha_h(F)_{h_*\xi,\kappa},
\end{equation}
where $h_*\xi$ is the push-forward of $\xi$ with respect to $h$.
\item[b)] $\sigma_g(F)$ is $\xi$-smooth and 
\begin{equation}
\label{eq:DeformedOperatorsGaugeCovariance}
 \sigma_g(F_{\xi,\kappa})=\sigma_g(F)_{\xi,\kappa}.
\end{equation}
\end{enumerate}
\end{lemma}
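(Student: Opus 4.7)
My strategy is to reduce both statements to properties of the $\Rl^2$-action $\tau^\xi$ and its implementing unitary $\U_\xi$, exploiting the commutativity of the de Sitter and gauge actions. Part (b) is in fact almost immediate from Lemma \ref{lem:WCbasicproperties} e): setting $X := V(g)$, I note that $V(g)$ commutes with $V(\exp(is))$ because $\Uone$ is Abelian, and with $U(\lambda_\xi(t))$ because $\sigma$ and $\alpha$ commute. Hence $V(g)$ commutes with $\U_\xi(t,s)=U(\lambda_\xi(t))V(\exp(is))$ for every $(t,s)\in\Rl^2$, so Lemma \ref{lem:WCbasicproperties} e) directly yields $\xi$-smoothness of $\sigma_g(F)=V(g)FV(g)^{-1}$ and the covariance equation \eqref{eq:DeformedOperatorsGaugeCovariance}.

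For part (a), the key ingredient is the intertwining relation
\begin{equation*}
 U(h)\U_\xi(t,s)U(h)^{-1}=\U_{h_*\xi}(t,s),\qquad (t,s)\in\Rl^2,\ h\in\tLo,
\end{equation*}
which I would establish by combining three facts: $V(\exp(is))$ commutes with $U(h)$ (because $\sigma$ and $\alpha$ commute), the conjugation $h\lambda_{W_0}(t)h^{-1}=\lambda_{hW_0}(t)$ from \eqref{eq:GammaCovarianceL0} lifted to $\tLo$, and the identification $\xi_{hW_0}=h_*\xi_{W_0}$ of Killing fields. Passing to the level of automorphisms, this intertwining gives $\alpha_h\circ\tau^\xi_v=\tau^{h_*\xi}_v\circ\alpha_h$ for all $v\in\Rl^2$, so smoothness of $v\mapsto\tau^{h_*\xi}_v(\alpha_h(F))=\alpha_h(\tau^\xi_v(F))$ in $\fF$ follows at once from the $\xi$-smoothness of $F$ together with the fact that $\alpha_h$ is isometric. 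The covariance identity is then obtained by a direct computation on the dense subspace $\HS^\infty$: inserting $1=U(h)^{-1}U(h)$ into the oscillatory integral \eqref{eq:WarpedConvolution} defining $F_{\xi,\kappa}$ and applying the intertwining to each factor transforms the integrand to $\tau^{h_*\xi}_{\kappa\theta v}(\alpha_h(F))\,\U_{h_*\xi}(v')$, reproducing the oscillatory integral for $\alpha_h(F)_{h_*\xi,\kappa}$.

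I do not anticipate a genuine obstacle here: both statements are essentially bookkeeping arguments of the same kind as Lemma~3.9 in the cosmological chapter, with the only new twist being the presence of the internal symmetry factor $V(g)$. The one point requiring a small amount of care is the verification of the intertwining at the level of the universal cover $\tLo$, which is unambiguous because $\widetilde{\Gamma}_{W_0}$ is the \emph{unique} lift of $\Gamma_{W_0}$ and the covariance \eqref{eq:CovarianceGammaJ} fixes the transformation behaviour of $\lambda_\xi(t)$ under $\tLo$.
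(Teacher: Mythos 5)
Your proposal is correct and follows essentially the same route as the paper: part b) is exactly the paper's appeal to Lemma \ref{lem:WCbasicproperties} e) with $X=V(g)$, and part a) is the paper's reduction to the covariance of warped convolutions under isometries (Lemma \ref{lemma:deformed-operators} of the cosmological chapter) combined with the commutativity of $\alpha$ and $\sigma$ — you merely unpack that citation by re-running the intertwining/change-of-variables computation for the joint action $\U_\xi(t,s)=U_\xi(t)V(s)$ on $\HS^\infty$. No gap; the only implicit assumption, shared with the paper, is that the implementers $U$ and $V$ commute on the operator level, as guaranteed in the covariant representations considered there.
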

\begin{proof}
 Statement a) follows from Lemma \ref{lemma:deformed-operators} in Chapter \ref{ch:Cosmological} and the fact that $\alpha$ and $\sigma$ commute. Statement b) follows from Lemma \ref{lem:WCbasicproperties} e). 
\end{proof}

Now we apply the warped convolution deformation method to a causal Borchers system $\fF_0\subset\fF$. Define
\begin{equation*}
 (\fF_0)_{\xi,\kappa}:=\overline{\{F_{\xi,\kappa}:F\in\fF_0\cap\fF^\infty_\xi\}}^{\|\cdot\|}.
\end{equation*}
The following theorem shows that the inclusion $(\fF_0)_{\xi,\kappa}\subset\fF$ gives rise to a wedge-local field net in the sense of Proposition \ref{prop:fieldquadruple-fieldnet}.

\begin{theorem}
\label{thm:main}
 Let $(\fF_0)_{\xi,\kappa}$ be as above. Then
\begin{enumerate}
 		\item[a)] $\alpha_h((\fF_0)_{\xi,\kappa})= (\fF_0)_{\xi,\kappa},\; g\in\tLo(W_0)$, 
		\item[b)] $\alpha_{\bj_{W_0}}((\fF_0)_{\xi,\kappa})\subset ((\fF_0)_{\xi,\kappa})^t{}'$,
		\item[c)] $\sigma_g((\fF_0)_{\xi,\kappa})=(\fF_0)_{\xi,\kappa},\; g\in \Uone$.
\end{enumerate}
\end{theorem}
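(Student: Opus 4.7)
The plan is to verify each of the three conditions separately, reducing in each case to a statement about the generators $F_{\xi,\kappa}$ with $F\in\fF_0\cap\fF^\infty_\xi$ and then passing to the norm closure at the end.

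For part a), I begin by observing that if $h\in\tLo(W_0)$, then the unitary $U(h)$ commutes with the entire $\Rl^2$-representation $\U_\xi$: with the boost component $U_\xi(t)$ because $h\lambda_\xi(t)h^{-1}=\lambda_\xi(t)$ by \eqref{eq:CovarianceGammaJ}, and with the gauge component $V(\exp(is))$ by the commutativity \eqref{eq:GaugeInvariance} of $\alpha$ and $\sigma$. Lemma \ref{lem:WCbasicproperties} e) then yields $\alpha_h(F_{\xi,\kappa})=(\alpha_h(F))_{\xi,\kappa}$; since $\alpha_h(F)\in\fF_0$ by Borchers-system property a) and remains $\xi$-smooth, this element lies in $(\fF_0)_{\xi,\kappa}$, and the reverse inclusion follows by the same argument applied to $h^{-1}$. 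Part c) is completely analogous: abelianness of $\Uone$ together with $[\alpha,\sigma]=0$ makes $V(g)$ commute with $\U_\xi(v)$, so Lemma \ref{lem:WCbasicproperties} e) and Borchers-system property c) give $\sigma_g(F_{\xi,\kappa})=(\sigma_g(F))_{\xi,\kappa}\in(\fF_0)_{\xi,\kappa}$.

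The essential work is part b). First I compute $\alpha_{\bj_{W_0}}(F_{\xi,\kappa})$ explicitly: by Lemma \ref{lem:WCcovariance} a) this equals $\alpha_{\bj_{W_0}}(F)_{(\bj_{W_0})_*\xi,\kappa}$, and the identity $\bj_{W_0}\lambda_\xi(t)\bj_{W_0}=\lambda_\xi(-t)$ from \eqref{eq:CovarianceGammaJ} shows that the pushed-forward $\Rl^2$-action $\tau^{(\bj_{W_0})_*\xi}$ is related to $\tau^\xi$ by the orthogonal transformation $N=\diag(-1,1)$ acting only on the boost parameter. A short substitution in the oscillatory integral \eqref{eq:WarpedConvolution}, using $N^T\theta N=-\theta$, gives the analog of the formula $F_{N\xi,\kappa}=F_{\xi,\det N\cdot\kappa}$ from Chapter \ref{ch:Cosmological} and yields
\begin{equation*}
 \alpha_{\bj_{W_0}}(F_{\xi,\kappa})=\alpha_{\bj_{W_0}}(F)_{\xi,-\kappa}.
\end{equation*}
It then remains to prove that this operator lies in the twisted commutant of $(\fF_0)_{\xi,\kappa}$. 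Writing a generator of that algebra as $G_{\xi,\kappa}$ with $G\in\fF_0\cap\fF^\infty_\xi$ and using $[Z,\U_\xi]=0$ (so that $ZG_{\xi,\kappa}Z^{*}=(ZGZ^{*})_{\xi,\kappa}$ by Lemma \ref{lem:WCbasicproperties} e)), I reduce the claim to
\begin{equation*}
 [\alpha_{\bj_{W_0}}(F)_{\xi,-\kappa},\,(ZGZ^{*})_{\xi,\kappa}]=0.
\end{equation*}
Lemma \ref{lem:WCbasicproperties} c) reduces this to the purely undeformed commutation $[\tau^\xi_v(\alpha_{\bj_{W_0}}(F)),\,ZGZ^{*}]=0$ for every $v\in\Rl^2$. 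Here I argue that $\tau^\xi$ preserves $\fF_0$ (because $\lambda_\xi(t)\in\tLo(W_0)$ and by Borchers-system properties a), c)), and because $Z$ commutes with $\U_\xi$ this entails that $\tau^\xi$ also preserves the twisted commutant $(\fF_0)^t{}'$. Since $\alpha_{\bj_{W_0}}(F)\in(\fF_0)^t{}'$ by Borchers-system property b), every translate $\tau^\xi_v(\alpha_{\bj_{W_0}}(F))$ remains in $(\fF_0)^t{}'$ and the required commutator vanishes.

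The main obstacle will be the geometric bookkeeping in part b): correctly identifying the pushforward $(\bj_{W_0})_*\xi$ of the $\Rl^2$-action (the wedge reflection flips only the boost direction while leaving the internal direction intact), tracking the resulting sign change $\kappa\mapsto-\kappa$, and then upgrading the pointwise twisted locality of the undeformed Borchers system into the $\tau^\xi$-translated form needed to invoke Lemma \ref{lem:WCbasicproperties} c). Once these two points are handled, the norm-closure step and the other two parts are routine.
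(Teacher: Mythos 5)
Your proposal is correct and follows essentially the same route as the paper's proof: parts a) and c) via the covariance of the warping map under unitaries commuting with $\U_\xi$, and part b) via the sign flip $\alpha_{\bj_{W_0}}(F_{\xi,\kappa})=\alpha_{\bj_{W_0}}(F)_{\xi,-\kappa}$ from $\bj_{W_0}\lambda_\xi(t)\bj_{W_0}=\lambda_\xi(-t)$, followed by upgrading the undeformed twisted locality over the $\widetilde{\Gamma}_{W_0}\times\Uone$-orbit and applying the opposite-parameter commutation lemma. Your inlining of Lemma \ref{lem:WCbasicproperties} c) applied to $ZGZ^*$ is just the proof of part d) of that lemma, which is what the paper invokes, so there is no substantive difference.
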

\begin{proof}
	a): Let $F\in\fF_0\cap\fF^\infty_\xi$ and $h\in\tLo(W_0)$. From (\ref{eq:CovarianceGammaJ}) follows that $h$ commutes with each $\lambda_\xi(t),t\in\Rl$. Hence 
	\begin{equation*}
	    \alpha_h(F_{\xi,\kappa})=\alpha_h(F)_{h_*\xi,\kappa}=\alpha_h(F)_{\xi,\kappa}
	\end{equation*}
	by Lemma \ref{lem:WCcovariance} a) and $\alpha_h(F)\in \fF_0$ by property a) of the undeformed causal Borchers system. Therefore $\alpha_h(F_{\xi,\kappa})\in(\fF_0)_{\xi,\kappa}$ and by taking the norm-closure of $\{F_{\xi,\kappa}:F\in\fF_0\cap\fF^\infty_\xi\}$ the statement $\alpha_h((\fF_0)_{\xi,\kappa})=(\fF_0)_{\xi,\kappa}$ follows.

	b): From Lemma \ref{lem:WCcovariance} a) and (\ref{eq:CovarianceGammaJ}) follows
	\begin{equation}
	\label{eq:proof_locality}
	    \alpha_{{\bj}_{W_0}}(F_{\xi,\kappa})=\alpha_{{\bj}_{W_0}}(F)_{{{\bj}_{W_0}}_*\xi,\kappa}=\alpha_{{\bj}_{W_0}}(F)_{\xi,-\kappa},
	\end{equation}
	together with an elementary substitution in (\ref{eq:WarpedConvolution}). We have $\alpha_{{\bj}_{W_0}}(F)\in(\fF_0)^t{}'$ by property b) of the undeformed causal Borchers system, {\it i.e.} $[Z\alpha_{{\bj}_{W_0}}(F)Z^{-1},G]=0$ for all $G\in\fF_0$. Pick some $G\in\fF_0\cap\fF^\infty_\xi$ and consider its warped convolution $G_{\xi,\kappa}$. We have $[Z\tau_v^\xi(\alpha_{{\bj}_{W_0}}(F))Z^{-1},G]=0$ for all $v\in\Rl^2$ since $\fF_0$ is invariant under $\widetilde{\Gamma}_{W_0}\times\Uone$. Hence 
	\begin{equation*}
	  [Z\alpha_{\bj_{W_0}}(F_{\xi,\kappa})Z^{-1},G_{\xi,\kappa}]=
	  \alpha_{\bj_{W_0}}([ZF_{\xi,\kappa}Z^{-1},\alpha_{\bj_{W_0}}(G_{\xi,\kappa})])=
	  \alpha_{\bj_{W_0}}([ZF_{\xi,\kappa}Z^{-1},G_{\xi,-\kappa}])=0
	\end{equation*}
	by (\ref{eq:proof_locality}) and Lemma \ref{lem:WCbasicproperties} e). By taking the norm-closure of $\{F_{\xi,\kappa}:F\in\fF_0\cap\fF^\infty_\xi\}$ the statement $\alpha_{\bj_{W_0}}((\fF_0)_{\xi,\kappa})\subset ((\fF_0)_{\xi,\kappa})^t{}'$ follows.

	Assertion c) is a consequence of Lemma \ref{lem:WCcovariance} b) and the invariance $\fF_0$ under gauge transformations.
\end{proof}

\begin{remark}
 Note that the minus sign which appears in (\ref{eq:proof_locality}) is the main reason why the locality proof works. That this argument is also valid for the extended symmetry group $\L_+$ can be seen in the following way. The reflection $\widehat{j}_{W_0}(x^0,x^1,\vec{x})=(-x^0,-x^1,\vec{x})$ commutes with boosts in the $x^1$-direction. Again, a deformed operator transforms under the lift $\widehat{\bj}_{W_0}$ of $\hat{j}_{W_0}$ according to $\alpha_{\widehat{\bj}_{W_0}}(F_{\xi,\kappa})=F_{\xi,-\kappa}$ since $\widehat{\bj}_{W_0}$ is represented by an {\em anti}unitary operator.
\end{remark}

\subsection{Example: Deformations of $\CAR$-nets} 
\label{subsec:CARnets}
Now we investigate a particular class of wedge-local field nets in more detail, namely, nets of $\CAR$-algebras. The free charged Dirac field is an example thereof. After it is shown that these models fit into the framework of Section \ref{subsec:DeformationsFieldNets}, the properties of the deformed field operators and observables are studied and it is proved that the deformed and undeformed nets are non-isomorphic.

\subsubsection{The selfdual $\CAR$-algebra}
We use Araki's selfdual approach to the $\CAR$-algebra \cite{Araki:1971}. Let $H$ be a separable infinite-dimensional complex Hilbert space with inner product $\left<.\,,.\right>$ and let $C$ be an antiunitary involution on $H$, {\it i.e.} there holds $C^2=1$ and $\left<C f_1,C f_2\right>=\left<f_2,f_1\right>$ for all $f_1,f_2\in H$. 
On the $*$-algebra $\CAR_0(H,C)$ which is algebraically generated by symbols $B(f),f\in H$ and a unit $1$, satisfying
\begin{itemize}
 \item[$a)$] $f\mapsto B(f)$ is complex linear,
 \item[$b)$] $B(f)^*=B(C f)$,
 \item[$c)$] $\{B(f_1),B(f_2)\}=\left<C f_1,f_2\right>\cdot 1$,
\end{itemize}
there exists a unique C$^*$-norm satisfying
\begin{equation*}
	\|B(f)\|^2=\frac{1}{2}(\|f\|^2+\sqrt{\|f\|^4-|\left<f,Cf\right>|^2}).
\end{equation*}
Hence, each $B(f)$ is bounded and $f\mapsto B(f)$ is norm-continuous. The C$^*$-completion of $\CAR_0(H,C)$ is denoted by $\CAR(H,C)$. This C$^*$-algebra is simple, so all its representations are faithful or trivial \cite{Araki:1971}. 

If $u$ is a unitary on $H$ which commutes $C$, then $\alpha_u(B(f)):=B(uf)$ defines a $*$-automorphism on $\CAR(H,C)$. We refer to $u$ as Bogolyubov transformation and to $\alpha_u$ as Bogolyubov automorphism.
\subsubsection{Quasifree representations}
A state $\omega$ on $\CAR(H,C)$ is called quasifree, if
\begin{align*}
	\omega(B(f_1)\cdots B(f_{2n+1}))&=0\\
	\omega(B(f_1)\cdots B(f_{2n}))
	&=(-1)^{n(n-1)/2}\sum_{\epsilon}\mathrm{sgn}(\epsilon)\prod_{k=1}^n\omega(B(f_{\epsilon(k)})B(f_{\epsilon(k+n)}))
\end{align*}
holds for all $n\in\Nl$, where the sum runs over all permutations $\epsilon$ of $\{1,\dots,2n\}$ satisfying
\begin{equation*}
	\epsilon(1)<\dots <\epsilon(n),\quad \epsilon(k)<\epsilon(k+n),\quad k=1,\dots,n.
\end{equation*}
Let $S$ be a bounded linear operator on $H$ satisfying
\begin{equation*}
	\label{eq:CAR_Soperator}
	S=S^*,\quad 0\le S\le 1,\quad CSC=1-S.
\end{equation*}
In \cite[Lem.3.3]{Araki:1971} it is shown that for every such $S$ there exists a unique quasifree state $\omega_S$ satisfying 
\begin{equation*}
\label{eq:QuasifreeState}
	\omega_S(B(f)B(g))=\left<Cf,Sg\right>.
\end{equation*}
Conversely, every quasifree state on $\CAR(H,C)$ gives rise to such an operator \cite[Lem.3.2]{Araki:1971}. Hence quasifree states can be parametrized by this class of operators. 

Let $\omega_S$ be a quasifree state. For the GNS-triple associated with $(\CAR(H,C),\omega_S)$ we write $(\HS_S,\pi_S,\Omega_S)$. If a Bogolyubov transformation $u$ commutes with $S$, then the associated Bogolyubov automorphism can be implemented, {\em i.e.} there exists a unitary operator $U_S$ on $\HS_S$, such that
\begin{equation*}
	\pi_S(\alpha_u(F))=U_S\pi_S(F)U_S^{-1},\quad U_S\Omega_S=\Omega_S
\end{equation*}
holds for all $F\in\CAR(H,C)$ (see \cite[Lem.4.2]{Araki:1971}).

Fock states are a particular class of quasifree states where $S=P$ is a projection. The GNS Hilbert space $\HS_P$ is the Fermionic Fock space over $PH$
\begin{equation*}
	\HS_P=\Cl\oplus \bigoplus_{n\ge 1}\wedge^n PH,
\end{equation*}
where $\wedge^n PH$ denotes the antisymmetrization of the $n$-fold tensor product of $PH$, the cyclic vector $\Omega_P$ is the Fock vacuum in $\HS_P$ and
\begin{equation*}
 \pi_P(B(f))=a^\dagger(PC f)+a(Pf),
\end{equation*}
with the standard Fermi creation and annihilation operators $a^\#(Pf)$ on $\HS_P$.
Two representations $(\HS_P,\pi_P)$ and $(\HS_{P'},\pi_{P'})$ are unitarily equivalent, if and only if $P-P'$ is Hilbert-Schmidt \cite{ShaleStinespring64, Araki:1971}. As a consequence, a Bogolyubov transformation $u$ is implementable, if and only if $[u,P]$ is Hilbert-Schmidt.

\subsubsection{Nets of $\CAR$-algebras}
In order to introduce charges and global gauge transformations we double the Hilbert space $\hs:=H\oplus H$ and define the antiunitary involution
\begin{equation*}
 \cc:=
\begin{pmatrix}
 0 & C\\
 C & 0
\end{pmatrix}.
\end{equation*}
We denote elements in $\hs$ by $f_+\oplus f_-$, $f_\pm\in H$ and for the inner product in $\hs$ we write $(.\,,.)$. Applying Araki's construction to $(\hs,\cc)$ yields again a C$^*$-algebra $\CAR(\hs,\cc)$. The unitary operators
\begin{equation}
\label{eq:CAR_GaugeTransformationGeneral}
	v(s)(f_+\oplus f_-):=e^{is}f_+\oplus e^{-is}f_-,\quad s\in\Rl,\;f_\pm\in H
\end{equation}
commute with $\cc$ so there exists a representation $\sigma:\Uone\ra \Aut(\CAR(\hs,\cc))$, such that
\begin{equation}
\label{eq:CAR_gaugetransformation}
 \sigma_{s}(B(f))=B(v(s)f).
\end{equation}
We assume that there exists a unitary representation $u$ of $\tLo$ on $\hs$ which commutes with $\cc$ so that there exists a representation $\alpha:\tLo\ra\Aut(\CAR(\hs,\cc))$ satisfying
\begin{equation*}
\label{eq:CAR_covariance}
 \alpha_h(B(f))=B(u(h)f).
\end{equation*}
For the representers of the subgroup $\widetilde{\Gamma}_{W_0}$ we write $u_\xi(t):=u(\lambda_\xi(t)),\, t\in\Rl$.
\begin{remark}
 The picture in terms of spinors and cospinors is obtained by setting 
\begin{equation}
\label{eq:CAR_cospinors}
 \Psi(f_-):=B(0\oplus f_-),\quad \Psi^\dagger(f_+):=B(f_+\oplus 0).
\end{equation}
There holds $\Psi(f_-)^*=\Psi^\dagger(Cf_-)$ and  $\Psi^\dagger(f_+)^*=\Psi(Cf_+)$. From the linearity of $f\mapsto B(f)$ follows that (co)spinors transform according to
\begin{equation}
\label{eq:CAR_cospinors_gauge}
 \sigma_s(\Psi(f_-))=e^{-is}\Psi(f_-),\quad \sigma_s(\Psi^\dagger(f_+))=e^{is}\Psi^\dagger(f_+)
\end{equation}
under gauge transformations.
\end{remark}

Now we come to the net structure of the theory. Let $\hs_0\subset \hs$ be a complex linear subspace satisfying $\cc \hs_0\subset \hs_0$ and
\begin{itemize}
 \item[$i)$] $u(h)\hs_0=\hs_0,\;h\in\tLo(W_0)$,
 \item[$ii)$] $u(\bj_{W_0})\hs_0\subset (\hs_0)^\perp$,
 \item[$iii)$] $v(s)\hs_0=\hs_0,\; s\in\Rl$,
\end{itemize}
where $(\hs_0)^\perp$ is the orthogonal complement of $\hs_0$. 
\begin{remark}
In concrete models this space is explicitly given and can be constructed by different methods. In the case of the free charged Dirac field the space $\hs_0$ can be defined as the set of (Fourier-Helgason transforms of) spinor-valued testfunctions on $M$ which are localized in the wedge $W_0$ (see \cite{BartesaghiGazeauMoschellaTakook} and \cite{BrosMoschella95} for the scalar free field case) or one considers smooth sections of the Dirac bundle over $M$ modulo the kernel of the causal propagator which is associated with the Dirac equation \cite{Dimock:1982, Sanders:2010}. Since this space is constructed from testfunctions it is clear that conditions $i),ii)$ and $iii)$ are satisfied.  
\end{remark}

It is an easy exercise to show that the above conditions imply that 
\begin{equation*}
 W:=hW_0\mapsto u(h)\hs_0=:\hs(W)
\end{equation*}
is an isotonous, $\tLo$-covariant, wedge-local and gauge-invariant net of complex Hilbert spaces in the sense of \cite{BaumgaertelJurkeLledo94}. Hence it is an immediate consequence of Araki's construction that 
\begin{equation}
\label{eq:CARnet}
	W\mapsto \CAR(\hs(W),\cc)=:\fF(W)\subset\fF:=\CAR(\hs,\cc)
\end{equation}
is a wedge-local field net. Equivalently, from conditions $i)$, $ii)$ and $iii)$ follows that the inclusion $\CAR(\hs_0,\cc)=:\fF_0\subset\fF=\CAR(\hs,\cc)$ satisfies
\begin{itemize}
 \item $\alpha_h(\fF_0)=\fF_0,\; h\in\tLo(W_0)$
 \item $\alpha_{\bj_{W_0}}(\fF_0)\subset (\fF_0)^t{}'$
 \item $\sigma_s(\fF_0)=\fF_0,\; s\in\Rl$
\end{itemize}
and $hW_0\mapsto \alpha_h(\fF_0)$ defines a wedge-local field net by Proposition \ref{prop:fieldquadruple-fieldnet}) which coincides with (\ref{eq:CARnet}). 

\begin{remark}
    Observables in this net  are polynomials of $\Psi(f_-)\Psi^\dagger(f_+)$ which are manifestly gauge-invariant. The quasilocal algebra generated by them is denoted as $\Aa$. 
\end{remark}

From the above discussion it is clear that $W\mapsto \fF(W)$ complies with the general assumptions of Section \ref{subsec:FieldNetsOnDS}. As we mentioned before, the algebra $\fF$ contains a norm-dense $*$-subalgebra of smooth elements $\fF^\infty_\xi$. These can be explicitly constructed by smoothening out any element $F\in \fF$ with a smooth and compactly supported function $f\in C^\infty_0(\widetilde{\Gamma}_{W_0}\times \Uone)$ via
\begin{equation*}
 F_f:=\int_{\widetilde{\Gamma}_{W_0}\times \Uone}d(h,g)\tau^\xi_{h,g}(F)f(h,g)
\end{equation*}
where $d(h,g)$ is the left-invariant Haar measure on $\widetilde{\Gamma}_{W_0}\times \Uone$. By choosing sequences of functions $f_n$ which converges to the Dirac delta measure at the identity of $\widetilde{\Gamma}_{W_0}\times \Uone$ one sees that these elements are dense in $\fF$ in the norm topology. Since the subalgebra $\fF_0$ is invariant under the action $\tau^\xi$ it also contains a norm-dense $*$-subalgebra of smooth elements.
For the warped convolution $hW_0\mapsto \alpha_h((\fF_0)_{\xi,\kappa})$ of a net of $\CAR$-algebras we will use the shorthand notation $\fF_\kappa$.

\begin{remark}
	As we mentioned before, the free charged Dirac field provides an explicit example of a wedge-local field net of $\CAR$-algebras. For spin 1/2 fields there exists a unique de Sitter-invariant state with the Hadamard property \cite{AllenJacobson86, AllenLuetken86}. It is the analogue of the Bunch-Davies state \cite{Allen85} in the the spin 1/2 case. The Dirac field in this representation was studied in \cite{BartesaghiGazeauMoschellaTakook} and it was shown that it satisfies the so-called ``geometric KMS-condition''. By the same methods as in \cite{BorchersBuchholz:1999} one can prove that this condition implies the Reeh-Schlieder property of the state.
\end{remark}

\subsubsection{Deformation fixed-points for observables}
Let $\fF$ be a $\CAR$-net over $(\hs,\cc)$ in a quasifree representation $(\HS_S,\pi_S,\Omega_S)$ of a de Sitter- and gauge-invariant state. 
\begin{remark}
	As we mentioned before, all representations of the $\CAR$-algebra are faithful so will omit the $S$-dependence in our notation from now on. 
\end{remark}

\noindent
For the implementing operators we write
\begin{equation*}
	\pi(\alpha_h(B(f)))=U(h)\pi(B(f))U(h)^{-1},\qquad
	\pi(\sigma_s(B(f)))=V(s)\pi(B(f))V(s)^{-1}.
\end{equation*}
As $\alpha$ and $\sigma$ are strongly continuous, the representations $U$ and $V$ are also strongly continuous. Stone's theorem implies that the one-parameter group $\{V(s):s\in\Rl\}$ has a unique self-adjoint generator $Q$ with spectrum $\S\subset\Zl$ since $V(2\pi)=1$. Hence the representation space $\HS$ is $\S$-graded (charged sectors)
\begin{equation}
	\label{eq:QuasifreeDecomposition}
 \HS=\bigoplus_{n\in \cal{S}}\HS_n,\quad \HS_n=\{\Phi\in\HS:Q\Phi=n\Phi\}.
\end{equation}
From the transformation properties (\ref{eq:CAR_cospinors_gauge}) for (co)spinors follows that $\pi(\Psi(f_-))$ decreases charges by one and $\pi(\Psi^\dagger(f_+))$ increases charges by one, {\em i.e.} 
\begin{equation*}
    \pi(\Psi(f_-))\HS_n\subset \HS_{n-1},\quad 
    \pi(\Psi^\dagger(f_+))\HS_n\subset \HS_{n+1}
\end{equation*}
In the following we will frequently use the spectral decomposition $V(s)=\sum_{n\in \cal{S}}e^{isn}E(n)$, where $E(n)$ is the projector onto the eigenspace $\HS_n$ of $Q$. 

Before we determine the fixed-points of the deformation map for observables, we compute the warped convolution for intertwiners between charged sectors.
\begin{proposition}
	\label{prop:CARDeformation}
 Let $\pi(F)\in\BH$ be $\xi$-smooth such that $\pi(F)\HS_n\subset\HS_{n+m}$. Then
\begin{equation*}
 \pi(F)_{\xi,\kappa}=\sum_{n\in \cal{S}}U_\xi(\kappa n)\pi(F)U_\xi(-\kappa(n+m))E(n).
\end{equation*}
\end{proposition}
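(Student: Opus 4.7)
The plan is to evaluate the oscillatory integral defining $\pi(F)_{\xi,\kappa}$ directly on a vector $\Phi\in\HS^\infty\cap\HS_n$ lying in a single charge sector, and then extend by linearity and the decomposition $\HS=\bigoplus_{n\in\S}\HS_n$. Parametrize $v=(a,b)$, $v'=(a',b')\in\Rl^2$ with $a,a'$ the boost parameters and $b,b'$ the gauge parameters, so that $\U_\xi(v)=U_\xi(a)V(b)$, $\kappa\theta v=(\kappa b,-\kappa a)$, and (using that $\U_\xi$ is an abelian representation)
\begin{equation*}
\tau^\xi_{\kappa\theta v}(\pi(F))\,\U_\xi(v')
=U_\xi(\kappa b)V(-\kappa a)\pi(F)U_\xi(a'-\kappa b)V(b'+\kappa a)\,.
\end{equation*}
Since $\alpha$ and $\sigma$ commute, each $U_\xi(t)$ preserves the charge sectors. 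Therefore $V(b'+\kappa a)\Phi=e^{i(b'+\kappa a)n}\Phi$, then $U_\xi(a'-\kappa b)\Phi\in\HS_n$, and after applying $\pi(F)$ the resulting vector lies in $\HS_{n+m}$, on which $V(-\kappa a)$ acts by the scalar $e^{-i\kappa a(n+m)}$. The combined gauge phase is $e^{ib'n-i\kappa am}$, while the two remaining unitaries are $U_\xi(\kappa b)\,\pi(F)\,U_\xi(a'-\kappa b)$.

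Next I would perform the $b'$-integration: tested against the oscillating factor $e^{-ibb'}$, it produces (in the oscillatory sense) $2\pi\delta(b-n)$, which in turn pins $b=n$ and removes the $b$-integration. What remains is
\begin{equation*}
\frac{1}{2\pi}\int da\,da'\,e^{-iaa'-i\kappa am}\,U_\xi(\kappa n)\pi(F)U_\xi(a'-\kappa n)\Phi\,,
\end{equation*}
and the $a$-integration yields $2\pi\delta(a'+\kappa m)$, collapsing the last integration and leaving
\begin{equation*}
\pi(F)_{\xi,\kappa}\Phi=U_\xi(\kappa n)\pi(F)U_\xi(-\kappa(n+m))\Phi\,,
\end{equation*}
which is exactly the $n$-th summand of the asserted formula applied to $\Phi=E(n)\Phi$. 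Summing over $n\in\S$ via $\Phi=\sum_n E(n)\Phi$ and invoking density of $\HS^\infty$ in $\HS$ gives the statement.

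The only real obstacle is justifying these formal manipulations of oscillatory integrals. The cleanest route is to invoke the spectral characterization of warped convolutions (Theorem 2.8 of \cite{BuchholzLechnerSummers:2010}), namely $\pi(F)_{\xi,\kappa}\Phi=\int \tau^\xi_{\kappa\theta p}(\pi(F))\,d\mathbf{E}(p)\Phi$ where $d\mathbf{E}$ is the joint spectral measure of the generators of $\U_\xi$. Because $\G=\Uone$ has discrete spectrum $\S\subset\Zl$, this measure factorizes as $d\mathbf{E}(p_1,p_2)=d\mathbf{E}_{U_\xi}(p_1)\otimes\sum_{n\in\S}\delta_n(p_2)E(n)$; plugging this in and using the intertwining property $\pi(F)E(n)=E(n+m)\pi(F)E(n)$ to trivialize the gauge part of $\tau^\xi_{\kappa\theta p}(\pi(F))$ reduces the continuous boost integration to the Fourier inversion sketched above and delivers the formula immediately. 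Alternatively one may keep the cutoff $\chi(\eps v,\eps v')$ throughout the computation above: the $(b,b')$-integrations are Schwartz-class on $\HS^\infty_\xi$-vectors and converge absolutely, while the remaining $(a,a')$-integral is a standard strongly convergent oscillatory integral in the sense of \cite{BuchholzLechnerSummers:2010}. In either route, $\xi$-smoothness of $\pi(F)$ and $\Phi\in\HS^\infty$ take care of all domain issues arising from the unboundedness of $U_\xi(\,\cdot\,)$.
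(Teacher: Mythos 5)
Your proposal is correct and follows essentially the same route as the paper: the paper's proof also writes $\Phi=\sum_n E(n)\Phi$, splits the cutoff into a product $\chi_1\chi_2$, uses the gauge transformation law of $\pi(F)$ (charge shift $m$) and $V(s')E(n)=e^{is'n}E(n)$ to reduce the gauge integrations to a $2\pi\delta(\cdot-n)$ factor, and then obtains $2\pi\delta(\cdot+\kappa m)$ from the remaining boost-slot integration, exactly as in your computation. Your "keep the cutoff and regularize the two pairs of variables separately" justification is precisely the paper's argument (Fubini plus the independence of the oscillatory integral from the cutoff), while the spectral-measure route via \cite{BuchholzLechnerSummers:2010} is a legitimate shortcut the paper does not use.
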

\begin{proof}
 Let $\Phi\in\HS^\infty_\xi$. Then
\begin{align*}
 \pi(F)_{\xi,\kappa}\Phi
&=\pi(F)_{\xi,\kappa}\sum_{n\in \cal{S}}E(n)\Phi\\
&=\sum_{n\in \cal{S}}\pi(F)_{\xi,\kappa}E(n)\Phi\\
&=\frac{1}{4\pi^2}\sum_{n\in \cal{S}}\lim_{\eps\ra 0}\int dv \int dv'e^{-ivv'}\chi(\eps v,\eps v')\U_\xi(\kappa\theta v)\pi(F)\U_\xi(-\kappa\theta v)^{-1}\U_\xi(v')E(n)\Phi\\
&=\frac{1}{4\pi^2}\sum_{n\in \cal{S}}\lim_{\eps \ra 0}\int dt ds \int dt' ds' e^{-i(tt'+ss')}
\chi_1(\eps t,\eps t')\chi_2(\eps s,\eps s')\cdot\\
&\hspace{4cm}\cdot
U_\xi(\kappa s)V(-\kappa t)\pi(F)V(-\kappa t)^{-1}U_\xi(\kappa s)^{-1}U_\xi(t')V(s')E(n)\Phi\\
&=\frac{1}{4\pi^2}\sum_{n\in \cal{S}}\lim_{\eps_1\ra 0}\int dt dt' \lim_{\eps_2\ra 0}\int ds ds' e^{-i(tt'+ss')}
\chi_1(\eps_1 t,\eps_1t')\chi_2(\eps_2 s,\eps_2s')\cdot\\
&\hspace{4cm}\cdot
U_\xi(\kappa s)V(-\kappa t)\pi(F)V(-\kappa t)^{-1}U_\xi(\kappa s)^{-1}U_\xi(t')V(s')E(n)\Phi\\
&=\frac{1}{4\pi^2}\sum_{n\in \cal{S}}\lim_{\eps_1\ra 0}\int dt dt' \lim_{\eps_2\ra 0}\int ds ds' e^{-i(tt'+ss')}
\chi_1(\eps_1 t,\eps_1t')\chi_2(\eps_2 s,\eps_2s')\cdot\\
&\hspace{4cm}\cdot
U_\xi(\kappa s)e^{-i\kappa tm}\pi(F) U_\xi(\kappa s)^{-1}U_\xi(t')e^{is'n}E(n)\Phi\\
&=\frac{1}{4\pi^2}\sum_{n\in \cal{S}}\lim_{\eps_1\ra 0}\int dt dt' \lim_{\eps_2\ra 0}\int ds ds' e^{-it(t'+\kappa m)}e^{-is'(s-n)}
\chi_1(\eps_1 t,\eps_1t')\chi_2(\eps_2 s,\eps_2s')\cdot\\
&\hspace{4cm}\cdot
U_\xi(\kappa s)\pi(F) U_\xi(\kappa s)^{-1}U_\xi(t')E(n)\Phi\\
&=\frac{1}{2\pi}\sum_{n\in \cal{S}}\lim_{\eps_1\ra 0}\int dt dt' e^{-it(t'+\kappa m)}
\chi_1(\eps_1 t,\eps_1t') U_\xi(\kappa n)\pi(F) U_\xi(\kappa n)^{-1}U_\xi(t')E(n)\Phi\\
&=\sum_{n\in \cal{S}}U_\xi(\kappa n)\pi(F) U_\xi(\kappa n)^{-1}U_\xi(-\kappa m)E(n)\Phi.
\end{align*}
In the first line we used the strong convergence of $\sum_{n\in \cal{S}}E(n)$ to the identity and the continuity of $\pi(F)_{\xi,\kappa}$ as an operator on $\HS$ for the second equality. Since the definition of the warped convolution (\ref{eq:WarpedConvolution}) does not depend on the cut-off function $\chi$ we choose $\chi(t,s,t',s')=\chi_1(t,t')\chi_2(s,s')$ with $\chi_k\in C^\infty_0(\Rl\times \Rl)$, $\chi_k(0,0)=1$, $k=1,2$.
For the fifth equality we use Fubini and regularize the integrals in the variables $s,s'$ and $t,t'$ separately by introducing cutoffs $\eps_1,\eps_2$ (see \cite{Rieffel:1992}). The behavior of $\pi(F)$ under gauge transformations and $V(s')E(n)=e^{is'n}E(n)$ is used in the sixth line. After that the $s'$-integration is performed and the Fourier transform of $\chi_2$ yields a factor $2\pi\delta(s-n)$ in the limit $\eps_2\ra 0$ since $\chi_2(0,0)=1$. Similarly we obtain a factor $2\pi\delta(t'+\kappa m)$ in the limit $\eps_1\ra 0$.
\end{proof}

\begin{remark}
    Specializing this proposition to $m=0$ yields the warped convolution for observables and $m=\pm1$ for (co)spinors.
\end{remark}

\noindent
\\
Next we determine the fixed-points of the map $\pi(A)\mapsto \pi(A)_{\xi,\kappa}$ for observables. For this purpose we need some basic facts about one-parameter unitary groups. The unitary operators $\{U_\xi(t):t\in\Rl\}$ form a strongly continuous one-parameter group and by Stone's theorem there exists a unique selfadjoint and (in general) unbounded operator $K_\xi$ (the generator the group) which is defined as
\begin{equation}
\label{eq:DefinitionGenerator}
	iK_\xi\Phi=\lim_{t\ra 0}\frac{1}{t}\bl(U_\xi(t)\Phi-\Phi\br),\quad \Phi\in D(K_\xi)
\end{equation} 
on the dense domain $D(K_\xi)=\{\Psi\in \HS:\lim_{t\ra 0}\bl(U_\xi(t)\Psi-\Psi)/t\text{ exists}\}$. 
For elements in $D(K_\xi)$ where $t\mapsto U_\xi(t)\Phi$ is smooth in $\|\cdot \|_{\HS}$ we write $D(K_\xi)^\infty$. Note that $D(K_\xi)^\infty$ is dense in $\HS$ and $\Phi\in D(K_\xi)^\infty$ if and only if $\Phi\in D((K_\xi)^l)$ for all $l\in\Nl$. If an operator commutes with $U_\xi(t)$ for all $t\in\Rl$, then $D(K_\xi)^\infty$ is invariant under its action. In particular we have
\begin{equation}
\label{eq:InvarianceDomain}
	U_\xi(t)D(K_\xi)^\infty\subset D(K_\xi)^\infty,\;t\in\Rl,\qquad E(n)D(K_\xi)^\infty\subset D(K_\xi)^\infty,\; n\in\S.
\end{equation}
Furthermore, for $F\in \fF^\infty_\xi$ there holds $\pi(F)D(K_\xi)^\infty\subset D(K_\xi)^\infty$ since $F$ is smooth with respect to boosts.

Since observables $A\in\fF$ are gauge-invariant, it follows that they are diagonal with respect to the orthogonal decomposition (\ref{eq:QuasifreeDecomposition}) of the representation space:
\begin{equation*}
	\pi(A)=\bigoplus_{n\in \S}\pi_n(A), \qquad \pi_n(A)=\pi(A)E(n):\HS_n\ra \HS_n.
\end{equation*} 		
Furthermore, since observables commute with gauge unitaries and leave charged sectors invariant, it follows that each $\pi_n:\Aa\subset\fF\ra\cal{B}(\HS_n)$ is a representation of the quasilocal algebra $\Aa$. As all representations of the $\CAR$-algebra are faithful, each $\pi_n,n\in\S$ is faithful. So if $\pi_n(A)=0$ for some $n\in \S$, then $A=0$, which implies $\pi_{m}(A)=0$ for all $m\in \S$ by linearity.

Obviously, from Proposition \ref{prop:CARDeformation} follows that $\pi(A)$ is invariant under the deformation if $\pi_n(A)=1$ for all $n\neq 0$, since $\sum_{n\in S}E(n)$ converges strongly to the identity. The following proposition provides a partial inverse to this statement.

\begin{proposition}
	Let $\Aa$ be the net of observables in a $\mathrm{CAR}$-net $\fF$ in a quasifree representation of a de Sitter- and gauge-invariant state with Reeh-Schlieder property.
	Let $A\in\Aa(W_0)$ be a $\xi$-smooth observable. If there exists an $\veps\in\Rl$ such that $\pi(A)_{\xi,\kappa}=\pi(A)$ for all $|\kappa|<\veps$, then $\pi(A)\in\Cl\cdot 1$.
\end{proposition}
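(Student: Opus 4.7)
The plan is to extract from the fixed-point condition that $A$ is invariant under the full boost group $\widetilde{\Gamma}_{W_0}$, and then to combine this with the Reeh-Schlieder hypothesis and the uniqueness of boost-invariant vectors in the vacuum sector to reduce $\pi(A)$ to a scalar.

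First, applying Proposition \ref{prop:CARDeformation} to the gauge-invariant element $A$ (so $m=0$) gives
\[
\pi(A)_{\xi,\kappa}=\sum_{n\in\S}U_\xi(\kappa n)\pi(A)U_\xi(-\kappa n)E(n).
\]
Equating this with $\pi(A)=\sum_{n}\pi(A)E(n)$ and restricting to each charge sector yields $[\pi_n(A),\,U_\xi(\kappa n)|_{\HS_n}]=0$ for every $n\in\S$ and every $|\kappa|<\veps$. For any fixed $n\neq 0$ the parameter $\kappa n$ covers an open neighborhood of the origin in $\Rl$, and since $\{U_\xi(s)|_{\HS_n}\}_{s\in\Rl}$ is a strongly continuous one-parameter unitary group, commutation on such a neighborhood extends to commutation for every $s\in\Rl$ by the elementary argument $V(s)=V(s/N)^{N}$ with $s/N$ in the neighborhood. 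The covariance identity $U_\xi(s)\pi(A)U_\xi(-s)=\pi(\alpha_{\lambda_\xi(s)}(A))$ then forces $\pi_n(\alpha_{\lambda_\xi(s)}(A))=\pi_n(A)$, and the faithfulness of $\pi_n$ stated after the decomposition \eqref{eq:QuasifreeDecomposition} upgrades this to the $C^*$-algebraic identity $\alpha_{\lambda_\xi(s)}(A)=A$ for all $s\in\Rl$. In particular $\pi(A)$ commutes with $U_\xi(s)$ on the whole of $\HS$.

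Because $A$ is gauge-invariant and $\Omega\in\HS_0$, the vector $\pi(A)\Omega$ belongs to $\HS_0$; by the previous step and $U_\xi(s)\Omega=\Omega$ it is moreover $U_\xi$-invariant. In a quasifree GNS representation of a de Sitter- and gauge-invariant state with the Reeh-Schlieder property, the only $U_\xi$-invariant vector in $\HS_0$ is a scalar multiple of $\Omega$; this is the Gibbons-Hawking / geometric KMS manifestation of the Bisognano-Wichmann theorem on de Sitter space, which at the level of the one-particle Hilbert space underlying the quasifree construction corresponds to the absence of a zero mode for the generator of the boost, and at the level of the wedge algebra $\pi(\Aa(W_0))''$ corresponds to the extremality of the boost-KMS state $\om$. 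Granting this, $\pi(A)\Omega=c\,\Omega$ for some $c\in\Cl$. Finally, Reeh-Schlieder applied to the causal complement $W_0'$, together with twisted locality of $\fF$ and the $\gamma$-invariance of $\Omega$, shows that $\Omega$ is separating for $\pi(\fF(W_0))''$ and hence for its subalgebra $\pi(\Aa(W_0))''\ni\pi(A)-c\cdot 1$; the identity $(\pi(A)-c\cdot 1)\Omega=0$ then forces $\pi(A)=c\cdot 1$.

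The main obstacle is the uniqueness-of-vacuum step: boost-invariance of $A$ alone only places $\pi(A)$ into the centralizer of $\om$ in $\pi(\Aa(W_0))''$, which in general can be strictly larger than $\Cl\cdot 1$. Its reduction to a scalar genuinely uses the joint assumption that $\om$ is quasifree, de Sitter- and gauge-invariant, and Reeh-Schlieder, since these are precisely the inputs guaranteeing the factor (type $\mathrm{III}_1$) character of the wedge algebra together with the extremal KMS property of $\om$ on it in the natural examples such as the free charged Dirac field in its Bunch-Davies representation.
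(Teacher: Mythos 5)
Your first half is sound, and in fact more economical than the paper's own treatment: extracting $[\pi_n(A),U_\xi(\kappa n)]=0$ from Proposition \ref{prop:CARDeformation} with $m=0$, noting that for $n\neq 0$ the parameters $\kappa n$ fill a neighbourhood of $0$ so that the group law gives commutation with all $U_\xi(s)$, and then using faithfulness of a single $\pi_n$, $n\neq 0$, to get $\alpha_{\lambda_\xi(s)}(A)=A$ — this replaces the paper's differentiation-in-$\kappa$/spectral-projection/analytic-vector argument by an elementary one and reaches the same intermediate conclusion, namely $[U_\xi(s),\pi(A)]=0$ on all of $\HS$.

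The second half, however, has a genuine gap, which you yourself flag but do not close. You reduce the claim to the assertion that the only $U_\xi$-invariant vector in $\HS_0$ is a multiple of $\Omega$, and you justify this by appealing to a geometric KMS property, extremality of the boost-KMS state, absence of a zero mode of the one-particle boost generator, and the type $\mathrm{III}_1$ factor structure of the wedge algebra. None of these is among the hypotheses of the proposition (the state is only assumed quasifree, de Sitter- and gauge-invariant, and Reeh--Schlieder; no Hadamard or geometric KMS condition is assumed), and your closing paragraph essentially concedes that the needed inputs are only verified "in the natural examples". Indeed, boost-invariance of $\pi(A)$ a priori only places it in the centralizer of $\om$ restricted to $\pi(\Aa(W_0))''$, and the simplicity of the kernel of the boost generator is a model-dependent statement, not a consequence of the stated assumptions. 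The paper avoids this entirely: from $U_\xi(t)\pi(A)\Omega=\pi(A)\Omega$ it invokes \cite[Lem.2.2]{BorchersBuchholz:1999} to conclude that the \emph{vector} $\pi(A)\Omega$ is invariant under the whole of $\tLo$ (a representation-theoretic fact about the de Sitter group, needing no KMS input), then uses the separating property of $\Omega$ supplied by Reeh--Schlieder to upgrade this to $U(h)\pi(A)U(h)^{-1}=\pi(A)$ for all $h\in\tLo$; choosing $h=\bj_{W_0}$ and using locality places $\pi(A)$ in $\Aa(W_0)''\cap(\Aa(W_0)'')'$, and the Powers--St\o rmer theorem (quasifree gauge-invariant CAR representations are primary) kills the center, giving $\pi(A)\in\Cl\cdot 1$. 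To repair your argument you should replace the unproved uniqueness-of-invariant-vector claim by this chain, or supply an independent proof of that uniqueness from the stated hypotheses, which your current appeal to Bisognano--Wichmann-type structure does not provide.
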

\begin{proof}
	From the linearity of $\pi(A)\mapsto \pi(A)_{\xi,\kappa}$ together with the fact that each projection $E(n)$ is linear and commutes with boosts and gauge transformations follows
	\begin{equation}
	\label{eq:DecompositionObservableDeformed}
		\pi(A)_{\xi,\kappa}
		=\bigoplus_{n\in\S}\pi_n(A)_{\xi,\kappa}
		=\bigoplus_{n\in\S}\pi(A)_{\xi,\kappa}E(n).
	\end{equation}
	Consider now compact intervals $\Delta,\Delta'\subset\Rl$ and the spectral projections $\widetilde{E}_\xi(\Delta),\widetilde{E}_\xi(\Delta')$ of the generator $K_\xi$. For $\Phi,\Phi'\in D(K_\xi)^\infty$ define vectors $\Phi_\Delta:=\widetilde{E}_\xi(\Delta)\Phi$ and $\Phi'_{\Delta'}:=\widetilde{E}_\xi(\Delta')\Phi'$. As $\kappa\mapsto \pi(A)_{\xi,\kappa}$ is constant in a neighborhood of $\kappa=0$ there follows from (\ref{eq:DecompositionObservableDeformed})
	\begin{equation*}
	 0=(\Phi'_{\Delta'},\frac{d}{d\kappa}\pi(A)_{\xi,\kappa}E(n)\br|_{\kappa=0}\Phi_\Delta)_{\HS}.
	\end{equation*}
	Proposition (\ref{prop:CARDeformation}) for observables ($m=0$) implies
	\begin{equation*}
	 0
	=(\Phi'_{\Delta'},
	\frac{d}{d\kappa}U_\xi(\kappa n)\pi(A)U_\xi(-\kappa n)E(n)\br|_{\kappa=0}
	\Phi_\Delta)_{\HS}
	=in(\Phi'_{\Delta'},[K_\xi,\pi(A)]E(n)\Phi_\Delta)_{\HS}.
	\end{equation*}
	Note that $E(n)\Phi\in D(K_\xi)^\infty$ and  $\pi(A)\Phi\in D(K_\xi)^\infty$ for all $n\in \S,\Phi\in D(K_\xi)^\infty$ due to the invariance properties (\ref{eq:InvarianceDomain}) of $D(K_\xi)^\infty$. Hence
	\begin{equation*}
	 0=(\Phi'_{\Delta'},[(K_\xi)^l,\pi(A)]E(n)\Phi_\Delta)_{\HS}
	=(\Phi'_{\Delta'},[(K_\xi)^l,\pi_n(A)]\Phi_\Delta)_{\HS}
	\end{equation*}
	for all $n\neq 0$, $l\ge 0$. Since $\Phi_\Delta$, $\Delta\subset\Rl$ compact is an analytic vector for $K_\xi$, there follows
	\begin{equation*}
	 0=(\Phi'_{\Delta'},[U_\xi(t),\pi_n(A)]\Phi_\Delta)_{\HS}
	  =\sum_{l\ge 0}\frac{(it)^l}{l!}(\Phi'_{\Delta'},[(K_\xi)^l,\pi_n(A)]E(n)\Phi_\Delta)_{\HS}
	\end{equation*}
	As the linear span of $\{\Phi_\Delta:\Delta\subset\Rl\text{ compact},\,\Phi\in D(K_\xi)^\infty\}$ is dense in $\HS$ (see \cite[p.8]{Taylor:1986}), the bounded operator $[U_\xi(t),\pi_n(A)]$ vanishes on $\HS$ for $n\neq 0$. However, as
	\begin{equation*}
		0=U_\xi(t)\pi_n(A)U_\xi(t)^{-1}-\pi_n(A)=\pi_n(\alpha_{\lambda_\xi(t)}(A)-A)
	\end{equation*}
	for all $n\neq 0$ implies $\pi_m(\alpha_{\lambda_\xi(t)}(A)-A)=0$ for all $m\in \S$ it follows that $[U_\xi(t),\pi_n(A)]$ vanishes on $\HS$ for all $n\in\cal{S}$.
	
	 Since the GNS vector $\Omega$ is de Sitter invariant there holds
	\begin{equation*}
	 U_\xi(t)\pi(A)\Omega=\pi(A)U_\xi(t)\Omega=\pi(A)\Omega,
	\end{equation*}
	so the vector $\pi(A)\Omega$ is boost-invariant. In \cite[Lem.2.2]{BorchersBuchholz:1999} it is shown that boost-invariant vectors must in fact be invariant under the whole de Sitter group. Hence
	\begin{equation*}
		U(h)\pi(A)U(h)^{-1}\Omega=U(h)\pi(A)\Omega=\pi(A)\Omega,\quad h\in \tLo.
	\end{equation*}
	From the Reeh-Schlieder property of the state follows $U(h)\pi(A)U(h)^{-1}=\pi(A)$ since $\Omega$ is separating for $\Aa(W_0)$. Pick $h={\bj}_{W_0}$ and we find
	\begin{equation*}
	 \pi(A)\in \Aa(W_0)''\cap \alpha_{{\bj}_{W_0}}(\Aa(W_0)'')\subset \Aa(W_0)''\cap (\Aa(W_0)'')'.
	\end{equation*}
	by locality. Powers and St\o rmer \cite{PowersStormer} have shown that every quasifree and gauge-invariant representation of a $\CAR$-algebra is primary, so the local algebras are factors which implies $\pi(A)\in \Cl\cdot 1$.
\end{proof}

\subsubsection{Unitary inequivalence}
Let $\fF$ be a $\CAR$-net over $(\hs,\cc)$ in a Fock representation $(\HS_P,\pi_P,\Omega_P)$ of a de Sitter- and gauge-invariant state. An example for such a projection is $P=1\oplus 0$. It commutes with $\cc$ and all gauge transformations. Furthermore, if $u$ is a representation of $\tLo$ on $\hs$ of the form $u=u_1\oplus u_2$, where $u_1,u_2$ are representations of $\tLo$ on $H$ which commute with $C$ and are mutual adjoints of each other, then the associated state (\ref{eq:QuasifreeState}) is de Sitter- and gauge-invariant.

\begin{remark}
Again, we drop the $P$-dependence in our notation since all representations of the $\CAR$-algebra are faithful.
\end{remark}

In a Fock representation the gauge unitaries take the form $V(s)=e^{isQ}$, where $Q=N\otimes 1-1\otimes N$ is the charge operator and $N$ is the number operator on the Fermionic Fock space over $PH$. The Fock vacuum is invariant under gauge transformations. The spectrum of $Q$ is $\Zl$ and $\HS$ is $\Zl$-graded 
\begin{equation}
	\label{eq:FockSpaceDecomposition}
 \HS=\bigoplus_{n\in\Zl}\HS_n,\quad \HS_n=\{\Phi\in\HS:Q\Phi=n\Phi\}.
\end{equation}
The decomposition of $\HS$ into charged sectors $\HS_n$ and particle sectors is connected via
\begin{equation*}
 \HS_n=\bigoplus_{k-l=n}\wedge^kPH \otimes \wedge^lPH.
\end{equation*}
The grading is implemented by $Y=(-1)^\cal{N}$, where $\cal{N}$ is the number operator on $\HS$ (see \cite{Foit:1983}). In a Fock representation the (co)spinors take the form
\begin{align}
\label{eq:cospinor_Fockrep}
	\pi(\Psi(f_-))&=a^\dagger(0\oplus PCf_-)+a(Pf_-\oplus 0)\\
 \pi(\Psi^\dagger(f_+))&=a^\dagger(PCf_+\oplus 0)+a(0\oplus Pf_+).
\end{align}
A straightforward computation shows that the $\tLo$- and gauge-invariance of the state implies 
\begin{equation}
\label{eq:DeformationVanishesOnOmega}
 \pi(F)_{\xi,\kappa}\Omega=\pi(F)\Omega,\quad F\in\fF^\infty_\xi,
\end{equation}
since two unitaries drop out in (\ref{eq:WarpedConvolution}), which yields corresponding $\delta$-factors after integration.

Now we show that the deformed and undeformed nets are unitarily inequivalent. We proceed in a similar manner as in Chapter \ref{ch:Cosmological}. Consider the wedge $W_0$ and a rotation $r^\phi$ about an angle $\phi> 0$ in the $(x^1,x^2)$-plane. It is clear that the region
\begin{equation*}
 \C:=r^\phi W_0\cap r^{-\phi}W_0,\quad |\phi|<\pi/2
\end{equation*}
is a subset of $W_0$ and that the reflected region ${\bj}_{W_0}\C$ lies spacelike to $W_0$ and $r^\phi W_0$.

\begin{proposition}
 Let $\fF_\kappa$ be the warped convolution of the $\CAR$-net $\fF$ in a Fock representation of a de Sitter- and gauge-invariant state with Reeh-Schlieder property. Suppose that $u$ is a faithful\footnote{The unitary principal series representations of $\tLo$ are faithful \cite{Takahashi63}.} representation of $\tLo$ on $\hs$ which commutes with $C$ and $P$. Then the GNS vector $\Omega$ is not cyclic for $\fF(\C)_{\kappa}\,''$ for $\kappa\neq 0$. In particular, the nets $\fF$ and $\fF_\kappa$ are unitarily inequivalent for $\kappa\neq 0$.
\end{proposition}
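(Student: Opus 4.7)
The plan is to adapt the inequivalence argument of Proposition \ref{prop:inequivalence} in Chapter \ref{ch:Cosmological} to the de Sitter setting, exploiting the explicit formula for warped operators in a Fock representation provided by Proposition \ref{prop:CARDeformation}. The relevant geometric facts are $\C \subset W_0$, $\C \subset r^\phi W_0$ and, by hypothesis, $\bj_{W_0}\C$ spacelike to both $W_0$ and $r^\phi W_0$.

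First, I would set up two deformed operators localized in different wedges but agreeing on $\Omega$. Pick a $\xi$-smooth $f \in \hs(\C)$; then $B(f) \in \fF(\C) \subset \fF(W_0)$, so $B(f)_{\xi,\kappa}\in \fF(W_0)_\kappa$. Since $r^{-\phi}\C \subset W_0$ we also have $B(u(r^{-\phi})f) = \alpha_{r^{-\phi}}(B(f)) \in \fF(W_0)$, and applying $\alpha_{r^\phi}$ to its warped convolution gives, by Lemma \ref{lem:WCcovariance}(a),
\begin{equation*}
\alpha_{r^\phi}\bigl(B(u(r^{-\phi})f)_{\xi,\kappa}\bigr) = B(f)_{(r^\phi)_*\xi,\kappa} \in \fF(r^\phi W_0)_\kappa.
\end{equation*}
By the twisted locality of the deformed net (Theorem \ref{thm:main}), both operators lie in $\fF(\bj_{W_0}\C)_\kappa{}^{t}{}'$.

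Second, I would derive an identity between these two operators from the cyclicity hypothesis. Suppose for contradiction that $\Omega$ is cyclic for $\fF(\C)_\kappa''$. Because the state is $\tLo$-invariant and $\bj_{W_0} \in \tLo$, the vector $\Omega$ is also cyclic for $\alpha_{\bj_{W_0}}(\fF(\C)_\kappa'')=\fF(\bj_{W_0}\C)_\kappa''$, hence separating for its (twisted) commutant. Equation \eqref{eq:DeformationVanishesOnOmega} yields
\begin{equation*}
B(f)_{\xi,\kappa}\Omega = B(f)\Omega = B(f)_{(r^\phi)_*\xi,\kappa}\Omega,
\end{equation*}
so the separating property forces the operator identity $B(f)_{\xi,\kappa} = B(f)_{(r^\phi)_*\xi,\kappa}$ for every such $f$.

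Third, I would contradict this identity using Proposition \ref{prop:CARDeformation}. Choosing $f = 0 \oplus f_-$, the operator $B(f) = \Psi(f_-)$ lowers charge by one ($m=-1$), so on a vector $\Phi \in \HS_1$ the formula collapses to
\begin{equation*}
\Psi(f_-)_{\xi,\kappa}\Phi = U_\xi(\kappa)\Psi(f_-)\Phi, \qquad \Psi(f_-)_{(r^\phi)_*\xi,\kappa}\Phi = U_{(r^\phi)_*\xi}(\kappa)\Psi(f_-)\Phi.
\end{equation*}
The required equality then reads $[U_\xi(\kappa)-U_{(r^\phi)_*\xi}(\kappa)]\Psi(f_-)\Phi = 0$ on $\HS_0$. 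Varying $f_-$ in $C\hs(\C)$ and $\Phi$ in $\HS_1$ and invoking the Reeh-Schlieder property of the undeformed state, the vectors $\Psi(f_-)\Phi$ span a dense subspace of $\HS_0$. Differentiating at $\kappa=0$ on smooth vectors yields $K_\xi = K_{(r^\phi)_*\xi}$ on a dense domain in $\HS_0$, i.e., $U(\lambda_\xi(t)) = U(\lambda_{(r^\phi)_*\xi}(t))$ there. Since $\lambda_\xi(t) \neq \lambda_{(r^\phi)_*\xi}(t)$ for $\phi \in (0,\pi/2)$ and generic $t$, and since $u$ is faithful while $U$ is its Fock second quantization acting non-trivially on $\HS_0$, this produces the desired contradiction. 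Hence $\Omega$ is not cyclic for $\fF(\C)_\kappa''$ when $\kappa\neq 0$. Unitary equivalence of the nets with $\Omega$-preserving intertwiner would transport Reeh-Schlieder cyclicity of $\Omega$ for $\fF(\C)''$ to the deformed case, so it is excluded.

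The main obstacle is the density step: one must verify that sweeping $f_-$ over (the image under $PC$ of) $\hs(\C)$ and $\Phi$ over a dense set in $\HS_1$ produces a dense set in $\HS_0$, so that the operator identity on this subspace promotes to a genuine equality of the boost one-parameter groups on $\HS_0$. This is precisely the point where the Reeh-Schlieder property of the chosen Bunch-Davies-like state is used in an essential way, and where faithfulness of $u$ is translated into non-triviality of $U$ on the charge-zero sector.
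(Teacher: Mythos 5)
Your setup follows the paper's own proof almost verbatim: the cone $\C\subset W_0$ with $\bj_{W_0}\C$ spacelike to $W_0$ and $r^\phi W_0$, the use of covariance to place $B(f)_{r^\phi_*\xi,\kappa}$ in $\fF(r^\phi W_0)_\kappa$, the passage from cyclicity of $\Omega$ for $\fF(\C)_\kappa\,''$ to the separating property for the (twisted) commutant containing both deformed wedge algebras, and the identity $B(f)_{\xi,\kappa}\Omega=B(f)\Omega=B(f)_{r^\phi_*\xi,\kappa}\Omega$ from \eqref{eq:DeformationVanishesOnOmega} are exactly the steps in the text, as is the concluding remark that an $\Omega$-preserving unitary net isomorphism would transport the Reeh--Schlieder property. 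The divergence, and the problem, lies in how you extract a contradiction from the operator identity $B(f)_{\xi,\kappa}=B(f)_{r^\phi_*\xi,\kappa}$.

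Two steps there do not work as written. First, you differentiate at $\kappa=0$; but under the contradiction hypothesis this identity is only known at the single fixed value $\kappa\neq 0$ for which the deformed net is considered, not on a $\kappa$-interval, so there is nothing to differentiate and the conclusion $K_\xi=K_{r^\phi_*\xi}$ does not follow. (Differentiation in $\kappa$ is legitimate in the fixed-point proposition for observables, where invariance is assumed for all $|\kappa|<\veps$; here it is not.) Second, the claimed density of the vectors $\Psi(f_-)\Phi$, $\Phi\in\HS_1$, in $\HS_0$ is not a consequence of the Reeh--Schlieder property, which gives density of $\fF(\C)''\Omega$, i.e.\ of polynomials in the fields applied to $\Omega$, not of the image of the charge-one sector under a single field operator; this step is left unproved. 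Both problems disappear if one argues, as the paper does, at the fixed $\kappa$: evaluate the identity on one-particle charge-one vectors $\vphi\oplus 0$ and use Proposition \ref{prop:CARDeformation} with $m=-1$ to obtain
\begin{equation*}
 U_\xi(\kappa)\,\bl[(0\oplus PCf_-)\wedge(\vphi\oplus 0)\br]
 =U_{r^\phi_*\xi}(\kappa)\,\bl[(0\oplus PCf_-)\wedge(\vphi\oplus 0)\br],
\end{equation*}
the $\Omega$-components $(Pf_-,\vphi)_{PH}\Omega$ agreeing automatically; since $U$ is the second quantization of the faithful representation $u$, which commutes with $C$ and $P$, this forces $\lambda_\xi(\kappa)r^\phi=r^\phi\lambda_\xi(\kappa)$, which is impossible for $\kappa\neq 0$ and $0<|\phi|<\pi/2$ because boosts in the $x^1$-direction do not commute with rotations in the $(x^1,x^2)$-plane. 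With that replacement of your differentiation/density step, your argument coincides with the paper's proof.
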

\begin{proof}
Let $f_-\in H(\C)^\infty_\xi$. Hence $f_-,u(r^{-\phi})f_-\in H(W_0)^\infty_\xi$ and the warped operators $\pi(\Psi(f_-))_{\xi,\kappa}$, $\pi(\Psi(u(r^{-\phi})f_-))_{\xi,\kappa}$ are elements of $\fF(W_0)_{\kappa}$. From the de Sitter covariance (\ref{eq:DeformedOperatorsDeSitterCovariance}) follows
\begin{equation*}
 U(r^\phi)\pi(\Psi(u(r^{-\phi})f_-))_{\xi,\kappa}U(r^\phi)^{-1}
=\pi(\Psi(f_-))_{r^\phi_*\xi,\kappa},
\end{equation*}
which is an element of $\fF(r^\phi W_0)_{\kappa}$.
Assume now that $\Omega$ is cyclic for $\fF(\C)_{\kappa}\,''$. This is equivalent to $\Omega$ being cyclic for $\fF({\bj}_{W_0}\C)_{\kappa}\,''$ since $U({\bj}_{W_0})$ is unitary and $U({\bj}_{W_0})\Omega=\Omega$. Hence $\Omega$ is separating for $\fF({\bj}_{W_0}\C)_{\kappa}\,'$, which contains $\fF(W_0)_{\kappa}$ and $\fF(r^\phi W_0)_{\kappa}$ by locality. From (\ref{eq:DeformationVanishesOnOmega}) follows 
$\pi(\Psi(f_-))_{\xi,\kappa}\Omega
=\pi(\Psi(f_-))\Omega=\pi(\Psi(f_-))_{r^\phi_*\xi,\kappa}\Omega$, {\it i.e.}
\begin{equation}
\label{eq:UIsoughtcontradiction}
\pi(\Psi(f_-))_{\xi,\kappa}=\pi(\Psi(f_-))_{r^\phi_*\xi,\kappa}
\end{equation}
by the separating property of $\Omega$. Consider now a vector $\vphi\oplus 0\in PH\oplus PH$ of charge one in the one-particle space. Using Proposition \ref{prop:CARDeformation} for $m=-1$ we find
\begin{align*}
 \pi(\Psi(f_-))_{\xi,\kappa}(\vphi\oplus 0)
&=
\sum_{n\in\Zl}\pi(\Psi(u_\xi(\kappa n)f_-))U_\xi(\kappa)E(n)(\vphi\oplus 0)\\
&=
\Bl[a^\dagger(0\oplus PCu_\xi(\kappa)f_-))+a(Pu_\xi(\kappa)f_-\oplus 0))\Br](u_\xi(\kappa)\vphi\oplus 0)\\
&=(0\oplus PCu_\xi(\kappa)f_-)\wedge \bl(u_\xi(\kappa)\vphi\oplus 0\br)+ 
(Pu_\xi(\kappa)f_-\oplus 0,u_\xi(\kappa)\vphi\oplus 0)_{\HS}\Omega\\
&=U_\xi(\kappa)(0\oplus PCf_-)\wedge(\vphi\oplus 0)
+(Pf_-,\vphi)_{PH}\Omega.
\end{align*}
For the second equality we used that $\vphi\oplus 0$ has charge one and the explicit form (\ref{eq:cospinor_Fockrep}) of cospinors in a Fock representation. For the third equality we used the usual action of the Fermi creation and annihilation operators on Fock vectors. For the fourth equality we used that $u$ commutes with $C$ and $P$ and the fact that $U_\xi(\kappa)$ is the second quantization of $u_\xi(\kappa)$. By the same computation we find 
\begin{equation*}
 \pi(\Psi(f_-))_{r^\phi_*\xi,\kappa}(\vphi\oplus 0)
=U_{r^\phi_*\xi}(\kappa)(0\oplus PCf_-)\wedge(\vphi\oplus 0)+(Pf_-,\vphi)_{PH}\Omega.
\end{equation*}
As $\Psi^\dagger(f)_{\xi,\kappa}\Phi=\Psi^\dagger(f)_{r^\phi_*\xi,\kappa}\Phi$ for all $\Phi\in\HS^\infty_\xi$ by (\ref{eq:UIsoughtcontradiction}), there follows
\begin{equation*}
 U_\xi(\kappa)(0\oplus PCf_-)\wedge(\vphi\oplus 0)
=U_{r^\phi_*\xi}(\kappa)(0\oplus PCf_-)\wedge(\vphi\oplus 0).
\end{equation*}
Since $U$ is faithful, this implies $1=\lambda_\xi(-\kappa)r^\phi\lambda_\xi(\kappa)r^{-\phi}$ which yields $\lambda_\xi(\kappa)r^\phi=r^\phi\lambda_\xi(\kappa)$. However, for $\kappa\neq 0$ this is only true for $\phi=0$ since boosts in the $x^1$-direction do not commute with rotations in the $(x^1,x^2)$-plane and contradicts our initial assumption about the rotation $r^\phi$. 

Therefore, the operator $\Psi(f_-)_{r^\phi_*\xi,\kappa}$ does depend on $\phi$, so that the cyclicity assumption on $\Omega$ for $\fF(\C)_{\xi,\kappa}\,''$, $\kappa\neq 0$ is not valid. On the other hand, we know that $\Omega$ is cyclic for $\fF(\C)''$ by the Reeh Schlieder property of the state. A unitary which leaves $\Omega$ invariant and maps $\fF(\C)$ onto $\fF(\C)_{\xi,\kappa}$ would preserve this property, from which we conclude that the undeformed and deformed net are not unitarily equivalent.
\end{proof}

\subsection{Deformations with other Abelian subgroups} 
\label{subsec:DefSubgroupsDeSitter}
In the course of writing up this thesis also partial negative results were obtained, which we would like to briefly comment on. 

The warped convolution which uses a combination of boosts and internal symmetries can, in principle, also be defined for quantum field theories on Minkowski space. However, the covariance properties of the deformed operators are very different in this setting and a statement similar to Theorem \ref{thm:main} seems not to hold. The reason for this is that $\Gamma_{W_0}$ is, in contrast to the translations, not a normal subgroup of the Poincar\'e group. For a Poincar\'e group element $(a,\Lambda)$ one has
\begin{equation*}
 \alpha_{(a,\Lambda)}(F_{\xi,\kappa})=\alpha_{(a,\Lambda)}(F)_{(a,\Lambda)_*\xi,\kappa}
\end{equation*}
and 
\begin{equation}
\label{eq:PoincareConjugation}
 (a,\Lambda)(0,\Lambda(t))(a,\Lambda)^{-1}=(-\Lambda\Lambda(t)\Lambda^{-1}a+a,\Lambda\Lambda(t)\Lambda^{-1})
\end{equation}
is the flow which is associated with $(a,\Lambda)_*\xi$. Observe that the Lorentz group acts on $\Gamma_{W_0}$ merely by conjugation. For a translation $(a,1)W_0\subset W_0$ there is
\begin{equation*}
 \alpha_{(a,1)}(F_{\xi,\kappa})\neq \alpha_{(a,1)}(F)_{\xi,\kappa},
\end{equation*}
in general, since also a translational part is involved in (\ref{eq:PoincareConjugation}).

An interesting question is whether other Abelian subgroups of the de Sitter group can be used to define quantum field theories in terms of warped convolutions. A complete classification (up to conjugacy) of all subgroups of $\L_0$ in terms of subalgebras of its Lie algebra was given in \cite{PateraWinternitzZassehaus1976} (see also \cite{Shaw70, Hall04} for the $\SO(1,3)_0$ case). The Lie algebra generators $M_{\mu\nu}=-M_{\nu\mu}$,  $\mu,\nu=0,\dots,4$ of $\L_0$ satisfy the relations
\begin{equation*}
 [M_{\mu\nu},M_{\rho\sigma}]=\eta_{\mu\rho}M_{\nu\sigma}+\eta_{\nu\sigma}M_{\mu\rho}-\eta_{\nu\rho}M_{\mu\sigma}-\eta_{\mu\sigma}M_{\nu\rho}.
\end{equation*}
The matrices $M_{0k}$, $k=1,\dots,4$ generate boosts in the direction $x^k$ and the $M_{jk}$, $j,k=1,\dots,4$ generate spatial rotations in the $(x^j,x^k)$-plane. The two-dimensional Abelian subgroups of $\L_0$ are listed in table \ref{deSitterSubgroups}.
\begin{table}[t]
\begin{center}
	\begin{tabular}{ccc}
	\hline
	\hline
		 & subgroup & Lie algebra generators\\
	\hline
	$\L_1$ & $\SO(2)\times \SO(2)$ & $M_{12},M_{34}$\\
	$\L_2$ & $\mathrm{O}(1,1)\times \SO(2)$ & $M_{01},M_{23}$\\
	$\L_3 $& $\Rl^2$ & $M_{12}-M_{01},M_{23}-M_{03}$\\
	$\L_4$ & $\Rl\times \SO(2)$ & $M_{12}-M_{01},M_{34}$\\
	\hline
	\hline
	\end{tabular}	
\end{center}
\caption{\label{deSitterSubgroups}Two-dimensional Abelian subgroups of the de Sitter group.}
\end{table}
$\L_1$ consists of spatial rotations in the $(x^1,x^2)$- and $(x^3,x^4)$-plane. $\L_2$ are boosts in the $x^1$-direction and rotations in the $(x^2,x^3)$-plane. $\L_3$ corresponds to null rotations (translational part of the stabilizer group of a light ray). $\L_4$ is a combination of a null rotation and a spatial rotation. All of these groups can be used to define a warped convolution with the associated $\Rl^2$-action from the representation. 
Since we are on a curved spacetime it appears to be natural to require that the group which is used for the deformation is a subgroup of the stabilizer of a wedge. The reason is that there is not an analogue of the spectrum condition on Minkowski space available which restricts the spectral properties of the generators which are associated with isometries (the microlocal spectrum condition only gives a restriction on the singularity structure of the two-point function). Comparing the subgroup structure of $\L_0(W_0)$ with the above groups shows that only $\L_2$ is a subgroup. However, $\L_2$ violates conditions a) and b) in Definition \ref{def:CausalBorchersSystem} for certain reflections: Denote by $F_{\zeta,\kappa}$ the warped operator, where $\zeta=(M_{01},M_{23})$ is a pair of Killing vector fields (compare Chapter \ref{ch:Cosmological}) and consider the reflection $j_{12}(x^0,x^1,x^2,x^3,x^4)=(x^0,-x^1,-x^2,x^3,x^4)$ which satisfies $j_{12}W_0=(W_0)'$. The associated flow $\Lambda(t,s):=\exp(t M_{01})\exp(s M_{23}),\,  t,s\in\Rl$ transforms as
\begin{equation*}
 j_{12}\Lambda(t,s)j_{12}=\Lambda(-t,-s)
\end{equation*}
so that $\alpha_{j_{12}}(F_{\zeta,\kappa})=\alpha_{j_{12}}(F)_{{j_{12}}_*\zeta,\kappa}=F_{\zeta,\kappa}$ and condition b) is violated. Similar problems also appear if one uses a combination of boosts and translations along the edge of the wedge in the case of Minkowski spacetime. From these observations we conclude that the position of the subgroup, which is used for the deformation, within the isometry is very important and that a modification of the standard warping formula is necessary in these cases. 

A deformation with purely internal symmetries, {\em e.g.} $\Uone\times\Uone$, did not appear to be interesting, because an adaption of Proposition \ref{prop:CARDeformation} to this case yields that the deformation is trivial on the level observables and also trivial for generators $B(f)$, provided that the induced charge structure of the gauge groups is the same.

\chapter{Summary and Outlook}
\label{ch:outlook}
In this work we studied deformations of quantum field theories in various situations where the relativistic spectrum condition is not fulfilled. In particular, we have extended the warped convolution deformation procedure to a large class of globally hyperbolic spacetimes. 
\\
\\
For the scalar free field in a thermal representation we were able to establish wedge-locality of the deformed models and inequivalence of the corresponding nets. The inequivalence proof showed, in particular, that the thermal Fock vacuum is not separating for unions of wedge algebras, which implies that the initial KMS state is not KMS on the deformed net. Whether the deformed net admits other KMS states, {\em i.e.} if the deformed theory still has a good thermodynamic behavior, is, however, open at this point. This question is closely connected to thermodynamic properties of deformed vacuum fields \cite{Huber:2011}. 

Assuming that the group of isometries contains a two-dimensional Abelian subgroup generated by two commuting and spacelike Killing fields, we have demonstrated that many results known for Minkowski space theories carry over to the curved setting by formulating concepts like edges, wedges and translations in a geometrical language. In particular, it has been shown in a model-independent framework that the basic covariance and wedge-localization properties we started from are preserved for all values of the deformation parameter $\kappa$. As a concrete example we considered a warped Dirac field on a Friedmann-Robertson-Walker spacetime in the GNS representation of a quasifree and $\iso$-invariant state with Reeh-Schlieder property. It was shown that the deformed models depend in a non-trivial way on $\kappa$, and violate the Reeh-Schlieder property for regions smaller than wedges. 
At the current stage, it is difficult to give a clear-cut physical interpretation to the models constructed here since scattering theory is not available for quantum field theories on generic curved spacetimes. Nonetheless, it is interesting to note that in a field theoretic context the deformation leaves the two-point function invariant (Proposition \ref{prop:n-pt}), the quantity which is most frequently used for deriving observable quantum effects in cosmology (for the example of quantized cosmological perturbations, see \cite{MukhanovFeldmanBrandenberger:1990}). So, when searching for concrete scenarios where deformed quantum field theories can be matched to measurable effects, one has to look for phenomena involving the higher $n$-point functions.

By using a one-parameter group of boosts and a global $\Uone$ gauge group we deformed quantum field theories on de Sitter spacetime in such a way that wedge-localization and de Sitter covariance is preserved for all values of the deformation parameter.
This provides the first example of a warped convolution which is formulated in terms of internal symmetries. For models which arise from inclusions of $\CAR$-algebras we determined the fixed-points of the deformation map and showed that the deformed and undeformed theories are unitarily inequivalent. In section \ref{subsec:DefSubgroupsDeSitter} we gave a complete list of all two-dimensional Abelian subgroups of the de Sitter group. We argued that all of them can in principle be used to define a warped convolution, but that the deformation of a wedge triple by means of these subgroups is no longer a wedge triple. We surmise that a modification of kernel in the warping formula (\ref{eq:IntroWarpedConvolution}) is required in these cases. 
\\
\\
There exist a number of interesting directions in which this research could be extended.  

Due to the tensor product structure of the Fock space in the Araki-Woods representation there exist other deformation procedures which are different from the one we used. For example, one can think of a ``thermalization'' of the deformed vacuum fields $\phi_\theta(f)$ in \cite{GrosseLechner:2007} of the form $\phi_{\theta,\beta}(f):=\phi_\theta(\sqrt{1+\rho}\,f)\otimes \bar{1}+1\otimes \overline{\phi_\theta(\sqrt{\rho}\,f)}$. These fields are also wedge-local but their $n$-point functions are different than the ones we computed in section \ref{sec:Deformationsofthermalscalarfreefields}. Another possibility is to consider charged fields on the thermal Fock space and deformations with opposite sign on the two tensor product factors. However, none of the deformations were studied in detail due to limitations of time, but I believe they deserve further investigation. 

As far as the geometrical construction of wedge regions in curved spacetimes is concerned, we limited ourselves to edges which are generated by commuting Killing vector fields. This assumption rules out many physically interesting spacetimes, such as de Sitter, Kerr or Friedmann-Robertson-Walker spacetimes with compact spatial sections. An extension of the geometric construction of edges and wedges to such spaces seems to be straightforward and is expected to coincide with the notions which are already available. 

In the case of de Sitter spacetime it is desirable to make contact with the deformation scheme from chapter \ref{ch:Cosmological}. There the Killing flow which is associated with the edge of a wedge was used to formulate the deformation. In de Sitter space, edges have the topology of a two-sphere, which is an $\SO(3)$ orbit. A generalization of the warped convolution deformation formula could involve an integration over this group instead of $\Rl^2$. But deformations of C$^*$-algebras based on actions of this group are currently not yet available (see however \cite{Bieliavsky2002} for certain non-Abelian group action). 

A proof that intersections of wedge algebras are non-trivial (step III in section \ref{sec:CAQFT}) seems to be out of reach at the moment, since there are no sensible criteria available so far. However, in view of the picture that the deformed models can be regarded as effective theories on a noncommutative spacetime, where strictly local quantities do not exist because of space-time uncertainty relations, it is actually expected that they do not contain operators sharply localized in bounded regions for $\kappa\neq0$. A clarification of this conjecture would be desirable.

\begin{appendix}
\chapter{Conventions and Notation}
\label{ch:NotationConventions}
{\bf Units.} We work in Planck units $c=\hbar=G=k_B=1$.
\\
\\
\\
{\bf Minkowski spacetime.} We denote $n$-dimensional Minkowski spacetime by $(\Rl^n,\eta)$, where $\eta=\diag(+1,-1,\dots,-1)$ is the Minkowski metric. For the Minkowski product of vectors $x,y\in\Rl^n$ we write $x\cdot y:=x^0y^0-\sum_{k=1}^{n-1}x^ky^k$ and for partial derivatives we use the shorthand notation $\del_\mu:=\del/\del x^\mu$ with $\mu=0,\dots, n$.
Further properties of four-dimensional Minkowski space and its isometry group are discussed in section \ref{sec:MinkowskiSpacetime}. For the description of four-dimensional de Sitter spacetime we use five-dimensional Minkowski space which is treated in section \ref{sec:deSitterSpacetime}.
\\
\\
\\
{\bf Fourier transform.} For the Fourier transform $\Ff:L^2(\Rl^n,dx)\ra L^2(\Rl^n,dp)$ and its inverse $\Ff^{-1}:L^2(\Rl^n,dp)\ra L^2(\Rl^n,dx)$ on Minkowski spacetime we use the sign convention
\begin{equation*}
(\Ff f)(p)=\tilde{f}(p)=\int dx\; e^{ip\cdot x}f(x),\qquad
(\Ff^{-1} \tilde{f})(x)=f(x)=(2\pi)^{-n}\int dp\; e^{-ip\cdot x}\tilde{f}(p).
\end{equation*}
\\
\\
\\
{\bf Curved spacetimes.}\label{spacetime}\label{spacetimemetric} A spacetime manifold (or spacetime for short) will be denoted by $(M,\gST)$, where $M$ is a four-dimensional, connected, smooth manifold and $\gST$ is a smooth, Lorentzian metric with signature $(+,-,-,-)$. It is automatically guaranteed that $M$ is paracompact and second countable \cite{Geroch:1968,Geroch:1970}.

\newpage
\noindent
{\bf Frequently used symbols.}
\begin{center}
\begin{longtable}{lll}
Symbol & Description & Reference\\
\hline
&\\
$\|\cdot \|$ & operator norm & $-$ \\
$\|\cdot \|_{\HS}$ & Hilbert space norm & $-$ \\
$\|\cdot\|_m$ & norm associated with $\left<.\,,.\right>_m$& page \pageref{massshellnorm}\\
$\left<.\,,.\right>_m$ & inner product on $\HS_{1}$& page \pageref{eq:SFFscalarproduct}\\
$\frak{A}$ & (wedge-)local net of observables& page \pageref{localnet}, \pageref{wedgelocalnet}\\  
$\frak{A}_m$ & Weyl algebra for the massive scalar free field& page \pageref{weylalgebraSFF}\\
$\frak{A}_\beta$ & Weyl algebra for the massive scalar free field&\\ 
&in the Araki-Woods representation& page \pageref{weylalgebraSFFthermal}\\
$a(\vphi),a^\dagger(\vphi)$ & vacuum annihilation/creation operators& page \pageref{vacuumcreationannihilationoperators}\\
$a_\beta(\vphi),a_\beta^\dagger(\vphi)$ & thermal annihilation/creation operators& page \pageref{thermalcreationannihilationoperators}\\
$a_\theta(\vphi),a_\theta^\dagger(\vphi)$ & deformed vacuum annihilation/creation operators& page \pageref{vacuumdeformedcreationannihilationoperators}\\
$a_{\beta,\theta}(\vphi),a_{\beta,\theta}^\dagger(\vphi)$ & deformed thermal annihilation/creation operators& page \pageref{thermaldeformedcreationannihilationoperators}\\
$\BH$ & bounded operators on $\HS$& $-$\\
$C^\infty(X)$ & smooth functions on $X$& $-$\\
$C^\infty_0(X)$ & smooth functions with compact support on $X$& $-$\\
$\CCR(\hhs,\sigma)$ & Weyl algebra over a real symplectic space $(\hhs,\sigma)$& page \pageref{abstractweylalgebra}\\
$d\mu_m$ & Lorentz-invariant measure on $H_m^+$& page \pageref{lorentzinvariantmeasure}\\
$D(A)$ & domain of the operator $A$& $-$\\
$E_4(3)$ & extended Euclidean group $\Rl^4\rtimes\SO(3)_0$& $-$ \\
$E_W$ & edge of the wedge $W$& page \pageref{edgeMinkoski}, \pageref{edgedeSitter}\\
$E_{\xi,p}$ & edge of the wedge $W_{\xi,p}$& page \pageref{def:edge}\\
$\cal{F}(\HS)$ & Bosonic Fock space over $\HS$& page \pageref{bosefockspace}\\
$\frak{F}$ & (wedge-)local field net& page \pageref{fieldnetcosmological}, \pageref{fieldnetdesitter}\\
$\gdS$ & spacetime metric& page \pageref{spacetimemetric}\\
$\Gamma_W$ & one-parameter group of boosts associated with $W$& page \pageref{onepargroupwedge}\\
$H_m^+$ & upper mass shell& page \pageref{masshyperboloid}\\
$\HS$ & complex Hilbert spaces& $-$\\
$\HS^\infty$ & smooth vectors in $\HS$& page \pageref{smoothvectors}, \pageref{smoothvectorsdeSitter}\\
$\HS^\infty_\xi$ & $\xi$-smooth vectors in $\HS$& page \pageref{xismoothvectorsdeSitter}\\
$\overline{\HS}$ & conjugate of $\HS$& page \pageref{conjugatehilbertspace}\\
$\HS_n$ & $n$-fold symmetric tensor product of $\HS$& $-$\\
$\wedge^n\HS$ & $n$-fold antisymmetric tensor product of $\HS$ & $-$\\
$\HS_1$ & one-particle Hilbert space for the scalar free field& page \pageref{oneparticlespaceSFF}\\
$(\hhs,\sigma)$ & real symplectic vector space for the Weyl algebra & page \pageref{realsymplecticvectorspaceWeylAlgebra}\\
$(\hhs_m,\sigma_m)$ & real symplectic vector space for the Weyl algebra&\\
& of the scalar free field of mass $m$& page \pageref{realsymplecticvectorspaceWeylAlgebraSFF}\\
$(M,\gST)$ & spacetime manifold& page \pageref{spacetime}\\
$\Iso(M,\gdS)$ & isometry group of $(M,\gdS)$& $-$\\
$J^\pm(\OO)$ & causal future/past of $\OO\subset M$& $-$\\
$\Mat(n,\mathbb{K})$ & $n\times n$ matrices over the (skew)field $\mathbb{K}$ & $-$\\
$\OO$ & spacetime region& $-$\\
$\OO'$ & causal complement of $\OO$& page \pageref{causalcomplement}\\
$\omega$ & state& page \pageref{state}\\
$\omega_0$ & vacuum state& page \pageref{vacuumstate}\\
$\omega_\beta$ & KMS state with inverse temperature $\beta$& page \pageref{KMSstate}\\
$\Omega$ & Fock vacuum in the vacuum representation& page \pageref{fockvacuumvacuum}\\
$\Omega_\beta$ & Fock vacuum in the thermal representation& page \pageref{fockvacuumthermal}\\
$\Ss(\Rl^4,\Rl)$ & real-valued Schwartz functions on $\Rl^4$& $-$\\
$\Sigma$ & Cauchy hypersurface& $-$\\
$V^+$ & forward light cone& $-$\\
$W$ & wedge& page \pageref{wedgeMinkowski}, \pageref{wedgedeSitter}\\
$W_0$ & reference wedge& page \pageref{wedgeMinkowski}, \pageref{wedgedeSitter}\\
$W_{\xi,p}$ & wedge with base point $p$ and direction $\xi$& page \pageref{wedgecosmological}\\
$\W$ & the set of all wedges& page \pageref{familywedgesMinkowski}, \pageref{wedgecosmological}, \pageref{familywedgesdeSitter}
\end{longtable}
\end{center}
\include{APPENDIX}
\end{appendix}
%
%
\clearpage
\newcommand{\etalchar}[1]{$^{#1}$}

\thispagestyle{empty}
\cleardoublepage
\thispagestyle{empty}
\noindent
{\bf\LARGE Acknowledgements}
\addcontentsline{toc}{chapter}{Acknowledgements}
\\
\\
\\
I would like to thank Prof. $\!\!$J. Yngvason for giving me the opportunity to work in this interesting field of mathematical physics and for introducing me to the realm of rigorous quantum field theory.
\\
\\
I thank Prof. $\!$H. Grosse and Prof. $\!$S. Hollands for their immediate willingness to write a report on this thesis.
\\
\\
I am deeply grateful to Dr. G. Lechner for guiding me through the course of my PhD. He was a very attentive supervisor and without his advice and encouragement this thesis would not have been possible.
\\
\\
In the course of my work I profited from discussions with many people. In particular I would like to thank M. Bischoff, J. Bros, C. Dappiaggi, W. Dybalski, T.-P. Hack, C. J\"akel, C. K\"ohler, U. Moschella, N. Pinamonti, J. Queva, K.-H. Rehren, F. Robl, H. Rumpf, A. Schenkel, J. Schlemmer , H. Steinacker, S. Waldmann and J. Zahn.
\\
\\
I thank S. Hansen, G. Lechner and J. Schlemmer for proofreading parts of this thesis.
\\
\\
This work was supported by the Marie Curie Research Training Network MRTN-CT-2006-031962 EU-NCG and by the project P22929–N16 funded by the Austrian
Science Fund (FWF).
\\
\\
Finally, I thank my mother and Mi Na for their support and love.

\newpage
\thispagestyle{empty}
\cleardoublepage
\thispagestyle{empty}
\noindent
{\bf\LARGE Lebenslauf}
\\
\\
\\
\\
\begin{tabular}{rl}
 Name & Morfa Morales, Eric\\
 Geburtsdatum & 05.01.1982\\
 Geburtsort & Eisenach\\
 &\\
 Seit 03.2009 & Doktoratsstudium Physik, Universit\"at Wien\\
 10.2008 & Diplom Physik, Georg-August-Universit\"at G\"ottingen\\
 & Titel der Diplomarbeit: ``On a second law of black hole mechanics\\
 & in a higher derivative theory of gravity''\\
 & Betreuer: Prof. S. Hollands\\
 10.2002 $-$ 10.2008 & Diplomstudiengang Physik, Georg-August-Universit\"at G\"ottingen\\
 06.2001 & Abitur, Ernst-Abbe-Gymnasium Eisenach  
\end{tabular}

\end{document}